\numberwithin{equation}{section}
\numberwithin{table}{section}
\numberwithin{figure}{section}
\newenvironment{beweis}[1][\proofname]{}
\newtheorem{theorem}{Theorem}[section]
\newtheorem{proposition}[theorem]{Proposition}
\newtheorem{lemma}[theorem]{Lemma}
\newtheorem{corollary}[theorem]{Corollary}
\newtheorem{definition}[theorem]{Definition}
\newtheorem{remark}[theorem]{Remark}
\newtheorem{example}[theorem]{Example}
\newtheorem{question}[theorem]{Question}
\DeclareMathOperator{\card}{card}
\DeclareMathOperator{\dens}{dens}
\DeclareMathOperator{\supp}{supp}
\def\Bap{\mathcal{B}\hspace*{-1pt}{\mathsf{ap}}}
\def\cA{\mathcal{A}}
\def\CC{\mathbb{C}}
\def\Cc{\Csp_{\mathsf{c}}}
\def\cF{\mathcal{F}}
\def\cL{\mathcal{L}}
\def\Cu{\Csp_{\mathsf{u}}}
\def\cM{\mathcal{M}}
\def\dd{\,\mathrm{d}}
\def\KK{\mathbb{K}}
\def\NN{\mathbb N}
\def\nts{\hspace{-0.5pt}}
\def\RR{\mathbb R}
\def\TT{\mathbb{T}}
\def\vL{\varLambda}
\def\ZZ{\mathbb Z}
\newcommand{\abs}[1]{\absl#1\absr}   
\def\absl{\bigl\lvert}   
\def\absr{\bigr\rvert}   
\def\Asp{{\boldsymbol A}}   
\def\Bsp{{\boldsymbol B}}   
\def\COsp{{\Csp_{\negthinspace 0}}}   
\def\Csp{{\boldsymbol C}}   
\def\Dcsp{{\boldsymbol{\mathcal D}}}   
\def\eps{\varepsilon}   
\newcommand\FLi{{\mathcal F}{\negthinspace \Lisp}}   
\def\KLsp{{\boldsymbol{KL}}}   
\def\Ksp{{\boldsymbol K}}   
\def\Lisp{{\Lsp^1}}   
\def\lisp{{\lsp^1}}   
\def\LKsp{{\boldsymbol{LK}}}   
\def\Lsp{{\boldsymbol L}}   
\def\lsp{{\boldsymbol\ell}}   
\def\Ltsp{{\Lsp^2}}   
\def\Msp{{\boldsymbol M}}   
\def\Scsp{{\boldsymbol{\mathcal S}}}   
\def\shah{\bm{\sqcup} \hspace{-0.66ex}\bm{\sqcup}}   
\newcommand\So{\SOsp}
\def\SOPsp{{\Ssp_{\negthinspace 0}'}}
\newcommand\SOp{\SOPsp}   
\def\SOsp{{\Ssp_{\negthinspace 0}}}   
\def\Ssp{{\boldsymbol S}}   
\def\Tsp{{\boldsymbol T}}   
\newcommand{\WCOli}{ \WTsp \COsp  \lisp}
\newcommand\WFLili{{\Wsp(\FLi,\lisp)}}   
\def\WOsp{{\Wsp_{\negthinspace 0}}}
\def\Wsp{{\boldsymbol W}}   
\def\WTsp#1#2{{\Wsp(#1,#2)}}   
\DeclareMathOperator{\esupp}{ess\, supp} 
\newcommand{\op}[1]{{\left\vert\kern-0.25ex\left\vert\kern-0.25ex\left\vert #1
    \right\vert\kern-0.25ex\right\vert\kern-0.25ex\right\vert}}
\newcommand\reallywidehat[1]{%
\savestack{\tmpbox}{\stretchto{%
  \scaleto{%
    \scalerel*[\widthof{\ensuremath{#1}}]{\kern-.6pt\bigwedge\kern-.6pt}%
    {\rule[-\textheight/2]{1ex}{\textheight}}
  }{\textheight}%
}{0.5ex}}%
\stackon[1pt]{#1}{\tmpbox}%
}
\newcommand\reallywidecheck[1]{%
\savestack{\tmpbox}{\stretchto{%
  \scaleto{%
    \scalerel*[\widthof{\ensuremath{#1}}]{\kern-.6pt\bigwedge\kern-.6pt}%
    {\rule[-\textheight/2]{1ex}{\textheight}}
  }{\textheight}%
}{0.5ex}}%
\stackon[1pt]{#1}{\scalebox{-1}{\tmpbox}}%
}
\begin{document}

\title[Twice Fourier Transformable Measures]{Twice Fourier transformable measures and diffraction theory}

\author[H.G.~Feichtinger]{Hans G. Feichtinger}
\address{H.G.~Feichtinger, Faculty of Mathematics, University of Vienna and Acoustic Research Institute ARI Vienna (OEAW), Austria}
\email{hans.feichtinger@univie.ac.at}

\author[C.~Richard]{Christoph Richard}
\address{C.~Richard, Department f\"{u}r Mathematik, Friedrich-Alexander-Universit\"{a}t Erlangen-N\"{u}rnberg, Germany}
\email{christoph.richard@fau.de}

\author[C.~Schumacher]{Christoph Schumacher}
\address{C.~Schumacher, Weststr.~1, 04720 Döbeln, Germany}
\email{christoph.schumacher@mathematik.tu-dortmund.de}

\author[N.~Strungaru]{Nicolae Strungaru}
\address{N.~Strungaru, Department of Mathematical Sciences, MacEwan University, Canada\\
and \\
Institute of Mathematics ``Simon Stoilow''\\
Bucharest, Romania}
\email{strungarun@macewan.ca}

\begin{abstract}
Mathematical diffraction theory has been developed since about 1995. Hof's initial approach relied on tempered distributions in euclidean space. Nowadays often the Fourier theory by Argabright and Gil de Lamadrid is used, which applies to appropriate measures on locally compact abelian groups. We review diffraction theory using Wiener amalgams as test function spaces. For translation bounded measures, this unifies and simplifies the former two approaches. We treat weighted versions of Meyer's model sets as examples.
\end{abstract}



\maketitle

\tableofcontents

\section{Introduction}

Mathematical diffraction theory has emerged from the discovery of quasicrystals, in order to describe their unusual diffraction properties. From a mathematical point of view, it is concerned with Fourier analysis of translation bounded measures, which serve as an idealisation of a piece of matter.
Hof's influential approach \cite{FRSS-Hof1} used tempered distributions in euclidean space. An earlier approach by de Bruijn relied on Gelfand-Shilov test functions, a certain subclass of Schwartz functions \cite{FRSS-B86,FRSS-L00}.

But yet another framework is prominently being used, namely the Fourier theory of measures by Argabright and Gil de Lamadrid, called AG theory in the sequel. It extends the Fourier theory of positive definite measures based on Bochner's theorem \cite{FRSS-BF75} and has been developed in the 1970ies \cite{FRSS-ARMA1}. The AG theory proved useful for diffraction analysis for three reasons. Firstly, any translation bounded measure indeed admits an autocorrelation measure and a diffraction measure as its Fourier transform~\cite{FRSS-BaakeLenz2004}, while the corresponding problem for tempered distributions is open \cite{FRSS-ST16}. Secondly, the AG theory had already incorporated almost periodicity \cite{FRSS-AG79, FRSS-ARMA}. As almost periodicity had been recognised central for diffraction analysis \cite{FRSS-L00}, the AG theory could readily be used to study the nature of the diffraction based on the autocorrelation measure \cite{FRSS-BaakeMoody2004, FRSS-MoSt, FRSS-S19}. Moreover, it recently appeared that important classes of pure point diffractive translation bounded measures are characterised by various types of almost periodicity \cite{FRSS-LSS2020a,FRSS-LLRSS, FRSS-LRS2024}. Thirdly, the AG theory applies to measures on locally compact abelian groups beyond euclidean space, in a simpler way than Schwartz--Bruhat tempered distributions. Such generality was indeed needed, as models for quasicrystals had emerged that were constructed from non-euclidean internal spaces \cite{FRSS-BaakeMoody2004}. These can be analysed using the Poisson summation formula of the underlying non-euclidean lattice~\cite{FRSS-CRS}.

However using this framework seemed subtle, as the Fourier transform is not a bijection on the space of AG transformable measures. This holds as the Fourier transform of a measure is required to be a measure in the AG theory, which constrains the transform to be translation bounded~\cite[Thm.~2.5]{FRSS-ARMA1}. Whereas the question of twice AG transformability was posed as an open problem \cite[p.~50]{FRSS-ARMA1}, a simple answer was given already in 1979: A measure is twice AG transformable if and only if it is AG transformable and translation bounded  \cite{FRSS-Fei79a, FRSS-Fei80}. This explains why AG transformable translation bounded measures constitute a natural space for diffraction analysis. 

As the above characterisation remained neglected in diffraction analysis until recently \cite{FRSS-RiSchu20}, it is the aim of this article to provide sufficient background. The above characterisation has been obtained within the framework of Wiener amalgam spaces \cite{FRSS-F83, FRSS-Fei81c}. Two important ingredients, which we will review below, are the identification of translation bounded measures as the bounded linear functionals on the Wiener algebra \cite{FRSS-LRW74}, and Feichtinger's mild distributions \cite{FRSS-Fei81, FRSS-Jak, FRSS-Fei23}. Mild distributions can be defined on locally compact abelian groups, where they constitute a well tractable substantial subclass of Schwartz--Bruhat tempered distributions.

As any AG transformable measure is a mild distribution, the AG theory is subsumed by mild distribution theory.
We will thus review the AG theory using mild distributions, thereby providing detailed proofs of statements in \cite{FRSS-Fei79a, FRSS-Fei80}. In particular, we will give a characterisation of AG transformability
that is considerably simpler than the original definition, and we will prove the above characterisation of twice AG transformability.

We will also rephrase diffraction theory of translation bounded measures using mild distributions. While this streamlines arguments in comparison to the traditional approaches \cite{FRSS-Hof1, FRSS-BaakeLenz2004}, this also indicates extensions beyond translation bounded measures. We will finally review weighted versions of Meyer's model sets using mild distributions.  This will rely on a Poisson summation formula for test functions including Feichtinger's algebra, which extends \cite{FRSS-M19, FRSS-M22} beyond euclidean space. We thus obtain diffraction results that improve on previous work, as summarised in \cite{FRSS-CRS}.

Altogether, we demonstrate that mild distributions provide a flexible instrument for diffraction analysis, which subsumes and simplifies the traditional approaches on translation bounded measures.

Our article is organized as follows. We start with a detailed discussion of bounded uniform partitions of unity in Sections~\ref{FRSS-sec:bupu} and \ref{FRSS-cbupu}. We then carefully review the Wiener algebra in Section~\ref{FRSS-sec:WA} and Feichtinger's algebra in Section~\ref{FRSS-sect:appB}, using bounded uniform partitions of unity. The space dual to Feichtinger's algebra, the mild distributions, is discussed in Section~\ref{FRSS-sec:fmd}. Section~\ref{FRSS-sec:m} reviews translation bounded measures and mild measures. The central Section \ref{FRSS-sec:FTm} reviews AG transformability using mild distributions. 
Section~\ref{FRSS-sec:dtr} reviews diffraction theory from the mild distribution viewpoint, taking into account recent developments \cite{FRSS-LSS2020a} on pure point diffraction. Section~\ref{FRSS-sec:ppdpsr} applies mild distribution theory to weighted model sets. The main result of that section has been announced elsewhere \cite{FRSS-R23}.

\section{Bounded uniform partitions of unity}\label{FRSS-sec:bupu}

Throughout this article, $G$ denotes a locally compact abelian (LCA) group, and $\widehat G$ denotes its dual LCA group. We choose Haar measures $m_G$ on $G$ and $m_{\widehat G}$ on $\widehat G$ such that Plancherel's formula holds. We denote by $\COsp(G)$ the algebra of continuous functions on $G$ vanishing at infinity ($f\in \COsp(G)$ if for every $\varepsilon>0$ there exists compact $K\subseteq G$ such that $|f|\le\varepsilon$ on $G\setminus K$), by $\Cu(G)\subseteq \COsp(G)$ the algebra of uniformly continuous and bounded functions, and by $\Cc(G)\subseteq \Cu(G)$ the algebra of continuous functions on $G$ having compact support.

\subsection{motivation}\label{FRSS-bupumot}

For an LCA group $G$, its Wiener algebra $\Wsp(G)$ and Feichtinger's algebra $\So(G)$ are subalgebras of $\COsp(G)$ that consist of functions with prescribed local and global behaviour. This can conveniently be formulated using bounded uniform partitions of unity\footnote{Our discussion may be complemented by the recent survey \cite{FRSS-Fei24}.}. Recall that a partition of unity\index{partition~of~unity} $(\phi_i)_{i\in I}$ on $G$ is a collection of functions $\phi_i\in \Cc(G)$ such that $\sum_{i\in I} \phi_i=1$, where the sum is finite at each $x \in G$.
In order to motivate the notion of bounded uniform partition of unity, consider a tiling of $G=\RR^d$ by translates of non-empty compact subsets $P_1, P_2, \ldots , P_k$ of $\RR^d$, which are called the prototiles. Thus for each $j$ there exists a discrete set $\vL_j\subset  \RR^d$ such that
\[
\RR^d= \bigcup_{j=1}^k (\Lambda_{j}+P_j)=\bigcup_{j=1}^k\bigcup_{t \in \Lambda_j} (t+P_j) \ ,
\]
where all tiles $t+P_j$ in the above union have mutually disjoint interior, compare \cite[Def.~5.2]{FRSS-TAO1}.
If all prototiles have a boundary of zero Lebesgue measure, then
\[
\sum_{j=1}^k \sum_{t \in \vL_j} 1_{t+P_j}(x) =1
\]
for Lebesgue-almost all $x\in\RR^d$, where $1_A$ is the characteristic function of the set $A$. We would like to have an equality which holds everywhere, where the characteristic functions of tiles are replaced by continuous functions. This can be achieved by a partition of unity. But exactly as in the tiling case, we would like to additionally use uniformly bounded functions of uniformly bounded support.

\smallskip

The following simple example gives a bounded uniform partition of unity on $G=\RR$ based on a tiling of the real line by an interval prototile.

\begin{example}[Partitions of unity on $\RR$]\label{FRSS-ex:standardBUPU}
Consider the triangular functions $\phi(x)= \max\{ 0 , 1-|x| \}$ and $\phi_n(x)=\phi(x-n)$ for $n\in\ZZ$.
Then $(\phi_n)_{n\in \ZZ}$ is a partition of unity of $\RR$, where each function is bounded by $1$ and has support of length $2$. In order to show $\sum_{n\in \NN}\phi_n=1$, consider arbitrary $x \in \RR$ and let $m =\lfloor x \rfloor$, the unique integer such that $m \leq x < m+1$. We clearly have $\phi_n(x)=0$ for all $n \notin \{m,m+1\}$. Thus
\begin{displaymath}
\sum_{n\in \ZZ} \phi_n(x)=\phi_{m}(x)+\phi_{m+1}(x)=(1-(x-m))+(1+(x-(m+1)))=1 \ ,
\end{displaymath}
see the figure below for illustration.
$$
\begin{tikzpicture}[scale=.6]
\draw[->] (0,0)--(10,0);
\draw (2,0)--(4,2)--(6,0);
\draw (4,0)--(6,2)--(8,0);
\draw (2,0.2)--(2,-.2);
\draw (4,0.2)--(4,-.2);
\draw (6,0.2)--(6,-.2);
\draw (8,0.2)--(8,-.2);
\draw (4.5,0.2)--(4.5,-.2);
\node[anchor=north] at (2,-.3){$m-1$};
\node[anchor=north] at (4,-.4){$m$};
\node[anchor=north] at (6,-.3){$m+1$};
\node[anchor=north] at (8,-.3){$m+2$};
\node[anchor=north] at (4.5,-.4){$x$};
\end{tikzpicture}
$$
For a partition of unity by functions having arbitrarily small support, let $\phi_{n,\alpha}(x)=\phi_n(x/\alpha)$ for fixed $\alpha>0$. Then $(\phi_{n,\alpha})_{n\in \ZZ}$ is a partition of unity of $\RR$, where each function is bounded by $1$ and has support of length $2\alpha$.
Note that $\phi_{n}$ can be written as convolution of two rectangular functions. This implies that its Fourier transform is integrable with a bound that is uniform in $n\in \ZZ$, compare \cite[Ex.~1.1.7]{FRSS-Rei2}. The latter viewpoint admits an extension to general LCA groups \cite[Prop.~5.1.3]{FRSS-Rei2}.
\end{example}

The following examples give partitions of unity on $G=\RR$ where boundedness is violated.

\begin{example}[Arbitrarily large support]
Consider the triangular function $\psi_0(x)= \max\{ 0 , 1-|x| \}$
and define for $k \in \NN$ the trapezoidal functions
$$\psi_k(x)=
\left\{
\begin{array}{cc}
  x-(2^{k-1}-1)  & \mbox{ if }  x \in [2^{k-1}-1, 2^{k-1}]\\
    1  & \mbox{ if }  x \in [2^{k-1}, 2^{k}-1]\\
    2^{k} - x  & \mbox{ if }  x \in [2^{k}-1, 2^{k}]\\
     0 & \mbox{ if }  x \notin [2^{k-1}-1, 2^{k}]
\end{array}
\right. \ ,
$$
see the figure below for illustration.
$$\begin{tikzpicture}[scale=1]
\draw[->] (2,0)--(9,0);
\draw (3,0)--(4,1)--(7,1)--(8,0);
\draw (3,0.1)--(3,-.1);
\draw (4,0.1)--(4,-.1);
\draw (7,0.1)--(7,-.1);
\draw (8,0.1)--(8,-.1);
\node[anchor=north] at (3,-.1){$2^{k-1}-1$};
\node[anchor=south] at (4,.1){$2^{k-1}$};
\node[anchor=south] at (7,.1){$2^{k}-1$};
\node[anchor=north] at (8,-.1){$2^{k}$};
\end{tikzpicture}
$$
Define $\psi_{-k}(x)=\psi_k(-x)$ for $k\in \NN$. Observing that, with $\phi_n$ as in Example~\ref{FRSS-ex:standardBUPU}, we have
\[
\psi_{k}= \sum_{n=2^{k-1}}^{2^k-1} \phi_n \ ,
\]
it follows immediately that $(\psi_n)_{n \in \ZZ}$
is a partition of unity of $\RR$, whose functions have  supports that are not uniformly bounded in size.
\end{example}

\begin{example}[Arbitrarily large range] For $n \in \ZZ$ consider the triangular functions $\phi_n(x)=\max \{ 0, 1- |x-n| \}$ from Example~\ref{FRSS-ex:standardBUPU} and define
\begin{displaymath}
\psi_n=-(n-1)\phi_{n-1}+\phi_n +(n+1)\phi_{n+1} \ .
\end{displaymath}

Then $(\psi_n)_{n\in \ZZ}$ is a partition of unity of functions having support of length 3 and having arbitrarily large range. In order to show $\sum_{n\in \NN}\psi_n=1$, consider arbitrary $x \in \RR$ and let $m =\lfloor x \rfloor$. Recall $\phi_n(x)=0$ for all $n \notin \{m,m+1\}$, which implies $\psi_n(x)=0$ for all $n \notin \{ m-1,m,m+1,m+2 \}$.
Using $\phi_{m-2}(x)=\phi_{m-1}(x)=\phi_{m+2}(x)=\phi_{m+3}(x)=0$ we get
\begin{displaymath}
\begin{split}
\sum_{n \in \ZZ} \psi_n(x) &= \psi_{m-1}(x)+\psi_m(x) +\psi_{m+1}(x) +\psi_{m+2}(x)\\
&= m\phi_{m}(x)+(\phi_{m}(x) +(m+1)\phi_{m+1}(x))\\
& \hspace{2ex} +(-m\phi_{m}(x)+\phi_{m+1}(x)) -(m+1)\phi_{m+1}(x)\\
&=\phi_{m}(x) +\phi_{m+1}(x)=\sum_{n\in \ZZ} \phi_n(x)=1 \ .
\end{split}
\end{displaymath}
\end{example}

\medskip

It would not be possible to describe the Wiener algebra using arbitrary partitions of unity. For example, properties (a) and (f) in Theorem~\ref{FRSS-thm:W} would then be violated. For a bounded uniform partition of unity $(\phi_i)_{i\in I}$, we need to describe where the functions $\phi_i$ are located. We use indexed families of points in $G$ for that purpose. Let us stress here that we do not use point sets, as we want to allow for finitely many repetitions of points. Indeed, finitely many repetitions occur naturally when refining a given bounded uniform partition of unity, see Remark~\ref{FRSS-rem:refbupu}. For an indexed point family $X=(x_i)_{i\in I}$, we call the set $\supp(X)=\{x_i: i\in I\}$ the \emph{supporting set of $X$}\index{supporting~set}.
We then have $\supp(X)=\supp(\delta_X)$ for the support of the point measure 
$\delta_X=\sum_{i\in I}\delta_{x_i}\index{Dirac~comb!weighted~Dirac~comb}$. Using the multiplicity function $r_X: G\to \NN_0\cup \{\infty\}$, defined by $r_X(x)=\card\{i\in I: x_i=x\}$, we can write $\delta_X=\sum_{x\in \supp(\delta_X)} r_X(x)\delta_x$. Thus $\delta_X$ is a weighted Dirac comb.

\medskip

In order to describe properties of indexed point families, we extend some standard notions for sets \cite[Ch.~2.1]{FRSS-TAO1}.
A point family $X$ is called \emph{uniformly discrete}\index{uniformly~discrete} if the set $\supp(X)$ is uniformly discrete, i.e., there exists an open zero neighborhood $U\subseteq G$ such that $(x+U)\cap(y+U)=\varnothing$ holds for all distinct $x,y\in \supp(X)$, compare \cite[p.~278]{FRSS-St17}. In that case we will also refer to $X$ as being a $U$-uniformly discrete point family. A point family $X$ is called \emph{relatively dense}\index{relatively~dense} if the set $\supp(X)$ is relatively dense, i.e., there exists a compact set $K\subseteq G$ such that $\supp(X)+K=G$. We will sometimes call $X$ a $K$-relatively dense point family in that case. A point family $X$ is called \emph{Delone}\index{Delone} if $\supp(X)$ is a Delone set, i.e., if $\supp(X)$ is both uniformly discrete and relatively dense.
A point family $X$ is called \emph{weakly uniformly discrete}\index{weakly~uniformly~discrete} if its multiplicity function $r_X$ is bounded and if the set $\supp(X)$ is weakly uniformly discrete, i.e., $\supp(X)$ is a finite union of uniformly discrete sets, compare \cite[p.~278]{FRSS-St17}. It can be shown \cite[Thm.~14]{FRSS-fe92} that weak uniform discreteness of $X$ is equivalent to translation boundedness of the associated weighted Dirac comb $\delta_{X}$. This means that for every compact $K\subseteq G$ we have
\begin{equation}\label{FRSS-eq:dXK}
\|\delta_X\|_K=\sup_{x\in G}\delta_X(x+K)<\infty \ ,
\end{equation}
where $\delta_X(A)=\card\{i\in I: x_i\in A\}$ for $A\subseteq G$, compare also Section~\ref{FRSS-sec:tbmintro}.

\begin{remark}[Dictionary] The following names and symbols are used in \cite{FRSS-fe92, FRSS-fegr92-3, FRSS-Fei23} for an indexed point family:
\begin{itemize}
  \item{} \textit{uniformly separated} for uniformly discrete
  \item{} \textit{relatively separated} for weakly uniformly discrete
  \item{} \textit{well-spread} for relatively dense
  \item{} \textit{shah-distribution} $\shah_X$ for the weighted Dirac comb $\delta_X$
\end{itemize}
\end{remark}

Wiener algebra functions are absolutely summable on uniformly discrete point sets. In fact they may be evaluated on general translation bounded Radon measures, see Theorem~\ref{FRSS-thm:tbW} and Section~\ref{FRSS-sec:FTm} below.

\subsection{definition and elementary properties}

We now give the definition of bounded uniform partition of unity.

\begin{definition}[BUPU]\label{FRSS-def:BUPUCc}
A family $\Phi=(\phi_i)_{i\in I}$ of functions in $\Cc(G)$ is called a \textit{bounded uniform partition of unity} (or simply \textit{BUPU})\index{BUPU!BUPU~in~$\Cc(G)$} of norm $M$ and size $U$ with overlap constant $B$, if there exists an indexed point family $X=(x_i)_{i\in I}$ in $G$, a relatively compact zero neighborhood $U\subseteq G$ and finite constants $M$ and $B$ such that
\begin{itemize}
\item[(a)] $\supp(\phi_i)\subseteq x_i +U$ for all $i \in I$,
\item[(b)] $\card\{j\in I: (x_i +U)\cap (x_j+U)\ne \varnothing\}\le B$ for all $i\in I$,
\item[(c)] $\sum_{i\in I} \phi_i(x)=1$ for all $x\in G$,
\item[(d)] $\|\phi_i\|_\infty\le M$ for all $i \in I$.
\end{itemize}
\end{definition}

\begin{remark}\label{FRSS-rem:cd}
BUPUs often satisfy $\phi_i\ge0$, but non-negativity is not required in the definition of a BUPU.
Condition (b) implies that the sum in (c) has, for fixed $x\in G$, at most $B$ nonzero terms.
We show in Corollary~\ref{FRSS-cor:chard} that (b) is equivalent to translation boundedness of the weighted Dirac comb $\delta_{X}$.  In fact (b) is replaced in \cite{FRSS-F83} by translation boundedness of $\delta_X$. Note that $\|\delta_X\|_{U-U}$ provides a (not necessarily optimal) overlap constant by the following lemma.
\end{remark}

\begin{lemma}\label{FRSS-lem:BUPU-tb} Let $X=(x_i)_{i\in I}$ be a point family in $G$, and let $U$ be any relatively compact zero neighborhood in $G$. We then have
\[
\| \delta_{X} \|_{-U} \leq  \sup_{i \in I} \card\{j\in I: (x_i +U)\cap (x_j+U)\ne \varnothing\} \leq \| \delta_{X} \|_{U-U} \,.
\]
\end{lemma}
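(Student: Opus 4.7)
The plan is to translate the middle quantity into a set-counting statement about points of $X$, so that both inequalities become statements about the counting function $A \mapsto \delta_X(A)$. The key observation is that $(x_i+U)\cap(x_j+U)\neq\varnothing$ is equivalent to $x_j-x_i\in U-U$, i.e.\ $x_j\in x_i+(U-U)$. Consequently
\[
\card\{j\in I:(x_i+U)\cap(x_j+U)\ne\varnothing\}=\delta_X\bigl(x_i+(U-U)\bigr).
\]

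For the right-hand inequality, I would just take the supremum over $i\in I$ on the right-hand side of the above identity, and bound it from above by the supremum over \emph{all} $x\in G$, which by definition is $\|\delta_X\|_{U-U}$.

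For the left-hand inequality, I would start from the definition and use that $x_j\in x-U$ is equivalent to $x\in x_j+U$, so that
\[
\|\delta_X\|_{-U}=\sup_{x\in G}\card\{j\in I:x\in x_j+U\}.
\]
Fix $x\in G$; if no $j$ satisfies $x\in x_j+U$ there is nothing to show, otherwise pick one such index and call it $i=i(x)$. Then every $j$ contributing to the count above satisfies $x\in(x_i+U)\cap(x_j+U)$, so the count is bounded by $\card\{j\in I:(x_i+U)\cap(x_j+U)\ne\varnothing\}$, which in turn is bounded by the supremum over $i\in I$ of that cardinality. Taking the sup over $x$ finishes the argument.

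There is no real obstacle here; the only point worth a little care is the opening remark about the equivalence $(x_i+U)\cap(x_j+U)\ne\varnothing\Leftrightarrow x_j\in x_i+(U-U)$, since after that both inequalities are immediate. I would state this equivalence once at the start of the proof and then treat the two inequalities in turn.
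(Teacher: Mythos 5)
Your proposal is correct and follows essentially the same route as the paper's proof: the same equivalence $(x_i+U)\cap(x_j+U)\ne\varnothing \Leftrightarrow x_j\in x_i+(U-U)$ for the upper bound, and the same observation that all indices $j$ counted by $\delta_X(x-U)$ pairwise overlap through the point $x$ for the lower bound. Your explicit handling of the case where no index contributes is a minor tidiness improvement over the paper's phrasing, but the argument is identical in substance.
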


\begin{proof}
Note first that $(x_i +U)\cap (x_j+U) \neq \varnothing$ if and only if there exist $u,v \in U$ such that $x_i+u =x_j +v$, or equivalently if and only if $x_j \in x_i +(U-U)$. Therefore we have for all $i \in I$ that
\[
 \card\{j\in I: (x_i +U)\cap (x_j+U)\ne \varnothing\} =\delta_{X}(x_i+(U-U)) \leq \| \delta_{X} \|_{U-U} \ ,
\]
and the upper inequality follows. For the lower inequality note $\delta_{X}(x-U)=\card(I_x)$ for
$I_x= \{ i \in I : x_i \in x-U \}$.
As $x_i,x_j\in I_x$ implies $x\in (x_i+U)\cap (x_j+U)$, we get for any $x_i$ from $X$ that
\begin{displaymath}
\begin{split}
\delta_{X}(x-U)&= \card(I_x) \leq  \card\{j\in I: (x_i +U)\cap (x_j+U)\ne \varnothing\} \\
& \leq \sup_{i \in I} \card\{j\in I: (x_i +U)\cap (x_j+U)\ne \varnothing\}
\end{split}
\end{displaymath}
uniformly in $x\in G$, and the claim follows.
\end{proof}

\begin{corollary}\label{FRSS-cor:chard} Let $X=(x_i)_{i\in I}$ be a point family in $G$. Then the following are equivalent.
\begin{itemize}
  \item[(a)] There exists some relatively compact zero neighborhood $U\subseteq G$ and some finite $B$ such that $\card\{j\in I: (x_i +U)\cap (x_j+U)\ne \varnothing\}\le B$ for all $i\in I$.
  \item[(b)] For any relatively compact zero neighborhood $V\subseteq G$ there exists some finite $B_V$ such that $\card\{j\in I: (x_i +V)\cap (x_j+V)\ne \varnothing\}\le B_V$ for all $i\in I$.
  \item[(c)] $\delta_{X}$ is translation bounded.
\end{itemize}
Moreover, in this case we can choose $B_V= \| \delta_{X} \|_{V-V}$.
\qed
\end{corollary}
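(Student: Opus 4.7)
The plan is to prove the cycle (b) $\Rightarrow$ (a) $\Rightarrow$ (c) $\Rightarrow$ (b), leveraging Lemma~\ref{FRSS-lem:BUPU-tb} directly for two of the three implications. The implication (b) $\Rightarrow$ (a) is immediate by choosing, say, $U=V$ with any fixed relatively compact zero neighborhood.

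For (a) $\Rightarrow$ (c), the lower inequality in Lemma~\ref{FRSS-lem:BUPU-tb} yields $\|\delta_X\|_{-U}\le B$, i.e., translation boundedness of $\delta_X$ at the particular compact set $\overline{-U}$. To upgrade this to translation boundedness at every compact $K\subseteq G$, I would invoke a standard covering argument: since $-U$ contains an open zero neighborhood and $K$ is compact, there exist finitely many points $t_1,\ldots,t_n\in G$ with $K\subseteq \bigcup_{k=1}^n (t_k-U)$, and then
\[
\delta_X(x+K)\le \sum_{k=1}^n \delta_X\bigl((x+t_k)+(-U)\bigr)\le n\|\delta_X\|_{-U}\le nB
\]
uniformly in $x\in G$, giving $\|\delta_X\|_K<\infty$.

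For (c) $\Rightarrow$ (b), let $V$ be any relatively compact zero neighborhood. Since $G$ is a topological group, the continuous map $(u,v)\mapsto u-v$ sends $\overline V\times\overline V$ to a compact set containing $V-V$, so $V-V$ is relatively compact. Translation boundedness of $\delta_X$ therefore gives $\|\delta_X\|_{V-V}<\infty$, and the upper inequality in Lemma~\ref{FRSS-lem:BUPU-tb} shows that one may take $B_V=\|\delta_X\|_{V-V}$, establishing both (b) and the final claim.

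No step should present a genuine obstacle here; the only subtlety worth flagging is the covering step in (a) $\Rightarrow$ (c), where one must remember that a zero neighborhood automatically has nonempty interior and so its translates cover every compact set by finitely many, and the routine verification that $V-V$ is relatively compact when $V$ is.
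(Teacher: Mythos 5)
Your proof is correct and follows exactly the route the paper intends: the corollary is stated with a \qed because it is meant to follow from Lemma~\ref{FRSS-lem:BUPU-tb}, and you use precisely that lemma for the two nontrivial implications, supplemented by the standard finite-covering argument (which the paper itself invokes as a ``standard estimate'' in the proof of Lemma~\ref{FRSS-lem:smooting}) to pass from boundedness on one zero neighborhood to all compact sets. Nothing is missing.
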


In the definition of a BUPU, it is natural to assume that $X$ allows for repetitions of points. Indeed, refining a BUPU by another BUPU naturally leads to finitely many repetitions of center points from $X$.

\begin{remark}[Refining a BUPU]\label{FRSS-rem:refbupu}
Let $\Phi,\Psi$ be two BUPUs of size $U$ and $V$ with point families $X=(x_i)_{i\in I}$ and $Y=(y_j)_{j\in J}$. The refinement $\Theta=\Phi|\Psi$ of $\Phi$ by $\Psi$ is the following BUPU of size $U$ with index set $K=\{(i,j)\in I\times J: (x_i+U)\cap (y_j+V)\ne \varnothing\}$. We take $Z=(z_k)_{k\in K}$ where $z_{(i,j)}=x_i$ for $(i,j)\in K$. Note that the number of repetitions of any point in $Z$ is uniformly bounded due to the finite overlap property. Indeed we have
\begin{displaymath}
\begin{split}
\card\{j\in J:(i,j)\in K\} &= \card\{j\in J: (x_i+U)\cap (y_j+V)\ne\varnothing\} \\
&=\delta_Y(x_i+U - V)\le \|\delta_Y\|_{U-V} < \infty \ ,
\end{split}
\end{displaymath}
compare Eqn.~\eqref{FRSS-eq:dXK}.
We set $\theta_{(i,j)}=\phi_i\cdot \psi_j$ for $(i,j)\in K$. Then $\Theta=(\theta_k)_{k\in K}$ is a BUPU of size $U$. Property (c) in Definition~\ref{FRSS-def:BUPUCc} holds trivially, and we have $\|\theta_k\|\le \|\phi_i\|\cdot \|\psi_j\|\le M_\Phi\cdot M_\Psi$ as the norm is multiplicative. Moreover $\supp(\theta_{(i,j)})\subseteq x_i +U$ for all $(i,j)\in K$, and we may set $B_{\Theta}= B_{\Phi}\cdot \|\delta_Y\|_{U-V}$. Indeed for $z_\ell\in K$ we can estimate
\begin{displaymath}
\begin{split}
\card&\{k\in K : (z_\ell+U)\cap (z_k+U)\ne\varnothing\}\\
&= \card\{(i,j)\in I\times J: (x_\ell+U)\cap (x_i+U)\ne\varnothing \wedge (i,j) \in K \}\\
&\le \card\{i\in I: (x_\ell+U)\cap (x_i+U)\ne \varnothing\} \cdot \|\delta_Y\|_{U-V} \le B_\Phi\cdot \|\delta_Y\|_{U-V}
\end{split}
\end{displaymath}
For $\|f\|_\Phi=\sum_{i\in I} \|f\phi_i\|$ and $\|f\|_\Psi=\sum_{j\in J} \|f\psi_j\|$ we then have by multiplicativity of the norm that
\begin{equation}\label{FRSS-eq:refined}
\begin{split}
\max\{\|f\|_{\Phi}, &\|f\|_{\Psi}\}\le\|f\|_{\Phi|\Psi}= \sum_{(i,j)\in K} \|\phi_i\psi_j f\| \\
&\le \|\delta_Y\|_{U-V}  \cdot M_\Psi \cdot \sum_{i\in I} \|\phi_if\|= \|\delta_Y\|_{U-V}  \cdot M_\Psi\cdot \|f\|_\Phi \ ,
\end{split}
\end{equation}
an estimate that will be useful a few times later on.
\end{remark}

We now discuss BUPUs in a product of two LCA groups. As usual, for $\phi\in \COsp(G)$ and $\psi\in \COsp(H)$ we denote by $\phi\otimes\psi\in \COsp(G\times H)$ their tensor product $(x,y)\mapsto (\phi\otimes\psi)(x,y)=\phi(x)\cdot \psi(y)$. The tensor product will be discussed in more detail in Section~\ref{FRSS-sec:Pp} below.

\begin{lemma}[Product BUPU]\label{FRSS-lem:prodbupu} Consider two LCA groups $G$ and $H$. Let $\Phi=(\phi_i)_{i\in I}$ be a BUPU of size $U$ and norm $M$ with point family $X=(x_i)_{i\in I}$ in $G$. Let $\Psi=(\psi_j)_{j\in J}$ be a BUPU of size $V$ and norm $N$ with point family $Y=(y_j)_{j\in J}$ in $H$.  Define $\Phi\otimes \Psi=\{\phi_i \otimes \psi_j: (i,j)\in I\times J\}$. Then $\Phi\otimes \Psi$ is a BUPU of size $U\times V$ and norm $M\cdot N$ with point family $X\times Y=((x_i, y_j))_{(i,j)\in I\times J}$ in $G\times H$.
\end{lemma}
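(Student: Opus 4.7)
The plan is to verify the four defining properties (a)--(d) of Definition~\ref{FRSS-def:BUPUCc} directly for the family $\Phi \otimes \Psi$, using the product structure of tensor products throughout. Note first that $\phi_i \otimes \psi_j \in \Cc(G \times H)$ since $\phi_i \in \Cc(G)$ and $\psi_j \in \Cc(H)$, so we indeed have a candidate family in $\Cc(G \times H)$. Also, $U \times V$ is a relatively compact zero neighborhood in $G \times H$, and $X \times Y = ((x_i, y_j))_{(i,j) \in I \times J}$ is an indexed point family in $G \times H$.

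For property (a), I would use the identity $\supp(\phi_i \otimes \psi_j) = \supp(\phi_i) \times \supp(\psi_j)$, which together with the hypotheses $\supp(\phi_i) \subseteq x_i + U$ and $\supp(\psi_j) \subseteq y_j + V$ yields
\[
\supp(\phi_i \otimes \psi_j) \subseteq (x_i + U) \times (y_j + V) = (x_i, y_j) + (U \times V) \ .
\]
For property (d), the sup norm is multiplicative under tensor products, so $\|\phi_i \otimes \psi_j\|_\infty = \|\phi_i\|_\infty \cdot \|\psi_j\|_\infty \le M \cdot N$. For property (c), at any point $(x,y)\in G\times H$ only finitely many summands are nonzero by property (b) applied to each factor, so Fubini-style reordering of the finite sum gives
\[
\sum_{(i,j) \in I \times J} (\phi_i \otimes \psi_j)(x, y) = \Bigl(\sum_{i \in I} \phi_i(x)\Bigr) \Bigl(\sum_{j \in J} \psi_j(y)\Bigr) = 1 \cdot 1 = 1 \ .
\]

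For property (b), the key observation is that a product neighborhood intersects another product neighborhood exactly when both factors intersect:
\[
\bigl((x_i, y_j) + (U \times V)\bigr) \cap \bigl((x_{i'}, y_{j'}) + (U \times V)\bigr) \ne \varnothing
\]
holds if and only if $(x_i + U) \cap (x_{i'} + U) \ne \varnothing$ and $(y_j + V) \cap (y_{j'} + V) \ne \varnothing$. Hence the set of $(i', j')$ with nonempty intersection is contained in the product of the two individual intersection index sets, giving the bound $B_\Phi \cdot B_\Psi$ uniformly in $(i,j) \in I \times J$.

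I do not anticipate any real obstacle: the statement is essentially a bookkeeping exercise once one accepts that support, sup norm, and the partition-of-unity identity all factor through the tensor product, and that intersection of product sets decouples into intersections of the factors. The only mild subtlety is that $X \times Y$ is indexed by $I \times J$ and may contain repetitions even if $X$ and $Y$ do not, but this is already permitted by Definition~\ref{FRSS-def:BUPUCc} as discussed in Remark~\ref{FRSS-rem:refbupu}.
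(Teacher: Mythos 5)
Your proof is correct and follows essentially the same route as the paper: direct verification of the four BUPU axioms, with supports, sup norms and the partition identity all factoring through the tensor product. The only (harmless) difference is in property (b), where the paper invokes translation boundedness of $\delta_{X\times Y}$ via $\|\delta_{X\times Y}\|_{K\times L}=\|\delta_X\|_{K}\cdot\|\delta_Y\|_{L}$ and Corollary~\ref{FRSS-cor:chard}, whereas you argue directly that product neighborhoods intersect exactly when both factors do, yielding the explicit overlap constant $B_\Phi\cdot B_\Psi$.
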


\begin{beweis}
Note that $\delta_{X\times Y}$ is translation bounded as
$\|\delta_{X\times Y}\|_{K\times L}=\|\delta_X\|_{K}\cdot \|\delta_Y\|_{L}$.
Hence condition (b) in the BUPU Definition~\ref{FRSS-def:BUPUCc} is satisfied due to Corollary~\ref{FRSS-cor:chard}. Moreover $U\times V$ is a compact zero neighborhood in $G\times H$.
Clearly $\phi_i\otimes \psi_j\in \Cc(G\times H)$ and $\supp(\phi_i\otimes \psi_j)\subseteq (x_i,y_j) + (U\times V)$  for all $(i,j)\in I\times J$. We also have
\begin{displaymath}
\sum_{(i,j)\in I\times J} (\phi_i\otimes \psi_j) ((x,y))= \sum_{(i,j)\in I\times J} \phi_i(x) \cdot \psi_j(y)
= \sum_{i\in  I} \phi_i(x) \cdot \sum_{j\in  J} \psi_j(y) = 1 \ .
\end{displaymath}
As $\|\phi_i\otimes \psi_j\|_\infty\le \|\phi_i\|_\infty\cdot \|\psi_j\|_\infty\le M\cdot N$ for all $(i,j)\in I\times J$, the claim follows.
\end{beweis}

Recall that Example~\ref{FRSS-ex:standardBUPU} gives BUPUs of arbitrary small size in $G=\RR$. In the following section, we discuss how to construct BUPUs in arbitrary LCA groups.

\section{Constructing BUPUs}\label{FRSS-cbupu}

In this section, we review constructions of BUPUs  of arbitrary small size from \cite{FRSS-Fei81b, FRSS-fe92, FRSS-fegr92-3, FRSS-Fei22}.
We first describe a smoothing procedure that allows to construct a BUPU from a family of functions that are not necessarily continuous but satisfy the other properties of a BUPU. In particular, this applies to  the motivating tiling example of Section~\ref{FRSS-bupumot} for the case of a single prototile.
We then review two constructions of BUPUs with prescribed locations of their function centers. One uses Delone sets, the other uses arbitrarily small lattices, whose existence is granted by the structure theorem of LCA groups.

\subsection{BUPUs via smoothing}

Given a BUPU, we explain how to obtain a BUPU of functions with stronger analytic properties by a certain smoothing procedure.

\begin{lemma}[Smoothing]\label{FRSS-lem:smooting}
Fix any precompact open zero neighborhood $U$ in $G$ and take an open zero neighborhood $V$ in $G$ such that $V+V\subseteq U$.
Let $n$ be a finite bound on the number of translates of $-V$ that cover $U-U$. Let $\psi \in \Lisp(G)$ be such that  $\esupp(\psi) \subseteq V$ and $m_G(\psi) =1$.
Let\/ $\Phi=(\phi_i)_{i\in I}$ be a family of functions in $L^\infty(G)$.
Let $M,B$ be finite constants. Let $X=(x_i)_{i\in I}$ be a point family in $G$ with the following properties.
\begin{itemize}
\item[(a)] $\esupp(\phi_i)\subseteq x_i +V$ for all $i \in I$,
\item[(b)] $\card\{j\in I: (x_i +V)\cap (x_j+V)\ne \varnothing\}\le B/n$ for all $i\in I$.
\item[(c)] $\sum_{i\in I} \phi_i(x)=1$ for almost all $x\in G$,
\item[(d)] $\|\phi_i\|_\infty\le M/\|\psi\|_1$ for all $i \in I$,
\end{itemize}
Then $\Psi= (\phi_i*\psi)_{i \in I}$ is a BUPU of norm $M$ and size $U$ with point family $X$ and overlap constant $B$.
\end{lemma}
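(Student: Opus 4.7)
The plan is to verify the four BUPU axioms of Definition~\ref{FRSS-def:BUPUCc} for the family $\Psi = (\phi_i \ast \psi)_{i \in I}$, together with the requirement that $\phi_i \ast \psi \in \Cc(G)$.

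First I would handle the basic function-theoretic properties. Since $\psi \in \Lisp(G)$ and $\phi_i \in L^\infty(G)$ has essentially compact support contained in $x_i + V$, the convolution $\phi_i \ast \psi$ is uniformly continuous on $G$; its support satisfies $\supp(\phi_i \ast \psi) \subseteq \esupp(\phi_i) + \esupp(\psi) \subseteq x_i + V + V \subseteq x_i + U$, which is compact because $U$ is precompact. This simultaneously establishes $\phi_i \ast \psi \in \Cc(G)$ and property~(a) of Definition~\ref{FRSS-def:BUPUCc}. The norm bound (d) is immediate from Young's inequality: $\|\phi_i \ast \psi\|_\infty \leq \|\phi_i\|_\infty \cdot \|\psi\|_1 \leq M$.

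The overlap condition (b) is the step I expect to be the main bookkeeping obstacle, as it is where the constant $n$ enters. Note that $(x_i + U) \cap (x_j + U) \ne \varnothing$ is equivalent to $x_j \in x_i + (U - U)$. By hypothesis, $U - U$ is covered by $n$ translates of $-V$, say $U - U \subseteq \bigcup_{k=1}^n (y_k - V)$, whence
\[
\card\{j \in I : x_j \in x_i + (U - U)\} \leq \sum_{k=1}^n \card\{j \in I : x_j \in (x_i + y_k) - V\}.
\]
Now hypothesis (b) gives, via Lemma~\ref{FRSS-lem:BUPU-tb}, that $\|\delta_X\|_{-V} \leq B/n$, so each summand on the right is bounded by $B/n$ uniformly in $i$ and $k$, yielding the desired bound $B$.

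For the partition-of-unity property (c), I would interchange sum and integral in
\[
\sum_{i \in I} (\phi_i \ast \psi)(x) = \sum_{i \in I} \int_G \phi_i(x - y) \, \psi(y) \dd m_G(y) = \int_G \Bigl( \sum_{i \in I} \phi_i(x - y) \Bigr) \psi(y) \dd m_G(y),
\]
and then apply hypothesis (c) together with $m_G(\psi) = 1$ to conclude the inner sum equals $1$ for a.e.\ $y$ in the essential support of $\psi$. The interchange is legal because for $y$ in the essentially relevant set $V$ (up to null sets) one has $\phi_i(x - y) \ne 0$ only when $x_i \in x - y - V \subseteq x - U$, and $\delta_X(x - U) < \infty$ by translation boundedness of $\delta_X$ (Corollary~\ref{FRSS-cor:chard}), so only finitely many $i$ contribute. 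Once all four axioms are checked with point family $X$, size $U$, norm $M$, and overlap $B$, the lemma follows.
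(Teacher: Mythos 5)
Your proposal is correct and follows essentially the same route as the paper's proof: verify the four BUPU axioms, bound the overlap via the covering of $U-U$ by $n$ translates of $-V$ together with Lemma~\ref{FRSS-lem:BUPU-tb}, and justify the sum--integral interchange in (c) by the local finiteness coming from translation boundedness of $\delta_X$. No gaps.
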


\begin{remark}
Note that $\Phi$ is a BUPU if and only if $\phi_i$ is continuous for all $i \in I$.
\end{remark}

\begin{proof}
We check conditions (a) to (d) in Definition~\ref{FRSS-def:BUPUCc}. First note that $\phi_i \in L^\infty(G)$ and $\psi \in \Lisp(G)$ imply $\phi_i*\psi \in \Cu(G)$, as the norm $\|\cdot\|_1$ is translation invariant and translation continuous, see e.g.~\cite[Lem.~1.4.2]{FRSS-DE} or \cite[Prop.~3.5.6]{FRSS-Rei2}.
Moreover we have $\| \phi_i*\psi \|_{\infty} \leq \| \phi_i \|_\infty \cdot \| \psi \|_{1} \leq M$, which proves (d).
Condition (a) is obvious as $\supp(\phi_i*\psi)\subseteq x_i+V+V\subseteq x_i+U$, and we also get $\phi_i*\psi \in \Cc(G)$.

\noindent Let us prove (b) next. Take a finite set $F\subseteq G$ such that $U-U\subseteq \bigcup_{x\in F} (x-V)$, which is possible as $V$ is open and as $U$ is precompact. Letting $n=\card(F)$, we then clearly have $\|\delta_{X} \|_{U-U}\le n \|\delta_{X} \|_{-V}$, which is seen from standard estimates.
Thus Lemma~\ref{FRSS-lem:BUPU-tb} yields
\begin{displaymath}\label{FRSS-eq:up}
\begin{split}
\sup_{i \in I} &\card\{j\in I: (x_i +U)\cap (x_j+U)\ne \varnothing\} \leq \| \delta_{X} \|_{U-U} \\
&\leq n \|\delta_{X} \|_{-V}
\leq n \sup_{i \in I} \card\{j\in I: (x_i +V)\cap (x_j+V)\ne \varnothing\} \leq  B \ .
\end{split}
\end{displaymath}
Thus (b) is satisfied.

\noindent In order to prove (c), let $x \in G$ be arbitrary and define $I_x= \{ i \in I : (x-V) \cap \esupp(\phi_i) \ne \varnothing \}$. Now  $i \in I_x$ implies $(x-V) \cap (x_i+V) \neq \varnothing$ as $\esupp(\phi_i) \subseteq x_i+V$. We thus have $x_i \in x+V+V$, which implies $\card(I_x)\le\|\delta_X\|_{V+V}<\infty$. Thus
\begin{displaymath}
\begin{split}
1 &= \int_{G} \psi(x-t) \dd t  = \int_{x-V} \psi(x-t) \dd t
= \int_{x-V} \psi(x-t) \left(\sum_{i\in I}\phi_i(t)\right)  \dd t \\
&= \int_{x-V} \psi(x-t) \left(\sum_{i\in I_x}\phi_i(t)\right)  \dd t
= \sum_{i\in I_x} \int_{x-V} \psi(x-t)\phi_i(t)  \dd t \\
&= \sum_{i\in I} \int_{x-V} \psi(x-t)\phi_i(t)  \dd t  =\sum_{i\in I}  \psi*\phi_i(x) \ .
\end{split}
\end{displaymath}
\end{proof}

\begin{remark}\label{FRSS-rem:sba} We will later work with BUPUs in a suitable Banach algebra $(\Asp,\|\cdot\|_\Asp)$ of functions over $G$, see Definition~\ref{FRSS-def:BUPUA} below. This generalises the case $(\Asp, \|\cdot\|_\Asp)=(\COsp(G), \|\cdot\|_\infty)$ from Definition~\ref{FRSS-def:BUPUCc}. The above smoothing technique can then be applied as follows.
Assume that $\Phi=(\phi_i)_{i\in I}$ is a BUPU in $\COsp(G)$ of size $U$, and assume that $\psi \in \Lisp(G)$ is a function having essential support inside a zero neighbourhood $V$, being normalized such that $m_G(\psi) =1$. Then $\Psi= (\phi_i*\psi)_{i \in I}$ is a BUPU in $\Asp$ of size $U+V$ as long as $\phi_i*\psi \in \Asp$  and as there exists a finite constant $C$ such that $\|\phi_i*\psi\|_{A} \leq C \|\phi_i\|_\infty$ for all $i\in I$. Consider for example $\Asp=\mathcal F \Lisp(\widehat G)$ the Fourier algebra of $G$ with norm $\|f\|_\Asp=\|\widehat f\|_1$. If the above $\psi$ additionally satisfies $\psi\in \Asp$, we  then indeed have
\begin{displaymath}
\|\phi_i*\psi\|_{\Asp}= \|\widehat \phi_i\cdot \widehat \psi\|_1\le \|\widehat \phi_i\|_\infty \cdot \|\widehat \psi\|_1 \le \|\phi_i\|_1 \cdot \|\psi\|_\Asp \le m_G(U) \|\phi_i\|_\infty \cdot \|\psi\|_\Asp \ .
\end{displaymath}
\end{remark}

\subsection{BUPUs centered at Delone sets}

Here we give two different constructions of BUPU's based on the existence of Delone sets in $G$. Our first construction follows \cite[Thm.~2]{FRSS-Fei81b}, see also \cite[Thm.~1]{FRSS-Fei22}. In the tempered distribution setting, similar constructions appear in Theorem 1.4.6 and Lemma 1.4.9 of \cite{FRSS-H03}.
The following well-known result (see \cite[Lem.~1]{FRSS-Fei81b} and references therein) states that there exist $U$-relatively dense Delone sets, for any compact zero neighbourhood $U$ in $G$. The proof uses that if $\Gamma$ is a maximal $U$-uniformly discrete subset of $G$, then $\Gamma+(U-U)=G$. Existence of such a maximal set is ensured by Zorn's lemma.

\begin{lemma}\label{FRSS-lemma:existe-Del} Let $U$ be any zero neighbourhood in $G$ and let $V$ be an open zero neighborhood in $G$ such that $V-V\subseteq U$. Then the following hold.
\begin{itemize}
  \item[(a)] There exists a set $\Lambda$ in $G$ with the following properties.
  \begin{itemize}
  \item $\Lambda$ is $V$-uniformly discrete;
  \item $\Gamma$ is not $V$-uniformly discrete for any $\Gamma\supsetneq \Lambda$.
  \end{itemize}
  \item[(b)] Any set $\Lambda$ in $G$ having the properties in (a) satisfies $\Lambda+U= G$.
\end{itemize}
\end{lemma}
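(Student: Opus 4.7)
\medskip
\noindent
\textbf{Proof plan.} The plan is to treat the two parts separately, following the standard Zorn's lemma construction of maximal uniformly discrete sets. For part (a), I would consider the poset $\mathcal{P}$ consisting of all $V$-uniformly discrete subsets of $G$, ordered by inclusion. This poset is nonempty (the empty set, or any single-point set, is $V$-uniformly discrete). The key routine step is to check that every chain in $\mathcal{P}$ has an upper bound, namely its union: if $(\Lambda_\alpha)_\alpha$ is totally ordered by inclusion and $x,y$ are distinct points of $\Lambda=\bigcup_\alpha \Lambda_\alpha$, then both belong to some common $\Lambda_\alpha$ (by totality of the order), so $(x+V)\cap(y+V)=\varnothing$ holds and hence $\Lambda$ is $V$-uniformly discrete. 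Zorn's lemma then yields a maximal element $\Lambda$, which is exactly a set satisfying the two properties in (a).

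For part (b), I would argue by contradiction. Suppose some $\Lambda$ as in (a) satisfies $\Lambda+U\neq G$, and pick $g\in G\setminus(\Lambda+U)$. I claim $\Lambda':=\Lambda\cup\{g\}$ is still $V$-uniformly discrete, which contradicts the maximality of $\Lambda$ and completes the proof. Uniform discreteness of $\Lambda'$ requires $(x+V)\cap(y+V)=\varnothing$ for all distinct $x,y\in\Lambda'$. For $x,y\in\Lambda$ this holds by assumption. The only new case is $x\in\Lambda$, $y=g$ (or symmetrically), and I need to check $(x+V)\cap(g+V)=\varnothing$.

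The translation of this condition into a set-theoretic statement about $g$ is the crux of the argument: $(x+V)\cap(g+V)\neq\varnothing$ is equivalent to $g-x\in V-V$, i.e.\ $g\in x+(V-V)$. Since $V-V\subseteq U$ by hypothesis, the negation $(x+V)\cap(g+V)=\varnothing$ would follow once we know $g\notin x+U$, i.e.\ $x\notin g-U$. But the assumption $g\notin\Lambda+U$ means precisely that no $x\in\Lambda$ satisfies $g\in x+U$, which is the condition we need for every $x\in\Lambda$ simultaneously. Hence $\Lambda'$ is $V$-uniformly discrete, contradicting the maximality property of $\Lambda$ in (a).

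I do not expect any real obstacle here: part (a) is a textbook Zorn application, and part (b) hinges on the single algebraic identity $(x+V)\cap(g+V)\neq\varnothing\iff g-x\in V-V$ together with $V-V\subseteq U$. The only point to be slightly careful about is to work with $V-V$ (not $V$) when unpacking the intersection condition, which is why the hypothesis relates $V$ and $U$ via $V-V\subseteq U$ rather than $V\subseteq U$.
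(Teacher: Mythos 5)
Your proposal is correct and follows essentially the same route as the paper: part (a) is the identical Zorn's lemma argument on the poset of $V$-uniformly discrete sets, and part (b) rests on the same identity $(x+V)\cap(g+V)\neq\varnothing\iff g\in x+(V-V)\subseteq x+U$, merely phrased as a contradiction to maximality rather than as the paper's direct contrapositive. The only cosmetic omission is the remark that $g\notin\Lambda+U$ already forces $g\notin\Lambda$ (since $0\in U$), so that $g$ is genuinely a new point; this is immediate.
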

\begin{proof}

``(a)": Consider the collection $\mathcal P$ of all $V$-uniformly discrete subsets of $G$, together with set inclusion as partial order.
Consider any chain $\mathcal C=\{\Lambda_i: i \in I\}$ in $\mathcal P$ and define the upper bound $\Lambda = \bigcup_{i \in I} \Lambda_i\subseteq G$ for $\mathcal C$. We show that $\Lambda$ is $V$-uniformly discrete, which implies $\Lambda\in \mathcal P$. Let $x,y \in \Lambda$ be such that $x\neq y$. Then there exist some $i,j \in I$ such that $x \in \Lambda_i$ and $y \in \Lambda_j$. Let $k = \max \{i,j\}$. Then $x,y \in \Lambda_k$ as $\mathcal C$ is a chain. As $\Lambda_k \in \mathcal P$, we get $(x+V) \cap (y+V) =\varnothing$, which shows $\Lambda \in \mathcal P$. Now Zorn's Lemma yields that $\mathcal P$ has some maximal element.

\noindent ``(b)":
Assume that $\Lambda\subseteq G$ has the properties in (a). Clearly $x\in \Lambda$ satisfies $x\in \Lambda+U$ as $0\in U$. Consider now arbitrary $x \in G\setminus\Lambda$.
Then $\Lambda \cup \{ x \}$ is not $V$-uniformly discrete. Since $\Lambda$ is $V$-uniformly discrete, there exists some $y \in \Lambda$ such that $(x+V) \cap (y+V) \neq \varnothing$.
Let $u,v \in V$ be such that $x+u=y+v$. Then
\[
x=y+v-u \in y+V-V \subseteq y+U \subseteq   \Lambda+U \ ,
\]
which completes the proof.
\end{proof}

Using a maximally uniformly discrete Delone set, we can construct a BUPU of given size $U$ as follows.

\begin{proposition}\label{FRSS-prop:positive-BUPU}
Let $U\subseteq G$ be a precompact zero neighbourhood and let $V\subseteq G$ be an open zero neighborhood such that $V-V\subseteq U$. Let $\Lambda\subseteq G$ be a Delone set such that $\Lambda+V=G$.
Fix $\psi \in \Cc(G)$ supported within $U$ such that $\psi \geq 1_{V}$, and define
$f = \delta_{\Lambda}* \psi\in \Cu(G)$, where $(\delta_{\Lambda}* \psi)(x)=\sum_{\lambda\in \Lambda} \psi(x -\lambda)$. Then $f\ge 1$, and the functions $(\phi_\lambda)_{\lambda\in \Lambda}$, given by $\phi_\lambda(x)= \psi(x-\lambda)/f(x)$ constitute a BUPU of positive functions in $\Cc(G)$ having norm $1$ and size $U$ with point family $\Lambda$.
\end{proposition}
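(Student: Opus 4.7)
The plan is to verify properties (a)--(d) of Definition~\ref{FRSS-def:BUPUCc} one by one after first establishing that $f$ is well-defined and that $f \geq 1$, the latter giving meaning to the quotient $\phi_\lambda = \psi(\cdot -\lambda)/f$. I would not expect any serious obstacles here; the proof is essentially an unpacking of the hypotheses, and the main work is keeping track of where each assumption is used.

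First, I would argue that $f = \delta_\Lambda * \psi$ is a locally finite sum lying in $\Cu(G)$. For any $x \in G$, the term $\psi(x-\lambda)$ vanishes unless $\lambda \in x - U$, and uniform discreteness of the Delone set $\Lambda$ bounds the cardinality of $\Lambda \cap (x-U)$ uniformly in $x$ (this is essentially Corollary~\ref{FRSS-cor:chard}, noting $\delta_\Lambda$ is translation bounded). Hence $f$ is a finite sum of translates of $\psi \in \Cc(G)$ locally, and the uniform bound on the number of nonzero terms together with uniform continuity of $\psi$ yields $f \in \Cu(G)$. Next, I would use relative denseness in the form $\Lambda + V = G$: for any $x \in G$ there is $\lambda \in \Lambda$ with $x - \lambda \in V$, so $\psi(x-\lambda) \geq 1_V(x-\lambda)=1$, giving $f(x) \geq 1$. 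In particular $1/f \in \Cu(G)$ is well-defined, bounded by $1$, and strictly positive.

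With $f \geq 1$ in hand, the function $\phi_\lambda(x) = \psi(x-\lambda)/f(x)$ is continuous, non-negative (since $\psi \geq 1_V \geq 0$), and compactly supported because $\supp(\psi(\cdot-\lambda)) \subseteq \lambda + U$, which simultaneously verifies condition~(a). Condition~(c) is then immediate from the definition:
\[
\sum_{\lambda \in \Lambda} \phi_\lambda(x) = \frac{1}{f(x)}\sum_{\lambda \in \Lambda} \psi(x-\lambda) = \frac{f(x)}{f(x)}=1 \ .
\]
For condition~(d), since every term in $f(x)=\sum_{\mu\in \Lambda}\psi(x-\mu)$ is non-negative, we have $\psi(x-\lambda) \leq f(x)$, hence $0 \leq \phi_\lambda(x) \leq 1$ for all $x$, so $\|\phi_\lambda\|_\infty \leq 1$ and the norm is $M=1$ as claimed.

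Finally, I would verify the finite overlap condition~(b) by invoking Corollary~\ref{FRSS-cor:chard}: the point family $\Lambda$ is uniformly discrete (as it is Delone), so the associated Dirac comb $\delta_\Lambda$ is translation bounded, and therefore $\card\{\mu \in \Lambda : (\lambda+U)\cap(\mu+U)\ne \varnothing\}$ is uniformly bounded in $\lambda$, with an explicit overlap constant $B \leq \|\delta_\Lambda\|_{U-U}$ available from Lemma~\ref{FRSS-lem:BUPU-tb}. This completes the verification that $(\phi_\lambda)_{\lambda \in \Lambda}$ is a BUPU of norm $1$ and size $U$ with point family $\Lambda$. The only subtlety worth flagging in the write-up is ensuring that $\psi \geq 1_V$ (rather than merely $\psi \geq 0$) is what forces $f \geq 1$ globally via the relative denseness $\Lambda + V = G$; the positivity of $\psi$ alone would not suffice.
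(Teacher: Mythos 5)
Your proof is correct and follows essentially the same route as the paper's: establish $f\ge 1$ from $\psi\ge 1_V$ and $\Lambda+V=G$, get $f\in\Cu(G)$ and the overlap bound from translation boundedness of $\delta_\Lambda$, and then read off conditions (a)--(d) directly. The paper's write-up is only slightly more explicit about the uniform-continuity estimate for $\delta_\Lambda*\psi$, but the substance is identical.
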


\begin{beweis}
Existence of a Delone set $\Lambda$ such that $\Lambda+V=G$ is granted by Lemma~\ref{FRSS-lemma:existe-Del}.
As $\Lambda$ is Delone, its Dirac comb $\delta_\Lambda$ is translation bounded. Hence condition (d) in the definition of BUPU is satisfied with $B=\|\delta_\Lambda\|_{U-U}$ by Remark~\ref{FRSS-rem:cd}.
We indeed have $f\ge 1$ as $f = \delta_{\Lambda} * \psi \geq \delta_{\Lambda} * 1_{V} \geq 1_{\Lambda+V} = 1_G$. Also $f\in \Cu(G)$ as $\delta_\Lambda$ is translation bounded, compare Section~\ref{FRSS-sec:tbmintro} or \cite[Lem.~4.9.19]{FRSS-MoSt}. Indeed standard estimates show
\begin{displaymath}
|(\delta_\Lambda*\psi)(x)-(\delta_\Lambda*\psi)(y)| \le 2 \cdot \|T_{x-y}\psi-\psi\|_\infty \cdot \|\delta_\Lambda\|_{U} \ ,
\end{displaymath}
from which the claim follows by uniform continuity of $\psi\in \Cc(G)$.
Hence $\phi_\lambda \in \Cc(G)$ and $\phi_\lambda\ge0$ for all $\lambda\in\Lambda$.  For any fixed $x \in G$ we have
\begin{displaymath}
\sum_{\lambda \in \Lambda} \phi_\lambda(x)
 =\frac{1}{f(x)}\sum_{\lambda \in \Lambda } \psi(x-\lambda)
 =\frac{1}{f(x)} (\delta_{\Lambda}*\psi)(x)
 =1 \ ,
\end{displaymath}
where the above sums are all finite due to translation boundedness of $\delta_\Lambda$.
This proves condition (a) in the definition of BUPU. Moreover we clearly have $0 \leq \phi_{\lambda} \leq 1$, which implies $\|\phi_{\lambda} \|_\infty \leq 1$ for all $\lambda \in \Lambda$.
Finally we observe $\supp(\phi_{\lambda})=\lambda+\supp(\psi) \subseteq \lambda+U$, which completes the proof.
\end{beweis}

The second construction of a BUPU is extracted from \cite[Thm.~4.2]{FRSS-HGF87}. In that reference a construction of so-called bounded adapted partitions of unity (BAPUs) is given for discrete point sets that are relatively dense but may fail to be uniformly discrete.  In comparison to the constrution in Proposition~\ref{FRSS-prop:positive-BUPU}, the second construction is better adapted to the underlying function algebra structure, as it needs no division. This construction has also been used in the tempered distribution setting, see e.g.~\cite[Lem.~1.4.9]{FRSS-H03} or \cite[Thm.~6.20]{FRSS-RUD3}.

\begin{proposition}
Fix any zero neighborhood $U$ in $G$ and choose some open zero neighborhood $V$ such that $\overline{V}+\overline{V}+\overline{V}\subseteq U$. Take a well-ordered Delone set $\Lambda$ which is $V$-uniformly discrete and $K$-relatively dense, where $K=\overline{V}+\overline{V}$.
Take $\psi\in \Cc(G)$ supported within $K+\overline{V}\subseteq U$, satisfying  $0\le \psi\le 1$, such that $\psi|_K=1$. Define $\psi_\lambda(x)=\psi(x-\lambda)$ and $\phi_\lambda=\psi_\lambda\prod_{\lambda'<\lambda}(1-\psi_{\lambda'})$.
Then $(\phi_\lambda)_{\lambda\in \Lambda}$ is a BUPU of nonnegative functions in $\Cc(G)$ of size $U$.
\end{proposition}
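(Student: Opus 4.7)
The plan is to verify in turn the four conditions of Definition~\ref{FRSS-def:BUPUCc}, after establishing that each $\phi_\lambda$ is a well-defined nonnegative function in $\Cc(G)$.

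First I would establish continuity, support, and boundedness of $\phi_\lambda$. For any $x_0\in G$ choose a relatively compact neighbourhood $W$ of $x_0$. Since $\supp(\psi_{\lambda'})=\lambda'+\supp(\psi)\subseteq \lambda'+K+\overline V$, the function $\psi_{\lambda'}$ vanishes on $W$ unless $\lambda'$ lies in the relatively compact set $W-(K+\overline V)$. As $\Lambda$ is $V$-uniformly discrete, only finitely many $\lambda'\in\Lambda$ meet this set, so on $W$ the (possibly infinite) product defining $\phi_\lambda$ reduces to a finite product of continuous functions and is itself continuous. From $0\le \psi\le 1$ one gets $0\le 1-\psi_{\lambda'}\le 1$ and hence $0\le\phi_\lambda\le \psi_\lambda\le 1$. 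In particular $\supp(\phi_\lambda)\subseteq\supp(\psi_\lambda)\subseteq \lambda+K+\overline V\subseteq \lambda+U$, which yields $\phi_\lambda\in \Cc(G)$, condition (a), and $\|\phi_\lambda\|_\infty\le 1$, i.e.\ condition (d) with $M=1$. Condition (b) follows because $\Lambda$ is Delone, so by Corollary~\ref{FRSS-cor:chard} the Dirac comb $\delta_\Lambda$ is translation bounded and Lemma~\ref{FRSS-lem:BUPU-tb} supplies the overlap bound $\|\delta_\Lambda\|_{U-U}<\infty$.

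The main content is condition (c), and here the well-ordering of $\Lambda$ plays its role. Fix $x\in G$. The set $F_x=\{\lambda\in\Lambda:\psi_\lambda(x)\ne 0\}$ is finite by the same uniform-discreteness argument as above and can be enumerated as $F_x=\{\mu_1<\mu_2<\cdots<\mu_n\}$. For $\lambda\notin F_x$ one has $\phi_\lambda(x)=0$, while for $\lambda'<\mu_i$ with $\lambda'\notin F_x$ the factor $1-\psi_{\lambda'}(x)$ equals $1$, so
\[
\sum_{\lambda\in\Lambda}\phi_\lambda(x)\;=\;\sum_{i=1}^{n}\psi_{\mu_i}(x)\prod_{j<i}\bigl(1-\psi_{\mu_j}(x)\bigr).
\]
A straightforward induction on $n$ gives the telescoping identity
\[
\sum_{i=1}^{n}a_i\prod_{j<i}(1-a_j)\;=\;1-\prod_{i=1}^{n}(1-a_i)
\]
for any reals $a_i$. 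By $K$-relative density of $\Lambda$ there exists $\lambda\in\Lambda$ with $x-\lambda\in K$, and then $\psi|_K\equiv 1$ forces $\psi_\lambda(x)=1$; this $\lambda$ lies in $F_x$ and contributes a vanishing factor, so the right-hand product equals $0$ and the sum equals $1$.

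The main obstacle is precisely this last step: one has to interpret the infinite product defining $\phi_\lambda$ (justified by $V$-uniform discreteness), reduce the sum to a finite telescoping one using the well-ordering, and then exploit the matching between the relative-density constant $K$ and the condition $\psi|_K\equiv 1$ to produce a zero factor. The remaining BUPU axioms are essentially bookkeeping, once the support of $\psi$ has been sandwiched between $K$ and $U$ through the chain $K+\overline V\subseteq U$.
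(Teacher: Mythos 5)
Your proof is correct, and its overall skeleton (verify the four axioms of Definition~\ref{FRSS-def:BUPUCc} after checking that the locally finite product defines a function in $\Cc(G)$) matches the paper's. The one place where you genuinely diverge is the partition-of-unity identity (c). The paper argues via the set $F=\{\lambda : x\in\lambda+K\}$ and claims that $\phi_{\lambda_0}(x)=1$ for the smallest $\lambda_0\in F$ while all other $\phi_\lambda(x)$ vanish; this tacitly assumes $\psi_{\lambda'}(x)\in\{0,1\}$ for every $\lambda'$, which can fail for those $\lambda'$ with $x\in\lambda'+(K+\overline V)$ but $x\notin\lambda'+K$, where $\psi$ interpolates strictly between $0$ and $1$. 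Your telescoping identity
\[
\sum_{i=1}^{n}a_i\prod_{j<i}(1-a_j)=1-\prod_{i=1}^{n}(1-a_i)
\]
applied to $a_i=\psi_{\mu_i}(x)$ over the finite set $F_x=\{\lambda:\psi_\lambda(x)\ne 0\}$ sidesteps this entirely: you only need \emph{one} factor $1-a_i$ to vanish, which relative density together with $\psi|_K\equiv 1$ supplies. This is the more robust argument, and it also yields the sharper bound $\|\phi_\lambda\|_\infty\le 1$ (the paper only records a uniform constant $C$). The remaining points --- support contained in $\lambda+K+\overline V\subseteq\lambda+U$, nonnegativity from $0\le\psi\le1$, and the overlap bound via Corollary~\ref{FRSS-cor:chard} and Lemma~\ref{FRSS-lem:BUPU-tb} --- coincide with the paper's treatment.
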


\begin{beweis}
Existence of a Delone set $\Lambda$ with the claimed properties follows from Lemma~\ref{FRSS-lemma:existe-Del}.
We can assume without loss of generality that the set $\Lambda$ is well-ordered.
Note that, for any $x\in G$, the number of non-trivial factors in the product $\phi_\lambda=\psi_\lambda\prod_{\lambda'<\lambda}(1-\psi_{\lambda'})$ is uniformly bounded as $\Lambda$ is uniformly discrete. Hence $\|\phi_\lambda\|_\infty\le C$ uniformly in $\lambda\in  \Lambda$. Moreover, for any $x\in G$, the number of $\lambda\in \Lambda$ such that $\phi_\lambda(x)\ne0$ is uniformly bounded as $\Lambda$ is uniformly discrete. Moreover $\sum_{\lambda\in \Lambda} \phi_\lambda=1$ by construction. Indeed, for any $x\in G$ take $F\subseteq \Lambda$ of maximal cardinality such that $x\in \bigcap_{\lambda \in F} (\lambda+K)$. Then $F$ is finite as $\Lambda$ is uniformly discrete. Denote its smallest element by $\lambda_0$. We then have $\phi_{\lambda_0}(x)=1$ and $\phi_{\lambda}(x)=0$ for all $\lambda\in F\setminus \{\lambda_0\}$. Moreover $\phi_\lambda(x)=0$ for all $\lambda\in \Lambda\setminus F$. Altogether $\sum_{\lambda\in \Lambda} \phi_\lambda(x)=1$.
\end{beweis}

\begin{remark}
The above two BUPU constructions work in suitable Banach algebras $(\Asp,\|\cdot\|_\Asp)$ beyond $\COsp(G)$, compare Remark~\ref{FRSS-rem:sba}. As an example, we consider the Fourier algebra $\Asp=\Asp(G)=\mathcal F \Lisp(\widehat G)$ of $G$ with norm $\|f\|_\Asp=\|\widehat f\|_1$. From the first construction, a BUPU in $\Asp$ is obtained by smoothing as explained in Remark~\ref{FRSS-rem:sba}. A more general statement is \cite[Thm.~2]{FRSS-Fei81b}. For the second construction we note that there exists $\psi\in \Cc(G)\cap \Asp(G)$ with the desired properties by \cite[Lem.~4.3(iv)]{FRSS-Jak}. As $\Asp(G)$ is a Banach algebra with respect to pointwise multiplication, the construction proceeds as in the case of $\COsp(G)$.
\end{remark}

\subsection{BUPUs via the structure theorem}

Arbitrarily fine BUPUs on an LCA group $G$ can also be constructed using the structure theorem \cite[Thm.~4.2.29]{FRSS-Rei2}, which states that every compactly generated LCA group $G$ is isomorphic to a product $\RR^d\times \ZZ^m\times \KK$ for some $d,m\ge 0$. We split the construction as follows. The first step uses Example~\ref{FRSS-ex:standardBUPU}.
\begin{lemma}\label{FRSS-lem-BUPu-in-R} $G=\RR$ admits a BUPU $\Phi$ of positive functions with norm $M=1$ and size $U=[-r,r]$ for any $r>0$ in $\RR$, with overlapping constant $B=5$ and lattice $L=r \ZZ$ as point family.
Moreover $\Phi$ consists of translates of a single function. \qed
\end{lemma}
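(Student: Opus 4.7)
The plan is to rescale the standard triangular BUPU from Example~\ref{FRSS-ex:standardBUPU}. Concretely, I would define the single generating function $\phi\in \Cc(\RR)$ by $\phi(x) = \max\{0, 1-|x|/r\}$, which is nonnegative, continuous, has $\|\phi\|_\infty = 1$, and has $\supp(\phi) = [-r,r]$. Then I would set $\phi_n(x) = \phi(x-nr)$ for $n\in\ZZ$ and take as point family the lattice $L = r\ZZ = (nr)_{n\in\ZZ}$, so that $\Phi = (\phi_n)_{n\in\ZZ}$ consists of translates of the single function $\phi$.

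Next I would verify the four conditions of Definition~\ref{FRSS-def:BUPUCc} with $M=1$, $U=[-r,r]$, and $B=5$. Property (a) is immediate since $\supp(\phi_n) = nr + [-r,r]$. For (b), note that $(nr+U)\cap(mr+U)\ne \varnothing$ iff $|n-m|r\le 2r$, i.e.\ $m\in\{n-2,n-1,n,n+1,n+2\}$, giving the overlap constant $5$ (one could alternatively just invoke Lemma~\ref{FRSS-lem:BUPU-tb} with $U-U=[-2r,2r]$). Property (d) holds by construction. For (c) I would mimic the calculation in Example~\ref{FRSS-ex:standardBUPU}: given $x\in\RR$, write $m = \lfloor x/r\rfloor$, so that $\phi_n(x) = 0$ for $n\notin\{m,m+1\}$, and compute
\begin{displaymath}
\sum_{n\in\ZZ}\phi_n(x) = \phi_m(x) + \phi_{m+1}(x) = \bigl(1 - (x/r - m)\bigr) + \bigl(1 + (x/r - (m+1))\bigr) = 1.
\end{displaymath}

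There is essentially no obstacle here; the statement is a direct rescaling of Example~\ref{FRSS-ex:standardBUPU}, and the only minor bookkeeping is tracking the constants (norm $1$, size $[-r,r]$, overlap $5$) under the scaling $x\mapsto x/r$. The ``translates of a single function'' claim is built into the definition $\phi_n = T_{nr}\phi$.
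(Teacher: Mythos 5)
Your proposal is correct and is exactly the argument the paper intends: the lemma is stated with a \qed because it follows directly from the rescaled triangular partition of unity $\phi_{n,r}(x)=\phi_n(x/r)$ of Example~\ref{FRSS-ex:standardBUPU}, which is precisely your family $\max\{0,1-|x-nr|/r\}$ centered on $r\ZZ$. Your verification of the four conditions of Definition~\ref{FRSS-def:BUPUCc}, including the overlap count $|n-m|\le 2$ giving $B=5$, matches the intended bookkeeping.
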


The second step concerns discrete groups. The following result is trivial.

\begin{lemma}\label{FRSS-prop-BUPu-in-D} Let $G=D$ be a discrete LCA group. Let $\phi_a=1_{\{a\}}$ for $a\in D$. Then $\Phi = \{ \phi_a \}_{a \in D}$ is a BUPU of positive functions with norm $M=1$ and size $U=\{0\}$, with overlapping constant $B=1$ and lattice $D$ as point family. Moreover, all functions in $\Phi$ are translates of one another.
\qed
\end{lemma}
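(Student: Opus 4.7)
The plan is to verify the four defining conditions (a)--(d) of Definition~\ref{FRSS-def:BUPUCc} directly for the family $\Phi=(\phi_a)_{a\in D}$ with $\phi_a=1_{\{a\}}$, point family $X=(a)_{a\in D}$, and zero neighborhood $U=\{0\}$. The whole argument rests on the fact that in a discrete LCA group $D$ every singleton is both open and compact, so $\{0\}$ qualifies as a relatively compact zero neighborhood and each $\phi_a$ is automatically a continuous function of compact support, i.e.\ lies in $\Cc(D)$.

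First I would record these topological observations and note that $X=(a)_{a\in D}$ is (trivially) $\{0\}$-uniformly discrete, so $\delta_X$ is translation bounded with $\|\delta_X\|_{\{0\}-\{0\}}=1$. Condition (a) is then immediate from $\supp(\phi_a)=\{a\}=a+\{0\}$. For (b), the identity $(a+\{0\})\cap(b+\{0\})=\{a\}\cap\{b\}$ equals $\{a\}$ when $a=b$ and is empty otherwise, so the set $\{b\in D: (a+\{0\})\cap(b+\{0\})\ne\varnothing\}$ is exactly $\{a\}$, yielding the overlap constant $B=1$. Condition (c) follows because for each $x\in D$ only the single term with $a=x$ contributes, giving $\sum_{a\in D}\phi_a(x)=\phi_x(x)=1$. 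Condition (d) is obvious, since $\|\phi_a\|_\infty=1=M$.

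For the final assertion, the translation statement is a one-line check: $\phi_a(x)=1_{\{a\}}(x)=1_{\{0\}}(x-a)=\phi_0(x-a)$, so every $\phi_a$ is the $a$-translate of the single function $\phi_0$.

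There is no real obstacle here; the statement is essentially bookkeeping against Definition~\ref{FRSS-def:BUPUCc}, and the only subtlety worth flagging is the use of discreteness to legitimize $U=\{0\}$ as a relatively compact zero neighborhood and to ensure $\phi_a\in \Cc(D)$.
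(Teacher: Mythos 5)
Your proof is correct and matches the paper's intent: the paper declares this lemma trivial and omits the argument entirely, and your direct verification of conditions (a)--(d) of Definition~\ref{FRSS-def:BUPUCc}, together with the observation that discreteness makes $\{0\}$ a relatively compact open zero neighborhood and puts each $1_{\{a\}}$ in $\Cc(D)$, is exactly the intended bookkeeping.
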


The third result concerns compact groups. The proof is analogous to that of Proposition~\ref{FRSS-prop:positive-BUPU}.

\begin{lemma}\label{FRSS-prop-BUPu-in-K} Let $G=\KK$ be a compact LCA group and let $U$ be any open zero neighborhood in $\KK$. Then there exists a BUPU $\Phi$ of positive functions with norm $M=1$, size $U$, and with some finite set $F$ as point family.
\end{lemma}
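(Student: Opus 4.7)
The plan is to mimic the construction in Proposition~\ref{FRSS-prop:positive-BUPU}, replacing the Delone set $\Lambda$ by a finite set obtained from the compactness of $\KK$. This is indeed what makes the proof simpler here: we get the finite point family $F$ for free from compactness, without needing Zorn's lemma or Lemma~\ref{FRSS-lemma:existe-Del}.

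First, I would choose an open zero neighborhood $V\subseteq\KK$ with $V-V\subseteq U$, using the fact that addition is continuous at $(0,0)$ and $U$ is a zero neighborhood. Since $\KK=\bigcup_{x\in\KK}(x+V)$ and $\KK$ is compact, there exists a finite set $F=\{x_1,\ldots,x_n\}\subseteq\KK$ such that $F+V=\KK$. Next, I would invoke Urysohn's lemma to pick $\psi\in\Cc(\KK)$ with $\supp(\psi)\subseteq U$ satisfying $\psi\ge 1_V$ (in particular $0\le\psi\le 1$ can be arranged, but only $\psi\ge 1_V$ is needed).

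Now define $f=\delta_F*\psi=\sum_{x\in F}\psi(\,\cdot\,-x)\in\Cu(\KK)$, which is a finite sum of continuous functions and hence continuous. The key estimate is
\[
f=\delta_F*\psi\ge \delta_F*1_V\ge 1_{F+V}=1_\KK,
\]
exactly as in the proof of Proposition~\ref{FRSS-prop:positive-BUPU}. Therefore the functions $\phi_x(y)=\psi(y-x)/f(y)$ for $x\in F$ are well-defined, nonnegative, continuous, and satisfy $\supp(\phi_x)\subseteq x+\supp(\psi)\subseteq x+U$. Pointwise summation gives
\[
\sum_{x\in F}\phi_x(y)=\frac{1}{f(y)}\sum_{x\in F}\psi(y-x)=\frac{f(y)}{f(y)}=1
\]
for every $y\in\KK$, so condition (c) of Definition~\ref{FRSS-def:BUPUCc} holds. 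From $f\ge 1$ and $\psi\le 1$ we immediately get $0\le\phi_x\le 1$, giving norm $M=1$, i.e.\ condition (d).

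For the overlap condition (b), the finiteness of $F$ makes the matter trivial: we may take $B=\card(F)$, since in any case at most $\card(F)$ indices are ever involved. Equivalently, $\delta_F$ is a finite sum of Dirac measures, hence translation bounded on the compact group $\KK$, and Corollary~\ref{FRSS-cor:chard} applies. The only mild point one must check is that we have a genuine indexed point family with the required translation-bounded weighted Dirac comb $\delta_F$; since $F$ is finite and $\KK$ is compact this is immediate. Thus $\Phi=(\phi_x)_{x\in F}$ is a BUPU of positive functions with norm $M=1$, size $U$, and point family $F$, as claimed. I do not anticipate any real obstacle; the only care is choosing $V$ small enough that $V-V\subseteq U$ so that support is controlled within $U$ (and $\psi\ge 1_V$ can be used to dominate the indicator of a cover).
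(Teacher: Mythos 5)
Your proof is correct and follows essentially the same route as the paper: choose a small zero neighborhood $V$, extract a finite $F$ with $F+V=\KK$ by compactness, pick $\psi\in\Cc(\KK)$ with $\psi\ge 1_V$ and $\supp(\psi)\subseteq U$, and normalize the translates by $f=\delta_F*\psi\ge 1$, exactly as in the paper's adaptation of Proposition~\ref{FRSS-prop:positive-BUPU}. The verification of conditions (a)--(d) matches the paper's argument, so there is nothing to add.
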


\begin{proof} Pick an open zero neighborhood $V$ such that $\overline{V} \subseteq U$, and pick some function $\varphi\in \Cc(\KK)$ such that $\varphi\ge 1_{\overline{V}}$ and $\supp(\varphi)\subseteq U$. Take a finite set $F$ such that $F+V=\KK$.
Then the function
\[
f(x)=\sum_{j \in F} \varphi(x-j)
\]
satisfies $f\in \Cc(\KK)$ and $f \geq 1$. Now, defining
$\phi_j(x) =\varphi(x-j)/f(x)$
yields a BUPU $\Phi=(\phi_j)_{j \in F}$ with the claimed properties.
\end{proof}

Combining the previous results, we infer that any compactly generated LCA groups admits a BUPU of the following type.

\begin{proposition}\label{FRSS-prop:bupuCG} Let $G$ be a compactly generated LCA group. Then for any precompact open zero neighborhood $U$ in $G$ there exists a BUPU $\Phi$ of size $U$ of positive functions with norm $M=1$ with the following properties: The point family is of the form $L+F$ for a lattice $L$ and a finite set $F$ in $G$, and there exists a finite set $\cF\subseteq \Cc(G)$ such that $\Phi$ consists of translates of functions from $\cF$.
\end{proposition}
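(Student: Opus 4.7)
The plan is to apply the structure theorem to reduce $G$ to a standard product of three factors, build a BUPU on each factor using the preceding lemmas, and then combine them via the product-BUPU Lemma~\ref{FRSS-lem:prodbupu}. By \cite[Thm.~4.2.29]{FRSS-Rei2} we may identify $G=\RR^d\times\ZZ^m\times\KK$. Given the precompact open zero neighborhood $U\subseteq G$, the product topology lets us choose an open precompact $U_1\subseteq\RR^d$ and an open $U_3\subseteq\KK$ with $U_1\times\{0\}\times U_3\subseteq U$; shrinking if necessary, I take $U_1=[-r,r]^d$ for some $r>0$.

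Next I construct a BUPU on each factor. On $\RR$, Lemma~\ref{FRSS-lem-BUPu-in-R} gives a BUPU of norm $1$ and size $[-r,r]$ whose members are translates of a single $\varphi_0\in\Cc(\RR)$ along $r\ZZ$. Iterating Lemma~\ref{FRSS-lem:prodbupu} $d-1$ times yields a BUPU $\Phi^{(1)}$ on $\RR^d$ of norm $1$ and size $[-r,r]^d$ with point family $r\ZZ^d$, all of whose members are translates of $\varphi_1:=\varphi_0^{\otimes d}\in\Cc(\RR^d)$. Lemma~\ref{FRSS-prop-BUPu-in-D} provides the BUPU $\Phi^{(2)}=(1_{\{a\}})_{a\in\ZZ^m}$ on $\ZZ^m$ of norm $1$ and size $\{0\}$, whose members are translates of $\varphi_2:=1_{\{0\}}$. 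Lemma~\ref{FRSS-prop-BUPu-in-K} produces a BUPU $\Phi^{(3)}=(\phi_j^{(3)})_{j\in F_\KK}$ of positive functions on $\KK$ of norm $1$ and size $U_3$ with finite point family $F_\KK\subseteq\KK$.

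Applying Lemma~\ref{FRSS-lem:prodbupu} twice, $\Phi=\Phi^{(1)}\otimes\Phi^{(2)}\otimes\Phi^{(3)}$ is a BUPU of positive functions on $G$ of norm $1$ and size $[-r,r]^d\times\{0\}\times U_3\subseteq U$, and its point family is $r\ZZ^d\times\ZZ^m\times F_\KK$. Setting
\[
L=r\ZZ^d\times\ZZ^m\times\{0\},\qquad F=\{0\}\times\{0\}\times F_\KK,
\]
one reads off the required decomposition $L+F$: a small product neighborhood of $0$ in $G$ meets $L$ only at $0$, and $G/L\cong(\RR^d/r\ZZ^d)\times\KK$ is compact, so $L$ is a lattice in $G$, while $F$ is finite. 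For the finite generating family, recenter by $\widetilde\phi_j(y):=\phi^{(3)}_j(y+j)$, so that $\supp(\widetilde\phi_j)\subseteq U_3$ and
\[
\varphi_1(\cdot-\ell)\otimes\varphi_2(\cdot-a)\otimes\phi^{(3)}_j
=g_j\bigl(\cdot-(\ell,a,j)\bigr),\qquad
g_j:=\varphi_1\otimes\varphi_2\otimes\widetilde\phi_j.
\]
Hence $\cF:=\{g_j:j\in F_\KK\}\subseteq\Cc(G)$ is a finite set such that every member of $\Phi$ is a translate of some element of $\cF$.

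The analytical content is entirely supplied by the lemmas just proved; the only real issue is bookkeeping. The two things to watch are that the $\KK$-factor functions are \emph{not} translates of one another and must be absorbed into $\cF$ via the recentering $\widetilde\phi_j$, and that $L=r\ZZ^d\times\ZZ^m\times\{0\}$ must be checked to be a genuine lattice in $G$ (discreteness from a small product neighborhood, cocompactness from the product form of the quotient).
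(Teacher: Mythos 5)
Your proof is correct and follows essentially the same route as the paper: identify $G$ with $\RR^d\times\ZZ^m\times\KK$ via the structure theorem, build BUPUs on the factors from Lemmas~\ref{FRSS-lem-BUPu-in-R}, \ref{FRSS-prop-BUPu-in-D} and \ref{FRSS-prop-BUPu-in-K}, and assemble them with Lemma~\ref{FRSS-lem:prodbupu}. Your explicit recentering of the $\KK$-factor functions and the verification that $L$ is a lattice are welcome extra bookkeeping that the paper leaves implicit.
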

\begin{proof}
By the structure theorem of compactly generated LCA groups \cite[Thm.~4.2.29]{FRSS-Rei2}, we can identify $G$ by $\RR^d \times \ZZ^m \times \KK$ for some $d,m\ge0$ and some compact group $\KK$ without loss of generality.
Pick $r>0$ and some zero neighborhood $V$ in $\KK$ such that $[-r,r]^d \times \{ 0\} \times V \subseteq U$. By Lemma~\ref{FRSS-prop-BUPu-in-D} and by repeated application of Lemma~\ref{FRSS-lem-BUPu-in-R} and Lemma~\ref{FRSS-lem:prodbupu}, there exists a BUPU in $\RR^d \times \ZZ^m$ of positive functions with norm $M=1$ and size $[-r,r]^d \times \{ 0 \}$, having a lattice $L' \subseteq \RR^d \times \ZZ^m$ as point family. Morover the functions of the BUPU are translates of one another. Applying Lemma~\ref{FRSS-prop-BUPu-in-K} and Lemma~\ref{FRSS-lem:prodbupu} again, we get a BUPU $\Phi$ of positive functions with norm $M=1$ and size $U$, where the point family is the product of the lattice $L' \subseteq \RR^d \times \ZZ^m$ and some finite set $F' \subseteq \KK$.  Hence the point family is of the form $L+F$, where $L = L' \times \{ 0\}$ and
$F = \{ 0 \} \times \{ 0 \} \times F'$.
Finally, each element in the BUPU $\Phi$ is a product of the form $\varphi \otimes \psi$ where $\varphi\in \Cc(\RR^d\times\ZZ^m)$ is in a family of translates of a single function, and $\psi\in \Cc(\KK)$ belongs to a family of translates of a finite set of functions. Therefore, all functions in the BUPU are translates of a finite set of functions.
\end{proof}

For a general LCA group we use the following lemma, whose proof is obvious.

\begin{lemma}\label{FRSS-lem:bupu-open} Let $G'$ be an open subgroup in some LCA group $G$. Let $U \subseteq G'$ be any open zero neighborhood. Let $R\subseteq G$ be a system of
representatives for $G/G'$, i.e, any coset $x+G'$ contains exactly one element in $R$. Let $\Psi=(\psi_i)_{i\in I}$ be any BUPU in $G'$ of size $U$ and norm $M$ with point family $Y=(y_i)_{i\in I}$. Then
$\Phi = (T_{r}\psi_{i})_{(i,r) \in I \times R}$
is a BUPU in $\Cc(G)$ of size $U$ and norm $M$ with point family $X=(y_i+r)_{(i,r) \in I \times R}$. \qed
\end{lemma}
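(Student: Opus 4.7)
The approach is to exploit the crucial fact that since $G'$ is open in $G$, the quotient $G/G'$ is discrete and the cosets $r+G'$ for $r\in R$ form a \emph{partition of $G$ into disjoint open sets}. Once this is in hand, the four BUPU conditions from Definition~\ref{FRSS-def:BUPUCc} for $\Phi$ on $G$ will follow immediately from the corresponding properties of $\Psi$ on $G'$, since each translate $T_r\psi_i$ lives entirely inside its own coset $r+G'$ and distinct cosets do not interact.

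First I would check membership and the size/norm conditions (a) and (d). Since $\psi_i\in \Cc(G')$ and $G'$ is open in $G$, extending by zero outside $G'$ yields $\psi_i\in \Cc(G)$, so $T_r\psi_i\in \Cc(G)$ as well. The support estimate $\supp(T_r\psi_i)\subseteq r+y_i+U=(y_i+r)+U$ is immediate from the corresponding estimate for $\psi_i$, giving (a) with the claimed point family $X$. The norm bound $\|T_r\psi_i\|_\infty=\|\psi_i\|_\infty\le M$ is translation-invariance of $\|\cdot\|_\infty$, giving (d).

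Next I would verify the partition property (c). For any $x\in G$ there is a unique $r_0\in R$ with $x\in r_0+G'$, equivalently $x-r_0\in G'$. Because $U\subseteq G'$, for every $r\in R\setminus\{r_0\}$ and every $i\in I$ the support $(y_i+r)+U$ lies in $r+G'$, which is disjoint from $r_0+G'$; hence $T_r\psi_i(x)=0$ for such $r$. Consequently
\[
\sum_{(i,r)\in I\times R} T_r\psi_i(x)=\sum_{i\in I}\psi_i(x-r_0)=1,
\]
where the last equality uses that $x-r_0\in G'$ and $\Psi$ is a partition of unity on $G'$.

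Finally I would handle the overlap condition (b), which is the only step requiring the open-subgroup hypothesis nontrivially. For two indices $(i,r)$ and $(j,s)$, if $r\neq s$ then $(y_i+r)+U\subseteq r+G'$ and $(y_j+s)+U\subseteq s+G'$ lie in different cosets, hence are disjoint. Therefore overlaps only occur within a fixed $r\in R$, and for each such $r$ the count reduces to
\[
\card\{j\in I:(y_i+r+U)\cap(y_j+r+U)\neq\varnothing\}=\card\{j\in I:(y_i+U)\cap(y_j+U)\neq\varnothing\}\le B_\Psi,
\]
by translation-invariance and the overlap bound for $\Psi$. Thus $\Phi$ has the same overlap constant as $\Psi$, completing the verification. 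The only subtle point, and the one worth emphasizing, is the openness of $G'$: without it the cosets would not be open, the extension by zero would not land in $\Cc(G)$, and supports in different cosets could conceivably meet.
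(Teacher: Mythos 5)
Your proof is correct and is exactly the argument the paper has in mind (the paper omits it as ``obvious''): extension by zero using openness of $G'$, disjointness of cosets to localise the sum and the overlaps, and translation invariance to transfer norm and overlap bounds from $\Psi$. Nothing further is needed.
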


We can now state the promised result about BUPUs constructed via the structure theorem.

\begin{theorem} Let $U\subseteq G$ be any precompact zero neighborhood in some LCA group $G$.
Then there exists a BUPU $\Phi$ of positive functions with norm $M=1$ and size $U$, such that the point family is a Delone set, and such that any function in $\Phi$ is a translate of some function from some finite set $\cF\subseteq \Cc(G)$.
\end{theorem}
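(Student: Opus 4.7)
The plan is to reduce the general LCA case to the compactly generated case already handled by Proposition~\ref{FRSS-prop:bupuCG}, and then to extend across cosets via Lemma~\ref{FRSS-lem:bupu-open}. To that end, I would invoke the standard fact that every LCA group $G$ contains an open, compactly generated subgroup $G'$: taking any symmetric compact zero neighborhood $W$, the subgroup generated by $W$ is open (as it contains $W$) and compactly generated by construction. Replacing $U$ by an open subneighborhood of $0$ and intersecting with $G'$, I obtain a precompact open zero neighborhood $U' \subseteq U \cap G'$ in $G'$. Proposition~\ref{FRSS-prop:bupuCG} applied to $G'$ and $U'$ then yields a BUPU $\Psi = (\psi_i)_{i \in I}$ of positive functions in $\Cc(G')$ with norm $M=1$ and size $U'$, whose point family $\Lambda'$ is of the form $L + F$ for a lattice $L$ and a finite set $F$ in $G'$ (in particular, Delone in $G'$), and whose functions are translates of members of a finite set $\cF' \subseteq \Cc(G')$.

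Next I would pick a system $R \subseteq G$ of coset representatives for $G/G'$ and apply Lemma~\ref{FRSS-lem:bupu-open} to obtain the candidate BUPU $\Phi = (T_r \psi_i)_{(i,r) \in I \times R}$ in $\Cc(G)$ of norm $1$ and size $U' \subseteq U$, with supporting set $\Lambda' + R$. Each $T_r \psi_i$ is a translate of some $\varphi \in \cF'$; since $G'$ is open in $G$, the zero extension of $\varphi$ lies in $\Cc(G)$, so the finite family obtained in this way furnishes the required $\cF \subseteq \Cc(G)$.

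The main verification, which is where I expect to spend the most care, is that $\Lambda' + R$ is Delone in $G$. For uniform discreteness, if $V_0 \subseteq G'$ is an open zero neighborhood witnessing uniform discreteness of $\Lambda'$ in $G'$, then $V_0$ is open in $G$ as well, and two distinct points $\lambda + r$ and $\mu + s$ of $\Lambda' + R$ are $V_0$-separated either by translating the $G'$-estimate when $r = s$, or by the disjointness of the cosets $r + G'$ and $s + G'$ when $r \ne s$ (noting that $\lambda + r = \mu + s$ with $\lambda, \mu \in G'$ and $r, s \in R$ forces $r = s$ since $R$ selects one element per coset). For relative density, choosing compact $K_0 \subseteq G'$ with $\Lambda' + K_0 = G'$ gives $(\Lambda' + R) + K_0 = R + G' = G$, with $K_0$ remaining compact in $G$ since $G'$ carries the subspace topology. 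This yields the Delone property and completes the proof.
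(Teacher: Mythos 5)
Your proposal is correct and follows essentially the same route as the paper: pass to an open compactly generated subgroup $G'$, apply Proposition~\ref{FRSS-prop:bupuCG} there, and transport the BUPU to $G$ via Lemma~\ref{FRSS-lem:bupu-open} (the paper simply takes $G'$ to be the group generated by $\overline{U}$ so that no shrinking of $U$ is needed). Your explicit verifications of the Delone property of $\Lambda'+R$ and of the zero-extension of $\cF'$ into $\Cc(G)$ are exactly the details the paper leaves as ``obvious from the construction.''
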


\begin{beweis}
It is obvious from the construction in Lemma~\ref{FRSS-lem:bupu-open} that if $\Psi$ consists of positive functions, then so does $\Phi$. Moreover, the functions in $\Phi$ are translates of functions in $\Psi$. Therefore, if the functions in $\Psi$ are translates of a finite set of functions, so are the functions in $\Phi$. Now, given any open precompact zero neighborhood $U$ in $G$, the group $G'$ generated by $\overline{U}$ is compactly generated and open in $G$. Therefore Proposition~\ref{FRSS-prop:bupuCG} and Lemma~\ref{FRSS-lem:bupu-open} yield the claim of the theorem.
\end{beweis}

\section{The Wiener algebra}\label{FRSS-sec:WA}

We review the Wiener algebra on an LCA group $G$, following the approach from \cite{FRSS-F83} based on BUPUs. The Wiener algebra dual coincides with the space of translation bounded measures, as shown in Section~\ref{FRSS-sec:tbmintro}. Feich\-tinger's algebra will be constructed in Section~\ref{FRSS-sect:appB} by an analogous approach. In the following, we will sometimes suppress the underlying group.

\subsection{Definition and elementary properties}

Wiener introduced a certain algebra $M_1$ of continuous integrable functions  on the line, see \cite[p.~73]{FRSS-W59} and \cite[(39.33)]{FRSS-HR70} for a discussion. This was extended to locally compact groups $G$ in \cite{FRSS-Rei2} and in \cite{FRSS-LRW74}, where the function algebra was then denoted by $M_1(G)$. A characterisation of the Wiener algebra using Segal algebra theory was then given \cite{FRSS-fe77-3} and generalised in \cite{FRSS-Fei81b}. The identification as amalgam space  $\WCOli(G)$ appears in \cite{FRSS-F83}. Here we introduce the Wiener algebra $\Wsp(G)$ using BUPUs, see also \cite{FRSS-Fei22} for a recent perspective. The identification $\Wsp(G)=\WCOli(G)=M_1(G)$ then follows from \cite[Thm.~2]{FRSS-F83}.  For concreteness, we derive some of its properties directly from our definition of $\Wsp(G)$.
\begin{definition} Let $\Phi=(\phi_i)_{i\in I}$ be any BUPU in $\COsp(G)$. Then the \textit{Wiener algebra}\index{Algebra!Wiener~algebra} is given by
\[\index{$\| \, \|_{\Wsp,\Phi}$}
\Wsp(G)= \{ f \in \COsp(G) : \|f \|_{\Wsp,\Phi} = \sum_{i \in I} \|f \cdot \phi_i \|_\infty < \infty \} \ .
\]
\end{definition}
In the following we often write $\|\cdot\|_\Phi$ instead of $\|\cdot\|_{\Wsp,\Phi}$ for simplicity of notation.

\begin{lemma}\label{FRSS-lem:normpropw}
$\|\cdot\|_\Phi$ defines a shift continuous norm on $\Wsp(G)$, which satisfies $\|\cdot\|_\infty\le \|\cdot\|_\Phi$.
\end{lemma}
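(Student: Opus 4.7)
The proof splits into three claims: the norm axioms, the bound $\|\cdot\|_\infty\le \|\cdot\|_\Phi$, and shift continuity.

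\textbf{Easy parts.} For the pointwise bound, fix $x\in G$ and use property (c) of a BUPU to write $|f(x)|=|f(x)\sum_{i\in I}\phi_i(x)|\le \sum_{i\in I}\|f\phi_i\|_\infty=\|f\|_\Phi$, and take the supremum over $x$. The norm axioms follow at once: definiteness uses the bound just shown (if $\|f\|_\Phi=0$ then $\|f\|_\infty=0$), while homogeneity and the triangle inequality come from the analogous properties of $\|\cdot\|_\infty$ together with termwise absolute summation.

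\textbf{Density of $\Cc(G)$ in $(\Wsp(G),\|\cdot\|_\Phi)$.} For finite $F\subseteq I$ I set $g_F=f\sum_{i\in F}\phi_i\in \Cc(G)$ and estimate
\[
\|f-g_F\|_\Phi=\sum_{j\in I}\Bigl\|f\phi_j\sum_{i\notin F}\phi_i\Bigr\|_\infty
\le M\sum_{j\in I}\sum_{i\notin F,\,\phi_i\phi_j\ne 0}\|f\phi_i\|_\infty.
\]
By the overlap condition (b), for each $i$ there are at most $B$ indices $j$ with $\phi_i\phi_j\ne 0$, so interchanging the sums gives $\|f-g_F\|_\Phi\le MB\sum_{i\notin F}\|f\phi_i\|_\infty$, which tends to $0$ as $F$ exhausts $I$ because the series defining $\|f\|_\Phi$ converges.

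\textbf{Translation boundedness.} For $y\in G$, changing variables gives $\|T_y f\cdot\phi_i\|_\infty=\|f\cdot T_{-y}\phi_i\|_\infty$, so $\|T_y f\|_\Phi=\|f\|_{T_{-y}\Phi}$. The family $T_{-y}\Phi$ is a BUPU of size $U$, norm $M$, and point family $X-y$, whose Dirac comb satisfies $\|\delta_{X-y}\|_{U-U}=\|\delta_X\|_{U-U}$ by translation invariance. Applying the refinement estimate \eqref{FRSS-eq:refined} with $\Psi=T_{-y}\Phi$ yields
\[
\|T_y f\|_\Phi=\|f\|_{T_{-y}\Phi}\le \|\delta_X\|_{U-U}\cdot M\cdot \|f\|_\Phi=:C\|f\|_\Phi,
\]
uniformly in $y$.

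\textbf{Shift continuity: reduction and finishing argument.} For $g\in \Cc(G)$ with support $K$, fix a relatively compact zero neighbourhood $V$. For $y\in V$ both $g$ and $T_y g$ are supported in the compact set $K+\overline V$, so only indices $i$ in the finite set $F_V=\{i\in I:(x_i+U)\cap(K+\overline V)\ne\varnothing\}$ contribute to $\|T_y g-g\|_\Phi$. Hence
\[
\|T_y g-g\|_\Phi\le M\,\card(F_V)\cdot\|T_y g-g\|_\infty,
\]
which tends to $0$ as $y\to 0$ by uniform continuity of $g$. For general $f\in \Wsp(G)$ and $\varepsilon>0$, pick $g\in \Cc(G)$ with $\|f-g\|_\Phi<\varepsilon/(2(C+1))$ by the density step, and then $V$ so small that $\|T_y g-g\|_\Phi<\varepsilon/2$ for $y\in V$. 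The triangle inequality combined with translation boundedness gives $\|T_y f-f\|_\Phi<\varepsilon$, which establishes shift continuity.

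The main obstacle is the shift continuity clause; all three ingredients—the overlap-driven density of $\Cc(G)$, the refinement-based translation bound, and the finite-support reduction on $\Cc(G)$—are needed in combination, and the delicate point is that the sum defining $\|\cdot\|_\Phi$ has no a priori uniform tail control under translation until we pass through a compactly supported approximant.
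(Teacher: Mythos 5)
Your proof is correct, but it organizes the shift-continuity argument differently from the paper. The paper works directly with the series defining $\|f\|_\Phi$: it splits the index set $I$ into a finite part $I_1$ and a tail $I_0$ with $\sum_{i\in I_0}\|f\phi_i\|_\infty\le\varepsilon$, controls the finite part by uniform continuity of $f$ on a compact set containing $\bigcup_{i\in I_1}\supp(\phi_i)$, and bounds the tail contribution of $\|T_xf\cdot\phi_i\|_\infty=\|f\cdot T_{-x}\phi_i\|_\infty$ by applying the refinement estimate \eqref{FRSS-eq:refined} to the sub-family $(T_{-x}\phi_i)_{i\in I_0}$. You instead run a three-step scheme: density of $\Cc(G)$ in $(\Wsp(G),\|\cdot\|_\Phi)$, uniform boundedness of the translation operators $\op{T_y}\le \|\delta_X\|_{U-U}\cdot M$ via \eqref{FRSS-eq:refined}, and shift continuity on $\Cc(G)$ by reduction to finitely many indices. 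Both routes are sound and rest on the same two pillars (finite overlap and the refinement inequality); yours has the advantage of isolating reusable facts — in particular your density step gives a cleaner, quantitative proof of the convergence $f=\sum_{i\in I}f\phi_i$ in $\|\cdot\|_\Phi$, which the paper only asserts later in Theorem~\ref{FRSS-thm:W}(d), and the uniform translation bound is exactly Lemma~\ref{FRSS-lem:hnW} in disguise — at the cost of being somewhat longer. Two tiny points worth making explicit: finiteness of your index set $F_V$ uses translation boundedness of $\delta_X$ (Corollary~\ref{FRSS-cor:chard}, via $\card(F_V)=\delta_X(K+\overline V-U)$), and the interchange of sums in the density step is justified because all terms are nonnegative.
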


\begin{beweis}
It is clear that $\|\cdot\|_\Phi$ defines a norm on $\Wsp(G)$. Consider arbitrary $f\in \Wsp(G)$. Then  $\|f\|_\infty=\|\sum_{i\in I} \phi_i f\|_\infty \le \sum_{i\in I} \|f\phi_i\|_\infty= \|f\|_\Phi$, which shows $\|\cdot\|_\infty\le \|\cdot\|_\Phi$.  For shift continuity consider arbitrary $\varepsilon>0$ and partition $I=I_1\cup I_0$ such that $I_1$ is finite and $\sum_{i\in I_0} \|f\cdot \phi_i\|_\infty\le \varepsilon$. Take compact $K\subseteq G$ such that $\supp(\phi_i)\subseteq K$ for all $i\in I_1$. Take a an open zero neighborhood $V$ such that $\|(T_xf-f)1_K\|_\infty\le \varepsilon/\card(I_1)$ for all $x\in V$. We then have for all $x\in V$ that
\begin{displaymath}
\begin{split}
\|T_xf-f\|_{\Phi}&=\sum_{i\in I_1} \|(T_xf-f)\cdot \phi_i\|_\infty + \sum_{i\in I_0} \|(T_xf-f)\cdot \phi_i\|_\infty\\
&\le \sum_{i\in I_1} \|(T_xf-f)\cdot \phi_i\|_\infty + \sum_{i\in I_0} \|f\cdot (T_{-x}\phi_i)\|_\infty+\sum_{i\in I_0} \|f\cdot \phi_i\|_\infty\\
&\le M \cdot \varepsilon + \|\delta_X\|_{U-U} M\cdot \varepsilon +\varepsilon \ ,
\end{split}
\end{displaymath}
where we note for the second term that the estimates in Eqn.~\eqref{FRSS-eq:refined} continue to hold for subsets of $I$.
Now adjusting $\varepsilon$ finishes the argument.
\end{beweis}
In fact there exists an equivalent norm $\| \cdot \|_{\Wsp}$ on $\Wsp(G)$ that is shift continuous and shift invariant, see Lemma~\ref{FRSS-lem:hnW} below. The following properties of the Wiener algebra are not difficult to verify.
\begin{theorem}\label{FRSS-thm:W}
  \begin{itemize}
   \item[(a)] Different choices of $\Phi$ define equivalent norms on $\Wsp(G)$. In particular, the definition of $\Wsp(G)$ is independent of the choice of $\Phi$.
   \item[(b)] $f \in \Wsp(G)$ implies $\overline{f}, |f|, f^\dagger, T_t f, \chi f \in \Wsp(G)$
   for arbitrary $t \in G$ and $\chi \in \widehat{G}$.
  \item[(c)] $(\Wsp(G), \| \cdot \|_{\Phi})$ is a Banach algebra with respect to both pointwise multiplication and convolution.
  \item[(d)]   $\Cc(G)$, equipped with the inductive limit topology, is continuously embedded and dense in $(\Wsp(G), \|\cdot\|_\Phi)$.
  \item[(e)]  $(\Wsp(G), \|\cdot\|_\Phi)$ is continuously embedded and dense in $(\COsp(G), \|\cdot\|_\infty)$. Moreover, it is a pointwise multiplication ideal in $(\COsp(G), \|\cdot\|_\infty)$.
 \item[(f)]  $(\Wsp(G), \|\cdot\|_\Phi)$ is continuously embedded and dense in $(\Lisp(G), \|\cdot\|_1)$. Moreover, it is a convolution ideal in $(\Lisp(G), \|\cdot\|_1)$.
  \end{itemize}
\end{theorem}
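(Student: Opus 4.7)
The plan is to establish (a) first, since every subsequent item is easiest to verify for one convenient BUPU and then transferred. Given two BUPUs $\Phi = (\phi_i)_{i\in I}$ of size $U$ with point family $X$, and $\Psi=(\psi_j)_{j\in J}$ of size $V$ with point family $Y$, I would form the refinement $\Phi|\Psi$ as in Remark~\ref{FRSS-rem:refbupu} and read off Eqn.~\eqref{FRSS-eq:refined} that $\|f\|_\Psi \le \|\delta_Y\|_{U-V} \cdot M_\Psi \cdot \|f\|_\Phi$, and symmetrically $\|f\|_\Phi \le \|\delta_X\|_{V-U} \cdot M_\Phi \cdot \|f\|_\Psi$ by swapping the roles. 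This yields equivalence of the two norms and, in particular, that the set $\Wsp(G)$ does not depend on $\Phi$.

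For (b), I would work at the level of the summands $\|f\phi_i\|_\infty$. The identities $|\overline f\phi_i| = ||f|\phi_i| = |\chi f\phi_i| = |f\phi_i|$ (the last because $|\chi(x)|=1$) immediately give $\|\overline f\|_\Phi = \||f|\|_\Phi = \|\chi f\|_\Phi = \|f\|_\Phi$. For $f^\dagger$ and $T_tf$, a change of variable yields $\|f^\dagger\|_\Phi = \|f\|_{\Phi^\dagger}$ and $\|T_tf\|_\Phi = \|f\|_{T_{-t}\Phi}$, where $\Phi^\dagger$ and $T_{-t}\Phi$ are again BUPUs (with reflected and translated point families), so the claim follows from (a).

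For (c), pointwise submultiplicativity drops out of $\|fg\phi_i\|_\infty \le \|f\|_\infty \|g\phi_i\|_\infty$ combined with $\|\cdot\|_\infty\le\|\cdot\|_\Phi$ from Lemma~\ref{FRSS-lem:normpropw}. Completeness is the standard route: a $\|\cdot\|_\Phi$-Cauchy sequence is uniformly Cauchy and so converges in $\COsp(G)$, and a Fatou argument on the discrete index $I$ places the limit in $\Wsp(G)$ with $\|\cdot\|_\Phi$-convergence. The delicate convolution bound I would prove by decomposing $f=\sum_i f\phi_i$ and $g=\sum_j g\phi_j$ into pieces supported in $x_i+U$ and $x_j+U$, noting that $(f\phi_i)*(g\phi_j)$ is supported in $x_i+x_j+2U$ with $\|(f\phi_i)*(g\phi_j)\|_\infty \le m_G(U)\|f\phi_i\|_\infty \|g\phi_j\|_\infty$, then estimating $\sum_k \|\psi_k(f*g)\|_\infty$ for an auxiliary BUPU $\Psi=(\psi_k)_{k\in K}$ with point family $Z$ by switching summation order; the number of $k$ with $\psi_k((f\phi_i)*(g\phi_j))\ne 0$ is bounded by $\|\delta_Z\|_{2U-V}$ via Corollary~\ref{FRSS-cor:chard}, yielding $\|f*g\|_\Psi \le C\|f\|_\Phi\|g\|_\Phi$ and then the result via (a).

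For (d), continuity with respect to the inductive limit topology on $\Cc(G)$ follows because, on functions supported in a fixed compact $K$, only the indices $i$ with $x_i\in K-U$ contribute, their number bounded by $\|\delta_X\|_{K-U}$; hence $\|f\|_\Phi\le\|\delta_X\|_{K-U}\cdot M\cdot\|f\|_\infty$. For density, I would approximate any $f\in\Wsp(G)$ by the compactly supported functions $f\sum_{i\in F_n}\phi_i\in\Cc(G)$, where $F_n\subseteq I$ are finite sets exhausting $I$; the uniform domination $\|\phi_j f(1-\sum_{i\in F_n}\phi_i)\|_\infty \le (1+BM)\|\phi_j f\|_\infty$, together with pointwise vanishing for each $j$ (only finitely many $\phi_i$ overlap $\phi_j$), allows dominated convergence on the index set $I$ to conclude. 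Statement (e) is then quick: $\|\cdot\|_\infty\le\|\cdot\|_\Phi$ gives the continuous embedding, density in $\COsp(G)$ follows from $\Cc(G)\subseteq\Wsp(G)$ being $\|\cdot\|_\infty$-dense in $\COsp(G)$, and the pointwise ideal property is the estimate $\|fg\|_\Phi\le\|g\|_\infty\|f\|_\Phi$ for $g\in\COsp(G)$. Statement (f) uses $\|f\|_1\le m_G(U)\sum_i\|f\phi_i\|_\infty = m_G(U)\|f\|_\Phi$ for the continuous embedding, density again from $\Cc(G)$, and a convolution-ideal estimate analogous to (c): only $g$ is decomposed via $\Phi$, while the $\Lisp(G)$-factor $f$ is integrated against indicator functions of translates of $U-V$, whose overlap count is again controlled by Corollary~\ref{FRSS-cor:chard}. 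The main obstacle I expect is the two convolution estimates in (c) and (f), where the $\ell^1$-summation over the BUPU index must survive the spreading of supports under convolution; the needed combinatorial control in both cases comes from translation boundedness of the relevant weighted Dirac comb via Corollary~\ref{FRSS-cor:chard}.
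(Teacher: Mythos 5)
Your proposal is correct, and for items (a), (b), (d), (e) and the completeness part of (c) it follows essentially the same route as the paper: (a) via the refinement estimate of Eqn.~\eqref{FRSS-eq:refined}, (b) via the transformed BUPUs $\Phi^\dagger$ and $T_{-t}\Phi$, (d) via the count $\delta_X(K-U)$ for continuity, and (e), (f) via $\|\cdot\|_\infty\le\|\cdot\|_\Phi$ and $\|\cdot\|_1\le m_G(U)\|\cdot\|_\Phi$. In fact your dominated-convergence argument for density in (d) spells out a step the paper leaves as "an immediate consequence of $f=\sum_i f\phi_i$", which is a genuine (small) improvement in completeness of exposition.

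The one place where you genuinely diverge is the convolution submultiplicativity in (c). The paper does not prove this directly: it first establishes the convolution ideal property in (f) by the estimate $\|f*g\|_\Phi\le\int|f(y)|\,\|T_yg\|_\Phi\,\mathrm{d}y\le\|\delta_X\|_{U-U}\,M\,\|f\|_1\,\|g\|_\Phi$ (using the uniform translation bound $\|T_yg\|_\Phi\le\|\delta_X\|_{U-U}M\|g\|_\Phi$ from Eqn.~\eqref{FRSS-eq:refined} and monotone convergence), and then obtains $\|f*g\|_\Phi\le C\,\|f\|_1\|g\|_\Phi\le C\,m_G(U)\|f\|_\Phi\|g\|_\Phi$ from the embedding $\Wsp(G)\subseteq\Lisp(G)$. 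Your alternative decomposes both factors, bounds $\|(f\phi_i)*(g\phi_j)\|_\infty\le m_G(U)\|f\phi_i\|_\infty\|g\phi_j\|_\infty$ with support in $x_i+x_j+U+U$, and controls the overlap count with an auxiliary BUPU via Corollary~\ref{FRSS-cor:chard}; this is correct (the interchange of summations is justified by nonnegativity, and the identification of $\sum_{i,j}(f\phi_i)*(g\phi_j)$ with $f*g$ follows from absolute convergence in $\Lisp(G)$), and it has the virtue of exhibiting the local-times-global structure of the estimate explicitly, at the cost of being combinatorially heavier. The paper's route is shorter because (f) is needed anyway. Your sketch of the ideal estimate in (f) by counting overlaps of translates of $U-V$ against $|f|\in\Lisp(G)$ likewise works and is a mild variant of the paper's translation-invariance argument.
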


\begin{remark}
Part (a) and (b) of the above theorem imply absolute summability of $f\in \Wsp(G)$ on any uniformly discrete point set. In fact there are equivalent norms on more general so-called BAPUs \cite{FRSS-fegr85, FRSS-HGF87}. This leads to absolute summability beyond uniformly discrete point sets.
\end{remark}

\begin{beweis}
\noindent ``(a)'' By Eqn.~\eqref{FRSS-eq:refined}, any two BUPUs $\Phi,\Psi$ of size $U,V$ with point families $X,Y$, respectively, satisfy
\begin{displaymath}
\|f\|_{\Psi}\le  \|\delta_Y\|_{U-V}\cdot  M_{\Psi}\cdot \|f\|_\Phi \ , \qquad \|f\|_{\Phi}\le  \|\delta_X\|_{V-U}\cdot  M_{\Phi}\cdot \|f\|_\Psi \ .
\end{displaymath}
\noindent ``(b)'' Note $\|\bar{f}\|_\Phi=\||f|\|_\Phi=\|\chi f\|_\Phi$. We also have $\|f\|_{\Phi}=\|f^\dagger\|_{\Phi^\dagger}=\|\widetilde f\|_{\Phi^\dagger}$, where $\Phi^\dagger=(\phi^\dagger)_{i\in I}$ is the BUPU of size $-U$ with point family $(-x_i)_{i\in I}$. Note that $\|T_xf\|_\Phi=\|f\|_{T_{-x}\Phi}$, where $T_{-x}\Phi=(T_{-x}\phi_i)_{i\in I}$ is the BUPU of size $U$ with point family $(x_i-x)_{i\in I}$.

\noindent ``(c)'' Here we show completeness, the algebra properties follow from (e) and (f). Let $(f_n)$ be a Cauchy sequence in $(\Wsp(G), \|\cdot\|_\Phi)$. Then $(f_n)$ is a Cauchy sequence in the Banach space  $(\COsp(G), \|\cdot\|_\infty)$ as $\|\cdot\|_\infty\le \|\cdot\|_\Phi$. Denote its limit by $f\in \COsp(G)$. Note that for fixed $i\in I$ the sequence $(\|f_n\phi_i\|_\infty)_{n\in \mathbb N}$ converges to $\|f\phi_i\|_\infty$, and $(\|f\phi_i\|_\infty)_{i\in I}\in \lisp$ by completeness of $\lisp$. Thus $\|f\|_\Phi<\infty$.

\noindent ``(d)'' We use the identity as embedding map. Denseness is an immediate consequence of $f=\sum_{i\in I} f\cdot \phi_i$ as $f\cdot \phi_i\in \Cc(G)$ for any $i\in I$. For continuity, assume $f\in \Wsp(G)$ such that $\supp(f)\subseteq K$ for some compact set $K\subseteq G$. Then the claim follows from $\|f\|_\Phi\le M \delta_X(K-U) \|f\|_\infty$. To show the latter inequality, note $\|f\phi_i\|_\infty\le M\cdot \|f\|_\infty$. Moreover we have
\begin{displaymath}
\card\{i\in I: (x_i+U)\cap K\ne \varnothing\}=\card\{i\in I: x_i\in K-U\}=\delta_X(K-U) \ .
\end{displaymath}
This yields $\|f\|_\Phi=\sum_{i\in I} \|f\phi_i\|_\infty \le \delta_X(K-U) \cdot M\cdot \|f\|_\infty$.

\noindent ``(e)'' We use the identity as embedding map.  Note first that $(\Wsp(G), \|\cdot\|_\Phi)$ is continuously embedded in $(\COsp(G), \|\cdot\|_\infty)$ as $\|\cdot\|_\infty\le \|\cdot\|_\Phi$.
Denseness follows from denseness of $\Cc(G)\subseteq \Wsp(G)$ in $(\COsp(G), \|\cdot\|_\infty)$. The ideal property holds as $(\COsp(G), \|\cdot\|_\infty)$ is a Banach algebra with respect to pointwise multiplication, which leads to $\|f\cdot g\|_\Phi\le \|f\|_\infty\cdot \|g\|_\Phi$ for $f\in \COsp(G)$ and $g\in \Wsp(G)$.

\noindent ``(f)'' We use the canonical quotient map $\Wsp(G)\to \Lisp(G)$ as embedding map. It is indeed injective as $\Wsp(G)\subseteq \Csp(G)$. Take a BUPU $\Phi$ of size $U$. Then we have for any $g\in \Wsp(G)$ that
\begin{displaymath}
\|g\|_1 \le \sum_{i\in I} \|g\cdot \phi_i\|_1 \le m_G(U)\cdot  \sum_{i\in I} \|g\cdot \phi_i\|_\infty = m_G(U) \cdot \|g\|_\Phi
\end{displaymath}
This shows that $(\Wsp(G), \|\cdot\|_\Phi)$ is continuously embedded into $(\Lisp(G), \|\cdot\|_1)$. Denseness follows from denseness of $\Cc(G)\subseteq \Wsp(G)$ in $(\Lisp(G), \|\cdot\|_1)$. To show that $f*g\in \Wsp(G)$ for $f\in \Lisp(G)$ and $g\in \Wsp(G)$, note first $g\in \Lisp(G)$, such that the convolution is well defined. Next note
\begin{displaymath}
\begin{split}
\|T_yg\|_\Phi &= \|g\|_{T_{-y}\Phi}\le \|g\|_{\Phi|T_{-y}\Phi} \le \|\delta_X\|_{U-U} M  \|g\|_\Phi \ ,
\end{split}
\end{displaymath}
where we used the estimate in Eqn.~\eqref{FRSS-eq:refined}. We thus obtain
\begin{displaymath}
\begin{split}
\|f*g\|_\Phi& = \sum_{i\in I} \sup_{x\in G} \left|\int f(y) g(x-y) \phi_i(x)\, {\rm d}y\right|\\
&\le \sum_{i\in I} \int |f(y)|\cdot \|(T_yg)\phi_i\|_\infty \, {\rm d}y
= \int |f(y)| \cdot \|T_yg\|_\Phi \, {\rm d}y \\
&\le \|\delta_X\|_{U-U} M \|f\|_1 \|g\|_\Phi \ ,
\end{split}
\end{displaymath}
where we used monotone convergence for the second equality.
\end{beweis}

\subsection{Product property of $\Wsp(G)$}\label{FRSS-sec:Pp}

The following product property of the Wiener algebra will later be important.
We first recall the notion of projective tensor product of function spaces, see e.g.~\cite[Ch.~2]{FRSS-R02} or \cite{FRSS-S71} for background. Consider two Banach spaces $(\Asp, \| \cdot \|_{\Asp})$ and $(\Bsp, \| \cdot \|_{\Bsp})$ of functions on LCA groups $G$ and $H$, respectively. Then the \textit{tensor product}\index{tensor~product} $g\otimes h$ of $g \in \Asp$ and $h \in \Bsp$ is the function $g \otimes h : G \times H\to \CC$ defined via $(f \otimes g)(x,y) = f(x) \cdot  g(y)$. \index{$f\otimes g$}
Let $\Asp \otimes \Bsp$ denote the vector space of functions on $G \times H$ spanned by tensor products of functions from $\Asp$ and $\Bsp$, respectively. We have
\begin{displaymath}\index{$\Asp\otimes \Bsp$}
\Asp \otimes \Bsp = \left\{ \sum_{j=1}^n f_j \otimes g_j : n \geq 1, f_j \in \Asp, g_j \in \Bsp \right\}   \ .
\end{displaymath}
By \cite[Prop.~2.1]{FRSS-R02}, there exists a norm $\| \cdot \|$ on $\Asp \otimes \Bsp$ such that
$\| g \otimes h \| = \|g\|_{\Asp}\cdot \|h\|_{\Bsp}$ holds for all $g \in \Asp$ and $h \in \Bsp$.
\begin{definition} Let $(\Asp, \| \cdot \|_{\Asp})$ and $(\Bsp, \| \cdot \|_{\Bsp})$ be two Banach spaces of functions on LCA groups $G$ and $H$, respectively. The \textit{projective tensor product}\index{projective~tensor~product} $\Asp \,\widehat{\otimes}\, \Bsp$ is defined as the abstract completion of $(\Asp \otimes \Bsp, \| \cdot \|)$.
\end{definition}
In all examples below where we refer to the projective tensor product $\Asp \,\widehat{\otimes}\, \Bsp$, the space $(\Asp \otimes \Bsp, \| \cdot \|)$ embeds inside some Banach space $(\Csp, \| \cdot \|_{\Csp})$. We will then always view $\Asp \,\widehat{\otimes}\, \Bsp$ as a subspace of $\Csp$.

\begin{lemma}\label{FRSS-lem:WW}
  Let $G,H$ be LCA groups. Then $\Wsp(G)\, \widehat{\otimes}\, \Wsp(H)\subseteq \Wsp(G\times H)$.
\end{lemma}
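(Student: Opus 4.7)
The plan is to reduce the inclusion to a bound on pure tensors by means of the product BUPU from Lemma~\ref{FRSS-lem:prodbupu}. Fix any BUPU $\Phi=(\phi_i)_{i\in I}$ on $G$ and any BUPU $\Psi=(\psi_j)_{j\in J}$ on $H$. By Lemma~\ref{FRSS-lem:prodbupu}, the family $\Phi\otimes\Psi=(\phi_i\otimes\psi_j)_{(i,j)\in I\times J}$ is a BUPU on $G\times H$, which by Theorem~\ref{FRSS-thm:W}(a) may be used to compute the Wiener norm on $\Wsp(G\times H)$.

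For $f\in\Wsp(G)$ and $g\in\Wsp(H)$, using $(f\otimes g)(\phi_i\otimes\psi_j)=(f\phi_i)\otimes(g\psi_j)$ and $\|u\otimes v\|_\infty=\|u\|_\infty\cdot\|v\|_\infty$, I would compute directly
\begin{displaymath}
\|f\otimes g\|_{\Phi\otimes\Psi}
=\sum_{(i,j)\in I\times J}\|f\phi_i\|_\infty\cdot\|g\psi_j\|_\infty
=\|f\|_\Phi\cdot\|g\|_\Psi \ .
\end{displaymath}
In particular $f\otimes g\in\Wsp(G\times H)$, and the bilinear map $(f,g)\mapsto f\otimes g$ from $\Wsp(G)\times\Wsp(H)$ into $\Wsp(G\times H)$ is bounded with norm $\le 1$ (after choosing the equivalent shift-invariant norms from Lemma~\ref{FRSS-lem:hnW}, if desired). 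By the triangle inequality, any finite sum $u=\sum_{k=1}^n f_k\otimes g_k\in\Wsp(G)\otimes\Wsp(H)$ satisfies
\begin{displaymath}
\|u\|_{\Phi\otimes\Psi}\le\sum_{k=1}^n\|f_k\|_\Phi\cdot\|g_k\|_\Psi \ .
\end{displaymath}

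Taking the infimum over all representations of $u$ yields $\|u\|_{\Wsp(G\times H)}\le\|u\|_{\Wsp(G)\,\widehat{\otimes}\,\Wsp(H)}$, by the defining property of the projective tensor norm recalled just before the statement. Since $(\Wsp(G\times H),\|\cdot\|_{\Phi\otimes\Psi})$ is complete by Theorem~\ref{FRSS-thm:W}(c), the identity map extends uniquely from the dense subspace $\Wsp(G)\otimes\Wsp(H)$ to a continuous embedding of the completion $\Wsp(G)\,\widehat{\otimes}\,\Wsp(H)$ into $\Wsp(G\times H)$, giving the claimed inclusion.

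The only subtle point is making sure this extension is really an inclusion and not just a bounded map, i.e.\ that the natural map $\Wsp(G)\,\widehat{\otimes}\,\Wsp(H)\to\Wsp(G\times H)$ is injective. This is not truly an obstacle here: as noted just before the definition of the projective tensor product, whenever $\Asp\otimes\Bsp$ embeds into some common Banach space $\Csp$, the completion is viewed as a subspace of $\Csp$; in our situation all three spaces $\Wsp(G)\otimes\Wsp(H)$, $\Wsp(G\times H)$, and the completion live naturally inside $\COsp(G\times H)$, so the extension is an honest subspace inclusion.
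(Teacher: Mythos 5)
Your proof is correct and follows essentially the same route as the paper: form the product BUPU via Lemma~\ref{FRSS-lem:prodbupu}, observe that the Wiener norm is exactly multiplicative on pure tensors, and pass to the completion (the paper phrases this as absolute convergence of the series $\sum_n g_n\otimes h_n$ with $\sum_n\|g_n\|_\Phi\|h_n\|_\Psi<\infty$, which is the same density/continuity argument you give). Your extra remark on injectivity is consistent with the paper's convention of viewing the projective tensor product as a subspace of a common ambient Banach space, here $\COsp(G\times H)$.
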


\begin{beweis}
  Let $\Phi=(\phi_i)_{i\in I}$ be a BUPU in $\COsp(G)$,
  and let $\Psi=(\psi_j)_{j\in J}$ be a BUPU in $\COsp(H)$.
  Consider the product BUPU $\Phi\otimes \Psi=(\phi_i \otimes \psi_j)_{(i,j)\in I\times J}$ in $\COsp(G\times H)$, compare  Lemma~\ref{FRSS-lem:prodbupu}.
  We then have for $g\in \Wsp(G)$ and $h\in \Wsp(H)$ that
  \begin{displaymath}
    \begin{split}
      \|g\otimes &h\|_{\Wsp(G\times H), \Phi\otimes \Psi} = \sum_{(i,j)\in I\times J} \|(\phi_i\otimes \psi_j)\cdot (g\otimes h)\|_\infty\\
      &= \sum_{(i,j)\in I\times J} \|(\phi_i \cdot g)\otimes (\psi_j\cdot h)\|_\infty
      = \sum_{(i,j)\in I\times J} \|\phi_i \cdot g\|_\infty \cdot \|\psi_j\cdot h\|_\infty \\
      &= \|g\|_{\Wsp(G), \Phi} \cdot \|h\|_{\Wsp(H), \Psi} < \infty \ .
    \end{split}
  \end{displaymath}
The above argument shows that also $f=\sum_{n\in \NN} g_n \otimes h_n \in \Wsp(G\times H)$ as long as $g_n\in \Wsp(G)$, $h_n\in \Wsp(H)$ for all $n\in\NN$ with a finite sum 
\begin{displaymath}
\sum_{n\in \NN} \|g_n\|_{\Wsp(G),\Phi} \|h_n\|_{\Wsp(H),\Psi} < \infty \ .
\end{displaymath}
We thus have $\Wsp(G) \,\widehat{\otimes}\, \Wsp(H)\subseteq \Wsp(G\times H)$.
\end{beweis}

\subsection{Segal algebra characterisation of $\Wsp(G)$}\label{FRSS-sec:SegAlgChar}

Recall that a \textit{Segal algebra}\index{Algebra!Segal~algebra} is a Banach space $(\Bsp, \|\cdot\|_\Bsp)$ that is contained in $(\Lisp(G),\|\cdot\|_1)$ as a dense subspace and is homogeneous, i.e., $\|T_xf\|_\Bsp=\|f\|_\Bsp$ and $\lim_{x\to0} \|T_xf-f\|_\Bsp=0$ for all $f\in \Bsp$, see \cite[Sec.~6.2]{FRSS-Rei2} and references therein for background. Segal algebras are convolution ideals in $(\Lisp(G), \|\cdot\|_1)$ satisfying $\|g*f\|_\Bsp\le \|g\|_1\cdot \|f\|_\Bsp$ for $g\in \Lisp(G)$ and $f\in \Bsp$ by \cite[Prop.~6.2.4]{FRSS-Rei2}. Any Segal algebra contains the integrable functions of finite bandwidth
\begin{displaymath}\index{$\LKsp(G)$}
\LKsp(G)=\{f\in \Lisp(G): \widehat f\in \Cc(\widehat G)\}
\end{displaymath}
as a dense subspace \cite[Prop.~6.2.19]{FRSS-Rei2}, and $\LKsp(G)$ equals the intersection of all Segal algebras \cite[p.~26]{FRSS-Rei3}.
We will always consider $\LKsp(G)$ as a space of continuous functions. Extending results from \cite{FRSS-Rei2}, it is shown in \cite[Thm.~5.1]{FRSS-LRW74} that $M_1(G)=\Wsp(G)$ is a Segal algebra. As shown in \cite[Thm.~3]{FRSS-fe77-3}, the Wiener algebra satisfies a certain minimality property: Among all Segal algebras with the additional property of being a pointwise $\COsp(G)$-module it is the smallest one, i.e., it is continuously embedded into any such Segal algebra. This is extended in \cite[Thm.~5]{FRSS-Fei81b} to $\Wsp(\Bsp,\lisp)$ beyond $\Bsp=\COsp$.

\smallskip

For our presentation to be self-contained, we argue that $\Wsp(G)$ is a Segal algebra by constructing a homogeneous norm $\|\cdot\|_\Wsp$ from $\|\cdot\|_\Phi$.

\begin{lemma}\label{FRSS-lem:hnW}
Let $\Phi=(\phi_i)_{i\in I}$ be any BUPU in $\COsp(G)$ and define $\|f\|_\Wsp=\sup_{x\in G}\|T_xf\|_{\Phi}$\index{$\| \, \|_{W}$}  for $f\in \Wsp(G)$. Then $\|\cdot\|_\Wsp$ is an equivalent homogeneous norm on $(\Wsp(G), \|\cdot\|_\Phi)$.
\end{lemma}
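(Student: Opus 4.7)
The plan is to establish the norm equivalence first, since everything else follows from it with little effort. The lower bound $\|f\|_\Phi \le \|f\|_\Wsp$ is immediate by taking $x=0$ in the defining supremum. For the upper bound, I would exploit the identity $\|T_xf\|_\Phi = \|f\|_{T_{-x}\Phi}$, where $T_{-x}\Phi=(T_{-x}\phi_i)_{i\in I}$ is itself a BUPU of size $U$ and norm $M$, but with shifted point family $X-x=(x_i-x)_{i\in I}$. Applying the refinement inequality from Eqn.~\eqref{FRSS-eq:refined} with $\Psi=T_{-x}\Phi$ (so $V=U$), I obtain
\[
\|T_xf\|_\Phi = \|f\|_{T_{-x}\Phi} \le \|\delta_{X-x}\|_{U-U}\cdot M \cdot \|f\|_\Phi.
\]
Since $\|\delta_{X-x}\|_{U-U}=\|\delta_X\|_{U-U}$ by translation invariance of the supremum defining this quantity, and since $\|\delta_X\|_{U-U}<\infty$ by Corollary~\ref{FRSS-cor:chard}, taking the supremum over $x\in G$ yields $\|f\|_\Wsp \le \|\delta_X\|_{U-U}\cdot M\cdot \|f\|_\Phi$. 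This simultaneously shows that $\|f\|_\Wsp<\infty$ for $f\in \Wsp(G)$ and establishes the equivalence.

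The norm axioms for $\|\cdot\|_\Wsp$ are inherited from $\|\cdot\|_\Phi$, since a supremum of seminorms is a seminorm, and definiteness follows from the lower bound above. Translation invariance is just the change of variables $z=x+y$:
\[
\|T_yf\|_\Wsp = \sup_{x\in G}\|T_xT_yf\|_\Phi = \sup_{z\in G}\|T_zf\|_\Phi = \|f\|_\Wsp.
\]

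Finally, translation continuity $\lim_{x\to 0}\|T_xf-f\|_\Wsp=0$ follows at once from the equivalence: applying the upper bound to $g=T_xf-f$ gives $\|T_xf-f\|_\Wsp\le \|\delta_X\|_{U-U}\cdot M\cdot \|T_xf-f\|_\Phi$, and the right-hand side tends to zero as $x\to 0$ by the shift continuity of $\|\cdot\|_\Phi$ already established in Lemma~\ref{FRSS-lem:normpropw}. The main (but mild) obstacle is the equivalence step, which requires the correct choice $\Psi=T_{-x}\Phi$ in the refinement inequality together with the translation invariance of $\|\delta_X\|_{U-U}$; what looks at first like it could require a uniform version of shift continuity (because $\|\cdot\|_\Wsp$ is itself a supremum over translates) in fact reduces to the pointwise version for $\|\cdot\|_\Phi$ once the norm equivalence is available.
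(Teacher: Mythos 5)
Your proposal is correct and follows essentially the same route as the paper: both derive the bound $\|T_xf\|_\Phi\le \|\delta_X\|_{U-U}\cdot M\cdot\|f\|_\Phi$ from the refinement inequality \eqref{FRSS-eq:refined} (your explicit choice $\Psi=T_{-x}\Phi$ together with the translation invariance of $\|\delta_X\|_{U-U}$ is exactly the step the paper leaves implicit), and both obtain translation invariance by definition and translation continuity by reducing to the shift continuity of $\|\cdot\|_\Phi$ from Lemma~\ref{FRSS-lem:normpropw}. No gaps.
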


\begin{beweis}
Using Eqn.~\eqref{FRSS-eq:refined}, we infer for $f\in \Wsp(G)$ that
\begin{displaymath}
\|f\|_\Wsp=\sup_{x\in G}\|T_xf\|_{\Phi} \le \|\delta_X\|_{U-U} \cdot M\cdot \|f\|_\Phi \ .
\end{displaymath}
This shows that $\|f\|_\Wsp$ is finite. As $\|\cdot\|_\Wsp$ satisfies the norm axioms and $\|\cdot\|_\Phi\le \|\cdot\|_\Wsp$, this also shows that $\|\cdot\|_\Wsp$ and $\|\cdot\|_\phi$ are equivalent norms. The norm $\|\cdot\|_\Wsp$ is translation invariant by definition, and translation continuity is inherited from $\|\cdot\|_\Phi$ as
\begin{displaymath}
\|T_xf-f\|_\Wsp=\sup_{y\in G} \|T_yT_xf-T_yf\|_\Phi\le \|\delta_X\|_{U-U} \cdot M \cdot\|T_xf-f\|_\Phi \ .
\end{displaymath}
\end{beweis}

\subsection{The Wiener algebra dual}

Consider the dual $\Wsp'(G)$\index{$\Wsp'(G)$} of the normed space $(\Wsp(G), \| \cdot\|_{\Phi})$. Then $\Wsp'(G)$ is a Banach space when equipped with the operator norm $\mu\mapsto \op{\mu}_{\Phi}$\index{$\op{ \, }_{\Phi}$}, given by
\begin{displaymath}
\op{\mu}_{\Phi}= \sup \left\{ |\mu(f)| : f \in \Wsp(G), \| f\|_{\Phi} \le 1 \right\} \ .
\end{displaymath}
Note that $\left| \mu(f) \right| \leq \op{\mu}_{\Phi} \cdot \| f\|_{\Phi}$ for all $f \in \Wsp(G)$, and that different BUPUs lead to equivalent norms on $\Wsp'(G)$. It is well known \cite[Thm.~6.1]{FRSS-LRW74} that $\Wsp'(G)$ can be identified with the space of complex Radon measures that are translation bounded. This will be discussed in Section~\ref{FRSS-sec:m}, see Theorem~\ref{FRSS-thm:tbW} below. For later reference, we note the following result on convolution.

\begin{lemma}\label{FRSS-lem:convcu}
For $\mu\in \Wsp'(G)$ and $f\in \Wsp(G)$ define $\mu*f: G\to \CC$ by $(\mu * f)(x)=\mu(T_xf^\dagger)$ for $x\in G$. Then $\mu*f\in \Cu(G)$.
Moreover, for each BUPU $\Phi$ there exists a constant $C_{\Phi}$ such that
\[
\| \mu * f \|_\infty \leq C_\Phi \cdot\op{\mu}_{\Wsp,\Phi}\cdot \|f\|_{\Wsp,\Phi}
\]
for all $\mu \in \Wsp'(G)$ and all $f \in \Wsp(G)$. 
\end{lemma}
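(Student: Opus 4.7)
The plan is to reduce both the boundedness and the uniform continuity of $\mu * f$ to the defining dual estimate $|\mu(g)| \le \op{\mu}_\Phi \cdot \|g\|_\Phi$, combined with the translation properties of $(\Wsp(G), \|\cdot\|_\Phi)$ already established in Lemmas~\ref{FRSS-lem:normpropw} and \ref{FRSS-lem:hnW}.

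First I would note that $f^\dagger \in \Wsp(G)$ and $T_x f^\dagger \in \Wsp(G)$ for every $x \in G$ by Theorem~\ref{FRSS-thm:W}(b), so $(\mu*f)(x) = \mu(T_x f^\dagger)$ is well defined and satisfies
\[
|(\mu*f)(x)| \le \op{\mu}_\Phi \cdot \|T_x f^\dagger\|_\Phi .
\]
Lemma~\ref{FRSS-lem:hnW} supplies the equivalent shift invariant norm $\|\cdot\|_\Wsp$, and the reflection identity $\|f^\dagger\|_\Phi = \|f\|_{\Phi^\dagger}$ together with the BUPU-independence of the Wiener-algebra norm class from Theorem~\ref{FRSS-thm:W}(a) produces a constant $C_\Phi$ depending only on $\Phi$ with $\|T_x f^\dagger\|_\Phi \le C_\Phi \|f\|_\Phi$ uniformly in $x$. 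This gives both that $\mu*f$ is bounded and the required estimate $\|\mu*f\|_\infty \le C_\Phi \op{\mu}_\Phi \|f\|_\Phi$.

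For uniform continuity I would write, for $x,y \in G$,
\[
(\mu*f)(x) - (\mu*f)(y) = \mu(T_x f^\dagger - T_y f^\dagger) = \mu(T_x(f^\dagger - T_{y-x} f^\dagger)) ,
\]
and apply the same dual inequality together with the translation bound above to obtain
\[
|(\mu*f)(x) - (\mu*f)(y)| \le C_\Phi \cdot \op{\mu}_\Phi \cdot \|f^\dagger - T_{y-x} f^\dagger\|_\Phi .
\]
Shift continuity of $\|\cdot\|_\Phi$ (Lemma~\ref{FRSS-lem:normpropw}) applied to $f^\dagger$ then makes the right-hand side tend to $0$ as $y-x \to 0$, uniformly in the base point $x$. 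This is exactly uniform continuity, so $\mu*f \in \Cu(G)$.

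The only delicate point is the bookkeeping of $C_\Phi$: it must absorb the translation estimate $\|T_x g\|_\Phi \le \|\delta_X\|_{U-U} M \|g\|_\Phi$ (implicit in the proof of Lemma~\ref{FRSS-lem:normpropw}) and the reflection equivalence between $\Phi$ and $\Phi^\dagger$, while depending neither on $\mu$ nor on $f$. Once this constant is fixed, both conclusions follow from a single application of the dual estimate.
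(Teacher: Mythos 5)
Your proof is correct and follows essentially the same route as the paper: the dual estimate $|\mu(g)|\le\op{\mu}_\Phi\|g\|_\Phi$ combined with the uniform translation bound from Eqn.~\eqref{FRSS-eq:refined} gives boundedness, and shift continuity of $\|\cdot\|_\Phi$ gives uniform continuity. Your treatment is in fact slightly more explicit than the paper's, which compresses the reflection bookkeeping ($\|f^\dagger\|_\Phi$ versus $\|f\|_\Phi$) and the uniform-continuity step into single remarks.
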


\begin{beweis}
Consider a BUPU $\Phi=(\phi_i)_{i\in I}$ of size $U$ and norm $M$. Boundedness follows from the estimate
\begin{displaymath}
|(\mu * f)(x)|=|\mu(T_x f^\dagger)|\le \op{\mu}_{\Phi} \cdot \|T_xf^\dagger\|_{\Phi}
\le  \op{\mu}_{\Wsp,\Phi} \cdot \|\delta_X\|_{U-U} M  \|f\|_{\Wsp,\Phi} \ ,
\end{displaymath}
where we used Eqn.~\eqref{FRSS-eq:refined}.
Continuity follows from shift continuity of $\|\cdot\|_{\Phi}$, see Lemma~\ref{FRSS-lem:normpropw}.
\end{beweis}

\section{Feichtinger's algebra}\label{FRSS-sect:appB}

In this section, we discuss Feichtinger's algebra $\So(G)$.
This function algebra has been introduced in \cite{FRSS-Fei81} and has recently been reviewed in \cite{FRSS-Jak}. We will introduce $\So(G)$ using BUPUs in $\Asp(G)$, in analogy to our treatment of the Wiener algebra.

\subsection{BUPUs for Banach algebras}

We recall the general definition of bounded uniform partition of unity from \cite{FRSS-F83}.  Here boundedness is understood with respect to a suitable Banach space of continuous functions $(\Asp,\|\cdot\|_\Asp)$ over $G$, which is an algebra with respect to pointwise multiplication.

\begin{definition}[BUPU in $\Asp$]\label{FRSS-def:BUPUA}
Let $(\Asp,\|\cdot\|_\Asp)$ be a suitable Banach algebra of functions over $G$.
A family $\Phi=(\phi_i)_{i\in I}$ in $\Asp$ is called a \textit{bounded uniform partition of unity (BUPU)}\index{BUPU!BUPU~in~$\Asp$} of  norm $M$ and size $U$ with overlap constant $B$, if there exists an indexed family of points $X=(x_i)_{i\in I}$ in $G$, a relatively compact zero neighborhood $U\subseteq G$ and finite constants $M$ and $B$ such that
\begin{itemize}
\item[(a)] $\supp(\phi_i)\subseteq x_i +U$ for all $i\in I$,
\item[(b)] $\card\{j\in I: (x_i +U)\cap (x_j+U)\ne \varnothing\}\le B$ for all $i\in I$,
\item[(c)] $\sum_{i\in I} \phi_i(x)=1$ for all $x\in G$,
\item[(d)] $\|\phi_i\|_\Asp\le M$ for all $i \in I$.
\end{itemize}
\qed
\end{definition}

Instead of working with the general definition of suitable Banach algebra, which we will not give here, we restrict to two examples in this article. For the Wiener algebra, we have considered  $(\COsp(G), \|\cdot\|_\infty)$ in the previous section, where $\COsp(G)$ denotes the space of continuous functions on $G$ vanishing at infinity, and where $\|\cdot\|_\infty$ denotes the supremum norm. For Feichtinger's algebra $\So(G)$, we will consider $(\Asp(G), \|\cdot\|_{\Asp(G)})$ where $\Asp(G)=\mathcal F\Lisp(G)$ is the Fourier algebra\index{Algebra!Fourier~algebra} of $G$ with norm $\|f\|_{\Asp(G)}=\|\widehat f\|_{\Lisp(\widehat G)}$\index{$\| \, \|_{\Asp(G)}$}.  In that context, an important function space appears to be
\begin{displaymath}\index{$\KLsp(G)$}
\KLsp(G)=\{ f \in \Cc(G) : \widehat{f} \in \Lisp(\widehat{G}) \}=\Cc(G)\cap \Asp(G) \ .
\end{displaymath}
We recall the following well-known result, compare \cite[Lem.~4.3]{FRSS-Jak}.
\begin{lemma}\label{FRSS-lem:AGC0}
The Fourier algebra $(\Asp(G), \|\cdot\|_{\Asp(G)})$ is continuously embedded and dense in $(\COsp(G), \|\cdot\|_\infty)$. In fact  $\Ksp_2(G)=\text{span}\{f*\widetilde{f}:f\in \Cc(G)\}\subseteq \Asp(G)\cap \Cc(G)$ is dense in $(\COsp(G), \|\cdot\|_\infty)$.
\end{lemma}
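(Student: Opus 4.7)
The plan is to first establish the continuous embedding $\Asp(G) \hookrightarrow \COsp(G)$, and then prove the stronger density statement for $\Ksp_2(G)$, from which the density of $\Asp(G)$ is immediate since $\Ksp_2(G) \subseteq \Asp(G)$. For the embedding, observe that every $f \in \Asp(G) = \mathcal F \Lisp(\widehat G)$ is the Fourier transform of some $h \in \Lisp(\widehat G)$, so the Riemann--Lebesgue lemma on $\widehat G$ yields $f \in \COsp(G)$ with $\|f\|_\infty \le \|h\|_1 = \|f\|_{\Asp(G)}$.

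For the inclusion $\Ksp_2(G) \subseteq \Asp(G) \cap \Cc(G)$ I would check the generators: for $f \in \Cc(G)$, the convolution $f * \widetilde f$ is continuous with $\supp(f * \widetilde f) \subseteq \supp(f) - \supp(f)$, which is compact, hence $f * \widetilde f \in \Cc(G)$. On the Fourier side, using $\widehat{\widetilde f} = \overline{\widehat f}$, one obtains $\widehat{f * \widetilde f} = |\widehat f|^2$; since $f \in \Cc(G) \subseteq \Ltsp(G)$, Plancherel gives $\widehat f \in \Ltsp(\widehat G)$, so $|\widehat f|^2 \in \Lisp(\widehat G)$ and $f * \widetilde f \in \Asp(G)$.

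The density step I would handle by combining a polarisation identity with a symmetric approximate identity. The identity
\[
4\, f * \widetilde g \;=\; \sum_{k=0}^{3} i^{k}\, (f + i^{k} g) * \widetilde{(f + i^{k} g)}
\]
(whose cross-terms collapse because $\sum_{k=0}^{3} i^{k} = \sum_{k=0}^{3} i^{2k} = 0$) shows that $f * \widetilde g \in \Ksp_2(G)$ for all $f, g \in \Cc(G)$. I would then pick a non-negative symmetric approximate identity $(e_\alpha) \subseteq \Cc(G)$ with $\widetilde{e_\alpha} = e_\alpha$, $m_G(e_\alpha) = 1$, and $\supp(e_\alpha)$ shrinking to the identity of $G$; such families exist on any LCA group by symmetrising standard bump constructions. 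For $f \in \Cc(G)$, which is automatically uniformly continuous, the functions $e_\alpha * f = f * \widetilde{e_\alpha}$ lie in $\Ksp_2(G)$ by the polarisation step and converge to $f$ in $\|\cdot\|_\infty$ by standard estimates. Combined with the density of $\Cc(G)$ in $(\COsp(G), \|\cdot\|_\infty)$, this yields density of $\Ksp_2(G)$ in $\COsp(G)$. The only minor obstacle is the careful handling of the involution $\widetilde{\,\cdot\,}$ in the polarisation identity; the remaining ingredients are classical.
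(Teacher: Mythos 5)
Your proposal is correct and follows essentially the same route as the paper: Riemann--Lebesgue plus $\|f\|_\infty\le\|f\|_{\Asp(G)}$ for the embedding, Plancherel (the paper's ``Cauchy's inequality'') for $\Ksp_2(G)\subseteq\Asp(G)$, and an approximate identity combined with polarisation for the density, the paper merely citing the polarisation identity you write out and using a general Dirac net in place of your symmetrised one.
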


\begin{beweis}
We have $\Asp(G)\subseteq \COsp(G)$ by the Riemann-Lebesgue lemma and $\|f\|_\infty\le \|f\|_{\Asp(G)}$ for $f\in \Asp(G)$. Thus the identity as embedding map is clearly continuous. An estimate using Cauchy's inequality shows $\Ksp_2(G)\subseteq \Asp(G)$.
We argue that $\Ksp_2(G)$ is dense in $(\Cc(G), \|\cdot\|_\infty)$. 
Then the lemma is proved, as $\Cc(G)$ is dense in $(\COsp(G), \|\cdot\|_\infty)$.  Let $f\in \Cc(G)$ be arbitrary.
Take a Dirac net $(v_i)_i$ in $G$, i.e., $v_i\in \Cc(G)$ is nonnegative, $\|v_i\|_1=1$ and the support of $v_i$ is contained in some zero neighborhood which shrinks to $0$ with increasing $i$, compare \cite[Lem.~1.6.5]{FRSS-DE}.  Then $(f*v_i)_i$ converges uniformly to $f$, compare \cite[Lem.~3.4.5]{FRSS-DE}. Moreover $f*v_i\in \Ksp_2(G)$ by polarisation \cite[page~244]{FRSS-MoSt}.
\end{beweis}

\subsection{Feichtinger's algebra}

As any BUPU in $\Asp(G)$ is a BUPU in $\COsp(G)$ due to $\|\cdot\|_\infty\le \|\cdot\|_{\Asp(G)}$, it is natural to consider the following subspace of $\Wsp(G)$.

\begin{definition}\label{FRSS-def:FeichAlg} Let $\Phi=(\phi_i)_{i\in I}$ be any BUPU in $\Asp(G)$. Then \textit{Feichtinger's algebra}\index{Algebra!Feichtinger~algebra} is given by
\[\index{$\| \, \|_{\So,\Phi}$}
\So(G)= \{ f \in \Asp(G) : \|f \|_{\So,\Phi} = \sum_{i \in I} \|f \cdot \phi_i \|_{\Asp(G)} < \infty \} \ .
\]
\end{definition}
The following result is proved as in the previous section. In order to show shift continuity, one uses that the topology on $\widehat G$ is the topology of convergence on compact sets.
\begin{lemma}\label{FRSS-lem:normprop}
$\|\cdot\|_{\So,\Phi}$ defines a shift continuous norm on $\So(G)$, which satisfies $\|\cdot\|_A\le \|\cdot\|_{\So,\Phi}$. Moreover $\|\cdot\|_{\So,\Phi}$ is invariant by multiplication with continuous characters, i.e., we have $\|\chi\cdot f\|_{\So,\phi}=\|f\|_{\So,\phi}$ for all $f\in \So(G)$ and all $\chi\in \widehat G$.  \qed
\end{lemma}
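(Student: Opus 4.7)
The plan is to follow the proof of Lemma~\ref{FRSS-lem:normpropw}, with the supremum norm replaced by the Fourier algebra norm $\|\cdot\|_{\Asp(G)}$. Three properties of $\Asp(G)=\mathcal F\Lisp(\widehat G)$ will be used, in parallel to what was used for $(\COsp(G),\|\cdot\|_\infty)$ in the Wiener-algebra proof: $\Asp(G)$ is a Banach algebra under pointwise multiplication (since $\|fg\|_{\Asp(G)}=\|\widehat f\ast\widehat g\|_1\le\|f\|_{\Asp(G)}\|g\|_{\Asp(G)}$); the norm $\|\cdot\|_{\Asp(G)}$ is invariant under translations of $G$ (because $x$-translation on $G$ corresponds to modulation on $\widehat G$, and $\|\cdot\|_1$ is modulation invariant); and translation is continuous, $\|T_xf-f\|_{\Asp(G)}\to 0$ as $x\to 0$.

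The norm axioms and the inequality $\|f\|_{\Asp(G)}\le\|f\|_{\So,\Phi}$ are routine. For the latter one observes that the series $\sum_i f\phi_i$ converges in $\Asp(G)$ by absolute summability and completeness, and agrees pointwise with $f$, so the triangle inequality yields $\|f\|_{\Asp(G)}\le\sum_i\|f\phi_i\|_{\Asp(G)}=\|f\|_{\So,\Phi}$.

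For shift continuity I would mimic the partition argument of Lemma~\ref{FRSS-lem:normpropw}: given $\varepsilon>0$, decompose $I=I_1\cup I_0$ with $I_1$ finite and $\sum_{i\in I_0}\|f\phi_i\|_{\Asp(G)}\le\varepsilon$. The finite part is controlled by the Banach-algebra estimate $\|(T_xf-f)\phi_i\|_{\Asp(G)}\le M\|T_xf-f\|_{\Asp(G)}$, so the reduction is to $\|T_xf-f\|_{\Asp(G)}\to 0$ as $x\to 0$. This is where the hint enters: writing $\widehat{T_xf-f}(\chi)=(\overline{\chi(x)}-1)\widehat f(\chi)$ with $\widehat f\in\Lisp(\widehat G)$, continuity of each character $\chi\colon G\to \TT$ at $0\in G$ (equivalently, the fact that the topology on $\widehat G$ is the topology of convergence on compact subsets of $G$) combined with dominated convergence gives the claim. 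The tail is handled by the identity $(T_xf)\phi_i=T_x(f\cdot T_{-x}\phi_i)$, the translation invariance of $\|\cdot\|_{\Asp(G)}$, and the refinement estimate~\eqref{FRSS-eq:refined} applied to the shifted BUPU $T_{-x}\Phi$; that estimate transfers to $\Asp(G)$ because its derivation only invokes submultiplicativity of the norm.

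Character invariance is the shortest part. For $\chi\in\widehat G$, modulation on $G$ is intertwined by the Fourier transform with translation on $\widehat G$, so $\|\chi\cdot(f\phi_i)\|_{\Asp(G)}=\|T_\chi\widehat{f\phi_i}\|_1=\|\widehat{f\phi_i}\|_1=\|f\phi_i\|_{\Asp(G)}$ by translation invariance of $\|\cdot\|_1$ on $\Lisp(\widehat G)$, and summation over $i$ yields $\|\chi f\|_{\So,\Phi}=\|f\|_{\So,\Phi}$. I expect the main obstacle to be the tail estimate in shift continuity—specifically, the bookkeeping needed to turn the raw refinement estimate~\eqref{FRSS-eq:refined} into a bound of the form $\mathrm{const}\cdot\varepsilon$ for $\sum_{i\in I_0}\|f\cdot T_{-x}\phi_i\|_{\Asp(G)}$—but this should proceed verbatim as in the proof of Lemma~\ref{FRSS-lem:normpropw}.
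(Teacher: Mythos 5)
Your proposal is correct and is essentially the proof the paper intends: the paper gives no written proof beyond the remark that the lemma ``is proved as in the previous section'' using that the topology on $\widehat G$ is the topology of convergence on compact sets, and your plan fleshes out exactly that — transporting the argument of Lemma~\ref{FRSS-lem:normpropw} to $(\Asp(G),\|\cdot\|_{\Asp(G)})$, locating the compact-convergence fact in the step $\|T_xf-f\|_{\Asp(G)}\to 0$, and adding the (correct) one-line character-invariance computation via the Fourier intertwining of modulation on $G$ with translation on $\widehat G$.
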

The following result describes important relations to $\Wsp(G)$ and $\COsp(G)$.
\begin{proposition}\label{FRSS-prop:S0Wdense}
$(\So(G), \|\cdot\|_{\So,\Phi})$ is continuously embedded and dense in both $(\Wsp(G), \|\cdot\|_{\Wsp,\Phi})$ and $(\COsp(G), \|\cdot\|_\infty)$.
\end{proposition}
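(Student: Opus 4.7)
The plan is to prove the two continuous embeddings first, and then handle the two density statements separately, with denseness in $\COsp(G)$ being essentially free from earlier lemmas and denseness in $\Wsp(G)$ requiring a convolution argument.

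For the continuous embeddings, the key observation is that because $\|\cdot\|_\infty \leq \|\cdot\|_{\Asp(G)}$, any BUPU $\Phi$ in $\Asp(G)$ is also a BUPU in $\COsp(G)$, so the same $\Phi$ can be used to compute both the $\So$- and $\Wsp$-norms. For $f\in \So(G)$ I then get termwise
\[
\|f \cdot \phi_i\|_\infty \leq \|f \cdot \phi_i\|_{\Asp(G)},
\]
which upon summation gives $\|f\|_{\Wsp,\Phi} \leq \|f\|_{\So,\Phi}$. In particular $f\in \Wsp(G)$, so $\So(G)\hookrightarrow \Wsp(G)$ continuously. Composing with $\Wsp(G)\hookrightarrow \COsp(G)$ from Theorem~\ref{FRSS-thm:W}(e) (equivalently using $\|f\|_\infty\le \|f\|_{\Asp(G)}\le \|f\|_{\So,\Phi}$ directly) gives the embedding into $\COsp(G)$.

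For denseness in $\COsp(G)$, I would observe that $\Ksp_2(G)\subseteq \Asp(G)\cap \Cc(G)$ from Lemma~\ref{FRSS-lem:AGC0}, and that every $g\in \Ksp_2(G)$ actually lies in $\So(G)$: only finitely many $\phi_i$ meet the compact support of $g$, so $\|g\|_{\So,\Phi}=\sum_i \|g\phi_i\|_{\Asp(G)}$ is a finite sum of products $\|g\|_{\Asp(G)}\cdot \|\phi_i\|_{\Asp(G)}$ via the Banach algebra property of $\Asp(G)$. Since $\Ksp_2(G)$ is already dense in $(\COsp(G),\|\cdot\|_\infty)$ by Lemma~\ref{FRSS-lem:AGC0}, so is the larger space $\So(G)$.

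The denseness in $\Wsp(G)$ is the main obstacle, since $\Wsp$-convergence is much stronger than uniform convergence. By Theorem~\ref{FRSS-thm:W}(d), $\Cc(G)$ is dense in $(\Wsp(G),\|\cdot\|_{\Wsp,\Phi})$, so it suffices to approximate every $f\in \Cc(G)$ by functions in $\So(G)$ in the $\Wsp$-norm. I would choose a Dirac net $(v_n)_{n\in\NN}$ inside $\Ksp_2(G)$, obtained by setting $v_n=g_n*\widetilde{g_n}/\|g_n\|_2^2$ with nonnegative $g_n\in \Cc(G)$ whose supports shrink to $\{0\}$; then $v_n\in \Asp(G)\cap \Cc(G)\subseteq \So(G)$, $v_n\ge 0$, and $\int v_n\,\dd m_G = 1$. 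Standard convolution arguments give $f*v_n\to f$ uniformly. Moreover, $f*v_n \in \Asp(G)$ because $\widehat{f}$ is bounded and $\widehat{v_n}\in \Lisp(\widehat G)$; the compact support of $f*v_n$ (contained in the fixed compact set $\supp(f)+\supp(v_0)$ for all large $n$) then forces $f*v_n\in \So(G)$ by the same finite-sum argument as above. Finally, since all $f-f*v_n$ have support in a common compact set $K$, the estimate from the proof of Theorem~\ref{FRSS-thm:W}(d), namely $\|h\|_\Phi \leq M\cdot \delta_X(K-U)\cdot \|h\|_\infty$ for $h$ supported in $K$, converts uniform convergence into $\Wsp$-norm convergence, completing the proof.
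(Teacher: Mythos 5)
Your proof is correct and follows essentially the same route as the paper: both arguments reduce denseness in $\Wsp(G)$ to the denseness of $\Cc(G)$ there, approximate a compactly supported continuous function uniformly by elements of $\Ksp_2(G)\subseteq\So(G)$ whose supports stay inside a fixed compact set $K$, and then upgrade uniform convergence to $\Wsp$-convergence via the estimate $\|h\|_\Phi\le M\,\delta_X(K-U)\,\|h\|_\infty$ from the proof of Theorem~\ref{FRSS-thm:W}(d); you merely make explicit the Dirac-net construction that the paper delegates to Lemma~\ref{FRSS-lem:AGC0}. The only slip is cosmetic: to get $\int v_n\,\dd m_G=1$ you should normalise $g_n*\widetilde{g_n}$ by $\bigl(\int g_n\bigr)^2$ rather than by $\|g_n\|_2^2$.
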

\begin{beweis}
We have argued above that $\So(G)\subseteq \Wsp(G)\subseteq \COsp(G)$. The identity map provides a continuous embedding as $\|\cdot\|_{W}\le \|\cdot\|_{\So}$ and $\|\cdot\|_\infty\le\|\cdot\|_{A}\le \|\cdot\|_{\So}$. Denseness in $(\Wsp(G),\|\cdot\|_{W})$ follows from denseness of $\Cc(G)$ in $(\Wsp(G), \|\cdot\|_\Wsp)$ and denseness of $\Ksp_2(G)\subseteq \Wsp(G)$ in $(\COsp(G), \|\cdot\|_\infty)$. Indeed, consider arbitrary $f\in \Wsp(G)$ and let $\varepsilon>0$ be arbitrary. Take $g\in \Cc(G)$ such that $\|f-g\|_\Wsp<\varepsilon$, and let $K=\supp(g)+U$ for some zero neighborhood $U$. Then there exists a positive finite constant $C$ such that $\|h\|_\Wsp\le C\|h\|_\infty$  for all $h\in \Cc(G)$ such that $\supp(h)\subseteq K$. Now take $h\in \Ksp_2(G)$ such that $\supp(h)\subseteq K$ and $\|g-h\|_\infty<\varepsilon/C$. We then have
\begin{displaymath}
\|f-h\|_\Wsp\le \|f-g\|_\Wsp+\|g-h\|_\Wsp \le \|f-g\|_\Wsp + C\|g-h\|_\infty \le 2 \varepsilon \ .
\end{displaymath}
As $\varepsilon>0$ was arbitrary, this shows denseness in $(\Wsp(G),\|\cdot\|_{W})$. Denseness in $(\COsp(G), \|\cdot\|_\infty)$ holds as $\KLsp(G)\subseteq \So(G)$ is dense in $(\COsp(G), \|\cdot\|_\infty)$ by  Lemma~\ref{FRSS-lem:AGC0}.
\end{beweis}
The following properties are not difficult to verify.
\begin{theorem}\label{FRSS-thm:S0}
  \begin{itemize}
   \item[(a)] Different choices of $\Phi$ define equivalent norms on $\So(G)$. In particular, the definition of $\So(G)$ is independent of the choice of $\Phi$.
   \item[(b)] $f \in \So(G)$ implies $\overline{f}, f^\dagger, T_t f, \chi f \in \So(G)$
   for arbitrary $t \in G$ and $\chi \in \widehat{G}$.
   \item[(c)] $(\So(G), \| \cdot \|_{\So,\Phi})$ is a Banach algebra with respect to both pointwise multiplication and convolution.
\item[(d)]   $\KLsp(G)=\Cc(G)\cap \So(G)$, equipped with the $\|\cdot\|_{A(G)}$-inductive limit topology, is continuously embedded and dense in $(\So(G), \|\cdot\|_\Phi)$.
 \item[(e)]  $(\So(G), \|\cdot\|_{\So,\Phi})$ is continuously embedded and dense in $(A(G), \|\cdot\|_{A(G)})$. Moreover, it is a pointwise multiplication ideal in $(A(G), \|\cdot\|_{A(G})$.
 \item[(f)]  $(\So(G), \|\cdot\|_{\So,\Phi})$ is continuously embedded and dense in $(\Lisp(G), \|\cdot\|_1)$. Moreover, it is a convolution ideal in $(\Lisp(G), \|\cdot\|_1)$.
  \end{itemize}
\end{theorem}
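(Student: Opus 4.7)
The plan is to mirror the proof of Theorem~\ref{FRSS-thm:W} almost verbatim, systematically replacing the supremum norm $\|\cdot\|_\infty$ on $\COsp(G)$ by the Fourier algebra norm $\|\cdot\|_{\Asp(G)}$ on $\Asp(G)$. The crucial structural ingredients are: $(\Asp(G),\|\cdot\|_{\Asp(G)})$ is a Banach algebra under pointwise multiplication, so $\|fg\|_{\Asp(G)}\le\|f\|_{\Asp(G)}\cdot\|g\|_{\Asp(G)}$; the refinement estimate of Eqn.~\eqref{FRSS-eq:refined} depends only on multiplicativity of the norm and therefore yields, for any two BUPUs $\Phi,\Psi$ in $\Asp(G)$ of sizes $U,V$ with point families $X,Y$,
\[
\max\{\|f\|_{\So,\Phi},\|f\|_{\So,\Psi}\}\le\|f\|_{\So,\Phi|\Psi}\le \|\delta_Y\|_{U-V}\cdot M_\Psi\cdot \|f\|_{\So,\Phi};
\]
moreover, $\|\cdot\|_{\Asp(G)}$ is invariant under translation, modulation, reflection and complex conjugation.

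For (a), the symmetric version of the refinement estimate above yields equivalence of $\|\cdot\|_{\So,\Phi}$ and $\|\cdot\|_{\So,\Psi}$ exactly as in Theorem~\ref{FRSS-thm:W}(a). For (b), one reads off $\|\overline{f}\|_{\So,\Phi}=\|f\|_{\So,\Phi}$, $\|f^\dagger\|_{\So,\Phi}=\|f\|_{\So,\Phi^\dagger}$, $\|T_tf\|_{\So,\Phi}=\|f\|_{\So,T_{-t}\Phi}$ and $\|\chi f\|_{\So,\Phi}=\|f\|_{\So,\Phi}$ from the corresponding invariance properties of the Fourier algebra norm (the last identity already appears in Lemma~\ref{FRSS-lem:normprop}); combined with (a) this proves membership in $\So(G)$. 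Note that $|f|$ is intentionally absent from the list, since the Fourier algebra, in contrast to $\COsp(G)$, is not closed under taking absolute values.

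For completeness in (c), let $(f_n)$ be Cauchy in $(\So(G),\|\cdot\|_{\So,\Phi})$. As $\|\cdot\|_{\Asp(G)}\le\|\cdot\|_{\So,\Phi}$, it is also Cauchy in the Banach space $(\Asp(G),\|\cdot\|_{\Asp(G)})$, with some limit $f\in \Asp(G)$. Multiplicativity of the Fourier algebra norm gives $\|f_n\phi_i-f\phi_i\|_{\Asp(G)}\to 0$ for each $i$, hence $\|f\phi_i\|_{\Asp(G)}=\lim_n\|f_n\phi_i\|_{\Asp(G)}$; Fatou's lemma for counting measure then yields $\|f\|_{\So,\Phi}\le\liminf_n\|f_n\|_{\So,\Phi}<\infty$, and a standard $\varepsilon$/$N$-argument produces $\|f_n-f\|_{\So,\Phi}\to 0$. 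The two Banach algebra structures are inherited from (e) and (f). For (d), BUPU elements lie in $\Asp(G)\cap\Cc(G)=\KLsp(G)$; hence $f\phi_i\in \KLsp(G)$ for $f\in \So(G)$, and finite truncations $f_F=\sum_{i\in F}f\phi_i$ furnish $\So$-norm approximants in $\KLsp(G)$ via $\|f-f_F\|_{\So,\Phi}=\sum_{i\notin F}\|f\phi_i\|_{\Asp(G)}\to 0$. Continuity of the inductive limit embedding follows since, for $g\in \KLsp(G)$ with $\supp(g)\subseteq K$, at most $\delta_X(K-U)$ summands contribute, and the Banach algebra property gives $\|g\|_{\So,\Phi}\le\delta_X(K-U)\cdot M\cdot\|g\|_{\Asp(G)}$.

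For (e), continuous embedding comes from $\|\cdot\|_{\Asp(G)}\le\|\cdot\|_{\So,\Phi}$, the pointwise ideal estimate is
\[
\|fg\|_{\So,\Phi}=\sum_{i\in I}\|fg\phi_i\|_{\Asp(G)}\le\|f\|_{\Asp(G)}\sum_{i\in I}\|g\phi_i\|_{\Asp(G)}=\|f\|_{\Asp(G)}\cdot\|g\|_{\So,\Phi},
\]
and density follows from density of $\KLsp(G)\subseteq \So(G)$ in $(\Asp(G),\|\cdot\|_{\Asp(G)})$, which in turn is obtained by Fourier-transforming to $\Lisp(\widehat G)$ and applying a standard Dirac net argument together with truncation. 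For (f), the embedding into $\Lisp(G)$ uses $\|f\|_1\le\sum_{i}\|f\phi_i\|_1\le m_G(U)\sum_i\|f\phi_i\|_\infty\le m_G(U)\cdot\|f\|_{\So,\Phi}$; density is inherited from density of $\KLsp(G)\subseteq \So(G)$ in $(\Lisp(G),\|\cdot\|_1)$; and the convolution ideal estimate is exactly the Wiener algebra argument,
\[
\|f*g\|_{\So,\Phi}\le\int|f(y)|\cdot\|T_yg\|_{\So,\Phi}\,\dd y\le\|\delta_X\|_{U-U}\cdot M\cdot\|f\|_1\cdot\|g\|_{\So,\Phi},
\]
where uniform boundedness of $\|T_yg\|_{\So,\Phi}$ in $y\in G$ follows from the refinement estimate and shift continuity from Lemma~\ref{FRSS-lem:normprop}. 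The main obstacle I expect is establishing density of $\KLsp(G)$ in $(\Asp(G),\|\cdot\|_{\Asp(G)})$: one must produce $\Cc(G)$-approximants whose Fourier transforms converge in $\Lisp(\widehat G)$, which requires a careful double truncation on both sides of the Fourier pair rather than a one-sided Dirac net argument as in the $\COsp$-case.
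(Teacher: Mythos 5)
Your proposal is correct and follows the paper's approach exactly: the paper's own proof is precisely the instruction to repeat the argument of Theorem~\ref{FRSS-thm:W} with $\|\cdot\|_\infty$ replaced by $\|\cdot\|_{\Asp(G)}$, and the one genuine extra input you flag --- density of $\KLsp(G)$ in $(\Asp(G),\|\cdot\|_{\Asp(G)})$ --- is handled there by transferring the density of $\LKsp(\widehat G)$ in $\Lisp(\widehat G)$ (\cite[Prop.~5.4.1]{FRSS-Rei2}, reproved as Lemma~\ref{FRSS-lemma-appr}) through the Fourier isometry, i.e.\ by the same factor-through-$\Ltsp$ double truncation you describe. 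The only hairline imprecision is in (b): since $\|\overline{f}\cdot\phi_i\|_{\Asp(G)}=\|f\cdot\overline{\phi_i}\|_{\Asp(G)}$ rather than $\|f\cdot\phi_i\|_{\Asp(G)}$ for non-real $\phi_i$, the conjugation identity should read $\|\overline{f}\|_{\So,\Phi}=\|f\|_{\So,\overline{\Phi}}$ and then be combined with (a), exactly as you already do for $f^\dagger$ and $T_tf$.
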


\begin{proof}[Skecth of proof]
This is shown as in the proof of Theorem \ref{FRSS-thm:W}, where $\|\cdot\|_\infty$ is replaced by $\|\cdot\|_{\Asp(G)}$. For (d), note that $\KLsp(G)\subseteq \So(G)$ as $\|f\|_{\So,\Phi}$ has only finitely many nonzero summands for $f\in \KLsp(G)$. For (e), use that $\KLsp(G)$ is dense in $(\Asp(G), \|\cdot\|_{\Asp(G)})$. This holds as $\LKsp(G)$ is dense in $(\Lisp(G), \|\cdot\|_1)$, see e.g.~\cite[Prop.~5.4.1]{FRSS-Rei2}. For (f) observe $\So(G)\subseteq \Wsp(G)\subseteq \Lisp(G)$ and recall $\|\cdot\|_\infty\le \|\cdot\|_{\Asp(G)}$. For denseness, use that $\Cc(G)$ is dense in $(\Lisp(G), \|\cdot\|_1)$ and that $\KLsp(G)$ is dense in $(\COsp(G), \|\cdot\|_\infty)$ by Lemma~\ref{FRSS-lem:AGC0}.
\end{proof}

\begin{remark}
Note that $f \in \So(G)$ may not satisfy $|f| \in \So(G)$. Indeed, Kahane has given an explicit counterexample on the torus $\TT=\RR/ 2 \pi \ZZ$, see \cite{FRSS-Kah} and footnote 7 in \cite{FRSS-RudArt}. This example satisfies $f \in \KLsp(\TT)=\So(\TT)$ but $|f| \notin \KLsp(\TT)$.
\end{remark}

\begin{remark}\label{FRSS-rem:fce}
In fact Feichtinger's algebra is continuously embedded and dense also in $(L^p(G), \|\cdot\|_p)$ for all $1 \leq p \leq \infty$, see e.g.~\cite[Lem.~4.19]{FRSS-Jak}. It can be identified with the Wiener amalgan space $\WFLili(G)$, see \cite[Rem.~6]{FRSS-Fei81} and \cite{FRSS-Fei81b}.  Moreover it contains the Schwartz--Bruhat space $\Scsp(G)$, the test functions for tempered distributions on $G$. This has been shown in \cite[Satz~3]{FRSS-P80} for euclidean space and in \cite[Thm.~9]{FRSS-Fei81} for general LCA groups. In fact the above proofs can be adapted to show that $\Scsp(G)$ is continuously embedded and dense in $(\So(G), \|\cdot\|_{\So(G)})$.
This indicates that $\So(G)$ might be relevant for Fourier analysis, see below. Moreover we have $\So(G)\, \widehat \otimes \, \So(H)=\So(G\times H)$ for the projective tensor product between $\So(G)$ and $\So(H)$, compare Lemma~\ref{FRSS-lem:WW}. Whereas the inclusion ``$\subseteq$'' is seen as in the Wiener algebra case, the inclusion ``$\supseteq$'' has been proved using the Segal algebra characterisation of $\So(G)$ below, see \cite[Thm.~7]{FRSS-Fei81} or \cite[Thm.~7.4]{FRSS-Jak}.
\end{remark}

\subsection{Segal algebra characterisation of $\So(G)$}

The following result is proved as in the previous section.
\begin{lemma}
Let $\Phi=(\phi_i)_{i\in I}$ be any BUPU in $A(G)$ and define $\|f\|_{\So}=\sup_{x\in G}\|T_xf\|_{\So,\Phi}$  for $f\in \So(G)$. Then $\|\cdot\|_{\So}$\index{$\|\cdot\|_{\So}$} is an equivalent homogeneous norm on $(\So(G), \|\cdot\|_{\So,\Phi})$. \qed
\end{lemma}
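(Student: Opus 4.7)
The plan is to mirror the proof of Lemma~\ref{FRSS-lem:hnW}, replacing $\|\cdot\|_\infty$ and $\|\cdot\|_{\Wsp,\Phi}$ throughout by $\|\cdot\|_{\Asp(G)}$ and $\|\cdot\|_{\So,\Phi}$, since all ingredients used there (namely the refinement estimate \eqref{FRSS-eq:refined} and shift continuity of the underlying $\Phi$-norm) are already available in the $\So$-setting through Remark~\ref{FRSS-rem:refbupu} and Lemma~\ref{FRSS-lem:normprop}.

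First I would establish finiteness and equivalence. Fix $f\in \So(G)$ and $x\in G$. Using translation invariance of $\|\cdot\|_{\Asp(G)}$, one rewrites $\|T_xf\|_{\So,\Phi}=\sum_i\|T_xf\cdot\phi_i\|_{\Asp(G)}=\sum_i\|f\cdot T_{-x}\phi_i\|_{\Asp(G)}=\|f\|_{\So,T_{-x}\Phi}$, where $T_{-x}\Phi$ is the BUPU with shifted point family $X-x$ but unchanged size $U$ and norm $M$. The refinement inequality \eqref{FRSS-eq:refined} (applied with $\Psi=T_{-x}\Phi$) then yields
\[
\|T_xf\|_{\So,\Phi}\;=\;\|f\|_{\So,T_{-x}\Phi}\;\le\;\|f\|_{\So,\Phi\,|\,T_{-x}\Phi}\;\le\;\|\delta_X\|_{U-U}\cdot M\cdot\|f\|_{\So,\Phi},
\]
uniformly in $x$, since $\delta_{X-x}=T_x\delta_X$ has the same translation-bounded norm as $\delta_X$. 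Taking the supremum over $x$ shows $\|f\|_{\So}\le \|\delta_X\|_{U-U}\cdot M\cdot\|f\|_{\So,\Phi}<\infty$; together with the obvious converse $\|f\|_{\So,\Phi}\le\|f\|_{\So}$ (take $x=0$), this proves equivalence and, incidentally, that the norm axioms transfer from $\|\cdot\|_{\So,\Phi}$.

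Next I would verify homogeneity. Translation invariance is immediate: for $y\in G$,
\[
\|T_yf\|_{\So}\;=\;\sup_{x\in G}\|T_xT_yf\|_{\So,\Phi}\;=\;\sup_{x\in G}\|T_{x+y}f\|_{\So,\Phi}\;=\;\|f\|_{\So},
\]
by a change of supremum variable. For translation continuity, I would apply the bound from the first step to the difference $T_yf-f$:
\[
\|T_yf-f\|_{\So}\;=\;\sup_{x\in G}\|T_x(T_yf-f)\|_{\So,\Phi}\;\le\;\|\delta_X\|_{U-U}\cdot M\cdot\|T_yf-f\|_{\So,\Phi},
\]
and the right-hand side tends to $0$ as $y\to 0$ by the shift continuity of $\|\cdot\|_{\So,\Phi}$ proved in Lemma~\ref{FRSS-lem:normprop}.

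I do not expect any serious obstacle: the whole argument is a direct transcription of the Wiener-algebra case, the only nontrivial input being the refinement estimate \eqref{FRSS-eq:refined}, which was formulated in sufficient generality to apply verbatim to any pointwise-multiplicative norm, in particular to $\|\cdot\|_{\Asp(G)}$. The one small thing to keep in mind is that the refinement estimate requires the norm of the BUPU functions (in $\Asp(G)$ here) to be uniformly bounded, which is exactly condition (d) of Definition~\ref{FRSS-def:BUPUA}.
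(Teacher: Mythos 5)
Your proposal is correct and follows exactly the route the paper intends: the paper proves this lemma "as in the previous section," i.e.\ by transcribing the proof of Lemma~\ref{FRSS-lem:hnW} with $\|\cdot\|_\infty$ replaced by $\|\cdot\|_{\Asp(G)}$, using the refinement estimate \eqref{FRSS-eq:refined} (valid since $\Asp(G)$ is a Banach algebra under pointwise multiplication, so the norm is submultiplicative) and the shift continuity from Lemma~\ref{FRSS-lem:normprop}. No gaps.
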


We conclude that $(\So(G), \|\cdot\|_{\So})$ is a Segal algebra. In particular $\LKsp(G)$ is dense in $\So(G)$. Also note that $(\So(G),  \|\cdot\|_{\So})$ is strongly character invariant, i.e., we have $\|\chi f\|_{\So}=\|f\|_{\So}$ for all $f\in \So(G)$ and for all $\chi\in \widehat G$.  As shown in  \cite[Thm.~1]{FRSS-Fei81}, Feichtinger's algebra obeys a certain minimality property: Among all strongly character invariant Segal algebras it is the smallest one, i.e., it is continuously embedded into any such Segal algebra. See \cite[Thm.~7.3]{FRSS-Jak} for an extension of this result.

\subsection{Fourier properties of $\So(G)$}

Fourier analysis on $\So(G)$ shares many properties with that on $\Ltsp(G)$. This was studied in \cite{FRSS-Fei81} using the minimality property of $\So(G)$. Here we give direct arguments.
\begin{theorem}
\cite[Thm.~7]{FRSS-Fei81} \label{FRSS-thm:FFT}
 If $f \in \So(G)$, then $\widehat{f} \in \So(\widehat{G})$. As a consequence, the Fourier transform \, $\widehat{} : \So(G) \to \So(\widehat G)$ is a bijection.
\end{theorem}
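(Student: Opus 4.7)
The plan is to exploit the minimality property of $\So(G)$ recorded in the previous subsection: it is the smallest strongly character invariant Segal algebra in $\Lisp(G)$ \cite[Thm.~1]{FRSS-Fei81}. The idea is to pull the Segal-algebra structure of $\So(\widehat G)$ back to $G$ through the Fourier transform and check that the resulting space meets the hypotheses of the minimality theorem.

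Concretely, set $\mathcal{B}(G):=\{f\in \Lisp(G):\widehat f\in \So(\widehat G)\}$ with norm $\|f\|_{\mathcal{B}}:=\|\widehat f\|_{\So(\widehat G)}$. First I would verify that $\mathcal{B}(G)$ is a Banach space continuously embedded and dense in $(\Lisp(G),\|\cdot\|_1)$. Completeness is inherited from $\So(\widehat G)$ using Fourier inversion to construct the limit. The continuous embedding follows from
\[
\|f\|_1=\|f^\dagger\|_1=\|\widehat f\|_{\Asp(\widehat G)}\le \|\widehat f\|_{\So(\widehat G)}=\|f\|_{\mathcal{B}},
\]
where the inequality is provided by Proposition~\ref{FRSS-prop:S0Wdense}. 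Density follows from $\LKsp(G)\subseteq \mathcal{B}(G)$ combined with the standard density of $\LKsp(G)$ in $\Lisp(G)$; the inclusion holds because for $f\in \LKsp(G)$ one has $\widehat f\in \Cc(\widehat G)\cap \Asp(\widehat G)=\KLsp(\widehat G)\subseteq \So(\widehat G)$ by Theorem~\ref{FRSS-thm:S0}(d).

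Next I would verify homogeneity and strong character invariance of $\mathcal{B}(G)$, exploiting the fact that the Fourier transform exchanges translation with character multiplication. Norm invariance $\|T_y f\|_{\mathcal{B}}=\|f\|_{\mathcal{B}}$ translates to character invariance of $\|\cdot\|_{\So(\widehat G)}$ (Lemma~\ref{FRSS-lem:normprop}), while strong character invariance of $\mathcal{B}(G)$ reduces to translation invariance of the homogeneous $\So$-norm on $\widehat G$. Translation continuity $\|T_y f-f\|_{\mathcal{B}}\to 0$ becomes a modulation continuity statement $\|(\eta_y-1)\widehat f\|_{\So(\widehat G)}\to 0$ as $y\to 0$, where $\eta_y$ is the character on $\widehat G$ associated with $y\in G$. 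I would establish this by approximating $\widehat f\in \So(\widehat G)$ by $h\in \KLsp(\widehat G)$ in $\|\cdot\|_{\So}$ via Theorem~\ref{FRSS-thm:S0}(d), controlling the tail by the trivial estimate $\|(\eta_y-1)h\|_{\So}\le 2\|h\|_{\So}$, and noting that for $h$ supported in a fixed compact set the $\So$-norm is dominated by a multiple of the $\Asp$-norm, so that the remaining estimate reduces to $\|T_y\widehat h-\widehat h\|_1\to 0$, a standard fact.

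The minimality theorem then yields $\So(G)\hookrightarrow \mathcal{B}(G)$, which is precisely $\mathcal{F}(\So(G))\subseteq \So(\widehat G)$. Running the same argument on $\widehat G$ gives $\mathcal{F}_{\widehat G}(\So(\widehat G))\subseteq \So(G)$; combined with Fourier inversion $\mathcal{F}_{\widehat G}\mathcal{F}_G f=f^\dagger$ and the $\dagger$-invariance of $\So(G)$ from Theorem~\ref{FRSS-thm:S0}(b), this shows that $\mathcal{F}:\So(G)\to \So(\widehat G)$ is also surjective, hence bijective. The main technical obstacle is the translation continuity of $\mathcal{B}(G)$: Lemma~\ref{FRSS-lem:normprop} only supplies norm invariance under characters, so upgrading it to the continuous dependence required of a Segal algebra is the single step where real work, via the density plus localization argument above, is needed.
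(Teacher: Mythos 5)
Your proof is correct, but it takes a genuinely different route from the one in the paper. You argue via the minimality property: you pull the Segal algebra structure of $\So(\widehat G)$ back through the Fourier transform to the space $\mathcal{B}(G)=\{f\in\Lisp(G):\widehat f\in\So(\widehat G)\}$, verify that this is a strongly character invariant Segal algebra (the only delicate point being translation continuity, which you correctly reduce, via density of $\KLsp(\widehat G)$ and the local domination of the $\So$-norm by the $\Asp$-norm on compactly supported functions, to $\Lisp$-continuity of translation), and then invoke \cite[Thm.~1]{FRSS-Fei81}. This is in fact Feichtinger's original argument, and the paper explicitly sets out to avoid it: the text says ``This was studied in \cite{FRSS-Fei81} using the minimality property of $\So(G)$. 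Here we give direct arguments.'' The paper instead decomposes $f=\sum_{i\in I}f\phi_i$ along a BUPU in $\Asp(G)$, observes that $\widehat f=\sum_{i\in I}\widehat f*\widehat\phi_i$ pointwise, and controls each summand by inserting a plateau function $\psi_i=T_{x_i}\psi$ with $\psi_i\phi_i=\phi_i$, so that the Segal algebra convolution estimate and character invariance give $\|\widehat f*\widehat\phi_i\|_{\So}\le\|\widehat\psi\|_{\So}\cdot\|f\phi_i\|_{\Asp(G)}$, whence the series converges in $\So(\widehat G)$. The trade-off: your route is shorter on computation but rests on the full strength of the minimality theorem, which the paper quotes without proof, so within this paper your argument is not self-contained; the paper's direct argument uses only machinery already developed (BUPUs, the Segal algebra property, character invariance of the norm) and yields the explicit bound $\|\widehat f\|_{\So(\widehat G)}\le\|\widehat\psi\|_{\So}\cdot\|f\|_{\So,\Phi}$ as a by-product. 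Your treatment of surjectivity via Pontryagin duality, $\widehat{\widehat f}=f^\dagger$ and $\dagger$-invariance agrees with the paper's.
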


\begin{beweis}
Let $\Phi=(\phi_i)_{i\in I}$ be a BUPU in $A(G)$ of size $U$ and consider arbitrary $f\in \So(G)$. We then have $f= \sum_{i\in I} f \phi_i$ pointwise and in $\So(G)$. In particular, the sum is convergent in $\Lisp(G)$ and thus also in $\Ltsp(G)$, by boundedndess of $f$. This implies $\widehat f= \sum_{i\in I} \widehat f * \widehat \phi_i$ pointwise, by continuity of $\widehat f$ and by continuity of the Fourier transform on $\Ltsp(G)$. We show that the latter sum is convergent in $\So(\widehat G)$, which proves the claim. Let $\psi\in \KLsp(G)$ such that $\psi=1$ on $U$ and define $\psi_i=T_{x_i}\psi$. As $\psi_i\cdot \phi_i=\phi_i$, we conclude
\begin{displaymath}
\|\widehat f * \widehat \phi_i\|_{\So}=\|\widehat f * \widehat \phi_i* \widehat \psi_i\|_{\So}\le \|\widehat \psi_i\|_{\So} \cdot \|\widehat f * \widehat \phi_i\|_1= \|\widehat \psi\|_{\So}\cdot  \|f\cdot \phi_i\|_{A(G)}
\end{displaymath}
where, for convenience, we used a norm on $\So(\widehat G)$ that is character invariant. As $f\in \So(G)$, this shows
\begin{displaymath}
\sum_{i\in I} \|\widehat f * \widehat \phi_i\|_{\So}\le
\|\widehat \psi\|_{\So}\cdot\sum_{i\in I}\|f\cdot \phi_i\|_{A(G)} <\infty \ .
\end{displaymath}
This shows $\mathcal F \So(G)\subseteq \So(\widehat G)$. As $\So(G)\subseteq \Lisp(G)\cap \COsp(G)$, the bijectivity statement follows from the Fourier inversion theorem \cite[Thm.~4.4.5]{FRSS-Rei2}.
\end{beweis}

As an application, we will give short proof of the following well-known result. This result may alternatively be regarded as a consequence of the Segal algebra property of the Wiener algebra, compare Section~\ref{FRSS-sec:SegAlgChar}.
\begin{corollary}\label{FRSS-cor:SoinW}
If $f\in \KLsp(G)$ then $\widehat f \in \Wsp(\widehat G)$.
\end{corollary}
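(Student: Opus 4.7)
The plan is to chain together results already established in the excerpt. The key observation is that $\KLsp(G)$ sits inside Feichtinger's algebra $\So(G)$, so one may invoke the $\So$-invariance of the Fourier transform and then embed back into the Wiener algebra.

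First I would observe that $\KLsp(G) = \Cc(G)\cap \Asp(G) \subseteq \So(G)$. This is exactly part (d) of Theorem~\ref{FRSS-thm:S0}: for any $f\in \KLsp(G)$, the sum $\|f\|_{\So,\Phi}=\sum_{i\in I}\|f\phi_i\|_{\Asp(G)}$ has only finitely many nonzero terms because $f$ has compact support and $\Phi$ is a BUPU, so $f\in \So(G)$.

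Next I would apply Theorem~\ref{FRSS-thm:FFT}, the Fourier invariance of Feichtinger's algebra, to conclude that $\widehat f \in \So(\widehat G)$. Finally, Proposition~\ref{FRSS-prop:S0Wdense}, applied on the dual group $\widehat G$, gives the continuous embedding $\So(\widehat G)\hookrightarrow \Wsp(\widehat G)$, and therefore $\widehat f \in \Wsp(\widehat G)$, as desired.

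There is no real obstacle here: the corollary is a straightforward concatenation of Theorem~\ref{FRSS-thm:S0}(d), Theorem~\ref{FRSS-thm:FFT}, and Proposition~\ref{FRSS-prop:S0Wdense}. The only subtlety worth flagging is that the three steps take place on two different groups ($G$ for the first, $\widehat G$ for the last two), but this is immediate since Feichtinger's algebra and the Wiener algebra are defined on arbitrary LCA groups in the same way.
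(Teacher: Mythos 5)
Your proof is correct and follows exactly the paper's argument: the inclusion $\KLsp(G)\subseteq \So(G)$ from Theorem~\ref{FRSS-thm:S0}(d), the Fourier invariance Theorem~\ref{FRSS-thm:FFT}, and the embedding $\So(\widehat G)\subseteq \Wsp(\widehat G)$. No differences worth noting.
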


\begin{beweis}
As $\KLsp(G)\subseteq \So(G)$ by Theorem~\ref{FRSS-thm:S0} (d), we have $\LKsp(G)\subseteq \So(\widehat G)$ by Theorem~\ref{FRSS-thm:FFT}. Noting $\So(\widehat G)\subseteq \Wsp(\widehat G)$, the claim follows.
\end{beweis}

For the question of a norm such that the Fourier transform is isometric, fix arbitrary $g \in \So(G) \backslash \{0\}$. We cite the following two results.
\begin{theorem}\cite[Prop.~8.1]{FRSS-Jak} Let $f \in \Lisp(G)$. Then $f \in \So(G)$ if and only if
\begin{displaymath}\index{$\| \, \|_{\So(G), g}$}
\|f\|_{\So(G), g} = \int_G \int_{\widehat{G}} \left| \int_{G} \overline{\chi(s)} f(s) \overline{g(s-t)}  \dd s  \right| \dd \chi \dd t  < \infty \ .
\end{displaymath}
Moreover, $\| \cdot \|_{\So(G), g}$ is a norm on $\So(G)$ that is equivalent to $\| \cdot \|_{\So}$. \qed
\end{theorem}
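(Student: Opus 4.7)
The essential observation is that the inner integral realizes the short-time Fourier transform of $f$ with window $g$. Writing $V_g f(t,\chi):=\int_G f(s)\overline{g(s-t)}\,\overline{\chi(s)}\dd s$, this integral is precisely $\widehat{f\cdot T_t\bar g}(\chi)$, so that by definition of the Fourier algebra norm,
$$
\int_{\widehat G}|V_g f(t,\chi)|\dd\chi=\|f\cdot T_t\bar g\|_{\Asp(G)},\qquad \|f\|_{\So(G),g}=\int_G\|f\cdot T_t\bar g\|_{\Asp(G)}\dd t.
$$
The task therefore reduces to showing that for $f\in\Lisp(G)$ the latter integral is finite if and only if $f\in\So(G)$, together with the corresponding norm equivalence.

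For the forward direction, assume $f\in\So(G)$. Choose BUPUs $\Phi=(\phi_i)_{i\in I}$ and $\Psi=(\psi_j)_{j\in J}$ in $\Asp(G)$ of sizes $U$ and $V$ with point families $X=(x_i)$ and $Y=(y_j)$. Writing $f=\sum_i\phi_i f$ and $\bar g=\sum_j\psi_j\bar g$, one has $f\cdot T_t\bar g=\sum_{i,j}(\phi_i f)\cdot T_t(\psi_j\bar g)$. The summand is supported in $(x_i+U)\cap(t+y_j+V)$, which is empty unless $t\in x_i-y_j+(U-V)$, a set of Haar measure $m_G(U-V)$. Because $\Asp(G)$ is a pointwise Banach algebra with translation invariant norm, each summand satisfies $\|(\phi_i f)\cdot T_t(\psi_j\bar g)\|_{\Asp(G)}\le\|\phi_i f\|_{\Asp(G)}\cdot\|\psi_j\bar g\|_{\Asp(G)}$. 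Tonelli's theorem and summation then yield
$$
\|f\|_{\So(G),g}\le m_G(U-V)\,\|f\|_{\So,\Phi}\,\|g\|_{\So,\Psi},
$$
which is the first half of the norm equivalence (after replacing $\|\cdot\|_{\So,\Phi}$ by $\|\cdot\|_{\So}$).

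For the converse, I would invoke the STFT inversion identity
$$
f=\frac{1}{\|g\|_2^2}\int_G\int_{\widehat G}V_g f(t,\chi)\,M_\chi T_t g\dd\chi\dd t,
$$
where $M_\chi$ denotes modulation by $\chi$. Combining translation invariance of $\|\cdot\|_{\So}$ with the character invariance $\|\chi h\|_{\So,\Phi}=\|h\|_{\So,\Phi}$ from Lemma~\ref{FRSS-lem:normprop} gives $\|M_\chi T_t g\|_{\So}=\|g\|_{\So}$ uniformly in $(t,\chi)$. If the displayed integral converges as a Bochner integral in $\So(G)$, one concludes
$$
\|f\|_{\So}\le\frac{\|g\|_{\So}}{\|g\|_2^2}\,\|f\|_{\So(G),g},
$$
so that $f\in\So(G)$ with the desired estimate.

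The main obstacle lies precisely in the last sentence: showing that, for a mere $f\in\Lisp(G)$ whose STFT $V_g f$ happens to lie in $\Lisp(G\times\widehat G)$, the inversion integral converges in $\So(G)$ and represents $f$ almost everywhere. The standard route is to establish the identity first weakly by pairing against test functions $h\in\KLsp(G)\subseteq\So(G)$ via the STFT Parseval relation $\iint V_g f\cdot\overline{V_g h}\,\dd t\dd\chi=\langle g,g\rangle\langle f,h\rangle$ (which is unambiguous here because $V_g h$ decays rapidly), and then to upgrade this to Bochner convergence in $\So(G)$ using the uniform $\So$-bound on $M_\chi T_t g$ together with an approximation of $V_g f$ by compactly supported integrands, exploiting density of $\KLsp(G)$ in $\So(G)$ (Theorem~\ref{FRSS-thm:S0}(d)).
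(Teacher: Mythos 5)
The paper does not prove this statement: it is quoted from \cite[Prop.~8.1]{FRSS-Jak} and closed immediately, so there is no internal argument to compare against. Your proof is essentially the standard proof of the cited result. The forward direction --- rewriting the triple integral as $\int_G\|f\cdot T_t\bar g\|_{\Asp(G)}\dd t$ and bounding it via a double BUPU decomposition of $f$ and $\bar g$, using the support restriction $t\in x_i-y_j+(U-V)$ together with the pointwise Banach algebra property and translation invariance of $\|\cdot\|_{\Asp(G)}$ --- is complete and correct, up to the harmless replacement of $\|g\|_{\So,\Psi}$ by $\|\bar g\|_{\So,\Psi}$ (equivalent by Theorem~\ref{FRSS-thm:S0}(b)) and the observation that the double sum converges absolutely in $\Asp(G)$ so that the triangle inequality applies.

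For the converse, the gap you flag is genuine but fillable exactly along the lines you sketch; two points should be made explicit. First, the weak identity: for $h\in\KLsp(G)$ one has $\langle M_\chi T_tg,h\rangle=\overline{V_gh(t,\chi)}$, and Moyal's relation $\iint V_gf\,\overline{V_gh}=\|g\|_2^2\langle f,h\rangle$ must be verified for a mere $f\in\Lisp(G)$ with $V_gf\in\Lisp(G\times\widehat G)$; this follows by applying Parseval to $f\,T_t\bar g\in\Lisp(G)\cap\FLi(\widehat G)$ and $h\,T_t\bar g$ for almost every $t$, which gives $\int_{\widehat G}V_gf\,\overline{V_gh}\dd\chi=\int_Gf\,\overline h\,|g(\cdot-t)|^2\dd m_G$, and then integrating in $t$ by Fubini. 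Second, Bochner integrability of $(t,\chi)\mapsto V_gf(t,\chi)\,M_\chi T_tg$ requires, besides the uniform bound $\|M_\chi T_tg\|_{\So}=\|g\|_{\So}$ from Lemma~\ref{FRSS-lem:normprop}, the continuity of $(t,\chi)\mapsto M_\chi T_tg$ into $\So(G)$ and the $\sigma$-finiteness of $\{V_gf\neq0\}$ to ensure an essentially separable range. Writing $F$ for the resulting integral, you obtain $\langle F-f,h\rangle=0$ for all $h\in\KLsp(G)$; since $F-f\in\Lisp(G)$ and $\KLsp(G)$ is dense in $(\COsp(G),\|\cdot\|_\infty)$ by Lemma~\ref{FRSS-lem:AGC0}, this forces $F=f$ almost everywhere and closes the two-sided norm equivalence.
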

The above norm is central in time-frequency analysis, the function
\begin{displaymath}
\mathcal V_g(f)(t, \chi)=\int_{G} f(s) \overline{\chi(s)}\overline{g(s-t)}  \dd s
\end{displaymath}
is called the short-time Fourier transform of $f$. The norm $\|\cdot\|_{\So(G), g}$ makes the Fourier transform an isometric bijection in the following sense.
\begin{theorem}\label{FRSS-th:bij}
\cite[Prop.~4.10, Prop.~8.1]{FRSS-Jak}
The norm $\| \cdot \|_{\So(G), g}$ on $\So(G)$ is equivalent to any norm $\|\cdot\|_{\So,\Phi}$. Moreover for $f \in \So(G) \backslash \{ 0 \}$ we have
\begin{displaymath}
\| f \|_{\So(G), g} =  \| g \|_{\So(G), f} =  \| \overline{f} \|_{\So(G), \overline{g}}=  \| \widetilde{f} \|_{\So(G), \widetilde{g}}=   \| \widehat{f} \|_{\So(\widehat{G}), \widehat{g}} \ .
\end{displaymath}
In particular, the Fourier transform is an isometric isomorphism
\begin{displaymath}
  \widehat{ \, } :  ( \So(G),  \| \cdot \|_{\So(G), g} ) \to ( \So(\widehat{G}),  \| \cdot \|_{\So(\widehat{G}), \widehat{g}} ) \ . 
\end{displaymath}
\qed
\end{theorem}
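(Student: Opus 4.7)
The plan is to reduce everything to symmetry/covariance identities for the short-time Fourier transform
\[
\mathcal V_g f(t,\chi)=\int_G f(s)\overline{\chi(s)}\,\overline{g(s-t)}\dd s \ ,
\]
so that $\|f\|_{\So(G),g}=\|\mathcal V_g f\|_{L^1(G\times\widehat G)}$. The equivalence with $\|\cdot\|_{\So,\Phi}$ is the content of the previous theorem cited from \cite{FRSS-Jak}, so the only substantive work is to establish the four pointwise (or pointwise modulus) identities for $\mathcal V_g f$ and then to apply Fubini together with measure-preserving changes of variable on $G$ and $\widehat G$.

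First I would record three elementary identities, proved by the substitution $s\mapsto -s$ and by complex conjugation of the integrand defining $\mathcal V_g f$. Namely, for all $(t,\chi)\in G\times\widehat G$,
\[
|\mathcal V_{\overline g}\overline f(t,\chi)|=|\mathcal V_g f(t,\chi^{-1})|,\qquad
|\mathcal V_{\widetilde g}\widetilde f(t,\chi)|=|\mathcal V_g f(-t,\chi^{-1})| \ ,
\]
and, after the substitution $u=s-t$,
\[
|\mathcal V_f g(t,\chi)|=|\mathcal V_g f(-t,\chi^{-1})| \ .
\]
Each of the three equalities $\|f\|_{\So(G),g}=\|\overline f\|_{\So(G),\overline g}$, $\|f\|_{\So(G),g}=\|\widetilde f\|_{\So(G),\widetilde g}$ and $\|f\|_{\So(G),g}=\|g\|_{\So(G),f}$ then follows by integrating the identity and invoking the fact that the maps $\chi\mapsto\chi^{-1}$ on $\widehat G$ and $t\mapsto -t$ on $G$ preserve the chosen Haar measures (by uniqueness of Haar measure on an LCA group, up to normalisation which is fixed via Plancherel).

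The remaining equality, $\|f\|_{\So(G),g}=\|\widehat f\|_{\So(\widehat G),\widehat g}$, is the fundamental identity of time--frequency analysis, and is where the real work lies. I would express $\mathcal V_g f(t,\chi)=\langle f, M_\chi T_t g\rangle$ in $\Ltsp(G)$, which is legitimate because $\So(G)\subseteq \Lisp(G)\cap\Cu(G)\subseteq\Ltsp(G)$ (Remark~\ref{FRSS-rem:fce}). Applying Plancherel's formula turns this into $\langle\widehat f,\widehat{M_\chi T_t g}\rangle$, and a direct computation using $\widehat{T_t g}(\xi)=\overline{\xi(t)}\widehat g(\xi)$ and $\widehat{M_\chi h}(\xi)=\widehat h(\xi\chi^{-1})$ (with the Pontryagin identification $G\cong\widehat{\widehat G}$) gives
\[
\mathcal V_g f(t,\chi)=\overline{\chi(t)}\,\mathcal V_{\widehat g}\widehat f(\chi,t^{-1}) \ ,
\]
where $t^{-1}$ is the inverse of $t$ viewed as a character of $\widehat G$. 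Taking moduli and integrating with the measure-preserving changes of variable $(t,\chi)\mapsto(\chi,t^{-1})$ and applying Fubini yields the desired equality. The isometric-isomorphism claim is then immediate from this identity and Theorem~\ref{FRSS-thm:FFT}.

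The main obstacle is the last step: matching Fourier conventions, tracking the evaluation of characters in both directions of Pontryagin duality, and confirming that the phase factor $\overline{\chi(t)}$ indeed falls out under $|\,\cdot\,|$. Once the identity is in hand on $\KLsp(G)$ (where Plancherel, Fubini and inversion apply with no integrability concerns), density of $\KLsp(G)$ in $\So(G)$ from Theorem~\ref{FRSS-thm:S0}(d), together with the already-established equivalence of $\|\cdot\|_{\So(G),g}$ with $\|\cdot\|_{\So,\Phi}$, extends the identity to all of $\So(G)$ by continuity.
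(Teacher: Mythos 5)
Your argument is correct, but it is worth pointing out that the paper itself gives no proof of this theorem at all: the statement is closed with \qed and delegated entirely to \cite[Prop.~4.10, Prop.~8.1]{FRSS-Jak}. What you have written is essentially a reconstruction of the proof in that reference: the first three equalities are the elementary covariance identities of $\mathcal V_g$ under conjugation, the involution $\widetilde{\phantom{f}}$ and the swap $f\leftrightarrow g$, each reducing after $|\cdot|$ to a measure-preserving substitution on $G\times\widehat G$ (inversion preserves Haar measure on any LCA group, and the Plancherel normalisation fixes the dual measures consistently); the fourth is the fundamental identity $\mathcal V_g f(t,\chi)=\overline{\chi(t)}\,\mathcal V_{\widehat g}\widehat f(\chi,t^{-1})$ obtained from Plancherel applied to $\langle f, M_\chi T_t g\rangle$, which is legitimate since $\So(G)\subseteq\Lisp(G)\cap\COsp(G)\subseteq\Ltsp(G)$. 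Two small remarks. First, your identity for the $\widetilde{\phantom{f}}$-case should read $|\mathcal V_{\widetilde g}\widetilde f(t,\chi)|=|\mathcal V_g f(-t,\chi)|$ rather than $|\mathcal V_g f(-t,\chi^{-1})|$ with the paper's convention $\widetilde f(x)=\overline{f(-x)}$; this is harmless, since either substitution is measure preserving on $G\times\widehat G$ and the integrated statement is unaffected. Second, the closing density argument via $\KLsp(G)$ is not actually needed: all four pointwise identities hold for arbitrary $f,g\in\Ltsp(G)$, and the integrations are of non-negative functions, so Tonelli and the changes of variable apply without any a priori finiteness. The isometry claim then follows from the fourth identity together with the bijectivity of $\widehat{\phantom{f}}:\So(G)\to\So(\widehat G)$ from Theorem~\ref{FRSS-thm:FFT}, exactly as you say.
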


\begin{remark} Consider $G=\RR^d$ and pick any non-trivial Schwartz function $g \in \Scsp(\RR^d)$ such that $\widehat{g}=g$, for example the $d$-dimensional Gaussian function. Then we have for all $f \in \So(\RR^d)$ that
$\| \widehat{f} \|_{\So(\RR^d), g} = \| f \|_{\So(\RR^d), g}$.
\end{remark}

\section{Mild distributions}\label{FRSS-sec:fmd}

Consider the dual space $\SOp(G)$of the Banach space $(\So(G), \| \cdot\|_{\So(G), g})$. As usual $\SOp(G)$ is a Banach space when equipped with the operator norm $\sigma\mapsto \op{\sigma}_{\So(G), g}$, which is given by
\begin{displaymath}\index{$\op{\,}_{\So(G), g}$}
\op{\sigma}_{\So(G), g}= \sup \left\{ |\sigma(f)| : f \in \So(G), \| f\|_{\So(G), g} \le 1 \right\} \ .
\end{displaymath}
Note that $\left| \sigma(f) \right| \leq \op{\sigma}_{\So(G), g} \cdot \| f\|_{\So(G), g}$ for all $f \in \So(G)$, and that different $g \in \So(G) \backslash \{0 \}$ induce equivalent norms on $\SOp(G)$ by Theorem~\ref{FRSS-th:bij}.

\begin{definition}  \cite{FRSS-Fei22}
Elements of $\SOp(G)$ are called \textit{mild distributions}\index{mild!mild~distribution}.
\end{definition}

\begin{remark}[Mild distributions are translation bounded]\label{FRSS-rem:tbS0}
Note that any mild distribution $\sigma\in \SOp(G)$ is translation bouded on $\So(G)$, i.e., the function  $t \mapsto \sigma(T_tf)$ is bounded for any $f\in \So(G)$. This follows from the standard estimate $|\sigma(T_xf)|\le \op{\sigma}_{\So}\cdot \|f\|_{\So}$, using translation invariance of the $\So$-norm. In fact this function is uniformly continuous.
\end{remark}

\begin{remark}\label{FRSS-rem:BSdense}
Recall that the elements of the dual space of $\KLsp(G)$, equipped with the $A(G)$-inductive limit topology, are called quasimeasures, compare \cite[Rem.~2]{FRSS-Fei81}. Thus the space of mild distributions coincides with the space of quasimeasures that are translation bounded with respect to $\KLsp(G)$.
\end{remark}

\begin{remark}\label{FRSS-rem:sbmdis}
Note that $(\Wsp'(G), \op{\cdot}_\Wsp)$ is continuously embedded and weak*-dense in $(\SOp(G),\op{\cdot}_{\So})$. This follows from duality, see e.g.~\cite[Thm.~4.12]{FRSS-RUD3}, as $(\So(G), \|\cdot\|_{\So})$ is continuously embedded and dense in $(\Wsp(G), \|\cdot\|_\Wsp)$ by Proposition~\ref{FRSS-prop:S0Wdense}. Since the Schwartz space $\Scsp(\RR^d)$ is continuously embedded in $\So(\RR^d)$, compare Remark~\ref{FRSS-rem:fce}, any mild distribution on $\RR^d$ is a tempered distribution \cite[Thm.~B1]{FRSS-Fei80}.
We also have $\Cc^\infty(\RR^d) \subseteq \SOp(\RR^d) \subseteq \Scsp'(\RR^d)$. As $\Cc^\infty(\RR^d)$ is dense in $\Scsp'(\RR^d)$, see e.g.~\cite[Prop.~2.3.23]{FRSS-Gra}, it follows that the space of mild distributions is dense in $\Scsp'(\RR^d)$. 
\end{remark}

Note that $\So(G)$ naturally embeds into $\SOp(G)$ via $f \mapsto \sigma_f$, where the so-called regular distribution $\sigma_f\in \SOp(G)$ is given by $\sigma_f(g)= \int_G g(t) f(t)\, \dd t$.
Similarly \cite[Lemma~6.1]{FRSS-Jak}, all functions $f \in L^p(G)$ for $1 \leq p \leq \infty$, $f \in \COsp(G)$ and $f \in A(G)$ define mild distributions via $\sigma_f$. Also the Dirac measures $\delta_{x}$, given by $\delta_x(f)=f(x)$, are mild distributions by \cite[Lem.~6.3]{FRSS-Jak}. In fact, both $\So(G)$ and the span of all Dirac measures on $G$ are densely embedded into $\SOp(G)$ by \cite[Lem.~6.4]{FRSS-Jak}.

\smallskip

Let $\sigma \in \SOp(G)$  be given. As usual, the mild distributions $\overline{\sigma},  \sigma^\dagger  \in \SOp(G)$ are then defined by $\overline{\sigma}(f) = \overline{ \sigma({\bar{f}})}$ and $\sigma^\dagger(f)= \sigma(f^\dagger)$. For $h\in \So(G)$, the mild distributions $(h \cdot \sigma),(h* \sigma)\in \SOp(G)$ are defined by $(h \cdot \sigma)(f)= \sigma (h \cdot f)$ and $(h* \sigma)(f)   = \sigma(h^\dagger*f)$.

\medskip

Any distribution $\sigma\in \SOp(G)$ admits a Fourier transform $\widehat{\sigma}\in \SOp(\widehat G)$ via $\widehat{\sigma}(\varphi) = \sigma(\widehat{\varphi}) $\index{Fourier~transform!mild~distribution} for $\varphi \in \So(\widehat{G})$.  This distributional Fourier transform $\SOp(G) \to \SOp(\widehat G)$ has properties as known from tempered distributions \cite{FRSS-B61, FRSS-O75} or from transformable measures \cite{FRSS-ARMA1}.

\begin{theorem}\label{FRSS-thm:Fdist}\cite[Ex.~6.7, 6.8]{FRSS-Jak} The Fourier transform operator $\SOp(G) \to \SOp(\widehat G)$, given by $\sigma\mapsto \widehat\sigma$, has the following properties.
\begin{itemize}
  \item[(a)] $\widehat{\widehat{\sigma}}= \sigma^\dagger$ for all $\sigma \in \SOp(G)$.
  \item[(b)] $\widehat{f\cdot \sigma} = \widehat{f} * \widehat{\sigma}$ and
    $\widehat{f* \sigma} = \widehat{f} \cdot \widehat{\sigma}$ for all $f \in \So(G)$ and all $\sigma \in \SOp(G)$.
   \item[(c)]  $\widehat{\sigma_{\chi}} = \delta_{\chi}$ and $\widehat{\delta_{\chi}} = \sigma_{\overline{\chi}}$ for all $\chi \in \widehat{G}$.
  \end{itemize} 
  Moreover, the Fourier transform operator is weak*-continuous.
  \qed
\end{theorem}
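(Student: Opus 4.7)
The plan is to reduce every claim to the corresponding identity at the level of test functions, where the Fourier transform is a bijection $\So(G)\to\So(\widehat G)$ by Theorem~\ref{FRSS-thm:FFT}. Three classical identities on $\So$ will do all the bookkeeping: Fourier inversion $\widehat{\widehat f}=f^{\dagger}$, the exchange rule $\widehat{f^{\dagger}}=(\widehat f)^{\dagger}$, and the convolution theorems $\widehat{f\cdot g}=\widehat f * \widehat g$ and $\widehat{f*g}=\widehat f\cdot \widehat g$. These descend from the $\Lisp$-level identities, since $\So(G)\subseteq\Lisp(G)$ and the Fourier transform is a bijection of $\So$ onto $\So$.

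For part (a), test against $\varphi\in\So(G)$ and unfold the definition twice:
\[
\widehat{\widehat{\sigma}}(\varphi)=\widehat{\sigma}(\widehat{\varphi})=\sigma(\widehat{\widehat{\varphi}})=\sigma(\varphi^{\dagger})=\sigma^{\dagger}(\varphi).
\]
For part (c), apply Fourier inversion on a test function: for $\varphi\in\So(\widehat G)$,
\[
\widehat{\sigma_{\chi}}(\varphi)=\sigma_{\chi}(\widehat{\varphi})=\int_{G}\widehat{\varphi}(x)\chi(x)\dd x=\varphi(\chi)=\delta_{\chi}(\varphi),
\]
and a direct computation yields $\widehat{\delta_{\chi}}(\varphi)=\widehat{\varphi}(\chi)=\int_{G}\varphi(x)\overline{\chi(x)}\dd x=\sigma_{\overline{\chi}}(\varphi)$ for $\varphi\in\So(G)$. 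Weak*-continuity is immediate from the very definition: for each fixed $\varphi\in\So(\widehat G)$, the functional $\sigma\mapsto\widehat{\sigma}(\varphi)=\sigma(\widehat{\varphi})$ is weak*-continuous on $\SOp(G)$ by construction, since $\widehat{\varphi}\in\So(G)$.

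Part (b) is the main bookkeeping step. Take $f\in\So(G)$, $\sigma\in\SOp(G)$ and $\varphi\in\So(\widehat G)$. On the one hand,
\[
\widehat{f\cdot\sigma}(\varphi)=(f\cdot\sigma)(\widehat{\varphi})=\sigma(f\cdot\widehat{\varphi});
\]
on the other,
\[
(\widehat f*\widehat\sigma)(\varphi)=\widehat{\sigma}\bigl((\widehat f)^{\dagger}*\varphi\bigr)=\sigma\Bigl(\widehat{(\widehat f)^{\dagger}*\varphi}\Bigr).
\]
The convolution theorem on $\So$ together with $\widehat{(\widehat f)^{\dagger}}=(\widehat{\widehat f})^{\dagger}=(f^{\dagger})^{\dagger}=f$ gives $\widehat{(\widehat f)^{\dagger}*\varphi}=f\cdot\widehat{\varphi}$, so the two expressions agree. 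The identity $\widehat{f*\sigma}=\widehat f\cdot\widehat\sigma$ is then verified symmetrically: $\widehat{f*\sigma}(\varphi)=\sigma(f^{\dagger}*\widehat\varphi)$, whereas $(\widehat f\cdot\widehat\sigma)(\varphi)=\sigma(\widehat{\widehat f\cdot\varphi})=\sigma(\widehat{\widehat f}*\widehat\varphi)=\sigma(f^{\dagger}*\widehat\varphi)$.

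I do not expect any real obstacle beyond bookkeeping: the only points of care are to keep track of which operand sits in $\So(G)$ and which in $\So(\widehat G)$, and to apply Fourier inversion and the exchange/convolution rules only at the test-function level, where they are classical.
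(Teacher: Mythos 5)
Your proof is correct. The paper itself gives no argument for this theorem --- it is stated with a reference to Jakobsen's survey and a \qed --- so there is nothing to compare against except the standard argument, which is exactly what you carry out: unfold the duality definitions $\widehat{\sigma}(\varphi)=\sigma(\widehat{\varphi})$, $(h\cdot\sigma)(f)=\sigma(h\cdot f)$, $(h*\sigma)(f)=\sigma(h^{\dagger}*f)$, and reduce everything to Fourier inversion, the exchange rule $\widehat{g^{\dagger}}=(\widehat{g})^{\dagger}$, and the two convolution theorems on $\So$, all of which are legitimate there since $\So(G)\subseteq\Lisp(G)\cap\COsp(G)$ and $\mathcal F$ maps $\So(G)$ onto $\So(\widehat G)$ (Theorem~\ref{FRSS-thm:FFT}). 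The bookkeeping in (a)--(c) and the weak*-continuity observation are all accurate, and you correctly use that $\So$ is closed under pointwise products and convolutions so that the test functions $f\cdot\widehat{\varphi}$ and $(\widehat f)^{\dagger}*\varphi$ stay in the right spaces.
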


\subsection{Positive definite mild distributions}

We now discuss positive definite mild distributions and their Fourier transform.
Let us call a continuous function $f\in \Csp(G)$ positive definite if there exists a finite Borel measure $\mu$ on $\widehat G$ such that $f(x)=\int_G \chi(x) {\rm d}\mu(\chi)$. This is justified by Bochner's theorem on positive definite functions \cite[Ch.~1.4]{FRSS-RUD2}. Then $f\in \So(G)$ is positive definite if and only if $\widehat f\in \So(\widehat G)$ is non-negative. This holds as the measure $\mu=\widehat f {\d m_{\widehat G}}$ is uniquely determined by $f$.
We call a mild distribution $\sigma\in \SOp(G)$ positive definite\index{positive~definite!mild~distribution} if $\sigma( f)\ge0$ for all positive definite $f\in \So(G)$. We first argue that positive definiteness is already implied by $\sigma(f)\ge0$ for all positive definite $f \in \KLsp(G)$.
This will be used in the following section to argue that, for a mild distribution that is a measure,  the notion of positive definite measure coincides with the notion of positive definite mild distribution. We start with the following result on denseness of $\LKsp(G)$ in $\Lisp(G)$, which slightly improves \cite[Prop.~5.4.1]{FRSS-Rei2}.

\begin{lemma}\label{FRSS-lemma-appr} Let $f \in \Lisp(G)$ and let $\eps >0$ be arbitrary. Then there exists some $g \in \LKsp(G)$ such that
$\| f-g\|_1 < \eps$. If furthermore $f$ is non-negative, then $g$ can be chosen non-negative.
\end{lemma}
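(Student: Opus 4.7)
The denseness statement is precisely \cite[Prop.~5.4.1]{FRSS-Rei2}, so the novel content is the non-negativity conclusion. My plan is to handle both parts uniformly by regularising $f$ via convolution with a non-negative Fejér-type kernel in $\LKsp(G)$, which I would construct directly.

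First, I would construct a net of kernels $(h_W)$ in $\LKsp(G)$, indexed by compact symmetric zero neighborhoods $W\subseteq \widehat G$, with each $h_W\ge 0$ and $\|h_W\|_1=1$. Set $u_W = 1_W/\sqrt{m_{\widehat G}(W)}\in \Ltsp(\widehat G)$ and $v_W = \mathcal F^{-1} u_W \in \Ltsp(G)$. Since $u_W$ is real and symmetric, $v_W$ is real-valued, so $h_W := v_W^2 \ge 0$; and $\|h_W\|_1 = \|v_W\|_2^2 = \|u_W\|_2^2 = 1$ by Plancherel. Moreover $\widehat{h_W} = u_W*u_W$ is continuous with support inside $W+W$, placing $h_W$ in $\LKsp(G)$.

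Second, I would verify that $(h_W)$ is an approximate identity in $\Lisp(G)$ as $W$ exhausts $\widehat G$, i.e.\ $h_W*f\to f$ in $\Lisp(G)$ for every $f\in\Lisp(G)$. The standard estimate
\[
\|h_W*f - f\|_1 \,\le\, \int_G \|T_x f - f\|_1 \, h_W(x)\, \dd m_G(x),
\]
valid because $\int h_W=1$, reduces this to (i) translation continuity of $f\in\Lisp(G)$ near $0$ and (ii) the concentration $\int_{G\setminus U} h_W \to 0$ for every zero neighborhood $U\subseteq G$. I would derive (ii) from
\[
\widehat{h_W}(\chi) \,=\, \frac{m_{\widehat G}(W \cap (\chi + W))}{m_{\widehat G}(W)} \,\longrightarrow\, 1
\]
uniformly on compact subsets of $\widehat G$, a Følner-type property that holds because LCA groups are amenable, together with a Plancherel duality argument that tests $h_W$ against bump functions supported near $0\in G$.

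With these preparations the conclusion is immediate. Given $f\in\Lisp(G)$ and $\eps>0$, pick $W$ large enough that $\|f*h_W - f\|_1 <\eps$ and set $g:=f*h_W$. Then $\widehat g = \widehat f\cdot \widehat{h_W}$ is continuous with compact support in $W+W$, so $g\in \LKsp(G)$; and if $f\ge 0$, then $g = f*h_W \ge 0$ because $h_W\ge 0$. The main obstacle I anticipate is step two: the concentration of $h_W$ near $0\in G$ is classical in the Euclidean case but requires careful use of the Følner property on a general, possibly non-$\sigma$-compact, LCA group. A safer alternative route would be to first reduce to a compactly generated open subgroup --- which looks like $\RR^d\times\ZZ^m\times\KK$ by the structure theorem --- build a product Fejér kernel there using Lemma~\ref{FRSS-lem:prodbupu}, and then extend to $G$ in the spirit of Lemma~\ref{FRSS-lem:bupu-open}.
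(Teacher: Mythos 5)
Your route is genuinely different from the paper's and, in outline, viable: convolving with a non-negative Fej\'er-type kernel from $\LKsp(G)$ handles the positivity claim for free, since $f\ge 0$ and $h_W\ge 0$ immediately give $f*h_W\ge 0$, and $\widehat{f*h_W}=\widehat f\cdot\widehat{h_W}$ is continuous with compact support. The construction of $h_W$ and its membership in $\LKsp(G)$ are fine. The weak point is exactly the one you flag, and as written it is a genuine gap: the assertion that $\widehat{h_W}(\chi)=m_{\widehat G}(W\cap(\chi+W))/m_{\widehat G}(W)\to 1$ uniformly on compacta ``for $W$ large enough'' is false if ``large'' means large in the inclusion order on compact symmetric zero neighborhoods. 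For example in $\widehat G=\RR$, given any $W_0=[-N,N]$ one can pass to $W=W_0\cup\bigcup_j\bigl([M_j,M_j+\delta]\cup[-M_j-\delta,-M_j]\bigr)$ with many widely scattered pieces of length $\delta<1$ whose total measure dwarfs $2N$; then $m(W\cap(1+W))/m(W)$ is close to $0$, not to $1$. So the concentration of $h_W$ near $0\in G$ only holds when $W$ runs along a Fo\ lner net in $\widehat G$ (which exists by amenability, but must be invoked explicitly, and is what actually directs the limit). Once that is fixed, the rest goes through: testing against $p=1_V*\widetilde{1_V}/m_G(V)$ with $V-V\subseteq U$ gives $\int_{G\setminus U}h_W\le 1-\int_{\widehat G}\widehat{h_W}\,\widehat p\to 0$, and the standard estimate then yields $\|f*h_W-f\|_1\to 0$.

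For comparison, the paper avoids approximate identities and Fo\ lner considerations entirely. Following Reiter, it factors $f=h\cdot h_1$ with $h=\sqrt{|f|}$, so that $h,h_1\in\Ltsp(G)$ and $\widehat h,\widehat{h_1}\in\Ltsp(\widehat G)$; it approximates $\widehat h,\widehat{h_1}$ in $\Ltsp(\widehat G)$ by functions $\psi,\varphi\in\Cc(\widehat G)$ and sets $g=\widecheck\varphi\cdot\widecheck\psi=\widecheck{\varphi*\psi}\in\LKsp(G)$, controlling $\|f-g\|_1$ by Cauchy--Schwarz and Plancherel. Positivity is obtained by the choice $h=h_1\ge 0$ and $\psi=\widetilde\varphi$, which makes $g=|\widecheck\varphi|^2\ge 0$. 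This argument uses only density of $\Cc(\widehat G)$ in $\Ltsp(\widehat G)$ and works on an arbitrary LCA group with no $\sigma$-compactness or Fo\ lner input; your approach buys a canonical, positivity-preserving smoothing operator $f\mapsto f*h_W$, but at the cost of the delicate concentration lemma (or of the structure-theorem detour you propose as a fallback).
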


\begin{beweis}
We follow \cite[Prop.~5.4.1]{FRSS-Rei2} and define $h(x)=\sqrt{|f(x)|}$. It is immediate that there exists some $h_1$ such that
$f=h \cdot h_1$. Moreover, by construction we have $|h|^2=|h_1|^2=|f|$. Therefore $h,h_1 \in \Ltsp(G)$ and hence $\widehat{h}, \widehat{h_1} \in \Ltsp(\widehat{G})$. Since $\Cc(\widehat{G})$ is dense in $\Ltsp(\widehat{G})$, we can find two functions $\varphi, \psi \in \Cc(\widehat{G})$ such that
\begin{displaymath}
 \| \widehat{h_1} - \varphi \|_2  \leq   \frac{\eps}{2  \|h\|_2 +2 } \ , \qquad
  \| \widehat{h} - \psi \|_2  \leq  \frac{\eps}{2  \|\varphi\|_2 +2 }   \ .
\end{displaymath}
Then $g= \widecheck{\varphi} \cdot \widecheck{\psi} = \widecheck{\varphi*\psi} \in \LKsp(G)$ and
\begin{displaymath}
\begin{split}
\| f&-g\|_1 =
 \| h\cdot h_1- \widecheck{\varphi} \cdot\widecheck{\psi} \|_1  
= \| h\cdot h_1- h \cdot \widecheck{\varphi} \|_1 +\| h\cdot \widecheck{\varphi}- \widecheck{\varphi} \cdot\widecheck{\psi} \|_1 \\
 &\leq \|h\|_2 \cdot  \| h_1-\widecheck{\varphi}\|_2 + \| h- \widecheck{\psi} \|_2 \cdot \| \widecheck{\varphi} \|_2 
 = \|h\|_2  \cdot \| \widehat{h_1}-\varphi\|_2 + \| \widehat{h}- \psi \|_2 \cdot  \| \varphi \|_2 \\
 &\leq  \|h\|_2 \cdot \frac{\eps}{2  \|h\|_2 +2 }  + \| \varphi \|_2 \cdot \frac{\eps}{2  \|\varphi\|_2 +2 }  \leq \eps \ .
\end{split}
\end{displaymath}
Consider finally the case that $f=h\cdot h_1$ is non-negative. Then we can choose $h=h_1 \geq 0 $. Pick some  $\varphi\in \Cc(\widehat{G})$ such that
\[
\| \widehat{h_1} - \varphi \|_2  \leq \frac{\eps}{2  \|h\|_2 +2 }  \ ,
\]
and let $\psi = \widetilde{\varphi}$. We then have $g=\widecheck{\varphi} \cdot \widecheck{\psi}=\reallywidecheck{\varphi*\widetilde{\varphi}} \geq 0$.
Moreover $\|f-g\|_1\le \varepsilon$, which follows by a similar estimate as above using
\[
\| \widehat{h} - \psi \|_2 
=\| h -\reallywidecheck{\widetilde{\varphi}} \|_2 =  \| h -\overline{\widecheck{\varphi}} \|_2 
=  \| h -\widecheck{\varphi} \|_2  =  \| \widehat{h} -\varphi \|_2 \ .
\]
Here the third equality uses $\overline{h}=h$, which holds by positivity.
\end{beweis}

The previous result has the following consequence for $\KLsp(G)$-approximation of positive definite functions in $\So(G)$.

\begin{lemma}\label{FRSS-pd-appr} Let $f \in \So(G)$ be positive definite and let $\eps>0$ be arbitrary. Then there exists a positive definite $f_1 \in \KLsp(G)$ such that
\[
\| f-f_1\|_{\So,g} < \eps \,.
\]
\end{lemma}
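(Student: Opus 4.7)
My plan is to construct the approximation in the multiplicative form $f_1 = f \cdot k$, where $k \in \KLsp(G)$ is a compactly supported positive definite function close to $1$ in a suitable sense. Since $f \in \So(G)$ is positive definite, its Fourier transform $\widehat f \in \So(\widehat G)$ is non-negative. I take $k = u * \widetilde u$ for some $u \in \Cc(G)$ with $\|u\|_2 = 1$. Then $k \in \Cc(G)$ with $\supp k \subseteq \supp u - \supp u$, and $\widehat k = |\widehat u|^2 \in \Lisp(\widehat G)$ by Plancherel. Hence $k \in \Cc(G) \cap \Asp(G) = \KLsp(G)$ is positive definite, with $k(0) = \|u\|_2^2 = 1$ and $\int_{\widehat G} \widehat k \dd m_{\widehat G} = 1$.

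Setting $f_1 = f \cdot k$, the pointwise algebra property from Theorem~\ref{FRSS-thm:S0}(c) gives $f_1 \in \So(G)$, while $\supp f_1 \subseteq \supp k$ is compact, so $f_1 \in \So(G) \cap \Cc(G) = \KLsp(G)$ by Theorem~\ref{FRSS-thm:S0}(d). Positive definiteness of $f_1$ follows from $\widehat{f_1} = \widehat f * \widehat k$, a convolution of non-negative $\Lisp(\widehat G)$ functions.

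For the norm estimate, the Fourier isometry of Theorem~\ref{FRSS-th:bij} gives $\|f - f_1\|_{\So(G), g} = \|\widehat f - \widehat f * \widehat k\|_{\So(\widehat G), \widehat g}$. Using $\int \widehat k = 1$, I rewrite $\widehat f - \widehat f * \widehat k = \int_{\widehat G}(\widehat f - T_\eta \widehat f) \widehat k(\eta) \dd \eta$ as a vector-valued integral in $\So(\widehat G)$. Translation continuity of $\|\cdot\|_{\So(\widehat G), \widehat g}$, which is equivalent to a shift-continuous BUPU norm by Lemma~\ref{FRSS-lem:normprop} and Theorem~\ref{FRSS-th:bij}, yields an open neighbourhood $V$ of $0 \in \widehat G$ such that $\|T_\eta \widehat f - \widehat f\|_{\So(\widehat G), \widehat g} < \varepsilon/2$ for all $\eta \in V$. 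Splitting the integral at $V$ and $V^c$ gives
\begin{displaymath}
  \|f - f_1\|_{\So(G), g} \leq \frac{\varepsilon}{2} + 2 \|\widehat f\|_{\So(\widehat G), \widehat g} \int_{V^c} |\widehat u(\eta)|^2 \dd \eta \ .
\end{displaymath}

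The main obstacle is to choose $u \in \Cc(G)$ with $\|u\|_2 = 1$ making $\int_{V^c} |\widehat u|^2$ arbitrarily small for the fixed neighbourhood $V$, which I would handle by a truncation-and-normalisation construction in the spirit of Lemma~\ref{FRSS-lemma-appr}. Start from $w = 1_V / \sqrt{m_{\widehat G}(V)} \in \Ltsp(\widehat G)$, so that $\widecheck w \in \Ltsp(G) \cap \COsp(G)$ with $\|\widecheck w\|_2 = 1$. Pick an increasing sequence of cutoffs $\chi_n \in \Cc(G)$ with $0 \leq \chi_n \leq 1$ and $\chi_n \nearrow 1$, and set $u_n = \widecheck w \cdot \chi_n \in \Cc(G)$. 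Dominated convergence gives $u_n \to \widecheck w$ in $\Ltsp(G)$, hence by Plancherel $\widehat{u_n} \to w$ in $\Ltsp(\widehat G)$, and since $\supp w \subseteq \overline V$ (with $V$ chosen so that $m_{\widehat G}(\partial V)=0$) we obtain $\int_{V^c} |\widehat{u_n}|^2 \to 0$. Normalising $u = u_n / \|u_n\|_2$ for $n$ sufficiently large and setting $k = u * \widetilde u$, $f_1 = f \cdot k$ completes the proof.
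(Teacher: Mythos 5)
Your proof is correct, and it follows the same basic strategy as the paper: both write $f_1$ as $f$ times a compactly supported positive definite multiplier whose Fourier transform is a non‑negative approximate identity, and both verify positive definiteness of $f_1$ by checking non‑negativity of $\widehat{f_1}$ as a convolution of non‑negative functions. The difference lies entirely in how the error is controlled. The paper outsources the approximate‑identity estimate to a cited result ($\| \widehat{f}*h-\widehat{f}\|_{\So,\widehat g}<\eps/2$ for a non‑negative $h\in\KLsp(\widehat G)$) and then uses Lemma~\ref{FRSS-lemma-appr} to replace $h$ by a non‑negative $u$ with $\widecheck u$ compactly supported, paying only an $\Lisp$‑error times $\|f\|_{\So,g}$. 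You instead build the multiplier $k=u*\widetilde u$ directly from $u\in\Cc(G)$ and prove the estimate from scratch via the vector‑valued integral $\widehat f-\widehat f*\widehat k=\int(\widehat f-T_\eta\widehat f)\,\widehat k(\eta)\dd\eta$, splitting at a neighbourhood $V$ on which translation is $\eps/2$‑continuous and forcing the mass of $|\widehat u|^2$ into $V$. This is more self‑contained (it needs only shift continuity and translation invariance of the $\So$‑norm, both available from Lemma~\ref{FRSS-lem:normprop} and Theorem~\ref{FRSS-th:bij}), at the cost of justifying a Bochner integral in $\So(\widehat G)$. Two cosmetic points: the hypothesis $m_{\widehat G}(\partial V)=0$ is unnecessary, since $1_V$ vanishes identically on $V^c$; and on a non‑$\sigma$‑compact $G$ there is no sequence $\chi_n\in\Cc(G)$ with $\chi_n\nearrow 1$ everywhere, but this is harmless because $\widecheck w\in\Ltsp(G)$ vanishes a.e.\ off a $\sigma$‑compact set (or one simply invokes density of $\Cc(G)$ in $\Ltsp(G)$ to get $u_n\to\widecheck w$ in $\Ltsp(G)$ directly).
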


\begin{beweis}
By \cite[Prop.~4.16]{FRSS-Jak} there exists some $h \in \KLsp(\widehat{G})$ such that
\[
\| \widehat{f}* h - \widehat{f} \|_{\So, \widehat{g}} < \frac{\eps}{2} \ .
\]
Moreover $h$ can be chosen non-negative, compare the proof of \cite[Prop.~4.16(i)]{FRSS-Jak}. By Lemma~\ref{FRSS-lemma-appr}, there exists some non-negative $u \in \LKsp(G)$ such that
\[
\| h-u \|_{1} < \frac{\eps}{2 \| f \|_{\So,g}} \,.
\]
Define $f_1= \widecheck{u} \cdot f\in \KLsp(G)$. Then $\widehat{f_1}= u *\widehat{f} \geq 0$, hence $f_1$ is positive definite and
\begin{displaymath}
\begin{split}
\| f-f_1\|_{\So,g}&= \| f-\widecheck{u}\cdot f\|_{\So,g}\leq  \| f-\widecheck{h}\cdot f\|_{\So,g}+ \| \widecheck{h}\cdot f-\widecheck{u}\cdot f\|_{\So,g} \\
&= \| \widehat{f}-h*\widehat{f} \|_{\So,\widehat{g}}+ \| \widehat{f}*h-\widehat{f}*u \|_{\So,\widehat{g}} \\
&\leq \frac{\eps}{2}+ \| h-u \|_1 \cdot \| \widehat{f} \|_{\So, \widehat{g}} =\frac{\eps}{2}+\|h-u\|_1\cdot  \| f \|_{\So,g} <\eps \ .
\end{split}
\end{displaymath}
\end{beweis}
We call a mild distribution $\sigma\in \SOp(G)$ positive\index{positive!mild~distribution} if $\sigma( f)\ge0$ for all $f\in \So(G)$ with $f \geq 0$. We will show in Proposition~\ref{FRSS-prop:Feipos} that positive mild distributions are positive Radon measures. For positive definite mild distributions, our preparations lead to the following result.
\begin{corollary}\label{FRSS-cor-PD} For a mild distribution $\sigma \in \SOp(G)$ the following are equivalent.
\begin{itemize}
  \item[(i)] $\sigma$ is positive definite.
  \item[(ii)] For all positive definite functions $f \in \KLsp(G)$ we have $\sigma(f) \geq 0$.
  \item[(iii)] $\widehat{\sigma}$ is a positive mild distribution.
\end{itemize}
\end{corollary}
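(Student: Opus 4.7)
The plan is to run the chain (i) $\Rightarrow$ (ii) $\Rightarrow$ (i) $\Leftrightarrow$ (iii), where (i) $\Rightarrow$ (ii) is trivial and the remaining two implications are handled by Lemma~\ref{FRSS-pd-appr} and by the Fourier duality on $\So$ respectively.

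For (i) $\Rightarrow$ (ii) I would simply note $\KLsp(G)\subseteq \So(G)$, which is part of Theorem~\ref{FRSS-thm:S0}(d). For (ii) $\Rightarrow$ (i), fix a positive definite $f\in \So(G)$ and let $\eps>0$. By Lemma~\ref{FRSS-pd-appr} choose a positive definite $f_1\in \KLsp(G)$ with $\|f-f_1\|_{\So,g}<\eps$. Then by continuity of $\sigma$,
\[
|\sigma(f)-\sigma(f_1)|\le \op{\sigma}_{\So(G),g}\cdot \eps,
\]
so $\sigma(f)\ge \sigma(f_1)-\op{\sigma}_{\So(G),g}\eps \ge -\op{\sigma}_{\So(G),g}\eps$ by the assumption (ii). Letting $\eps\to 0$ gives $\sigma(f)\ge0$.

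For (i) $\Leftrightarrow$ (iii) the key observation is the Fourier-duality correspondence between positive definite functions in $\So(G)$ and non-negative functions in $\So(\widehat G)$. Using Bochner's theorem together with the Fourier inversion Theorem~\ref{FRSS-thm:Fdist}(a) (hence $\widehat{\widehat{h}}=h^\dagger$), one verifies that $h\in \So(G)$ is positive definite if and only if $\widehat{h}\ge 0$ in $\So(\widehat G)$. Consequently the map $\varphi\mapsto \widehat{\varphi}$ provides a bijection
\[
\{\varphi\in \So(\widehat G):\varphi\ge 0\}\;\longleftrightarrow\;\{h\in \So(G):h\text{ positive definite}\},
\]
with inverse $h\mapsto (\widehat{h})^\dagger$. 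Plugging this into the defining relation $\widehat{\sigma}(\varphi)=\sigma(\widehat{\varphi})$ yields that $\widehat{\sigma}(\varphi)\ge 0$ for all non-negative $\varphi\in \So(\widehat G)$ is equivalent to $\sigma(h)\ge 0$ for all positive definite $h\in \So(G)$, which is precisely (i) $\Leftrightarrow$ (iii).

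The only genuinely non-trivial input is the density result of Lemma~\ref{FRSS-pd-appr}; once that is in place, the three-way equivalence is essentially bookkeeping around Fourier duality and Bochner's theorem. I would expect the only point requiring care to be the verification of the dagger-bookkeeping in the bijection between non-negative functions in $\So(\widehat G)$ and positive definite functions in $\So(G)$, where one must use $\widehat{\psi^\dagger}=(\widehat{\psi})^\dagger$ and $\widehat{\widehat{h}}=h^\dagger$ to check that the correspondence $\varphi\mapsto \widehat{\varphi}$ is indeed surjective onto positive definite functions.
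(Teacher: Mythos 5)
Your proposal is correct and follows essentially the same route as the paper: (i)\,$\Leftrightarrow$\,(ii) via the approximation Lemma~\ref{FRSS-pd-appr}, and (i)\,$\Leftrightarrow$\,(iii) via the definition $\widehat{\sigma}(\varphi)=\sigma(\widehat{\varphi})$ together with the Fourier correspondence between non-negative and positive definite functions in $\So$ (the paper states this observation in the dual form ``$f$ positive iff $\widehat f$ positive definite,'' which is the same fact you use). Your write-up merely fills in the continuity estimate and the dagger bookkeeping that the paper leaves implicit.
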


\begin{beweis}
\noindent (i) $\Longleftrightarrow$ (ii) follows immediately from Lemma~\ref{FRSS-pd-appr}.

\noindent (i) $\Longleftrightarrow$ (iii) follows from the definition  $\widehat{\sigma}(\varphi) = \sigma(\widehat{\varphi})$ of the Fourier transform and the observation that $f \in \So(G)$ is positive if and only if $\widehat{f}\in \So(\widehat G)$ is positive definite.
\end{beweis}

We have the following obvious consequence.

\begin{corollary}\label{FRSS-lem:ppdS0}
Let $\sigma\in \SOp(G)$, $\tau\in \SOp(\widehat G)$ such that $\widehat \sigma=\tau$. Then
\begin{itemize}
  \item[(a)]$\sigma$ is positive if and only if $\tau$ is positive definite.
  \item[(b)] $\sigma$ is positive definite if and only if $\tau$ is positive.
\end{itemize}\qed
\end{corollary}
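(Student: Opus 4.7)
The plan is to reduce both statements to Corollary~\ref{FRSS-cor-PD}, which already provides the asymmetric equivalence: a mild distribution is positive definite if and only if its Fourier transform is positive.

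For part (b), this is essentially immediate. Given $\widehat\sigma = \tau$, Corollary~\ref{FRSS-cor-PD} (i)$\Leftrightarrow$(iii) applied to $\sigma$ directly yields that $\sigma$ is positive definite if and only if $\widehat\sigma = \tau$ is positive.

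For part (a), I would turn things around and apply Corollary~\ref{FRSS-cor-PD} to $\tau$ instead of $\sigma$. By Theorem~\ref{FRSS-thm:Fdist}(a), we have $\widehat\tau = \widehat{\widehat\sigma} = \sigma^\dagger$. Thus Corollary~\ref{FRSS-cor-PD} gives that $\tau$ is positive definite if and only if $\widehat\tau = \sigma^\dagger$ is positive. It then remains to observe that $\sigma$ is positive if and only if $\sigma^\dagger$ is positive: this follows because the assignment $f \mapsto f^\dagger$ is a bijection of $\So(G)$ onto itself that preserves non-negativity, so $\sigma^\dagger(f) = \sigma(f^\dagger) \geq 0$ holds for all $f \geq 0$ in $\So(G)$ precisely when $\sigma(g) \geq 0$ holds for all $g \geq 0$ in $\So(G)$.

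There is no real obstacle here; the only subtlety is keeping the two directions of the Fourier transform straight and invoking the duality $\widehat{\widehat\sigma} = \sigma^\dagger$ from Theorem~\ref{FRSS-thm:Fdist}(a) to switch the roles of $\sigma$ and $\tau$ when proving (a). Both parts are essentially formal consequences of the already-established equivalence in Corollary~\ref{FRSS-cor-PD}, together with $f \geq 0 \Leftrightarrow f^\dagger \geq 0$ for $f \in \So(G)$.
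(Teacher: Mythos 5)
Your proof is correct and follows the route the paper intends: the result is stated there as an immediate consequence of Corollary~\ref{FRSS-cor-PD} (with no written proof), and your argument simply makes this explicit, with (b) being a direct restatement of (i)$\Leftrightarrow$(iii) and (a) obtained by applying the same equivalence to $\tau$ together with $\widehat{\widehat{\sigma}}=\sigma^\dagger$ and the observation that positivity is preserved under the reflection $f\mapsto f^\dagger$.
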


\begin{remark} If $\sigma$ is a positive definite mild distribution, Proposition~\ref{FRSS-prop:Feipos} below implies that $\widehat{\sigma}$ is in fact a translation bounded measure.
\end{remark}

\section{Measures}\label{FRSS-sec:m}

In this section we review the basics of Radon measures and translation bounded measures, discuss the connection between translation boundedness and the dual Wiener algebra, and discuss mild measures.

\subsection{Radon measures}\label{FRSS-sec:rm}

A positive Borel measure $\mu$ on a locally compact (Hausdorff) abelian group $G$ is called a Radon measure\index{measure} if it is locally finite and inner regular, i.e., if $\mu(K)<\infty$ for any compact $K\subseteq G$ and if $\mu(A)=\sup\{\mu(K): K\subseteq A \text{ compact}\}$ for any Borel set $A\subseteq G$. More generally, a complex Borel measure $\mu$ on $G$ is called a complex Radon measure if its four components are Radon measures (that is, the positive and negative parts of its real and imaginary part). When we speak of a measure in this article, we will always mean a complex Radon measure.
Let $\cM(G)$ denote the space of measures on $G$. According to the Riesz representation theorem, a linear map $\mu: \Cc(G)\to \CC$ satisfies $\mu\in \cM(G)$ if and only if for every compact $K\subseteq G$ there exists a constant $C_K$ such that $|\mu(f)|\le C_K \|f\|_\infty$ for every $f\in \Cc(G)$ such that $\supp(f)\subseteq K$. In fact $\cM(G)$ is the topological dual of $(\Cc(G), \tau_{ind})$, where the inductive limit topology\index{inductive~limit~topology} $\tau_{ind}$ is defined as follows.
Consider $\Csp_K(G)=\{f\in \Cc(G): \supp(f)\subseteq K\}$ for compact $K\subseteq G$, and equip $\Csp_K(G)$ with the supremum norm $\|\cdot\|_{\infty,K}$.
Then the inductive limit topology $\tau_{ind}$ is generated by the seminorms on $\Cc(G)$ whose restrictions to $(\Csp_K(G), \|\cdot\|_{\infty,K})$ are continuous, see e.g.~\cite[Ch.~II.6]{FRSS-S71}.
Similarly, the topological dual of $(\COsp(G), \|\cdot\|_\infty)$ coincides with the space $M(G)$ of finite Radon measures. To any $\mu\in \cM(G)$ we can assign $|\mu|\in \cM(G)$ by $|\mu|(f)=\sup\{|\mu(f)|: g \in \Cc(G), |g|\le f\}$ for $f\in \Cc(G)$. This measure is the total variation measure associated to $\mu$.

\subsection{Translation bounded measures}\label{FRSS-sec:tbmintro}

Recall that $\mu\in \cM(G)$ is translation bounded\index{translation~bounded} if $\|\mu\|_K=\sup_{x\in G}|\mu|(x+K)$\index{$\| \cdot \|_{K}$}
is finite for all compact $K\subseteq G$, compare \cite[p.~5]{FRSS-ARMA1}.
The condition $\|\mu\|_K<\infty$ for any compact $K\subseteq G$ is equivalent to $\mu*f\in \Cu(G)$ for any $f\in \Cc(G)$, see \cite[Thm.~1.1]{FRSS-ARMA1} and \cite[Ch.~13]{FRSS-ARMA}, or \cite[Prop.~4.9.21]{FRSS-MoSt}.  We denote the space of translation bounded measures on $G$ by $\cM^\infty(G)$\index{$\cM^\infty(G)$}\footnote{This notation shall not be confused with the modulation space $M^\infty(G)=\SOp(G)$, compare~\cite[Def.~6.13]{FRSS-Jak}.}.

In fact $\cM^\infty(G)$ can be identified with the dual space $\Wsp'(G)$ of the Wiener algebra $(\Wsp(G), \|\cdot\|_{\Wsp})$. This is the content of the following well-known result, see for example \cite[Thm.~6.1]{FRSS-LRW74} or the remark in \cite[Thm.~14]{FRSS-fe92}, which we re-prove for a self-contained presentation.

\begin{theorem}\label{FRSS-thm:tbW}
  For any  $\mu \in \cM^\infty(G)$ and any  $f \in \Wsp(G)$ we have
  $f \in \Lisp(|\mu|)$ and $\mu*f \in \Cu(G)$.
  Moreover $f\mapsto \mu(f)$ defines a bounded linear functional on $\Wsp(G)$.
  Under this identification we have $\Wsp'(G) = \cM^\infty(G)$.
\end{theorem}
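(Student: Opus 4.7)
The plan is to establish the identification $\Wsp'(G)=\cM^\infty(G)$ as two inclusions, using a fixed BUPU $\Phi=(\phi_i)_{i\in I}$ of size $U$ with point family $X=(x_i)_{i\in I}$ as a slicing device. The integrability statement $f\in \Lisp(|\mu|)$ and the uniform continuity of $\mu*f$ will arise as immediate by-products of the forward inclusion $\cM^\infty(G)\hookrightarrow \Wsp'(G)$, the latter via Lemma~\ref{FRSS-lem:convcu}.

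For the forward direction, I would start from the pointwise identity $f=\sum_i f\phi_i$, which is locally a finite sum thanks to the overlap condition. Since $f\phi_i\in \Cc(G)$ with $\supp(f\phi_i)\subseteq x_i+U$, one has
\[
\int |f\phi_i|\dd |\mu| \;\le\; \|f\phi_i\|_\infty\cdot |\mu|(x_i+U) \;\le\; \|\mu\|_U\cdot \|f\phi_i\|_\infty.
\]
Summing and applying monotone convergence to the bound $|f|\le\sum_i|f\phi_i|$ then yields $\int |f|\dd|\mu|\le \|\mu\|_U\cdot \|f\|_\Phi$. This shows $f\in \Lisp(|\mu|)$, makes $\mu(f)=\int f\dd\mu$ well-defined, and gives $|\mu(f)|\le \|\mu\|_U\|f\|_\Phi$. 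Hence $\mu\in \Wsp'(G)$ with $\op{\mu}_\Phi\le \|\mu\|_U$, and $\mu*f\in \Cu(G)$ follows directly from Lemma~\ref{FRSS-lem:convcu}.

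For the converse, let $T\in \Wsp'(G)$. By Theorem~\ref{FRSS-thm:W}(d), and more precisely by the explicit estimate in its proof, $\|f\|_\Phi\le M\cdot \delta_X(K-U)\cdot \|f\|_\infty$ for any $f\in \Cc(G)$ with $\supp(f)\subseteq K$. Consequently $T|_{\Cc(G)}$ is continuous in the inductive limit topology, and the Riesz representation theorem produces a complex Radon measure $\mu$ with $T(f)=\mu(f)$ on $\Cc(G)$. To upgrade this to $\mu\in \cM^\infty(G)$, I would fix compact $L\subseteq G$, pick $\phi\in \Cc(G)$ with $\phi\ge 1_L$, and note that for arbitrary $x\in G$ and $g\in \Cc(G)$ with $|g|\le T_x\phi$ the support of $g$ lies in $x+\supp(\phi)$. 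Applying the same BUPU estimate with $K=x+\supp(\phi)$ gives
\[
\|g\|_\Phi\;\le\; M\cdot \delta_X\bigl(x+\supp(\phi)-U\bigr)\cdot \|g\|_\infty \;\le\; M\cdot \|\delta_X\|_{\supp(\phi)-U}\cdot \|\phi\|_\infty,
\]
a bound uniform in $x$. Hence $|T(g)|\le \op{T}_\Phi\cdot M\cdot \|\delta_X\|_{\supp(\phi)-U}\cdot \|\phi\|_\infty$, and taking the supremum over admissible $g$ yields $|\mu|(x+L)\le |\mu|(T_x\phi)\le C$ with $C$ independent of $x$, i.e.\ $\|\mu\|_L<\infty$. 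A density argument from Theorem~\ref{FRSS-thm:W}(d) finally shows that the continuous functionals $T$ and $\mu$ on $\Wsp(G)$ coincide.

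The main obstacle is this last translation-uniform step. The Riesz representation only promises a Radon measure, and promoting it to a \emph{translation bounded} one requires exploiting a subtle refinement of the embedding $\Cc(G)\hookrightarrow \Wsp(G)$: the overlap count $\delta_X(K-U)$ must be replaced by its translation-invariant supremum $\|\delta_X\|_{K-U}$ so that the resulting bound on $|\mu|(x+L)$ does not degrade as $x$ varies. Once this shape-versus-location distinction is extracted from the estimate underlying Theorem~\ref{FRSS-thm:W}(d), the remainder of the argument is routine.
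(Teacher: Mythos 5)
Your proof is correct and follows essentially the same route as the paper: the forward inclusion $\cM^\infty(G)\subseteq \Wsp'(G)$ via the BUPU slicing estimate $|\mu|(|f|)\le\|\mu\|_U\cdot\|f\|_\Phi$ with monotone convergence (giving $f\in\Lisp(|\mu|)$ and the operator bound, with $\mu*f\in\Cu(G)$ from Lemma~\ref{FRSS-lem:convcu}), and the converse via the local estimate underlying Theorem~\ref{FRSS-thm:W}(d) made uniform over translates. The only cosmetic difference is that you verify $\sup_{x}|\mu|(x+L)<\infty$ directly from the definition of the total variation functional, whereas the paper bounds $|L(T_xf)|$ uniformly in $x$ and then cites the known equivalence of that condition with translation boundedness; both steps are valid and rest on the same translation-invariant overlap count $\|\delta_X\|_{K-U}$.
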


\begin{beweis}
Fix some BUPU $\Phi=(\phi_i)_{i\in I}$ of size~$U$ throughout the proof.
We will first show that $\Wsp(G)'\subseteq\cM^\infty(G)$.  Note that $\Wsp'(G)\subseteq \cM(G)$ follows from Theorem~\ref{FRSS-thm:W} (d).  In order to analyse the corresponding estimate further, let $L: \Wsp(G) \to \CC$ be any bounded linear functional. Fix any compact $K\subseteq G$ and take any $f\in \Cc(G)$ such that $\supp(f)\subseteq K$. We then get for any $x\in G$ as in the proof of Theorem~\ref{FRSS-thm:W} (d) the estimate
  \begin{displaymath}
  \begin{split}
  |L(T_xf)| &\le \|L\|_{\Phi}\cdot \|T_xf\|_\Phi
  \le  \|L\|_{\Phi} \cdot M_\Phi \cdot C_{U-K}\cdot \|f\|_\infty \ .
  \end{split}
  \end{displaymath}
 We thus get a finite upper bound on $|L(T_xf)|$, which is uniform in $x\in G$. This shows that $L$ is translation bounded, by \cite[Thm~1.1]{FRSS-ARMA1}  together with \cite[Ch.~13]{FRSS-ARMA}, see also \cite[Prop.~4.9.21]{FRSS-MoSt} and \cite[Thm.~1.1]{FRSS-Fei77}.  Thus $\Wsp(G)'\subseteq\cM^\infty(G)$.

  \smallskip

  Conversely, consider any  $\mu \in \cM^\infty(G)$ and any  $f \in \Wsp(G)$. Then $f=\sum_{i\in I} f\phi_i$ can be assumed to be a countable sum without loss of generality. Using monotone convergence, we estimate
  \begin{displaymath}
  \begin{split}
    \abs\mu(\abs f)
    &\le\sum_{i\in I}\abs\mu(|f\cdot\phi_i|)
    \le\sum_{i\in I}\|f\cdot\phi_i\|_\infty\cdot\abs\mu(x_i+U)\\
   &\le\|\mu\|_{U}\cdot\sum_{i\in I}\|f\cdot\phi_i\|_\infty
    =\|\mu\|_U\cdot \|f\|_{\Phi} < \infty \ ,
    \end{split}
  \end{displaymath}
which shows $f \in \Lisp(|\mu|)$.
  The inequality $\left|  \mu(f) \right| \leq    |\mu|(|f|)\le \|\mu\|_U \cdot \| f\|_{\Phi}$
  shows that the measure $\mu$ extends to a bounded linear functional on $\Wsp(G)$.  Thus $\cM^\infty(G)\subseteq \Wsp(G)'$. The claim $\mu*f \in \Cu(G)$ now follows from Lemma~\ref{FRSS-lem:convcu}.
\end{beweis}

\begin{remark}
Using the amalgam space description, we have $(\Wsp(G))'=(\Wsp(\COsp,\lisp)(G))'=\Wsp((\COsp)',(\lisp)')(G)=\Wsp(\Msp,\lsp^\infty)(G)=\cM^\infty(G)$, where $\Msp(G)=(\COsp(G))'$ is the space of finite Radon measures on $G$. The underlying duality argument has been considerably generalised in \cite[Thm.~2.8]{FRSS-fegr85} and \cite[Thm.~5.1]{FRSS-HGF87}.
\end{remark}

The following result for finite measures will be used later on.

\begin{lemma}\label{FRSS-lem:fw}
Let $\mu\in \Msp(G)$. Then $\mu*f\in \Wsp(G)$ for any $f\in \Wsp(G)$.
\end{lemma}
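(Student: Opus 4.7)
The plan is to fix an arbitrary BUPU $\Phi=(\phi_i)_{i\in I}$ of size $U$ and verify that $\mu*f$ lies in $\COsp(G)$ with finite Wiener norm $\|\mu*f\|_{\Wsp,\Phi}<\infty$. Since $f\in \Wsp(G)\subseteq \Cu(G)\cap \COsp(G)$ and $\mu$ is a finite Radon measure, the pointwise convolution $(\mu*f)(x)=\int_G f(x-y)\dd\mu(y)$ is well defined.

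First I would check that $\mu*f\in \COsp(G)$. Continuity follows from uniform continuity of $f$ together with finiteness of $|\mu|$ via the standard estimate $|(\mu*f)(x+h)-(\mu*f)(x)| \le \|T_{-h}f-f\|_\infty \cdot |\mu|(G)$, which tends to $0$ as $h\to 0$. For vanishing at infinity, given $\varepsilon>0$, I would use inner regularity of $|\mu|$ to pick a compact $K\subseteq G$ with $|\mu|(G\setminus K)<\varepsilon$, and use $f\in \COsp(G)$ to pick a compact $K'\subseteq G$ with $|f|<\varepsilon$ off $K'$. Then for $x\notin K+K'$, the integrand is small on $K$ (since $x-y\notin K'$ forces $|f(x-y)|<\varepsilon$) as well as on $G\setminus K$ (by smallness of $|\mu|$ there), yielding $|(\mu*f)(x)|=O(\varepsilon)$.

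The core of the argument is the Wiener norm estimate, which follows from the pointwise bound
\[
|\phi_i(x)\cdot (\mu*f)(x)| \le \int_G |\phi_i(x)\cdot (T_yf)(x)| \dd|\mu|(y).
\]
Taking $\sup_{x\in G}$ on the left gives $\|\phi_i\cdot(\mu*f)\|_\infty \le \int_G \|\phi_i\cdot T_yf\|_\infty \dd|\mu|(y)$, and summing over $i\in I$, the monotone convergence theorem yields
\[
\|\mu*f\|_{\Wsp,\Phi} \le \int_G \|T_yf\|_{\Wsp,\Phi} \dd|\mu|(y) \le |\mu|(G)\cdot \|f\|_{\Wsp},
\]
where the last bound uses the equivalent homogeneous norm $\|\cdot\|_{\Wsp}$ from Lemma~\ref{FRSS-lem:hnW}, which satisfies $\|T_yf\|_{\Wsp,\Phi}\le \|f\|_{\Wsp}$ for every $y\in G$.

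The only non-routine aspect is the $\COsp(G)$-verification, which requires a separate approximation argument because $\mu$ need not have compact support; the Wiener norm estimate itself is a direct monotone convergence argument combined with the translation invariance of $\|\cdot\|_{\Wsp}$. Once both ingredients are in place, finiteness of $\|\mu*f\|_{\Wsp,\Phi}$ together with $\mu*f\in \COsp(G)$ yields $\mu*f\in \Wsp(G)$.
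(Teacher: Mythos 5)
Your proof is correct and follows essentially the same route as the paper: the core is the identical estimate $\|\mu*f\|_{\Wsp,\Phi}\le\int\|T_yf\|_{\Wsp,\Phi}\,\dd|\mu|(y)\le|\mu|(G)\cdot\|f\|_{\Wsp}$, obtained by exchanging sum and integral via monotone convergence and invoking the translation-invariant norm of Lemma~\ref{FRSS-lem:hnW}. The only difference is that you spell out the routine verification that $\mu*f\in\COsp(G)$, which the paper leaves implicit (continuity already follows from Lemma~\ref{FRSS-lem:convcu} since $\Msp(G)\subseteq\cM^\infty(G)=\Wsp'(G)$).
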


\begin{beweis}
Take any BUPU $\Phi$ with norm $\|\cdot\|_{\Wsp,\Phi}$ and consider the associated equivalent translation invariant norm $\| \cdot \|_{\Wsp}$ from Lemma~\ref{FRSS-lem:hnW}. We then have
\begin{displaymath}
\begin{split}
\|\mu*f\|_{\Wsp,\Phi}&=\sum_{i\in I} \|(\mu*f)\cdot \phi_i\|_\infty= \sum_{i\in I} \sup_{x\in G} \left| \int f(x-y) \phi_i(x) \, \dd  \mu(y) \right|\\ 
&\le \sum_{i\in I} \int \|(T_yf)\cdot \phi_i\|_\infty \dd |\mu|(y) 
=\int \sum_{i\in I}  \|(T_yf)\cdot \phi_i\|_\infty \dd |\mu|(y)\\
& = \int    \|T_yf\|_{\Wsp, \Phi}  \dd |\mu|(y) \le  \int    \|T_yf\|_{\Wsp, \Phi}  \dd |\mu|(y) =  \|f\|_{\Wsp, \Phi} \cdot |\mu|(G) \ ,
\end{split}
\end{displaymath}
where we used non-negativity in the second line in order to exchange integration and summation.
\end{beweis}

Often $\cM^\infty(G)$ is equipped with the so-called norm topology induced by the norm $\| \cdot \|_U$ on $\cM^\infty(G)$, where $U\subseteq G$ is any precompact set with non-empty interior. In fact this norm is equivalent to the canonical operator norm on $\Wsp'(G)=\mathcal M^\infty(G)$ given by
$\op{\mu}_{\Wsp}= \sup \{ \left| \mu(f) \right| : f \in \Wsp(G) , \|f \|_{\Wsp}\le 1 \}$\index{$\op{\,}_{\Wsp}$}.

\begin{proposition}\label{FRSS-prop:equiv}
  Let $\Phi=(\phi_i)_{i\in I}$ be any BUPU of norm $M$ and size $U$.
  We then have $\op{\cdot}_{\Wsp, \Phi} \leq \| \cdot \|_{U}  \leq  M \cdot \|\delta_X\|_{U-U} \cdot \op{\cdot}_{\Wsp, \Phi}$.
\end{proposition}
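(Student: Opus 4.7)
The proposition asserts two inequalities, which I plan to tackle separately. The left inequality $\op{\mu}_{\Wsp,\Phi} \le \|\mu\|_U$ is already implicit in the proof of Theorem~\ref{FRSS-thm:tbW}, where the bound
\[
|\mu(f)| \;\le\; |\mu|(|f|) \;\le\; \|\mu\|_U \cdot \|f\|_\Phi
\]
was established for every $f \in \Wsp(G)$. Taking the supremum over $\|f\|_\Phi \le 1$ yields the claim immediately.

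For the right inequality $\|\mu\|_U \le M \cdot \|\delta_X\|_{U-U} \cdot \op{\mu}_{\Wsp,\Phi}$, the plan is to bound $|\mu|(x+U)$ uniformly in $x \in G$ by testing $\mu$ against continuous functions localised on $x+U$. Fixing such $x$ and considering $f \in \Cc(G)$ with $\supp(f) \subseteq x+U$ and $\|f\|_\infty \le 1$, the key observation is a localisation of the BUPU sum defining $\|f\|_\Phi$: the product $f \cdot \phi_i$ vanishes unless $(x+U) \cap (x_i + U) \ne \varnothing$, equivalently unless $x_i \in x + (U-U)$. By the counting argument underlying Lemma~\ref{FRSS-lem:BUPU-tb}, the number of such indices is at most $\delta_X(x+(U-U)) \le \|\delta_X\|_{U-U}$, while each surviving summand satisfies $\|f\phi_i\|_\infty \le M\|f\|_\infty \le M$. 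This gives $\|f\|_\Phi \le M \cdot \|\delta_X\|_{U-U}$ and hence
\[
|\mu(f)| \;\le\; \op{\mu}_{\Wsp,\Phi} \cdot \|f\|_\Phi \;\le\; M \cdot \|\delta_X\|_{U-U} \cdot \op{\mu}_{\Wsp,\Phi}.
\]

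To close the argument I would invoke the standard Riesz-type identity
\[
|\mu|(V) \;=\; \sup\bigl\{|\mu(f)| : f \in \Cc(G),\, \supp(f) \subseteq V,\, \|f\|_\infty \le 1\bigr\}
\]
valid for open $V$, apply it to $V = x+U$, and then take the supremum over $x \in G$. The main (and really the only) technical wrinkle is that this characterisation of the total variation is cleanest on open sets, whereas $U$ is only assumed to be a relatively compact zero neighbourhood. I would handle this by an outer-regularity argument on the Radon measure $|\mu|$, reducing to the case of open $U$, at which point the BUPU counting estimate above applies verbatim. Apart from this bookkeeping, the localisation of $\|f\|_\Phi$ is the single substantive step.
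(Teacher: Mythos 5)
Your proposal is correct and follows essentially the same route as the paper: the first inequality is read off from the estimate $|\mu(f)|\le\|\mu\|_U\cdot\|f\|_\Phi$ in the proof of Theorem~\ref{FRSS-thm:tbW}, and the second rests on exactly the same localisation of $\|f\|_\Phi$ for $f$ supported in a translate of $U$ (at most $\|\delta_X\|_{U-U}$ nonvanishing terms, each bounded by $M$). The only cosmetic difference is that the paper unwinds your ``Riesz-type identity'' into a three-step $\varepsilon$-argument (near-optimal translate $t$, inner regularity plus Urysohn, definition of the total variation), and it too implicitly treats $U$ as open, so the wrinkle you flag is present in both versions and is harmless in the intended setting.
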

\begin{beweis}
  The first inequality follows from $\left|  \mu(f) \right| \leq \| \mu \|_U \cdot \| f \|_{\Phi}$, see the proof of Theorem~\ref{FRSS-thm:tbW}.
  For the second inequality fix $\mu \in \cM^\infty(G)$ and let $\varepsilon >0$ be arbitrary.
  Then there exists some $t \in G$ such that $\| \mu \|_{U} \leq \left| \mu \right| (t+U) +\varepsilon$.
  By inner regularity of $\mu$ and Urysohn's lemma,
  there exists some $g \in \Cc(G)$ satisfying $0\leq g \leq 1_{t+U}$,
  such that $\left| \mu \right| (t+U) \leq \left| \mu \right| (g) + \varepsilon$.
  By definition of the total variation measure, there exists some $h \in \Cc(G)$ with
  $|h| \leq g \leq 1_{t+U}$ and $\left| \mu \right| (g) \leq  \left| \mu(h) \right| +\varepsilon$.
  This gives
  \begin{displaymath}
    \| \mu \|_{U} \leq  \left| \mu(h) \right| +3 \varepsilon
    \leq  \op{\mu}_{\Wsp,\Phi} \cdot \| h \|_{\Phi}  +3 \varepsilon \ .
  \end{displaymath}
  Let us now estimate the norm of $h$. By construction, the  set
  $A= \{ i\in I : (y_i+U) \cap (t+U) \neq \varnothing \}$
  has at most $\|\delta_X\|_{U-U} $ elements. As $\supp(h) \subseteq t+U$ we have $\| h \phi_i \|_\infty =0$ for all $i\notin A$.
  Since $h \leq 1_{t+U}$ we get that $\|h \phi_i \|_\infty \leq M$ and hence
  \[
    \| h \|_{\Phi}  = \sum_{i \in A} \| h \phi_i \|_\infty \leq M \|\delta_X\|_{U-U}    \,.
  \]
  We thus get $\| \mu \|_{U} \leq M \|\delta_X\|_{U-U}   \op{\mu}_{\Wsp,\Phi}  +3 \varepsilon$.
  Since $\varepsilon >0$ was arbitrary, the claim follows.
\end{beweis}

Since the dual of a normed space is a Banach space, the following result is obvious \cite[p.~342]{FRSS-Fei77}, compare \cite{FRSS-BaakeMoody2004}.
\begin{theorem} Let $K \subseteq G$ be any compact set with non-empty interior. Then the vector space $(\cM^\infty(G), \| \cdot \|_K)$ is a Banach space.\qed
\end{theorem}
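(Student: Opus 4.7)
The plan is to reduce the assertion to the standard fact that the topological dual of a normed space is always complete, then transfer completeness across equivalent norms.

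First, by Theorem~\ref{FRSS-thm:tbW}, the vector space $\cM^\infty(G)$ is identified with the dual $\Wsp'(G)$. Pick any BUPU $\Phi$ on $G$; then $(\Wsp'(G),\op{\cdot}_{\Wsp,\Phi})$ is a Banach space, being the topological dual of the normed space $(\Wsp(G),\|\cdot\|_{\Wsp,\Phi})$. Consequently $(\cM^\infty(G),\op{\cdot}_{\Wsp,\Phi})$ is complete. It therefore suffices to show that $\|\cdot\|_K$ is equivalent to $\op{\cdot}_{\Wsp,\Phi}$ for some choice of BUPU $\Phi$.

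Second, I would choose a BUPU $\Phi$ of some precompact size $U$, whose existence is granted by the constructions of Section~\ref{FRSS-cbupu}. Proposition~\ref{FRSS-prop:equiv} then yields the two-sided bound
\[
\op{\cdot}_{\Wsp,\Phi}\le \|\cdot\|_U\le M\cdot \|\delta_X\|_{U-U}\cdot \op{\cdot}_{\Wsp,\Phi} \,,
\]
so equivalence of $\|\cdot\|_K$ with the operator norm reduces to equivalence of $\|\cdot\|_K$ with $\|\cdot\|_U$ on $\cM^\infty(G)$.

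Third, I would establish the general fact that $\|\cdot\|_{K_1}$ and $\|\cdot\|_{K_2}$ are equivalent on $\cM^\infty(G)$ whenever $K_1,K_2\subseteq G$ are compact with non-empty interior. The idea is a standard covering argument: since $K_2$ is compact and $K_1$ has non-empty interior $V_1$, there exist finitely many $t_1,\dots,t_n\in G$ with $K_2\subseteq\bigcup_{i=1}^n(t_i+V_1)\subseteq\bigcup_{i=1}^n(t_i+K_1)$. Hence for every $x\in G$,
\[
|\mu|(x+K_2)\le \sum_{i=1}^n |\mu|(x+t_i+K_1)\le n\,\|\mu\|_{K_1} \,,
\]
which gives $\|\mu\|_{K_2}\le n\,\|\mu\|_{K_1}$, and by symmetry the reverse inequality follows. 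Applying this with $K_1=K$ and $K_2=U$ (using that a precompact zero neighborhood from a BUPU has non-empty interior, so the argument applies to the closure $\overline{U}$ and hence to $U$ up to a constant) closes the equivalence chain.

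Finally, I would verify that $\|\cdot\|_K$ is in fact a norm rather than merely a seminorm; the only non-trivial point is definiteness, which follows from the same covering argument: if $\|\mu\|_K=0$ then $|\mu|(x+L)=0$ for every compact $L$ (cover $L$ by finitely many translates of $\mathrm{int}(K)$), so $|\mu|=0$ by inner regularity, hence $\mu=0$. The main technical obstacle is the covering step, but with non-empty interior of $K$ it is routine; once equivalence of norms is in hand, completeness is automatic since $(\cM^\infty(G),\op{\cdot}_{\Wsp,\Phi})$ is a Banach space.
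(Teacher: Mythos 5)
Your proof is correct and follows essentially the same route as the paper, which simply observes that the dual of a normed space is complete and invokes the norm equivalence of Proposition~\ref{FRSS-prop:equiv}. The only addition is your covering argument showing that $\|\cdot\|_{K_1}$ and $\|\cdot\|_{K_2}$ are equivalent for any two compact sets with non-empty interior (and the check of definiteness), which the paper treats as obvious but which is indeed the detail needed to pass from the BUPU-size set $U$ to an arbitrary $K$.
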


\subsection{Mild measures}\label{FRSS-sec:mm}

In order to study the relation between measures and mild distributions as introduced in Section~\ref{FRSS-sect:appB}, we use the following terminology.

\begin{definition} A measure $\mu\in \cM(G)$ is called \textit{mild}\index{mild!mild~measure} if there exists a mild distribution $\sigma \in \SOp(G)$ such that $\mu(f) =\sigma(f)$ for all  $f \in  \KLsp(G)$. We denote the space of mild measures on $G$  by $\cM_{m}(G)$\index{$\cM_{m}(G)$}, and we denote by $\iota: \cM_{m}(G)\to \SOp(G)$\index{$\iota$} the canonical inclusion map.
\end{definition}

\begin{remark}\label{FRSS-rem:dense}
 By denseness of $\KLsp(G)=\Cc(G) \cap \So(G)$ in both $(\Cc(G), \|\cdot \|_\infty)$ and $(\So(G),  \|\cdot \|_{\So(G)})$, a mild measure $\mu\in \cM(G)$ and the corresponding mild distribution $\sigma\in \SOp(G)$ uniquely determine one another, and therefore $\iota$ is well defined and injective.
\end{remark}

Any translation bounded measure is mild. In fact the following result states that $\So(G)$ is integrable with respect to any translation bounded measure.
\begin{corollary}\label{FRSS-cor1}\cite[Thm.~B1]{FRSS-Fei80} We have $\cM^\infty(G)\subseteq \cM_{m}(G)$. Moreover, for $\mu \in \cM^\infty(G)$ the following hold.
\begin{itemize}
  \item[(a)] For all $f \in \So(G)$ we have $f \in \Lisp(|\mu|)$.
  \item[(b)] We have $\iota\mu(f)= \int_{G} f(t) \dd \mu(t)$ for all $f \in \So(G)$.
\end{itemize}
\end{corollary}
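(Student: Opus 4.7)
The plan is to reduce everything to Theorem~\ref{FRSS-thm:tbW} via the continuous dense embedding $\So(G)\hookrightarrow \Wsp(G)$ from Proposition~\ref{FRSS-prop:S0Wdense}.

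First I would prove (a). Fix $\mu\in\cM^\infty(G)$ and $f\in\So(G)$. By Proposition~\ref{FRSS-prop:S0Wdense} we have $\So(G)\subseteq\Wsp(G)$, so $f\in\Wsp(G)$. Theorem~\ref{FRSS-thm:tbW} then yields directly $f\in\Lisp(|\mu|)$, together with the estimate $|\mu|(|f|)\le \|\mu\|_U\cdot\|f\|_{\Wsp,\Phi}$ for any BUPU $\Phi$ of size $U$. In particular the integral $\int_G f(t)\dd\mu(t)$ is well defined.

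Next I would verify that $\mu$, restricted to $\So(G)$, is a mild distribution. Define $\sigma:\So(G)\to\CC$ by $\sigma(f)=\int_G f(t)\dd\mu(t)$. By Theorem~\ref{FRSS-thm:tbW}, $\mu$ defines a bounded linear functional on $(\Wsp(G),\|\cdot\|_{\Wsp,\Phi})$. Since $(\So(G),\|\cdot\|_{\So,\Phi})$ is continuously embedded in $(\Wsp(G),\|\cdot\|_{\Wsp,\Phi})$ by Proposition~\ref{FRSS-prop:S0Wdense}, the restriction of $\mu$ is continuous with respect to $\|\cdot\|_{\So,\Phi}$, hence $\sigma\in\SOp(G)$.

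Finally, for any $f\in\KLsp(G)\subseteq\Cc(G)$ the Radon-measure action $\mu(f)$ coincides by definition with $\int_G f\dd\mu=\sigma(f)$. Thus $\sigma$ witnesses that $\mu$ is mild, i.e., $\mu\in\cM_m(G)$, and $\iota\mu=\sigma$ by the uniqueness noted in Remark~\ref{FRSS-rem:dense} (denseness of $\KLsp(G)$ in $\So(G)$). This yields both claims (a) and (b), and there is no real obstacle — the whole statement is essentially a transcription of Theorem~\ref{FRSS-thm:tbW} along the embedding $\So(G)\hookrightarrow\Wsp(G)$.
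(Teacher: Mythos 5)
Your proposal is correct and follows essentially the same route as the paper: part (a) is read off from Theorem~\ref{FRSS-thm:tbW} via the inclusion $\So(G)\subseteq\Wsp(G)$, and the mildness of $\mu$ together with (b) comes from restricting the bounded functional on $\Wsp(G)$ along the continuous embedding of Proposition~\ref{FRSS-prop:S0Wdense}. Your write-up is just slightly more explicit than the paper's about the agreement on $\KLsp(G)$ and the uniqueness from Remark~\ref{FRSS-rem:dense}, which is a harmless elaboration.
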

\begin{beweis}
Consider arbitrary $\mu\in \cM^\infty(G)$. Then (a) follows from Theorem~\ref{FRSS-thm:tbW} as $\So(G)\subseteq \Wsp(G)$.
Next, note that the mapping $f\mapsto \mu(f)$ is a bounded linear functional on $\Wsp(G)$ by Theorem~\ref{FRSS-thm:tbW}. Hence its restriction to $\So(G) \subseteq \Wsp(G)$ is a mild distribution as the embedding of $\So$ into $\Wsp$ is continuous, see the remark after Definition~\ref{FRSS-def:FeichAlg}. This shows $\mu \in \cM_{m}(G)$ and (b).
\end{beweis}

\begin{remark}
There exist mild measures that are not translation bounded measures, see Remark~\ref{FRSS-rem-agntb} below.
\end{remark}

\smallskip

Next, we show that positive mild distributions are translation bounded measures, and hence also mild measures.
The following result is analogous to the tempered distribution case \cite[Thm.~1.2.7]{FRSS-H03} and appears in \cite[p.~42]{FRSS-H87} for Eulidean space. A translation bounded measure $\mu\in \cM^\infty(G)$ is positive if $\mu(f)\ge0$ for all $f\in \Wsp(G)$ such that $f\ge0$.

\begin{proposition}\label{FRSS-prop:Feipos}\cite[Prop.~B4]{FRSS-Fei80}
The set of positive mild distributions coincides with the set of positive translation bounded measures, i.e., we have
\begin{displaymath}
\left\{ \sigma \in \SOp(G): \sigma \ge 0 \right\} = \iota\left(\left\{ \mu \in \cM^\infty(G): \mu\ge 0 \right\}\right) \ .
\end{displaymath}
\end{proposition}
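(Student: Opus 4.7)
The inclusion ``$\supseteq$'' is immediate from Corollary~\ref{FRSS-cor1}: for any positive $\mu\in\cM^\infty(G)$ and any $f\in\So(G)$ with $f\ge 0$ we have $\iota\mu(f)=\int_G f\dd\mu\ge 0$, so $\iota\mu$ is a positive mild distribution. Thus the real content is the reverse inclusion, where I follow the tempered distribution template of \cite{FRSS-H03}*{Thm.~1.2.7}: given a positive $\sigma\in\SOp(G)$, I would first realise $\sigma$ as a positive Radon measure $\mu$ via the Riesz representation theorem, and then deduce translation boundedness from continuity of $\sigma$ with respect to the shift invariant $\So$--norm.

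For the first step, my key observation is the following cutoff estimate on $\KLsp(G)=\Cc(G)\cap\So(G)$. For any real $f\in\KLsp(G)$ with $\supp(f)\subseteq K$, I would choose (using \cite{FRSS-Jak}*{Lem.~4.3(iv)}, which the paper already invokes for the second BUPU construction) some $\psi\in\KLsp(G)$ with $0\le\psi\le 1$ and $\psi\equiv 1$ on $K$. Then $\|f\|_\infty\psi\pm f\in\So(G)$ is pointwise non-negative, so positivity of $\sigma$ yields $|\sigma(f)|\le\|f\|_\infty\sigma(\psi)$; splitting complex $f$ into real and imaginary parts gives $|\sigma(f)|\le 2\|f\|_\infty\sigma(\psi)$. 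Next I would extend $\sigma|_{\KLsp(G)}$ to a continuous linear functional $L$ on $(\Cc(G),\tau_{ind})$. For this I need that $\KLsp(G)\cap\Csp_{K'}(G)$ is $\|\cdot\|_\infty$-dense in $\Csp_K(G)$ for some compact $K'\supseteq K$; this is standard via convolution with a Dirac net $(v_i)\subseteq\KLsp(G)$ (built e.g.\ from $v_i=c_i\,\chi_i*\widetilde{\chi_i}$ with $\chi_i\in\Cc(G)$ of shrinking support), using that $f\in\Cc(G)$ and $v_i\in\KLsp(G)$ imply $f*v_i\in\Cc(G)$ with $\widehat{f*v_i}=\widehat f\cdot\widehat{v_i}\in\Lisp(\widehat G)$. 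The cutoff estimate makes $(\sigma(f*v_i))_i$ a Cauchy net, and defines $L(f):=\lim_i\sigma(f*v_i)$ unambiguously, with $L\ge 0$ on $\Cc(G)$. Riesz representation then furnishes a positive Radon measure $\mu$ with $\mu(f)=L(f)$ for $f\in\Cc(G)$; in particular $\mu$ and $\sigma$ agree on $\KLsp(G)$.

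For translation boundedness, I would exploit that $\sigma$ is continuous with respect to the shift invariant equivalent norm $\|\cdot\|_{\So}$ constructed in Section~\ref{FRSS-sect:appB}. Given any compact $K\subseteq G$, fix $\psi\in\KLsp(G)$ with $0\le\psi$ and $\psi\ge 1_K$ as above. Since $T_t\psi\in\KLsp(G)$ agrees with $\mu$ via the Riesz extension, positivity gives
\[
\mu(t+K)\le\mu(T_t\psi)=\sigma(T_t\psi)\le\op{\sigma}_{\So}\cdot\|T_t\psi\|_{\So}=\op{\sigma}_{\So}\cdot\|\psi\|_{\So},
\]
uniformly in $t\in G$, so $\|\mu\|_K<\infty$ and $\mu\in\cM^\infty(G)$. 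Finally, $\iota\mu$ and $\sigma$ are both continuous linear functionals on $(\So(G),\|\cdot\|_{\So})$ that coincide on $\KLsp(G)$, which is dense in $\So(G)$ by Theorem~\ref{FRSS-thm:S0}(d); hence $\iota\mu=\sigma$, completing the proof. The main obstacle I anticipate is the density/extension step: one must avoid appealing to $|f|\in\So(G)$ (which fails by Kahane's example) and instead handle positivity through sandwiching against a $\KLsp(G)$ cutoff; the cutoff estimate $|\sigma(f)|\le 2\|f\|_\infty\sigma(\psi)$ is precisely what allows the passage from $\KLsp(G)$ to $\Cc(G)$ and, reused with a translation invariant $\So$--norm, also delivers translation boundedness.
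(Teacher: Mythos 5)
Your proposal is correct and follows essentially the same route as the paper's proof: a cutoff function $h\in\KLsp(G)$ with $h\ge 1_K$ yields the local bound $|\sigma(f)|\le C\|f\|_\infty\sigma(h)$ on $\KLsp(G)$, Riesz representation produces the positive Radon measure, and translation boundedness comes from $\mu(t+K)\le\sigma(T_t h)\le\op{\sigma}_{\So}\|h\|_{\So}$ via the shift-invariant norm. The only (harmless) differences are that the paper sharpens your factor $2$ to $1$ by a rotation trick $g=\bar z f$ instead of splitting into real and imaginary parts, and that it leaves implicit the density/extension step from $\KLsp(G)$ to $\Cc(G)$ which you spell out with a Dirac net.
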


\begin{question}[Which positive measures are mild?]
Assume that $\mu\in \cM(G)$ is positive. Is there a weaker assumption than $\mu\in\cM^\infty(G)$ to conclude $\mu\in\cM_{m}(G)$? Motivated by \cite[Prop.~2.5]{FRSS-BS22}, one may ask whether the property $f\in \Lisp(\mu)$ for all non-negative $f\in \So(G)$ is sufficient.
 \end{question}

\begin{beweis}[Proof of Proposition~\ref{FRSS-prop:Feipos}]
``$\subseteq$'' Fix any positive $\sigma\in \SOp(G)$. We first show $\sigma\in\cM(G)$.
Consider any compact $K\subseteq G$ having nonempty interior and take non-negative $h\in \KLsp(G)\subseteq \So(G)$ such that $h|_K\ge 1$. For the existence of such $h$, see e.g.~the proof of \cite[Lem.~4.3 (iv)]{FRSS-Jak}. By positivity we have $\sigma(h)\ge0$. In particular, $\sigma(h)$ is real.
Now consider any real-valued $f\in \KLsp(G)\subseteq \So(G)$ such that $\mathrm{supp}(f)\subseteq K$. Then $\|f\|_\infty \cdot h - f\ge 0$, which implies  $\|f\|_\infty \cdot \sigma(h) - \sigma(f)\ge 0$ by positivity. In particular $\sigma(f)$ is real, and we have $\sigma(f)\le \|f\|_\infty \cdot \sigma(h)$. Finally, consider arbitrary $f\in \KLsp(G)\subseteq \So(G)$ such that $\mathrm{supp}(f)\subseteq K$. Then $Im(f), Re(f)\in \So(G)$ by \cite[Cor.~4.2 (iv)]{FRSS-Jak}, and we can write $\sigma(f)=\sigma(Re(f))+ i \cdot \sigma(Im(f))$, where both $\sigma(Re(f)), \sigma(Im(f))$ are real by the previous argument. In particular, we have $\sigma(f)=\sigma(Re(f))$ if $\sigma(f)$ is real. Write $\sigma(f)=z \cdot |\sigma(f)|$, define $g=\overline{z}\cdot f \in \KLsp(G)$ and note that $\sigma(g)=|\sigma(f)|$ is real. Then
\begin{equation}\label{FRSS-eq:help}
|\sigma(f)|=\sigma(g)=\sigma(Re(g)) \le \|Re(g)\|_\infty \cdot \sigma(h)\le \|g\|_\infty \cdot \sigma(h)  =  \|f\|_\infty \cdot \sigma(h) \ .
\end{equation}
Hence defining $\mu(f)=\sigma(f)$ for $f\in \KLsp(G)$ yields a distibutional measure $\mu\in \cM_{m}(G)$ such that $\iota\mu=\sigma$.
Translation boundedness of $\mu$ follows from translation boundedness of $\sigma$, compare Remark~\ref{FRSS-rem:tbS0}. Indeed we have by positivity that
\[
\| \mu \|_{K} =\sup_{t \in G} \mu(t+K) \leq \sup_{t \in G} \mu(T_th) \leq  \op{\sigma}_{\So(G)} \cdot \| h \|_{\So(G)} < \infty \ .
\]
Now $\mu$ inherits positivity from $\sigma$ by denseness of $\So(G)$ in $\Wsp(G)$, see Proposition~\ref{FRSS-prop:S0Wdense}.

\smallskip
\noindent ``$\supseteq$'' Fix any positive $\mu\in \cM^\infty(G)$. Then $\mu\in \cM_{m}(G)$ by Corollary~\ref{FRSS-cor1}. Let $\sigma=\iota\mu\in \SOp(G)$. Then $\sigma(f)=\mu(f)\ge 0$ for all $f\in \So(G)$ such that $f\ge0$ by positivity of $\mu$. Hence $\sigma\ge0$.
\end{beweis}

Recall that a measure $\mu\in \cM(G)$ is \textit{positive definite}\index{positive~definite!measure} if $\mu(f*\widetilde f)\ge 0$ for all $f\in \Cc(G)$, see e.g.~\cite[Ch.~4]{FRSS-ARMA1}. We now show that, for mild measures, the two definitions of positive definiteness are equivalent.

\begin{proposition}\label{FRSS-pd-char} Let $\mu$ be a mild measure on $G$. Then the following are equivalent.
\begin{itemize}
  \item[(i)] $\iota(\mu)$ is a positive definite mild distribution on $G$.
  \item[(ii)] $\mu(f) \geq 0$ for all positive definite $f \in \KLsp(G)$.
  \item[(iii)] $\mu$ is a positive definite measure on $G$.
\end{itemize}
\end{proposition}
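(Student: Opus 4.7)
The plan is to handle (i) $\Leftrightarrow$ (ii), (ii) $\Rightarrow$ (iii), and (iii) $\Rightarrow$ (ii) in turn.

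For (i) $\Leftrightarrow$ (ii) I would simply apply Corollary~\ref{FRSS-cor-PD} to the mild distribution $\sigma := \iota(\mu) \in \SOp(G)$. The defining identity $\iota(\mu)(f) = \mu(f)$ for every $f \in \KLsp(G)$ translates condition (ii) of the present proposition verbatim into condition (ii) of that corollary, which is equivalent to $\sigma$ being positive definite, i.e.\ to (i).

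For (ii) $\Rightarrow$ (iii), take arbitrary $f \in \Cc(G) \subseteq \Ltsp(G)$. Plancherel gives $\widehat{f} \in \Ltsp(\widehat{G})$, hence $\widehat{f * \widetilde{f}} = |\widehat{f}|^2 \in \Lisp(\widehat{G})$. Combined with $f * \widetilde{f} \in \Cc(G)$, this places $f * \widetilde{f}$ in $\KLsp(G)$; non-negativity of $|\widehat{f}|^2$ makes it positive definite, so (ii) yields $\mu(f * \widetilde{f}) \geq 0$, which is (iii).

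For (iii) $\Rightarrow$ (ii) I would proceed in two stages. First, lift (iii) to the mild distribution $\sigma = \iota(\mu)$ by showing $\sigma(g * \widetilde{g}) \geq 0$ for every $g \in \So(G)$: approximate $g$ in $\So$-norm by $g_n \in \KLsp(G) \subseteq \Cc(G)$ (Theorem~\ref{FRSS-thm:S0}(d)), and combine the Banach algebra property of $\So(G)$ (Theorem~\ref{FRSS-thm:S0}(c)) with continuity of $\sigma$ to pass (iii) to the limit, giving $\sigma(g * \widetilde{g}) = \lim \mu(g_n * \widetilde{g_n}) \geq 0$. Next, given positive definite $h \in \KLsp(G)$, approximate $h$ in $\So$-norm by $g_n * \widetilde{g_n}$ with $g_n \in \So(G)$; on the Fourier side this amounts to approximating the non-negative function $\widehat{h} \in \So(\widehat{G}) \cap \Lisp(\widehat{G})$ by squares $|\widehat{g_n}|^2$ in $\So(\widehat{G})$-norm, whence $\mu(h) = \sigma(h) = \lim \sigma(g_n * \widetilde{g_n}) \geq 0$ follows from the first stage.

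The main obstacle is this final $\So$-norm approximation: $\sqrt{\widehat{h}}$ need not lie in $\So(\widehat{G})$, so the direct choice $\widehat{g_n} = \sqrt{\widehat{h}}$ is not admissible (compare the Kahane counterexample cited after Theorem~\ref{FRSS-thm:S0}). A workable route is to first smooth $\widehat{h}$ by a non-negative Dirac net in $\So(\widehat{G})$ of the form $\varphi_n * \widetilde{\varphi_n}$, producing approximants with the required square structure, and to invoke Lemma~\ref{FRSS-pd-appr} in order to reduce to sufficiently regular positive definite functions before taking the approximate square root.
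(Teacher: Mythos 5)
Your treatment of (i)\,$\Leftrightarrow$\,(ii) and of (ii)\,$\Rightarrow$\,(iii) is correct and is exactly what the paper does: the first equivalence is Corollary~\ref{FRSS-cor-PD} applied to $\sigma=\iota(\mu)$, and the second rests on the observation that $f*\widetilde f\in\KLsp(G)$ is positive definite for every $f\in\Cc(G)$, since $\widehat{f*\widetilde f}=|\widehat f|^2\in\Lisp(\widehat G)$ is non-negative.

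The gap is in (iii)\,$\Rightarrow$\,(ii), precisely at the step you yourself flag as ``the main obstacle''. Your first stage (passing from $\mu(f*\widetilde f)\ge 0$ on $\Cc(G)$ to $\sigma(g*\widetilde g)\ge 0$ on $\So(G)$) is sound. The second stage, however, requires approximating a given positive definite $h\in\KLsp(G)$ \emph{in the $\So$-norm} by convolution squares $g_n*\widetilde{g_n}$, and neither of your proposed repairs delivers this. Convolving $\widehat h$ with a Dirac net of the form $\varphi_n*\widetilde{\varphi_n}$ produces $\widehat h*\varphi_n*\widetilde{\varphi_n}$, which is non-negative but not of the form $|\Psi|^2$; likewise Lemma~\ref{FRSS-pd-appr} replaces $h$ by $\widecheck u\cdot h$, whose transform $u*\widehat h$ is again not a square. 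The natural source of squares is Lemma~\ref{FRSS-lemma-appr} applied on $\widehat G$ to $\widehat h\ge 0$: it yields approximants of the form $|\widehat{\varphi_n}|^2$ (up to reflections) with $\varphi_n\in\Cc(G)$ and $\bigl\| |\widehat{\varphi_n}|^2-\widehat h\bigr\|_{\Lisp(\widehat G)}\to 0$. But this only gives convergence of $\varphi_n*\widetilde{\varphi_n}$ to $h$ in the Fourier-algebra norm $\|\cdot\|_{\Asp(G)}$; since the supports of $\varphi_n*\widetilde{\varphi_n}$ are not uniformly controlled, the estimate $\|k\|_{\So,\Phi}\le M\,\delta_X(K-U)\,\|k\|_{\Asp(G)}$ for $\supp(k)\subseteq K$ degenerates, and $\Asp$-convergence does not upgrade to $\So$-convergence. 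As $\iota(\mu)$ is continuous only for the $\So$-norm, the limit $\mu(h)=\lim_n\mu(\varphi_n*\widetilde{\varphi_n})$ is not justified.

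This is not a cosmetic issue: the implication (iii)\,$\Rightarrow$\,(ii) is essentially the Bochner-type theorem for positive definite measures, and the paper disposes of it by citing \cite[Cor.~4.2]{FRSS-ARMA1}, which in turn rests on the transformability of positive definite measures (\cite[Thm.~4.1]{FRSS-ARMA1}, see also \cite[Thm.~4.5]{FRSS-BF75}). Either cite that result, as the paper does, or be prepared to reprove it, e.g.\ via the Cauchy--Schwarz argument for the positive semi-definite form $(f,g)\mapsto\mu(f*\widetilde g)$ on $\Cc(G)$; a short density-of-squares argument inside $\So(G)$ does not appear to be available.
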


\begin{beweis}
(i) $\Longleftrightarrow$ (ii) is a particular case of Corollary~\ref{FRSS-cor-PD}.

\noindent (ii) $\Longrightarrow$ (iii) follows from the fact that for all $f \in \Cc(G)$ we have $f*\tilde{f} \in \KLsp(G)$.

\noindent (iii) $\Longrightarrow$ (ii) follows for example from \cite[Cor.~4.2]{FRSS-ARMA1}.
\end{beweis}

\begin{remark}[positive definite mild distributions and measures]
Assume that $\sigma\in \SOp(G)$ is a mild distribution. Then $\sigma$ is a positive definite translation bound measure if and only if the mild distribution $\widehat \sigma$ is a positive measure which is Fourier transformable in AG sense, see Theorem~\ref{FRSS-t1a} in the following section. As it is difficult to decide whether a positive translation bounded measure is AG transformable, it is also difficult to decide whether a positive definite mild distribution is a measure.
\end{remark}

\smallskip

Recall that any mild distribution has a mild distribution as its Fourier transform by Theorem~\ref{FRSS-thm:Fdist}. 
Thus any mild measure admits a Fourier transform being a mild distribution. In particular, any translation bounded measure admits a Fourier transform being a mild distribution. 
If $\mu\in \cM_{m}(G)$ with mild distribution $\sigma=\iota\mu\in \SOp(G)$, we sometimes write $\widehat \mu$ instead of $\widehat\sigma=\reallywidehat{\iota\mu}$ in order to simplify notation.

\begin{corollary} If $\mu\in\cM_{m}(G)$, then there exists a unique mild distribution $\tau \in \SOp(\widehat G)$ such that $\mu(f) = \tau(\widecheck{f})$ for all $f \in \KLsp(G)$. We call $\tau=\reallywidehat{\iota\mu}\in \SOp(\widehat G)$ the Fourier transform\index{Fourier~transform!mild~measure} of $\mu$. \qed
\end{corollary}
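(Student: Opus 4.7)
The plan is to set $\tau := \reallywidehat{\iota\mu} \in \SOp(\widehat G)$, which is a mild distribution by Theorem~\ref{FRSS-thm:Fdist}, and then verify the defining identity and uniqueness. The key observation is that $\KLsp(G) \subseteq \So(G)$, so the Fourier inversion machinery from Theorems~\ref{FRSS-thm:FFT} and \ref{FRSS-th:bij} is applicable pointwise on such test functions.

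For existence, I would argue as follows. Fix any $f \in \KLsp(G)$. Since $\KLsp(G) \subseteq \So(G)$ by Theorem~\ref{FRSS-thm:S0}(d), we have $\widecheck{f} \in \So(\widehat G)$ via Theorem~\ref{FRSS-thm:FFT} applied to the inverse Fourier transform. Unfolding the definition of the distributional Fourier transform,
\begin{displaymath}
\tau(\widecheck{f}) \;=\; \reallywidehat{\iota\mu}(\widecheck{f}) \;=\; (\iota\mu)\bigl(\reallywidehat{\widecheck{f}}\,\bigr) \;=\; (\iota\mu)(f) \;=\; \mu(f),
\end{displaymath}
where the third equality is the classical Fourier inversion formula $\reallywidehat{\widecheck{f}} = f$ valid on $\So(G)$ (a direct consequence of Theorem~\ref{FRSS-thm:FFT} combined with Theorem~\ref{FRSS-thm:Fdist}(a) restricted to regular distributions), and the last equality uses the defining property of $\iota$ on $\KLsp(G)$.

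For uniqueness, suppose $\tau_1, \tau_2 \in \SOp(\widehat G)$ both satisfy $\tau_j(\widecheck{f}) = \mu(f)$ for all $f \in \KLsp(G)$. Then $\tau_1 - \tau_2$ vanishes on the subspace $\widecheck{\KLsp(G)} \subseteq \So(\widehat G)$. By Theorem~\ref{FRSS-thm:S0}(d), $\KLsp(G)$ is dense in $(\So(G), \|\cdot\|_{\So})$, and by Theorem~\ref{FRSS-th:bij} the inverse Fourier transform is an isometric isomorphism (hence a homeomorphism) between $\So(G)$ and $\So(\widehat G)$ with their standard topologies. Consequently $\widecheck{\KLsp(G)}$ is dense in $\So(\widehat G)$, and since $\tau_1 - \tau_2$ is a continuous linear functional on $\So(\widehat G)$, it must vanish identically, giving $\tau_1 = \tau_2$.

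I do not expect any genuine obstacle here: the statement is essentially a packaging of the distributional Fourier transform of $\iota\mu$ together with Fourier inversion on $\So$, and the only subtlety worth flagging is the passage $\widecheck{f} \in \So(\widehat G)$ which requires invoking $\KLsp(G) \subseteq \So(G)$ before applying Theorem~\ref{FRSS-thm:FFT}. Everything else is a direct bookkeeping exercise with the definitions already in place.
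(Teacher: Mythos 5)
Your proposal is correct and follows exactly the route the paper intends (the corollary is stated with \qed as an immediate consequence of the preceding discussion): take $\tau=\reallywidehat{\iota\mu}$, unfold the definition $\widehat\sigma(\varphi)=\sigma(\widehat\varphi)$ together with Fourier inversion on $\So$ to get $\tau(\widecheck f)=(\iota\mu)(f)=\mu(f)$, and deduce uniqueness from the density of $\widecheck{\KLsp(G)}$ in $\So(\widehat G)$. No gaps.
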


\section{Fourier transformable measures}\label{FRSS-sec:FTm}

We review the Fourier transform of a measure as introduced by Argabright and Gil de Lamadrid \cite{FRSS-ARMA1, FRSS-ARMA}. The Fourier transform of translation bounded measures is relevant for diffraction analysis, as will be explained in Section~\ref{FRSS-sec:dtr}.

\subsection{Fourier transformable measures}

By Bochner's theorem \cite[Thm.~1.4.3]{FRSS-RUD2}, any positive definite continuous function is the inverse Fourier transform of a positive finite measure. This has be extended to positive definite measures. Recall that $\mu\in \cM(G)$ is positive definite \index{positive~definite!measure} if $\mu(f*\widetilde f)\ge 0$ for all $f\in \Cc(G)$. Positive definite measures are inverse Fourier transforms of positive measures, see e.g.~\cite[Thm.~4.5]{FRSS-BF75}.

In order to extend the latter theory, Argabright and Gil de Lamadrid asked when a measure admits a Fourier transform that is a measure \cite{FRSS-ARMA1}. We give an equivalent definition of their notion of  Fourier transform \cite[p.~8]{FRSS-ARMA1}, which better reflects the functional analysis viewpoint, compare \cite[Prop.~3.9]{FRSS-CRS} and \cite[Prop.~4.9.9]{FRSS-MoSt}.

\begin{definition} \cite[Prop.~3.9]{FRSS-CRS}\label{FRSS-def:AGtrans} A measure $\mu\in\cM(G)$ is called \textit{AG transformable}~\index{Fourier~transformable!AG~transformable} if there exists a measure $\nu\in \cM(\widehat{G})$ such that $\widecheck{f} \in \Lisp(|\nu|)$ and $\mu(f)=\nu(\widecheck{f})$ for all $f \in \KLsp(G)$. The measure $\nu$ is  denoted by $\widehat\mu$ and is called the \textit{Fourier transform of $\mu$}\index{Fourier~transform!measure}. The space of AG transformable measures on $G$ is denoted by $\cM_T(G)$\index{$\cM_{T}(G)$}.  \qed
\end{definition}

Some properties of AG transformable measures are discussed in \cite{FRSS-ARMA1} or in \cite[Ch.~4.9]{FRSS-MoSt}. As the AG transform $\nu$ is a translation bounded measure \cite[Thm.~2.5]{FRSS-ARMA1}, we could assume $\nu\in\cM^\infty(\widehat G)$ in the definition of AG transform without loss of generality. In this case Section~\ref{FRSS-sec:tbmintro} gives that the integrability condition in the definition of AG transformability is unnecessary. Indeed, by Corollary~\ref{FRSS-cor1}, if $\nu \in \cM^\infty(\widehat{G})$ for all $f \in \KLsp(G) \subseteq \So(G)$  we have $|\widecheck{f}| \in \Lisp(\nu)$.
We thus have the following simpler characterisation of AG transformability.

\begin{lemma}\label{FRSS-lem:AGtrans} A measure $\mu\in\cM(G)$ is AG transformable if and only if there exists a translation bounded measure $\nu\in \cM^\infty(\widehat{G})$ such that $\mu(f)=\nu(\widecheck{f})$ for all $f \in \KLsp(G)$.   \qed
\end{lemma}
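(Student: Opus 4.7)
The plan is to treat the two implications separately; the forward direction is essentially just a citation, while the backward direction is where the new machinery of Feichtinger's algebra and mild distributions pays off by making the integrability condition automatic.

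For the forward implication, I would start from Definition~\ref{FRSS-def:AGtrans}: if $\mu$ is AG transformable, then by definition there exists $\nu\in\cM(\widehat G)$ with $\widecheck{f}\in\Lisp(|\nu|)$ and $\mu(f)=\nu(\widecheck{f})$ for all $f\in\KLsp(G)$. The cited result \cite[Thm.~2.5]{FRSS-ARMA1}, which was already invoked in the paragraph preceding the lemma, states that any such $\nu$ is automatically translation bounded, so $\nu\in\cM^\infty(\widehat G)$. Nothing else is required here.

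For the backward implication, assume $\nu\in\cM^\infty(\widehat G)$ satisfies $\mu(f)=\nu(\widecheck{f})$ for all $f\in\KLsp(G)$. The only content of Definition~\ref{FRSS-def:AGtrans} that is not already given is the integrability condition $\widecheck{f}\in\Lisp(|\nu|)$ for every $f\in\KLsp(G)$. I would dispatch this in two short steps. First, $\KLsp(G)\subseteq\So(G)$ by Theorem~\ref{FRSS-thm:S0}(d), and the Fourier transform is a bijection from $\So(G)$ onto $\So(\widehat G)$ by Theorem~\ref{FRSS-thm:FFT}; applying the inverse transform (or equivalently using $\widecheck{f}=(\widehat{f})^{\dagger}$ together with Theorem~\ref{FRSS-thm:S0}(b)) gives $\widecheck{f}\in\So(\widehat G)$ for every $f\in\KLsp(G)$. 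Second, since $\nu\in\cM^\infty(\widehat G)$, Corollary~\ref{FRSS-cor1}(a) ensures $\So(\widehat G)\subseteq\Lisp(|\nu|)$, hence $\widecheck{f}\in\Lisp(|\nu|)$ as required. Thus $\mu$ is AG transformable with Fourier transform $\nu$.

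Neither direction presents a serious obstacle; the only subtlety worth highlighting is that the backward direction depends critically on the Fourier invariance of $\So$ (Theorem~\ref{FRSS-thm:FFT}) combined with the integrability statement of Corollary~\ref{FRSS-cor1}. Without the mild distribution framework, one would have had to verify $\widecheck{f}\in\Lisp(|\nu|)$ by hand from the translation boundedness of $\nu$ and local decay of $\widecheck{f}$; the Feichtinger algebra packages both ingredients so that the implication becomes essentially one line.
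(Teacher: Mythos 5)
Your proposal is correct and follows essentially the same route as the paper, which disposes of the lemma in the paragraph preceding it: the forward direction is exactly the citation of \cite[Thm.~2.5]{FRSS-ARMA1} for translation boundedness of the AG transform, and the backward direction is exactly the observation that $\KLsp(G)\subseteq\So(G)$, Fourier invariance of $\So$, and Corollary~\ref{FRSS-cor1} make the integrability condition automatic. Your write-up is, if anything, slightly more careful in spelling out why $\widecheck{f}\in\So(\widehat G)$.
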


We now state the characterisation of AG transformability in terms of mild distributions due to Feichtinger.

\begin{theorem}\label{FRSS-t1a} \cite[Thm.~C1(i)]{FRSS-Fei80} Consider $\mu\in \cM(G)$. Then $\mu\in \cM_T(G)$  if and only if $\mu\in \cM_{m}(G)$ and $\reallywidehat{\iota\mu}\in \iota\left(\cM^\infty(\widehat{G})\right)$.
Moreover, in this case we have $\reallywidehat{\iota\mu}=\iota\widehat{\mu}$.
\end{theorem}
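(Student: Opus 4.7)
The strategy is to reduce both directions to the simpler characterisation of Lemma~\ref{FRSS-lem:AGtrans}, which asks only for some $\nu \in \cM^\infty(\widehat G)$ with $\mu(f) = \nu(\widecheck f)$ for $f \in \KLsp(G)$. In both directions this $\nu$ will turn out to be the unique object on $\widehat G$ encoded by $\reallywidehat{\iota\mu}$, and the bridge between the two sides is the defining identity $\reallywidehat{\iota\mu}(\varphi) = (\iota\mu)(\widehat\varphi)$ together with the Fourier bijection on $\So$ (Theorem~\ref{FRSS-thm:FFT}) and Corollary~\ref{FRSS-cor1} (which lets us evaluate mild distributions coming from translation bounded measures as ordinary integrals).

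For the forward direction, assume $\mu \in \cM_T(G)$ with AG transform $\nu \in \cM^\infty(\widehat G)$. I will first exhibit a mild extension of $\mu$. Define $\sigma \colon \So(G) \to \CC$ by $\sigma(f) = (\iota\nu)(\widecheck f)$. This is well defined and continuous because $f \mapsto \widecheck f$ is a bounded isomorphism $\So(G) \to \So(\widehat G)$ and $\iota\nu \in \SOp(\widehat G)$ by Corollary~\ref{FRSS-cor1}; in particular $\sigma \in \SOp(G)$. For $f \in \KLsp(G) \subseteq \So(G)$, Corollary~\ref{FRSS-cor1}(b) gives $(\iota\nu)(\widecheck f) = \nu(\widecheck f) = \mu(f)$, so $\mu$ is mild and $\iota\mu = \sigma$ by the uniqueness in Remark~\ref{FRSS-rem:dense}. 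Then for any $\varphi \in \So(\widehat G)$, using $\widecheck{\widehat\varphi} = \varphi$ from Theorem~\ref{FRSS-thm:FFT},
\[
\reallywidehat{\iota\mu}(\varphi) = (\iota\mu)(\widehat\varphi) = \sigma(\widehat\varphi) = (\iota\nu)(\widecheck{\widehat\varphi}) = (\iota\nu)(\varphi),
\]
so $\reallywidehat{\iota\mu} = \iota\nu = \iota\widehat\mu \in \iota(\cM^\infty(\widehat G))$.

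For the converse, assume $\mu \in \cM_{m}(G)$ and that $\reallywidehat{\iota\mu} = \iota\nu$ for some $\nu \in \cM^\infty(\widehat G)$. For arbitrary $f \in \KLsp(G) \subseteq \So(G)$, Fourier inversion on $\So$ (Theorem~\ref{FRSS-thm:FFT}) yields $\widehat{\widecheck f} = f$ with $\widecheck f \in \So(\widehat G)$, so
\[
\mu(f) = (\iota\mu)(f) = (\iota\mu)(\widehat{\widecheck f}) = \reallywidehat{\iota\mu}(\widecheck f) = (\iota\nu)(\widecheck f) = \nu(\widecheck f),
\]
where the last equality uses Corollary~\ref{FRSS-cor1}(b). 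Lemma~\ref{FRSS-lem:AGtrans} then gives $\mu \in \cM_T(G)$ with $\widehat\mu = \nu$, and the identity $\reallywidehat{\iota\mu} = \iota\widehat\mu$ is automatic.

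The main obstacle is really bookkeeping: one must simultaneously track $\mu$ as a Radon measure, its mild extension $\iota\mu \in \SOp(G)$, and the dual object $\nu$ on $\widehat G$, while keeping $\KLsp$-valued and $\So$-valued identities consistent. The two crucial inputs that make the bookkeeping frictionless are Corollary~\ref{FRSS-cor1} (a translation bounded measure is automatically mild, and its action on $\So$ is integration, which absorbs the integrability clause of Definition~\ref{FRSS-def:AGtrans}) and Theorem~\ref{FRSS-thm:FFT} (the Fourier transform is a topological isomorphism on $\So$, so the candidate functional built by pulling $\iota\nu$ back through $\widecheck{\phantom{f}}$ is genuinely a mild distribution rather than only a quasimeasure).
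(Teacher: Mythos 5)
Your proof is correct and follows essentially the same route as the paper: in the forward direction the paper takes $\sigma=\reallywidehat{\iota\widehat\mu}^{\,\dagger}$, which coincides with your $f\mapsto(\iota\nu)(\widecheck f)$ since $\widehat{f^\dagger}=\widecheck f$, and both proofs verify $\iota\mu=\sigma$ on $\KLsp(G)$ and then identify $\reallywidehat{\iota\mu}=\iota\widehat\mu$; the converse direction is word-for-word the same chain $\mu(f)=\reallywidehat{\iota\mu}(\widecheck f)=\nu(\widecheck f)$ followed by Lemma~\ref{FRSS-lem:AGtrans}. No gaps.
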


\begin{beweis}
$\Rightarrow$: Assume that $\mu\in \cM(G)$ is AG transformable. Then $\widehat{\mu}\in \cM^\infty(\widehat G)$, hence $\widehat{\mu}\in\cM_{m}(\widehat G)$ by Corollary~\ref{FRSS-cor1}. Consider the mild distribution $\sigma= \reallywidehat{\iota\widehat{\mu}}^\dagger\in \SOp(G)$. For $f \in \KLsp(G)$ we have
$\mu(f) = \widehat{\mu}(\widecheck{f}) = \iota\widehat{\mu}(\widecheck{f}) =\reallywidehat{\iota{\widehat{\mu}}}(\widecheck{\widecheck{f}}) =\sigma(f)$, where we used $\widecheck{\widecheck{f}}=f^\dagger$.
Hence $\mu\in \cM_{m}(G)$ and $\iota\mu=\sigma$. Noting $\widehat{\sigma}= \iota\widehat{\mu}$ where $\widehat\mu\in \cM^\infty(\widehat G)$, we infer $\reallywidehat{\iota \mu}=\widehat{\sigma}=\iota \widehat{\mu}$.
Therefore $\reallywidehat{\iota\mu}\in \iota\left(\cM^\infty(\widehat{G})\right)$.

\smallskip

\noindent
$\Leftarrow$: Consider $\sigma=\iota\mu\in \SOp(G)$. Since $\widehat \sigma=\iota\nu$ for some $\nu\in \cM^\infty(\widehat G)$ by assumption, we have for all $f \in \KLsp(G)$ that $\mu(f)=\sigma(f)=\widehat{\sigma}(\widecheck{f})=\nu(\widecheck f)$, which completes the proof.
\end{beweis}

\begin{remark}\label{FRSS-rem-agntb} 
There exist AG transformable measures that are not translation bounded measures,  see e.g.~\cite[Prop.~7.1]{FRSS-ARMA1} and \cite[Prop.~3.6]{FRSS-BS22}. In particular, any such measure is mild but not translation bounded.
\end{remark}

We finally discuss the connection between positive definiteness and AG transformability. The following result is an immediate consequence of Corollary~\ref{FRSS-lem:ppdS0} and Theorem~\ref{FRSS-t1a}.

\begin{lemma} A mild distribution $\sigma$ is positive definite if and only if $\widehat{\sigma}$ is a positive translation bounded measure. \qed
\end{lemma}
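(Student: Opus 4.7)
The plan is to chain together two previously established results. By Corollary~\ref{FRSS-lem:ppdS0}(b), applied with $\tau = \widehat\sigma \in \SOp(\widehat G)$ (which makes sense since the Fourier transform maps $\SOp(G)$ to $\SOp(\widehat G)$ by Theorem~\ref{FRSS-thm:Fdist}), the positive definiteness of $\sigma$ is equivalent to $\widehat\sigma$ being a positive mild distribution on $\widehat G$. This reduces the statement to identifying positive mild distributions with positive translation bounded measures.

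That identification is precisely the content of Proposition~\ref{FRSS-prop:Feipos}: under the canonical embedding $\iota: \cM^\infty(\widehat G) \to \SOp(\widehat G)$, the set of positive mild distributions on $\widehat G$ coincides with the image of the positive translation bounded measures on $\widehat G$. Hence $\widehat\sigma$ is positive as a mild distribution if and only if $\widehat\sigma = \iota\nu$ for some positive $\nu \in \cM^\infty(\widehat G)$, which (with the usual abuse of notation identifying $\nu$ with $\iota\nu$) is exactly the statement that $\widehat\sigma$ is a positive translation bounded measure.

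Combining the two equivalences yields the lemma. There is essentially no obstacle, since both nontrivial facts have already been proved; the only care needed is to keep the notational conventions straight, namely that ``$\widehat\sigma$ is a positive translation bounded measure'' is to be read via the injection $\iota$, so that the equality $\widehat\sigma = \iota\nu$ transports positivity between the two settings. No further estimates or approximation arguments are needed beyond those already used in Corollary~\ref{FRSS-lem:ppdS0} and Proposition~\ref{FRSS-prop:Feipos}.
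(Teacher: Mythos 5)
Your proof is correct and follows essentially the route the paper intends: Corollary~\ref{FRSS-lem:ppdS0}(b) reduces positive definiteness of $\sigma$ to positivity of $\widehat\sigma$, and Proposition~\ref{FRSS-prop:Feipos} identifies positive mild distributions with positive translation bounded measures via $\iota$. If anything, your citation of Proposition~\ref{FRSS-prop:Feipos} is the more precise choice of second ingredient than the paper's reference to Theorem~\ref{FRSS-t1a}, since the latter concerns measures rather than general mild distributions.
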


We have the following characterisation of positive definiteness. It shows, in particular, that any positive definite measure is mild. 
\begin{corollary} Let $\mu\in \cM(G)$ be a measure. Then the following are equivalent.
\begin{itemize}
  \item[(i)] $\mu$ is a positive definite measure.
  \item[(ii)] $\mu \in \cM_{T}(G)$  and $\widehat{\mu}$ is a positive measure.
  \item[(iii)] $\mu\in \cM_{m}(G)$ and $\reallywidehat{\iota\mu}\in \SOp(\widehat G)$ is positive.
  \item[(iv)] $\mu\in \cM_{m}(G)$ and $\iota\mu\in \SOp(G)$ is positive definite.
\end{itemize}
\end{corollary}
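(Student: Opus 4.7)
The plan is to prove the four-way equivalence along the cycle (iv)~$\Rightarrow$~(iii)~$\Rightarrow$~(ii)~$\Rightarrow$~(i)~$\Rightarrow$~(iv), relying essentially on the machinery already assembled: Corollary~\ref{FRSS-lem:ppdS0} (Fourier duality between positivity and positive-definiteness in $\SOp$), Proposition~\ref{FRSS-prop:Feipos} (positive mild distributions coincide with positive translation bounded measures), Theorem~\ref{FRSS-t1a} (mild-distribution characterisation of AG transformability), and Proposition~\ref{FRSS-pd-char} (characterisation of positive-definiteness for mild measures). Observe that both (iii) and (iv) share the standing hypothesis $\mu \in \cM_{m}(G)$, so the equivalence (iii)~$\Leftrightarrow$~(iv) is immediate from Corollary~\ref{FRSS-lem:ppdS0}(b) applied with $\sigma = \iota\mu$ and $\tau = \reallywidehat{\iota\mu}$.

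For (ii)~$\Leftrightarrow$~(iii) I would argue in both directions using Theorem~\ref{FRSS-t1a}. Assuming (ii), that theorem yields $\mu \in \cM_{m}(G)$ together with the identity $\reallywidehat{\iota\mu} = \iota\widehat\mu$; since $\widehat\mu \in \cM^\infty(\widehat G)$ is positive by hypothesis, Proposition~\ref{FRSS-prop:Feipos} transfers this to positivity of $\reallywidehat{\iota\mu}$ as a mild distribution. Conversely, (iii) combined with Proposition~\ref{FRSS-prop:Feipos} produces a positive $\nu \in \cM^\infty(\widehat G)$ with $\iota\nu = \reallywidehat{\iota\mu}$, and Theorem~\ref{FRSS-t1a} then upgrades $\mu$ to an AG transformable measure with $\widehat\mu = \nu \ge 0$. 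Finally, (iv)~$\Rightarrow$~(i) is exactly the implication (i)~$\Rightarrow$~(iii) of Proposition~\ref{FRSS-pd-char}.

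The step (i)~$\Rightarrow$~(iv) is the only one requiring input from outside the mild distribution framework, and it is the main obstacle: Proposition~\ref{FRSS-pd-char} presupposes mildness of $\mu$, whereas hypothesis (i) does not. To close this gap I would invoke the classical Bochner-type theorem \cite[Thm.~4.5]{FRSS-BF75}, which asserts that every positive definite measure $\mu$ is the inverse Fourier transform of a positive measure $\nu$ on $\widehat G$, i.e.\ $\mu \in \cM_{T}(G)$ with $\widehat\mu = \nu \ge 0$. Theorem~\ref{FRSS-t1a} then forces $\mu \in \cM_{m}(G)$, and Proposition~\ref{FRSS-pd-char} delivers the positive-definiteness of $\iota\mu$. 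As a pleasant byproduct this single application of the classical theorem already yields (ii), so one could equivalently close the cycle via (i)~$\Rightarrow$~(ii)~$\Rightarrow$~(iii)~$\Rightarrow$~(iv)~$\Rightarrow$~(i), bypassing Proposition~\ref{FRSS-pd-char} at the final step. In either presentation, the genuinely new content beyond Proposition~\ref{FRSS-pd-char} is precisely the \emph{automatic mildness} of positive definite measures, which is inherited from the classical Fourier--Bochner correspondence for positive definite measures.
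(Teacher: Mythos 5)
Your proposal is correct and takes essentially the same route as the paper, which establishes the cycle (i)~$\Rightarrow$~(ii)~$\Rightarrow$~(iii)~$\Rightarrow$~(iv)~$\Rightarrow$~(i) using precisely the ingredients you identify: the classical Bochner-type result for positive definite measures for (i)~$\Rightarrow$~(ii), Theorem~\ref{FRSS-t1a} together with Proposition~\ref{FRSS-prop:Feipos} for (ii)~$\Rightarrow$~(iii), Corollary~\ref{FRSS-lem:ppdS0} for (iii)~$\Rightarrow$~(iv), and Proposition~\ref{FRSS-pd-char} for (iv)~$\Rightarrow$~(i). Your alternative presentation closing the cycle via (i)~$\Rightarrow$~(ii) is literally the paper's argument, and your observation that the only input beyond the mild-distribution machinery is the automatic transformability (hence mildness) of positive definite measures matches the paper's use of that citation.
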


\begin{beweis}
(i) $\Longrightarrow$ (ii) is stated in~\cite[Thm.~4.1]{FRSS-ARMA1}, see also \cite[Thm.~4.5]{FRSS-BF75}.

\noindent (ii) $\Longrightarrow$ (iii) follows from Theorem~\ref{FRSS-t1a} and Proposition~\ref{FRSS-prop:Feipos}.

\noindent (iii) $\Longrightarrow$ (iv) follows from Corollary~\ref{FRSS-lem:ppdS0}.

\noindent (iv) $\Longrightarrow$ (iv) follows from Proposition~\ref{FRSS-pd-char}.
\end{beweis}

\begin{remark}[Positive mild distributions and AG transformability]\label{FRSS-rem:pdAG}
Note that positivity of $\mu\in \cM^\infty(G)$ does not imply $\mu\in \cM_T(G)$. Indeed, if $\mu\in \cM_T(G)\cap \cM^\infty(G)$ then $\mu$ is weakly almost periodic, which implies that $\mu$ is pure point diffractive \cite[Thm.~7.5]{FRSS-LS2019}.    Diffraction will be discussed in Section~\ref{FRSS-sec:dtr} below. There are many examples of Delone sets $\Lambda\subset \mathbb R$ such that the corresponding Dirac comb $\delta_\Lambda$ is not pure point diffractive. Explicit examples may be taken from the class of model set combs $\omega_h\in \cM^\infty(G)$, which is discussed in Section~\ref{FRSS-sec:ppdpsr}. Indeed, one may take any positive $h\in \Cc(H)\setminus \KLsp(H)$. Then $\omega_h\in \cM_T(G)$ would imply $\widehat h\in \Lisp(\widehat H)$ by \cite[Lem.~5.2]{FRSS-RS17}, which is contradictory. 
\end{remark}

\subsection{Doubly AG transformable measures}

The Fourier transform $\cM_T(G)\to \cM^\infty(\widehat G)$, given by $\mu\mapsto \widehat \mu$, is injective \cite[Thm.~2.1]{FRSS-ARMA1}. On the other hand $\mu\in \cM_T(G)$ does not need to be translation bounded. In order to obtain a theory with bijective Fourier transform, either the space of transformable measures has to be restricted, or the notion of Fourier transform has to be extended beyond measures. The first approach has been followed in \cite{FRSS-ARMA1}. Let us denote\footnote{The space $\cM_{dT}(G)$ is denoted by $\vartheta(G)$ in \cite[p.~21]{FRSS-ARMA1}.} by $\cM_{dT}(G)$ the space of \textit{doubly AG transformable measures}\index{Fourier~transformable!double~AG~transformable} on $G$, i.e.,
\begin{displaymath}\index{$\cM_{dT}(G)$}
\cM_{dT}(G)=\{\mu\in \cM(G): \mu\in  \cM_T(G), \widehat \mu\in\cM_T(\widehat G)\} \ .
\end{displaymath}
A measure in $\cM_{dT}$ is also called \textit{twice Fourier transformable}.
Properties of $\cM_{dT}(G)$ have been discussed in \cite[Ch.~4,7]{FRSS-ARMA1}. In fact the space $\cM_{dT}(G)$ has a simple description in terms of translation boundedness. Argabright and Gil de Lamadrid remarked that, for a transformable measure $\mu\in \cM_T(G)$, the necessary condition $\mu \in \cM^\infty(G)$ might be sufficient for double transformability \cite[p.~50]{FRSS-ARMA1}. Indeed, sufficiency has been proved by  Feichtinger \cite{FRSS-Fei79a, FRSS-Fei80}. This result subsumes some more recent results on double transformability from \cite{FRSS-CRS,FRSS-S19}.

\begin{theorem}\label{FRSS-t1b} \cite[Thm.~C1(ii)]{FRSS-Fei80} Let $\mu\in \cM(G)$. Then the following are equivalent.
\begin{itemize}
\item[(a)] $\mu\in \cM_{dT}(G)$
\item[(b)] $\mu\in \cM_T(G)\cap \cM^\infty(G)$
\item[(c)] $\mu\in \cM^\infty(G)$ and there exists $\nu\in \cM^\infty(\widehat G)$ such that $\mu(f)=\nu(\widecheck f)$ for all $f\in \So(G)$. In that case we have $\nu=\widehat\mu$.
\end{itemize}
\end{theorem}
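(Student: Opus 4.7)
My plan is to establish the two equivalences (a)$\Leftrightarrow$(b) and (b)$\Leftrightarrow$(c) independently. Theorem~\ref{FRSS-t1a} serves as the main bridge, since it rephrases AG transformability of a mild measure as a condition on its distributional Fourier transform. Iterating this characterisation, together with Fourier inversion for mild distributions (Theorem~\ref{FRSS-thm:Fdist}(a)), will allow me to convert statements about $\widehat{\widehat{\mu}}$ back into statements about $\mu$.

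For (a)$\Leftrightarrow$(b), the key observation is that whenever $\mu\in \cM_T(G)$, the measure $\widehat{\mu}$ is translation bounded by \cite[Thm.~2.5]{FRSS-ARMA1}, hence mild by Corollary~\ref{FRSS-cor1}, so Theorem~\ref{FRSS-t1a} can be applied to $\widehat{\mu}$ as well. Combining it with distributional Fourier inversion produces
\[
\reallywidehat{\iota\widehat{\mu}} = \reallywidehat{\reallywidehat{\iota\mu}} = (\iota\mu)^\dagger = \iota(\mu^\dagger),
\]
where the last equality is a routine check on $\KLsp(G)$ together with injectivity of $\iota$ (Remark~\ref{FRSS-rem:dense}). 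Thus $\widehat{\mu}\in\cM_T(\widehat{G})$ holds exactly when $\iota(\mu^\dagger)\in\iota(\cM^\infty(G))$, i.e., when $\mu$ is translation bounded. This simultaneously gives (a)$\Leftrightarrow$(b) and the identification $\widehat{\widehat{\mu}} = \mu^\dagger$.

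The implication (c)$\Rightarrow$(b) is immediate: restricting the identity $\mu(f) = \nu(\widecheck{f})$ from $\So(G)$ to $\KLsp(G)$, combined with $\nu\in\cM^\infty(\widehat{G})$, yields $\mu\in \cM_T(G)$ with $\widehat{\mu} = \nu$ via Lemma~\ref{FRSS-lem:AGtrans}. For (b)$\Rightarrow$(c), I would set $\nu = \widehat{\mu}\in \cM^\infty(\widehat{G})$ and use that $\widecheck{f}\in \So(\widehat{G})$ for $f\in \So(G)$ by Theorem~\ref{FRSS-thm:FFT}, so that by Corollary~\ref{FRSS-cor1}(b) the evaluations of $\mu$ on $f$ and of $\widehat{\mu}$ on $\widecheck{f}$ can be performed through $\iota\mu$ and $\iota\widehat{\mu}$. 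The identity $\mu(f)=\widehat{\mu}(\widecheck{f})$ then follows from the chain $\iota\mu(f) = \iota\mu(\widehat{\widecheck{f}}) = \reallywidehat{\iota\mu}(\widecheck{f}) = \iota\widehat{\mu}(\widecheck{f})$, again using Theorem~\ref{FRSS-t1a}.

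The main obstacle I anticipate is notational bookkeeping: verifying the compatibility identities $\iota(\mu^\dagger)=(\iota\mu)^\dagger$ and $\reallywidehat{\iota\mu} = \iota\widehat{\mu}$, and checking that the Fourier conventions mesh correctly so that the distributional Fourier inversion $\widehat{\widehat{\sigma}}=\sigma^\dagger$ translates back to $\widehat{\widehat{\mu}}=\mu^\dagger$ at the level of AG transformable measures. Once these identifications are verified, both equivalences reduce to applying Theorem~\ref{FRSS-t1a} twice and unwinding the definitions.
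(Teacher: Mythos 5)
Your proposal is correct, but it is organized differently from the paper's proof. The paper runs the cycle (a)$\Rightarrow$(b)$\Rightarrow$(c)$\Rightarrow$(a): the step (a)$\Rightarrow$(b) is dispatched in one line (the second AG transform is translation bounded by definition), (b)$\Rightarrow$(c) extends the defining identity from $\KLsp(G)$ to $\So(G)$ by denseness, and the only substantial step is (c)$\Rightarrow$(a), done by a direct test-function computation: for $\varphi\in\KLsp(\widehat G)$ one sets $f=\widehat{\varphi}\in\So(G)$ and finds $\nu(\varphi)=\nu(\widecheck{f})=\mu(f)=\mu^\dagger(\widecheck{\varphi})$, whence $\nu\in\cM_T(\widehat G)$ by Lemma~\ref{FRSS-lem:AGtrans} since $\mu^\dagger\in\cM^\infty(G)$. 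You instead prove (a)$\Leftrightarrow$(b) by applying Theorem~\ref{FRSS-t1a} a second time, to $\widehat\mu$, and converting $\reallywidehat{\iota\widehat\mu}$ into $\iota(\mu^\dagger)$ via the distributional inversion $\widehat{\widehat\sigma}=\sigma^\dagger$ of Theorem~\ref{FRSS-thm:Fdist}(a); this replaces the paper's explicit calculation by an appeal to the inversion formula and yields $\widehat{\widehat\mu}=\mu^\dagger$ as a byproduct, while your (b)$\Rightarrow$(c) uses Corollary~\ref{FRSS-cor1}(b) and the identification $\iota\widehat\mu=\reallywidehat{\iota\mu}$ in place of an explicit denseness argument. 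Both routes rest on the same ingredients (Theorem~\ref{FRSS-t1a}, Lemma~\ref{FRSS-lem:AGtrans}, Fourier inversion); yours is more systematic in pushing every step to the mild-distribution level, the paper's is shorter. The two points you rightly flag do need to be spelled out: the compatibility $\iota(\mu^\dagger)=(\iota\mu)^\dagger$, and the fact that injectivity of $\iota$ on $\cM_{m}(G)$ (Remark~\ref{FRSS-rem:dense}) together with $\mu^\dagger\in\cM_{m}(G)$ is what lets you pass from $\iota(\mu^\dagger)\in\iota(\cM^\infty(G))$ to $\mu^\dagger\in\cM^\infty(G)$.
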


\begin{remark}
Note that (c) does not contain an integrability condition such as in the definition~\ref{FRSS-def:AGtrans} of AG transformability. In fact, it suffices to check the condition in (c) on a dense subset in $(\So(G), \|\cdot \|_{\So(g)})$ such as $\KLsp(G)$, $\LKsp(G)$, or as the Schwartz--Bruhat functions on $G$, compare Lemma~\ref{FRSS-lem:dtcrit}. For calculations on doubly transformable measures, this means that one may freely switch between the AG transform, the mild distribution transform, and the tempered distribution transform, compare Proposition~\ref{FRSS-prop:tddt}.
\end{remark}

\begin{beweis}
\noindent (a)$\Rightarrow$(b): holds as the AG transform satisfies $\widehat\mu\in \cM^\infty(\widehat G)$ by definition.

\smallskip

\noindent (b)$\Rightarrow$(c):   Obviously, $\nu=\widehat\mu\in \cM^\infty(\widehat G)$ satisfies the equation in (c) for all $f\in \KLsp(G)$. We have $\mu\in\cM_{m}(G)$ and $\widehat \mu \in\cM_{m}(\widehat G)$ by Corollary~\ref{FRSS-cor1}, as $\mu \in \cM^\infty(G)$ and $\widehat \mu\in \cM^\infty(\widehat G)$ by assumption. By denseness, the equation holds for all $f\in \So(G)$, compare Remark~\ref{FRSS-rem:dense}.

\smallskip

\noindent (c)$\Rightarrow$(a): We have $\mu\in \cM_T(G)$ by assumption.
Next consider any $\varphi \in \KLsp(\widehat{G}) \subseteq \So(\widehat G)$. Defining $f=\widehat{\varphi} \in \So(G)$, we have by assumption that
$\nu(\varphi)= \nu(\widecheck{f})=\mu(f)= \mu^\dagger(\widecheck{\varphi})$.
As $\mu^\dagger\in \cM^\infty(G)$ by assumption, we conclude $\nu\in\cM_T(\widehat G)$ from Lemma~\ref{FRSS-lem:AGtrans}. This proves (a).
\end{beweis}

As a consequence of Theorem~\ref{FRSS-t1b}, we obtain the following useful criterion for double transformability of a measure.

\begin{lemma}\label{FRSS-lem:dtcrit} Let $\mu \in \cM^\infty(G)$ and $\nu \in \cM^\infty(\widehat{G})$. Then $\mu\in \cM_{dT}(G)$ and $\nu = \widehat\mu$ if and only if $A_{\mu,\nu}= \{ f \in \So(G) : \mu(f)= \nu(\widecheck{f}) \}$ is a dense subspace of $(\So(G), \|\cdot\|_{\So(G)})$.
\end{lemma}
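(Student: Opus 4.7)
The proof has two directions. The forward direction is almost immediate: if $\mu\in \cM_{dT}(G)$ with $\nu=\widehat{\mu}$, then Theorem~\ref{FRSS-t1b}(c) gives $\mu(f)=\nu(\widecheck{f})$ for every $f\in \So(G)$, so $A_{\mu,\nu}=\So(G)$ is trivially dense in itself. All the work is in the converse.

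For the converse, my plan is to reduce the statement to Theorem~\ref{FRSS-t1b}(c) by showing that the equality $\mu(f)=\nu(\widecheck{f})$ extends from the dense subspace $A_{\mu,\nu}$ to all of $\So(G)$ by continuity. To this end, I would exhibit both sides as bounded linear functionals on $(\So(G),\|\cdot\|_{\So(G)})$. The functional $f\mapsto \mu(f)$ is continuous on $\So(G)$ because $\mu\in \cM^\infty(G)$ acts as a mild distribution via Corollary~\ref{FRSS-cor1}. For $f\mapsto \nu(\widecheck{f})$, I would invoke Theorem~\ref{FRSS-thm:FFT} together with Theorem~\ref{FRSS-th:bij} to note that the inverse Fourier transform is a (bi)continuous map $\So(G)\to \So(\widehat G)$, and then apply Corollary~\ref{FRSS-cor1} to $\nu\in \cM^\infty(\widehat G)$ to see that $\varphi\mapsto \nu(\varphi)$ is continuous on $\So(\widehat G)$. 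Composing these two continuous maps yields continuity of $f\mapsto \nu(\widecheck{f})$ on $\So(G)$.

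Having two continuous linear functionals on $(\So(G),\|\cdot\|_{\So(G)})$ that agree on the dense subspace $A_{\mu,\nu}$, standard functional analysis gives equality on all of $\So(G)$. This is precisely condition (c) of Theorem~\ref{FRSS-t1b}, which then delivers $\mu\in \cM_{dT}(G)$ together with $\nu=\widehat{\mu}$, completing the proof.

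There is no real obstacle here; the whole argument is a continuity-plus-density extension, with the only nontrivial input being the standing Fourier-invariance of $\So(G)$ and the characterisation of $\cM^\infty$ as acting continuously on $\So$. The mildest point of care is to verify that one genuinely has continuity of $f\mapsto \widecheck{f}$ from $(\So(G),\|\cdot\|_{\So(G)})$ into $(\So(\widehat G),\|\cdot\|_{\So(\widehat G)})$, for which the isometry statement in Theorem~\ref{FRSS-th:bij} (with a suitable choice of window $g$ and its transform $\widehat{g}$) is the cleanest justification.
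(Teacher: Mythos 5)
Your proof is correct and follows essentially the same route as the paper: the paper likewise notes that $\iota\mu$ and the mild distribution $f\mapsto\nu(\widecheck{f})$ are continuous functionals on $\So(G)$ agreeing on the dense subspace $A_{\mu,\nu}$, hence equal, and then invokes Theorem~\ref{FRSS-t1b}(c) via Corollary~\ref{FRSS-cor1}(b). Your extra care about the continuity of $f\mapsto\widecheck{f}$ on $\So(G)$ is exactly what the paper packages into the statement that the Fourier transform of a mild distribution is again a mild distribution.
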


\begin{beweis}
$\Rightarrow$: This is obvious as $A_{\mu,\nu}=\So(G)$ by assumption.

\smallskip

\noindent $\Leftarrow$: We have $\mu\in \cM_{m}(G)$ and $\nu\in \cM_{m}(\widehat G)$ by assumption. As the mild distributions $\iota\mu\in \SOp(G)$ and $\reallywidehat{\iota\nu}\in \SOp(G)$ agree on $A_{\mu,\nu}$, which is dense in $\So(G)$ by assumption, we have $\iota\mu=\reallywidehat{\iota \nu}\in \SOp(G)$. Noting Corollary~\ref{FRSS-cor1} (b), we infer $\mu\in \cM_{dT}(G)$ from Theorem~\ref{FRSS-t1b} (c).
\end{beweis}

The above criterion allows to characterise double AG transformability in terms of tempered distributions.

\begin{proposition}\label{FRSS-prop:tddt} Let $\mu\in \cM^\infty(G)$. Then $\mu\in  \cM_{dT}(G)$  if and only if its Fourier transform $\widehat \mu$ as a tempered distribution is a translation bounded measure.
\end{proposition}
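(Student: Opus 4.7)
The plan is to reduce the claim to Theorems~\ref{FRSS-t1a} and~\ref{FRSS-t1b} via a density argument on Schwartz--Bruhat functions. The key observation is that since $\mu\in\cM^\infty(G)\subseteq \cM_{m}(G)$ by Corollary~\ref{FRSS-cor1}, $\mu$ defines a mild distribution $\iota\mu\in\SOp(G)$, which in turn defines a tempered (Schwartz--Bruhat) distribution by restriction to $\Scsp(G)\subseteq \So(G)$, compare Remark~\ref{FRSS-rem:sbmdis} and Remark~\ref{FRSS-rem:fce}. Under this identification, the tempered distribution Fourier transform of $\mu$ coincides with the restriction of the mild distribution transform $\reallywidehat{\iota\mu}\in \SOp(\widehat G)$ to $\Scsp(\widehat G)$, as both satisfy $\tau(\varphi)=\mu(\widehat\varphi)$ for $\varphi\in\Scsp(\widehat G)$.

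For the forward direction, assume $\mu\in\cM_{dT}(G)$. Then Theorem~\ref{FRSS-t1b} gives $\widehat\mu\in\cM^\infty(\widehat G)$, and Theorem~\ref{FRSS-t1a} yields $\reallywidehat{\iota\mu}=\iota\widehat\mu$ in $\SOp(\widehat G)$. Restricting to $\Scsp(\widehat G)$ identifies the tempered distribution Fourier transform of $\mu$ with the translation bounded measure $\widehat\mu$.

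For the converse, assume the tempered distribution Fourier transform of $\mu$ is a translation bounded measure $\nu\in \cM^\infty(\widehat G)$. Then $\iota\nu\in\SOp(\widehat G)$, and by definition of the tempered Fourier transform we have $(\iota\nu)(\varphi)=\mu(\widehat\varphi)=\reallywidehat{\iota\mu}(\varphi)$ for every $\varphi\in\Scsp(\widehat G)$. Since $\Scsp(\widehat G)$ is dense in $(\So(\widehat G),\|\cdot\|_{\So})$ by Remark~\ref{FRSS-rem:fce} and both sides are continuous linear functionals on $\So(\widehat G)$, the identity extends: $\iota\nu=\reallywidehat{\iota\mu}$ in $\SOp(\widehat G)$. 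Theorem~\ref{FRSS-t1a} then yields $\mu\in\cM_T(G)$ with $\widehat\mu=\nu$, and combining with $\mu\in\cM^\infty(G)$, Theorem~\ref{FRSS-t1b} (b)$\Rightarrow$(a) gives $\mu\in\cM_{dT}(G)$.

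The main obstacle, which is really a bookkeeping matter rather than a genuine difficulty, is ensuring that the \emph{three} notions of Fourier transform in play — tempered distributional, mild distributional, and AG measure-theoretic — are compatible on the overlap of their domains. Once one checks that all three are defined by the same duality formula on $\Scsp(\widehat G)\subseteq\KLsp(\widehat G)\cap\So(\widehat G)$ and invokes denseness of $\Scsp$ in $\So$, the proof reduces to a direct application of Theorems~\ref{FRSS-t1a} and~\ref{FRSS-t1b}.
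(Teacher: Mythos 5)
Your proposal is correct and follows essentially the same route as the paper: both directions reduce to the compatibility of the tempered, mild, and AG transforms on the Schwartz--Bruhat functions, denseness of $\Scsp$ in $\So$, and Theorems~\ref{FRSS-t1a} and~\ref{FRSS-t1b}. The only cosmetic difference is that the paper packages your density argument into Lemma~\ref{FRSS-lem:dtcrit} (testing $\mu(f)=\nu(\widecheck f)$ on a dense subspace of $\So(G)$), whereas you run the same argument directly on the level of the identity $\iota\nu=\reallywidehat{\iota\mu}$ in $\SOp(\widehat G)$.
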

\begin{beweis}
``$\Rightarrow$'': As $\widehat\mu \in \cM_T(\widehat G)$, we have $\widehat\mu\in \cM_{m}(\widehat G)$. In particular $\iota\widehat\mu\in \SOp(\widehat G)$ is a tempered distribution.  Translation boundedness of $\widehat \mu$ is a consequence of AG transformability.

\smallskip

\noindent ``$\Leftarrow$'': We have $\mu\in\cM^\infty(G)$ and $\widehat\mu\in \cM^\infty(\widehat G)$ by assumption. Moreover $A_{\mu,\widehat\mu}$ in Lemma~\ref{FRSS-lem:dtcrit} contains all Schwartz--Bruhat functions by assumption. By Remark~\ref{FRSS-rem:BSdense}, double AG transformability follows from Lemma~\ref{FRSS-lem:dtcrit}.
\end{beweis}

\subsection{Fourier--Bohr coefficients}

Assume that $\mu\in\cM(G)$ is twice AG transformable. Then the discrete part of its transform $\widehat\mu\in \cM(\widehat G)$ at $\chi$ can be computed as average of the measure $\overline{\chi}\mu$. For ease of presentation, we restrict here to $\sigma$-compact $G$. In that situation, averaging often uses van Hove sequences $(A_n)$ as introduced by Schlottmann~\cite{FRSS-Martin2}. These are sequences of compact subsets of $G$ of positive Haar measure such that $m_G(\partial^K A_n)=o(m_G(A_n))$ as $n\to\infty$, where $K\subseteq G$ is an arbitrary compact set. Here the van Hove boundary $\partial^K A$ is defined by
\begin{displaymath}
\partial^K A = ((K+A) \cap \overline{A^c}) \cup ((-K+\overline{A^c})\cap A)\ .
\end{displaymath}
Van Hove sequences are tailored for the estimate $|\mu(\partial^K A_n)|=o(m_B(A_n))$ for any translation bounded measure $\mu$.  In euclidean space, any sequence of closed $n$-balls constitutes a van Hove sequence. Averaging on general LCA groups can be done by replacing van Hove sequences by van Hove nets \cite{FRSS-PRS22}. 

\smallskip

Assume that $G$ admits a van Hove sequence $\cA=(A_n)_{n\in\NN}$. If for a measure $\mu\in \cM(G)$ the limit
\begin{equation}\label{FRSS-eq:FBC}
a_{\chi}^\cA(\mu) = \lim_{n\to\infty} \frac{1}{m_G(A_n)} \int_{A_n} \overline{\chi(t)}\,  \dd \mu(t)
\end{equation}
exists, it is called the \textit{Fourier--Bohr coefficient}\index{Fourier--Bohr~coefficient} of $\mu$ at $\chi$ along $\cA$.

\begin{proposition}\label{FRSS-prop:FBC}
 Let $\mu\in \cM_{dT}(G)$. Assume that $G$ admits a van Hove sequence $\cA$. Then $
    \widehat\mu(\{ \chi \}) = a_{\chi}^\cA(\mu)$ for all $\chi \in \widehat G$. In particular, the limits are uniform in translates of $\cA$ and independent of the choice of $\cA$.\qed
\end{proposition}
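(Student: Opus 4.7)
The plan is to reduce to $\chi=0$ and exploit the identity $\mu(f)=\widehat{\mu}(\widecheck{f})$ from Theorem~\ref{FRSS-t1b} for a well-chosen approximation of $1_{A_n}$ in $\So(G)$. For the reduction, I note that $\nu:=\overline{\chi}\mu$ lies in $\cM_{dT}(G)$ with $\widehat{\nu}=T_{-\chi}\widehat{\mu}$: indeed $\widecheck{\overline{\chi}f}=T_\chi\widecheck{f}$ for $f\in\KLsp(G)$, so AG transformability of $\mu$ gives $\nu(f)=\mu(\overline{\chi}f)=\widehat{\mu}(T_\chi\widecheck{f})=(T_{-\chi}\widehat{\mu})(\widecheck{f})$, and $T_{-\chi}\widehat{\mu}$ is translation bounded because $\widehat{\mu}$ is. Hence $\widehat{\nu}(\{0\})=\widehat{\mu}(\{\chi\})$ and $a_0^{\cA}(\nu)=a_\chi^{\cA}(\mu)$, and it suffices to treat $\chi=0$.

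Fix a non-negative $h\in\KLsp(G)$ with symmetric compact support $K$ and $\int_G h\,\dd m_G=1$, and set $\phi_n:=1_{A_n}*h$. Since $1_{A_n}\in\Lisp(G)$ and $\So(G)$ is an $\Lisp$-convolution ideal by Theorem~\ref{FRSS-thm:S0}(f), we have $\phi_n\in\So(G)$. A direct computation shows that $\phi_n=1$ on $\{x:x+K\subseteq A_n\}$, that $\phi_n=0$ outside $A_n+K$, and that $0\le\phi_n\le 1$ throughout, so $\{\phi_n\ne 1_{A_n}\}\subseteq\partial^K A_n$. Translation boundedness of $\mu$ combined with a standard covering argument then yields $|\mu(\phi_n)-\mu(A_n)|\le|\mu|(\partial^K A_n)=O(m_G(\partial^K A_n))$, which is $o(m_G(A_n))$ by the van Hove property.

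On the Fourier side, $\widecheck{\phi_n}(\eta)=\widecheck{1_{A_n}}(\eta)\cdot\widecheck{h}(\eta)$ and $\widecheck{\phi_n}(0)=m_G(A_n)$, so Theorem~\ref{FRSS-t1b} gives
\begin{displaymath}
\frac{\mu(\phi_n)}{m_G(A_n)}=\widehat{\mu}(\{0\})+\int_{\widehat{G}\setminus\{0\}}\frac{\widecheck{1_{A_n}}(\eta)}{m_G(A_n)}\,\widecheck{h}(\eta)\,\dd\widehat{\mu}(\eta).
\end{displaymath}
The integrand is dominated by $|\widecheck{h}(\eta)|$, which belongs to $\Lisp(|\widehat{\mu}|)$ by Theorem~\ref{FRSS-thm:tbW} since $\widecheck{h}\in\So(\widehat{G})\subseteq\Wsp(\widehat{G})$ and $\widehat{\mu}\in\cM^\infty(\widehat{G})$. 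For the pointwise decay at $\eta\ne 0$, I choose $y\in G$ with $\eta(y)\ne 1$; using $\widecheck{1_{A_n+y}}(\eta)=\eta(y)\widecheck{1_{A_n}}(\eta)$ I obtain $|\eta(y)-1|\cdot|\widecheck{1_{A_n}}(\eta)|\le m_G(A_n\Delta(A_n+y))$, and the right-hand side is controlled by a van Hove boundary measure associated to a compact set containing $y$, hence is $o(m_G(A_n))$. Dominated convergence now eliminates the remainder and yields $a_0^{\cA}(\mu)=\widehat{\mu}(\{0\})$.

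For the uniformity in translates, replacing $A_n$ by $A_n+s_n$ only multiplies $\widecheck{1_{A_n}}$ by a unimodular character and leaves the boundary estimate translation invariant, so both error bounds above are uniform in $s_n$; independence of $\cA$ is then automatic. The main obstacle is isolating the atom of $\widehat{\mu}$ at $0$ on the Fourier side, which requires both the pointwise van Hove decay of $\widecheck{1_{A_n}}/m_G(A_n)$ at arbitrary non-trivial characters and the $\Lisp(|\widehat{\mu}|)$-domination needed to justify interchanging limit and integral.
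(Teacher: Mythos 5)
Your proposal is correct, and it reconstructs --- in the $\So(G)$/mild-distribution framework --- essentially the Hof-type argument that the paper delegates to its references (Hof's Thm.~3.2 and \cite[Prop.~3.14]{FRSS-CRS}): reduce to $\chi=0$ via $\overline{\chi}\mu\in\cM_{dT}(G)$, replace $1_{A_n}$ by $1_{A_n}*h\in\So(G)$, invoke Theorem~\ref{FRSS-t1b}(c) to pass to $\widehat{\mu}$, split off the atom at $0$, and remove the remainder by dominated convergence using the van Hove decay of $\widecheck{1_{A_n}}/m_G(A_n)$ at nontrivial characters. The only step stated more tersely than it deserves is the uniformity in translates, where one should note that the bound $|\mu|\bigl(s+\partial^K A_n\bigr)=o(m_G(A_n))$ uniformly in $s$ requires the covering form of the van Hove boundary estimate for translation bounded measures, while the Fourier-side remainder is already bounded independently of $s$ by $\int|\widecheck{1_{A_n}}|\,|\widecheck{h}|\,\dd|\widehat{\mu}|/m_G(A_n)$.
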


For $G=\RR^d$ and for a sequence of increasing cubes $\cA$, the above result is proved in \cite[Thm.~3.2]{FRSS-Hof1} using tempered distributions. The general case is proved in \cite[Prop.~3.14]{FRSS-CRS} by arguments similar to \cite{FRSS-Hof1}. It is also proved in \cite[Thm.~4.10.14]{FRSS-MoSt} by an argument based on almost periodicity.  In fact, the above result already follows from \cite[Thm.~11.3]{FRSS-ARMA}.

Recall that any twice transformable measure is weakly almost periodic \cite[Thm.~4.10.4]{FRSS-MoSt}.
The above result continues to hold for measures $\mu\in\cM^\infty(G)$ that are weakly almost periodic, see Def.~2.18 and Lemma 2.16 in \cite{FRSS-LS2019}.  In that setting, the Fourier--Bohr coefficient is identified as the (abstract) mean of the measure $\overline{\chi}\mu$, which also exists if $G$ does not admit a van Hove sequence, compare e.g.~\cite[Lem.~4.10.6]{FRSS-MoSt}. Reference \cite{FRSS-LS2019} also establishes a link to the dynamical systems approach to Fourier--Bohr coefficients from \cite{FRSS-Lenz2009}.

\section{Diffraction theory revisited}\label{FRSS-sec:dtr}

Here we review how to associate to a translation bounded measure its diffraction measure. For ease of presentation, we restrict to second countable LCA groups. In that case, we can average translation bounded measures on van Hove sequences. An extension to arbitrary LCA groups using van Hove nets \cite{FRSS-PRS22} is straightforward. Whereas this setting is well studied \cite{FRSS-Hof1, FRSS-BaakeLenz2004, FRSS-TAO1, FRSS-RS17, FRSS-LSS2020a}, our arguments treat translation bounded measures systematically as elements of the Wiener algebra dual. This streamlines the previous approaches and also indicates extensions to the mild distribution setting and to the tempered distribution setting \cite{FRSS-ST16}.  

\subsection{Wiener diagram for a finite sample}\label{FRSS-sec:Wienerfs}

 The physicist's recipe to describe kinematic diffraction of a piece of matter starts with a finite measure $\omega\in\cM(G)$, which might be a Dirac comb of the finitely many positions of the atoms or an absolutely continuous measure describing the mass distribution. Its Fourier--Stieltjes transform $\reallywidehat{\omega}\in B(G)$ is a continuous function, whose squared modulus is the intensity of the diffraction at the given position in reciprocal space $\widehat G$. For sufficiently large finite subsets of a \textit{lattice}\index{lattice!lattice} $L$, i.e., of a Delone subgroup of $G$, its diffraction intensity function peaks at the \textit{dual lattice}\index{lattice!dual} $L^\circ$, the annihilator of $L$ in $\widehat G$.
 The above approach is summarised in the lower left part of a so-called Wiener diagram.
\begin{displaymath}
\begin{CD}
\omega @> * >> \omega *\widetilde{\omega}\\
@V{\widehat {} \, }VV @VV{\, \widehat {} }V\\
\reallywidehat{\omega} @>{|\cdot|^2}>>  \reallywidehat{\omega *\widetilde{\omega}}
\end{CD}
\end{displaymath}
The Wiener diagram displays that there is an alternative way to compute the diffraction via the upper right part of the Wiener diagram, by the Fourier transform of the autocorrelation of $\omega$, i.e., of the convolution of $\omega$ with its twisted version $\widetilde{\omega}$, where $\widetilde{\omega}(f)=\overline{\omega(\widetilde{f})}$ and $\widetilde{f}(x)=\overline{f(-x)}$. When working with a mass distribution, the associated autocorrelation function is sometimes called the Patterson function in the physics literature.

Mathematical diffraction analysis often uses an infinite idealisation of the given finite sample. The diffraction of the infinite idealisation then approximates the diffraction of the finite sample, by continuity of the Fourier transform.  However when dealing with unbounded measures in the above Wiener diagram, the operations of convolving, transforming and squaring may no longer be well defined. Thus the above approach needs an appropriate modification. 

\subsection{Autocorrelation measure}\label{FRSS-sec:dmsd}

The infinite idealisation is usually described by a translation bounded measure $\omega\in \Wsp'(G)$, see e.g.~\cite{FRSS-Hof1, FRSS-BaakeLenz2004, FRSS-TAO1, FRSS-CRS}. In particular, any uniformly discrete point set gives rise to a translation bounded measure via its associated Dirac comb. In euclidean space, one is typically interested in infinite point sets where the number of points in a ball grows like the volume of the ball. The diffraction measure associated to $\omega$ is then defined through properly normalised finite sample approximations.

Take a van Hove sequence  $\cA=(A_n)_{n\in\mathbb N}$ and consider the finite sample measures $\omega_n=\omega|_{A_n}$.
These give rise to positive definite finite autocorrelation measures
\begin{displaymath}
\gamma_n= \frac{1}{m_G(A_n)} \omega_n*\widetilde{\omega_n} \ .
\end{displaymath}
Normalising by $m_G(A_n)$ ensures that the sequence $(\gamma_n)_{n\in \NN}$ is bounded in $(\Wsp'(G), \op{\cdot}_\Wsp)$. Then the Banach-Alaoglu theorem ensures that a subsequence of $(\gamma_n)_{n\in \NN}$ converges in the weak*-topology. We will arrive at this result in Proposition~\ref{FRSS-propAC} and first give an estimate on the growth of a van Hove sequence in the $W$-norm.

\begin{lemma}\label{FRSS-lem1} Let $\cA=(A_n)_{n\in \NN}$ be a van Hove sequence in $G$ and consider the space $(\Wsp(G), \|\cdot\|_\Wsp)$. Let $\varphi\in \Cc(G)$ be such that $0\le \varphi\le 1$ and let $K=\supp(\varphi)$. Then  there exists a finite constant $C$ such that
\[
\frac{\|1_{A_n-K}* \varphi\|_{\Wsp}}{m_G(A_n)} \leq  C  
\]
for every $n\in \NN$.
\end{lemma}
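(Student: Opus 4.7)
The plan rests on two basic facts: convolution by an $\Lisp$-function is bounded on the Wiener algebra, and the van Hove property controls $m_G(A_n - K)$ in terms of $m_G(A_n)$.

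First I would view $1_{A_n - K}$ as an element of $\Lisp(G)$ with $\|1_{A_n - K}\|_1 = m_G(A_n - K)$, and note that $\varphi\in \Cc(G)\subseteq \Wsp(G)$ by Theorem~\ref{FRSS-thm:W}(d). The proof of Theorem~\ref{FRSS-thm:W}(f), showing that $\Wsp(G)$ is a convolution ideal in $\Lisp(G)$, furnishes an explicit bound of the form $\|f * g\|_{\Wsp,\Phi} \le C_1 \|f\|_1 \|g\|_{\Wsp,\Phi}$ for any fixed BUPU $\Phi$, with $C_1$ depending only on $\Phi$. Applying this with $f = 1_{A_n - K}$ and $g = T_z\varphi$, then taking the supremum over $z \in G$ and invoking the identity $T_z(f*g) = f * T_z g$ together with translation invariance of the equivalent norm $\|\cdot\|_{\Wsp}$ of Lemma~\ref{FRSS-lem:hnW}, I would derive
\begin{displaymath}
\|1_{A_n - K} * \varphi\|_{\Wsp} \le C_1' \cdot m_G(A_n - K) \cdot \|\varphi\|_{\Wsp}
\end{displaymath}
for a constant $C_1'$ depending only on the BUPU used to define $\|\cdot\|_{\Wsp}$.

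Next I would bound $m_G(A_n - K)/m_G(A_n)$ using the van Hove property. Applying the definition $\partial^{K'} A = ((K'+A) \cap \overline{A^c}) \cup ((-K'+\overline{A^c})\cap A)$ with $K' = -K$ yields $\partial^{-K} A_n \supseteq (A_n - K) \cap \overline{A_n^c}$, and hence
\begin{displaymath}
m_G\bigl((A_n - K) \setminus A_n\bigr) \le m_G(\partial^{-K} A_n) = o(m_G(A_n)) \ .
\end{displaymath}
Therefore $m_G(A_n - K)/m_G(A_n) \to 1$. Since $A_n - K$ is compact and $m_G(A_n)>0$ for every $n$, each individual ratio is finite, so the supremum $C_2 := \sup_n m_G(A_n-K)/m_G(A_n)$ is finite. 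Combining with the previous display gives $\|1_{A_n - K} * \varphi\|_{\Wsp}/m_G(A_n) \le C$ with $C = C_1' C_2 \|\varphi\|_{\Wsp}$, which is the claim.

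I do not anticipate a serious obstacle; the only point requiring care is to arrange the convolution estimate in terms of the translation invariant norm $\|\cdot\|_{\Wsp}$ rather than the BUPU-dependent norm $\|\cdot\|_{\Wsp,\Phi}$, so that division by $m_G(A_n)$ yields a bound uniform in $n$. This bookkeeping is handled cleanly by the shift identity for convolution together with Lemma~\ref{FRSS-lem:hnW}; as an alternative one could invoke Lemma~\ref{FRSS-lem:fw} with the finite Radon measure $1_{A_n-K}\cdot m_G$ in place of the $\Lisp$-function.
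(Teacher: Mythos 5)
Your argument is correct, and it takes a genuinely different route from the paper. The paper's proof works directly with a BUPU $\Phi$ centered at a Delone set $\Lambda$: it observes that $1_{A_n-K}*\varphi$ is uniformly bounded and supported in $A_n-K+K$, so that $\|1_{A_n-K}*\varphi\|_{\Wsp,\Phi}$ is at most a constant times the number of BUPU pieces meeting $A_n-K+K$, which is comparable to $\card(\Lambda\cap(A_n-K+K))$; the van Hove property together with uniform discreteness of $\Lambda$ then bounds this count by $C\,m_G(A_n)$. You instead reuse the already established module estimate $\|f*g\|_{\Wsp,\Phi}\le C_1\|f\|_1\|g\|_{\Wsp,\Phi}$ from the proof of Theorem~\ref{FRSS-thm:W}(f) (equivalently, Lemma~\ref{FRSS-lem:fw} applied to $1_{A_n-K}\,m_G$), reducing everything to the Haar-measure estimate $m_G(A_n-K)\le m_G(A_n)+m_G(\partial^{-K}A_n)=(1+o(1))\,m_G(A_n)$. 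Your route buys economy: it avoids redoing a support-counting argument and uses the van Hove condition only at the level of Haar measure, whereas the paper's point-counting version $\card(\Lambda\cap(A_n-K+K))=O(m_G(A_n))$ itself needs a small auxiliary argument relating counts over a uniformly discrete set to Haar measure. The paper's version is more self-contained at the BUPU level and geometrically more explicit. One cosmetic remark: your careful handling of translation invariance via $T_z(f*g)=f*T_zg$ is fine but unnecessary, since $\|\cdot\|_{\Wsp}$ and $\|\cdot\|_{\Wsp,\Phi}$ are equivalent with constants independent of $n$ by Lemma~\ref{FRSS-lem:hnW}, so one may pass between them freely; also your constant depends on $\varphi$ through $\|\varphi\|_{\Wsp}$ rather than $\|\varphi\|_1$, which is harmless for the statement as posed.
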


\begin{beweis}
Abbreviate $K=\supp(\varphi)$. Take a BUPU $\Phi=(\phi_i)_{i\in I}$ centered at a Delone set $\Lambda\subseteq G$ and abbreviate $I_n=\{i\in I: \supp(\phi_i) \cap (A_n-K+K) \neq \varnothing\}$ for $n\in\NN$. We then have
\begin{displaymath}
\begin{split}
\|1_{A_n}*\varphi\|_{\Wsp,\Phi}&
=\sum_{i\in I_n} \|(1_{A_n}*\varphi)\cdot\phi_i\|_{\infty}
\leq \sum_{i\in I_n} \|\phi_i\|_{\infty}\\
&\le  M\cdot  \card(\Lambda \cap (A_n-K+K)) \ .
 \end{split}
\end{displaymath}
Since $(A_n)$ is a van Hove sequence, and $\Lambda$ is uniformly discrete, we have 
\[
\limsup_n \frac{\card(\Lambda \cap (A_n-K+K))}{m_G(A_n)} < \infty \,.
\]
The claim is now obvious.
\end{beweis}

Boundedness of the sequence $(\gamma_n)_{n\in \NN}$ is expressed in the following result.
\begin{lemma} 
Let $\cA=(A_n)_{n\in \NN}$ be any van Hove sequence in $G$ and consider any $\omega \in \Wsp'(G)$. Then there exists  some  weak*-compact set  $K \subseteq \Wsp'(G)$, such that   $\gamma_n\in K$ for all $n\in \NN$, and such that the weak*-topology  is metrisable on $K$.
\end{lemma}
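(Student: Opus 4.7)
The plan is to invoke the Banach--Alaoglu theorem once a uniform bound $\sup_n\op{\gamma_n}_\Wsp<\infty$ is established. By Proposition~\ref{FRSS-prop:equiv} this reduces to a uniform bound $\sup_n\|\gamma_n\|_U<\infty$ for some fixed precompact $U\subseteq G$ with nonempty interior. For each $n$, $\omega_n=\omega|_{A_n}$ is a finite measure because $\omega$ is translation bounded and $A_n$ is compact, hence both $\omega_n$ and $\widetilde{\omega_n}$ are finite and $|\omega_n*\widetilde{\omega_n}|\le|\omega_n|*|\widetilde{\omega_n}|$. For compact $U$ and $x\in G$, Fubini yields
\[
\bigl(|\omega_n|*|\widetilde{\omega_n}|\bigr)(x+U)=\int|\omega_n|(x+U-v)\,\dd|\widetilde{\omega_n}|(v)\le\|\omega\|_U\cdot|\omega|(A_n),
\]
so $\|\gamma_n\|_U\le\|\omega\|_U\cdot|\omega|(A_n)/m_G(A_n)$.

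Next I would show $\sup_n|\omega|(A_n)/m_G(A_n)<\infty$. Fixing a compact zero neighbourhood $V$, a Fubini computation gives $m_G(V)\cdot|\omega|(A_n)\le\|\omega\|_V\cdot m_G(A_n-V)$, and since $(A_n-V)\setminus A_n\subseteq\partial^{-V}A_n$ the van Hove property yields $m_G(A_n-V)=m_G(A_n)(1+o(1))$. Combining these bounds produces a finite constant $C$ with $\op{\gamma_n}_\Wsp\le C$ for all $n$. Setting
\[
K=\bigl\{\mu\in\Wsp'(G):\op{\mu}_\Wsp\le C\bigr\},
\]
the Banach--Alaoglu theorem guarantees that $K$ is weak*-compact, and by construction $\gamma_n\in K$ for every $n$.

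To conclude I would verify that the weak*-topology on $K$ is metrisable, using the standard fact that closed balls in the dual of a separable Banach space are weak*-metrisable. Separability of $(\Wsp(G),\|\cdot\|_\Wsp)$ in turn reduces to second countability of $G$: by Theorem~\ref{FRSS-thm:W}(d), $\Cc(G)$ is $\Wsp$-dense in $\Wsp(G)$, and if $L_m\nearrow G$ is a countable exhaustion by compact sets, then a countable supremum-norm dense subset of each $\Csp_{L_m}(G)$ remains $\Wsp$-dense thanks to the bound $\|f\|_\Wsp\le C_{L_m}\|f\|_\infty$ for $f\in\Csp_{L_m}(G)$ established inside the proof of Theorem~\ref{FRSS-thm:W}(d).

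The main obstacle will be the bound $\|\gamma_n\|_U=O(1)$, which genuinely requires combining translation boundedness of $\omega$ (to control the local mass of $|\omega_n|*|\widetilde{\omega_n}|$ uniformly in $x$) with the van Hove property of $\cA$ (to dominate $|\omega|(A_n)$ by a constant multiple of $m_G(A_n)$). Once this norm bound is secured, the Banach--Alaoglu step and the separability argument proceed essentially mechanically.
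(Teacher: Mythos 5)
Your proposal is correct, and the overall skeleton (uniform norm bound, then Banach--Alaoglu, then metrisability via separability of $\Wsp(G)$ for second countable $G$) is the same as the paper's. Where you differ is in how the uniform bound is obtained. The paper stays entirely on the functional-analytic side: it introduces the auxiliary measures $\gamma_n'=\tfrac{1}{m_G(A_n)}\,\omega*\widetilde{\omega_n}$, writes $\omega*\widetilde{\omega_n}(f)=\omega(\omega_n*f)$, and controls $\|\omega_n*f\|_{\Wsp}$ via Lemma~\ref{FRSS-lem:fw} and the growth estimate of Lemma~\ref{FRSS-lem1}, finally passing from $\gamma_n'$ to $\gamma_n$ by $|\omega_n*\widetilde{\omega_n}(f)|\le|\omega|(|\omega_n*f|)$. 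You instead work measure-theoretically: the total variation inequality $|\omega_n*\widetilde{\omega_n}|\le|\omega_n|*|\widetilde{\omega_n}|$ and a Fubini computation give $\|\gamma_n\|_U\le\|\omega\|_U\cdot|\omega|(A_n)/m_G(A_n)$, the ratio $|\omega|(A_n)/m_G(A_n)$ is bounded by another Fubini argument together with the van Hove property, and Proposition~\ref{FRSS-prop:equiv} converts the $\|\cdot\|_U$-bound into an operator-norm bound. Both computations are sound (in particular your inequality $m_G(V)\,|\omega|(A_n)\le\|\omega\|_V\,m_G(A_n-V)$ and the inclusion $(A_n-V)\setminus A_n\subseteq\partial^{-V}A_n$ check out, and for the metrisability step you only need that closed balls in the dual of a separable space are weak*-metrisable, which your exhaustion argument for separability of $\Wsp(G)$ supports). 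Your route is more elementary and self-contained at the level of measures; the paper's route deliberately avoids total variation manipulations in favour of duality against $\Wsp(G)$, which is what lets the same argument extend verbatim to the mild/tempered distribution settings advertised at the start of Section~\ref{FRSS-sec:dtr}.
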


\begin{beweis}
Our arguments use the auxiliary measures $\gamma_n'$ given by
\begin{displaymath}
\gamma_n' = \frac{1}{m_G(A_n)} \omega * \widetilde{\omega_n} \ .
\end{displaymath}
Note $\gamma_n'\in \Wsp'(G)$ as the convolution of a compactly supported measure and a translation bounded measure is translation bounded.
Fix a BUPU $\Phi$ in $\Wsp(G)$ on some Delone set $\Lambda\subseteq G$ with norm $\|\cdot \|_{\Wsp,\Phi}$ and associated translation invariant norm $\|\cdot\|_\Wsp$. 
Let $\varphi\in\Cc(G)$ such that $0\le \varphi\le 1$ and $m_G(\varphi)=1$. Let $K=\supp(\varphi)$ and define $f_n=1_{A_n-K}*\varphi\in \Wsp(G)$. Then $f_n\ge 1_{A_n}$ by construction. For arbitrary $f \in \Wsp(G)$ we can now estimate
\begin{displaymath}
\begin{split}
| \omega &* \widetilde{\omega_n}(f)| 
=\left|\omega(  \omega_n*f)\right|
\leq 
 \op{\omega}_{\Wsp, \Phi}  \cdot \|\omega_n*f\|_{\Wsp,\Phi} \\
 &\leq  
 \op{\omega}_{\Wsp, \Phi}  \cdot  \|f\|_\Wsp \cdot |\omega_n|(G) 
 \leq  C \cdot \op{|\omega|}_{\Wsp, \Phi}  \cdot  \|f\|_{\Wsp,\Phi} \cdot |\omega|(f_n) \\
 & \leq C \cdot \op{|\omega|}_{\Wsp, \Phi}^2 \cdot  \|f\|_{\Wsp,\Phi} \cdot \|f_n\|_{\Wsp,\Phi} 
 \leq C' \cdot \op{|\omega|}_{\Wsp,\Phi}^2 \cdot  \|f\|_{\Wsp,\Phi} \cdot m_G(A_n) \ ,
\end{split}
\end{displaymath}
where we used Lemma~\ref{FRSS-lem:fw} for the second inequality, Lemma~\ref{FRSS-lem:hnW} for the third inequality, and Lemma~\ref{FRSS-lem1} for the last inequality. As $f\in \Wsp(G)$ was arbitrary, we thus have 
\[
\op{\frac{1}{m_G(A_n)} \omega * \widetilde{\omega_n}}_{\Wsp,\Phi} \leq  C'  \cdot \op{|\omega|}_{\Wsp,\Phi}^2 \ .
\]
Thus $\gamma_n'\in K$ for the set $K\subseteq \Wsp'(G)$ defined by
\[
K= \left\{ \sigma \in \Wsp'(G): \op{\sigma}_{\Wsp, \Phi} \leq C'  \cdot \op{|\omega|}_{\Wsp,\Phi}^2 \right\} \ .
\]
Note that $K$ is weak*-compact by Alaouglu's theorem. Moreover the weak*-topology is metrisable on $K$ as $\Wsp(G)$ is separable (recalling that $G$ is assumed to be second countable). In fact also $\gamma_n\in K$, which follows from the estimate on $\gamma_n'$ through
\begin{displaymath}
\begin{split}
|\omega_n * \widetilde{\omega_n}(f)| &=
\left|\omega_n(  \omega_n*f)\right|
\le |\omega_n| (|\omega_n*f|) \leq |\omega| (|\omega_n*f|) \ .
\end{split}
\end{displaymath}
\end{beweis}

We can summarize our arguments as follows.
\begin{proposition}\label{FRSS-propAC}  Let $\cA=(A_n)_{n\in \NN}$ be any van Hove sequence in $G$. Consider arbitrary $\omega \in \Wsp'(G)$ and define $\omega_n=\omega|_{A_n}$ for $n\in \NN$. Then there exists some $\gamma \in \Wsp'(G)$ and a subsequence $(n_k)$  such that
\[
\gamma= \lim_k \frac{1}{m_G(A_{n_k})} \omega_{n_k} * \widetilde{\omega_{n_k}} 
\] 
in the weak*-topology of $\Wsp'(G)$.
\qed
\end{proposition}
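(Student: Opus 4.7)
The plan is to deduce this proposition directly from the preceding lemma, which already contains the essential analytic content. That lemma gives a weak$^*$-compact set $K\subseteq\Wsp'(G)$ containing every $\gamma_n$, together with the information that the weak$^*$-topology is metrisable on $K$. So the proof reduces to extracting a convergent subsequence from a sequence in a compact metric space.

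First I would record that $\gamma_n\in K$ for all $n\in\NN$, as provided by the preceding lemma. Next I would invoke that compact metrisable spaces are sequentially compact: because the weak$^*$-topology, when restricted to $K$, is induced by a metric, every sequence in $K$ admits a subsequence converging (in that metric, hence in the weak$^*$-topology) to some element of $K$. Applied to $(\gamma_n)_{n\in\NN}$, this yields a subsequence $(\gamma_{n_k})$ and a limit $\gamma\in K\subseteq\Wsp'(G)$ such that $\gamma_{n_k}\to\gamma$ in the weak$^*$-topology, which is precisely the required statement.

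There is no serious obstacle here; the real work has already been carried out in the preceding lemma, which combined translation boundedness of $\omega$, Lemma~\ref{FRSS-lem:fw}, Lemma~\ref{FRSS-lem:hnW} and the van Hove growth estimate of Lemma~\ref{FRSS-lem1} to produce a uniform bound on $\op{\gamma_n}_{\Wsp,\Phi}$. The only point worth a brief comment in the write-up is why the weak$^*$-topology on the ball $K$ is metrisable: this uses second countability of $G$ (assumed at the start of the section), which implies separability of $\Wsp(G)$, from which metrisability on weak$^*$-bounded sets follows by a standard functional-analytic argument. With that remark in place, the conclusion is immediate from Alaoglu's theorem plus sequential compactness of compact metric spaces.
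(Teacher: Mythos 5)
Your proposal is correct and follows exactly the paper's intended route: the preceding lemma places all $\gamma_n$ in a weak*-compact set $K$ on which the weak*-topology is metrisable (via separability of $\Wsp(G)$ from second countability of $G$), and sequential compactness of compact metric spaces then yields the convergent subsequence. The paper indeed states the proposition as an immediate summary of that lemma, so nothing further is needed.
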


In the following, we thus assume that $\omega\in \Wsp'(G)$ admits an autocorrelation $\gamma_\omega\in \Wsp'(G)$ along some van Hove sequence $\cA$, i.e., we have the weak*-limit
\begin{displaymath}
\gamma_\omega= \lim_{n\to\infty} \frac{1}{m_G(A_n)} \omega_n*\widetilde{\omega_n} \ .
\end{displaymath}
We write $\gamma_\omega= \omega \circledast_{\cA} \widetilde{\omega}$ and call $\gamma_\omega$ the Eberlein convolution of $\omega$ along $\cA$, compare \cite{FRSS-LSS2020a}.

\subsection{Diffraction measure}

By construction, any autocorrelation $\gamma_\omega$ is a positive definite translation bounded measure. 
Thus its Fourier transform exists as a mild distribution and is positive. By Proposition~\ref{FRSS-prop:Feipos} we have $\reallywidehat{\gamma_\omega}\in \Wsp'(\widehat G)$, i.e., the Fourier transform is a translation bounded measure on $\widehat G$. The measure $\reallywidehat{\gamma_\omega}$  is called the diffraction measure of $\omega$ with respect to $\cA$. 

Note that the Fourier transform on the cone of positive definite translation bounded measures is vaguely continuous by \cite[Thm.~4.16]{FRSS-BF75}. Moreover, recalling $\Wsp'(G)\subseteq \SOp(G)$, we note that the Fourier transform on $(\SOp(G), \op{\cdot}_{\So})$ is weak*-continuous by Theorem~\ref{FRSS-thm:Fdist}. We thus have $\widehat{\gamma_n}(h) \to \widehat{\gamma_\omega}(h)$ for every $h\in \Cc(\widehat G)$ and every $h\in \So(\widehat G)$.
In this sense, the diffraction measure approximates the diffraction of a large finite sample $\omega_n$. 

The diffraction measure can be decomposed into a discrete part and a continuous part. We write $\reallywidehat{\gamma_\omega}=(\reallywidehat{\gamma_\omega})_{d}+(\reallywidehat{\gamma_\omega})_{c}$. Then $\omega\in \Wsp'(G)$ is called pure point diffractive along $\cA$ if $\reallywidehat{\gamma_\omega}=(\reallywidehat{\gamma_\omega})_{d}$.

\subsection{Weak topologies on translation bounded measures}\label{FRSS-sec:wttbm}

Translation bounded measures, i.e., elements of $\cM^\infty(G)=\Wsp'(G)$, may be evaluated on various test function spaces $\Tsp(G)\subseteq \Wsp(G)$ such as $\Cc(G)$, $\KLsp(G)$, $\So(G)$ or $\Wsp(G)$. One might also consider the Schwartz--Bruhat space $\Scsp(G)$ of test functions for tempered distributions or, in euclidean space, the space $\Dcsp(G)=\Cc^\infty(G)$ of test functions for distributions.
Let us consider the associated topologies $\sigma(\Wsp',\Tsp)$ of pointwise convergence on $\Wsp'(G)$ with respect to $\Tsp(G)$. For example, the topology $\sigma(\Wsp', \Wsp)$ is the weak*-topology on $\Wsp'(G)$, whereas the topology $\sigma(\Wsp', \Cc)$ is the trace topology of the weak*-topology on $\cM(G)=(\Cc(G))'$, also called the vague topology. We have the following result.

\begin{proposition}\label{FRSS-prop:top}  For $\Tsp(G)\in \{\KLsp(G), \Cc(G), \So(G), \Wsp(G)\}$ or $\Tsp(G)\in \{\Dcsp(G), \Scsp(G)\}$ consider the topology $\sigma(\Wsp',\Tsp)$ of pointwise convergence on $\Wsp'(G)$ with respect to $\Tsp(G)$. Let $B_r= \{ \mu \in \Wsp'(G): \op{\mu}_{\Wsp(G)} \leq r \}$ be any closed $r$-ball in $\Wsp'(G)$. Then all the above topologies $\sigma(\Wsp',\Tsp)$ have the same trace on $B_r\subseteq \Wsp'(G)$. 
\end{proposition}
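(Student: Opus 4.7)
My plan is to reduce everything to a single density lemma: if $\Tsp \subseteq \Wsp(G)$ is a linear subspace that is norm-dense in $(\Wsp(G),\|\cdot\|_\Wsp)$, then on $B_r$ the topology $\sigma(\Wsp',\Tsp)$ coincides with $\sigma(\Wsp',\Wsp)$. Since the latter is always finer than the former, only one inclusion needs work. Given $\mu\in B_r$, $f\in\Wsp(G)$, and $\varepsilon>0$, I would pick $g\in\Tsp$ with $\|f-g\|_\Wsp < \varepsilon/(3r)$; then for every $\mu'\in B_r$ the triangle inequality gives
\[
|\mu(f)-\mu'(f)| \le |\mu(f-g)| + |\mu(g)-\mu'(g)| + |\mu'(g-f)| \le 2r\|f-g\|_\Wsp + |\mu(g)-\mu'(g)| ,
\]
where the first two summands contribute at most $2\varepsilon/3$ uniformly on $B_r$ by the operator-norm bound $|\mu(h)|\le\op{\mu}_\Wsp\|h\|_\Wsp$. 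The remaining summand is a $\sigma(\Wsp',\Tsp)$-seminorm applied to $\mu-\mu'$, so the set $\{\mu'\in B_r : |\mu(g)-\mu'(g)|<\varepsilon/3\}$ is a $\sigma(\Wsp',\Tsp)$-open neighborhood of $\mu$ in $B_r$ contained inside the basic $\sigma(\Wsp',\Wsp)$-neighborhood $\{\mu'\in B_r : |\mu(f)-\mu'(f)|<\varepsilon\}$. This proves the lemma.

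The proposition then reduces to verifying that each of the six candidate test function spaces is dense in $(\Wsp(G),\|\cdot\|_\Wsp)$. The case $\Tsp(G)=\Wsp(G)$ is trivial, the case $\Tsp(G)=\Cc(G)$ is Theorem~\ref{FRSS-thm:W}(d), and the case $\Tsp(G)=\So(G)$ is Proposition~\ref{FRSS-prop:S0Wdense}. For $\Tsp(G)=\KLsp(G)$ I would combine Theorem~\ref{FRSS-thm:S0}(d) (denseness of $\KLsp(G)$ in $\So(G)$) with Proposition~\ref{FRSS-prop:S0Wdense} (denseness of $\So(G)$ in $\Wsp(G)$) by transitivity. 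For $\Tsp(G)=\Scsp(G)$ I would invoke Remark~\ref{FRSS-rem:fce}, which supplies denseness of $\Scsp(G)$ in $(\So(G),\|\cdot\|_{\So})$, and chain again with Proposition~\ref{FRSS-prop:S0Wdense}.

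The only remaining and mildly delicate case is $\Tsp(G)=\Dcsp(G)=\Cc^\infty(\RR^d)$, which is not covered by the earlier results and is the step I expect to be the main obstacle. I would treat it by a standard mollification argument: given $f\in\Cc(\RR^d)$ with $\supp(f)\subseteq K$ and a sequence $(\rho_n)\subseteq\Cc^\infty(\RR^d)$ of non-negative functions with $\int \rho_n\dd m_G = 1$ and shrinking supports, the convolutions $f*\rho_n$ lie in $\Cc^\infty(\RR^d)$, have supports in a fixed compact enlargement $K'\supseteq K$, and converge uniformly to $f$. The localization estimate used inside the proof of Theorem~\ref{FRSS-thm:W}(d), namely $\|h\|_\Phi \le M\cdot\delta_X(K'-U)\cdot\|h\|_\infty$ for every $h$ with $\supp(h)\subseteq K'$, then forces $f*\rho_n\to f$ in $\|\cdot\|_\Wsp$. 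Combined with the denseness of $\Cc(\RR^d)$ supplied by Theorem~\ref{FRSS-thm:W}(d), this yields denseness of $\Dcsp(\RR^d)$ in $\Wsp(\RR^d)$. Applying the density lemma to each of the six spaces then completes the proof, with the essential new content being only the bridge from the function-space machinery of $\Wsp,\So,\Cc,\KLsp$ to the classical test-function spaces $\Dcsp,\Scsp$.
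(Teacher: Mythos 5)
Your proof is correct, but it takes a genuinely different route from the paper. The paper's argument is purely topological: it observes that $\sigma(\Wsp',\Tsp)|_{B_r}\subseteq\sigma(\Wsp',\Wsp)|_{B_r}$ because $\Tsp(G)\subseteq\Wsp(G)$, notes that $(B_r,\sigma(\Wsp',\Wsp)|_{B_r})$ is compact by Alaoglu's theorem and that each $(B_r,\sigma(\Wsp',\Tsp)|_{B_r})$ is Hausdorff (since each $\Tsp$ contains, for every compact $K$, a function dominating $1_K$, which lets one separate measures), and then invokes Lemma~\ref{FRSS-lem:top}: a coarser Hausdorff topology on a compact space must equal the compact one. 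You instead prove a quantitative density lemma — the standard three-epsilon argument showing that on a norm-bounded set the weak topology induced by a norm-dense subspace agrees with the full weak* topology — and then verify norm-density of each candidate space, which the paper's earlier results supply for $\Wsp$, $\Cc$, $\So$, $\KLsp$ and $\Scsp$, leaving only the mollification argument for $\Dcsp(\RR^d)$, which you carry out correctly via the localization estimate from the proof of Theorem~\ref{FRSS-thm:W}(d). The trade-off: the paper's route needs only that each $\Tsp$ separates points of $\Wsp'(G)$ (a weaker hypothesis than density) and is very short once Alaoglu is on the table, but it is non-constructive; your route avoids compactness entirely, is elementary and quantitative (it shows the identity map is uniformly bi-continuous on $B_r$), and makes explicit exactly which approximation facts are being used — at the modest cost of the extra density verifications. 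Both are sound; one could even combine them, since your density lemma reproves the Hausdorff separation needed by the paper.
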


This result shows that the traditional definition of an autocorrelation agrees with the above approach. Indeed, the topology on translation bounded measures from \cite{FRSS-Hof1, FRSS-BaakeLenz2004} coincides with the above topology by Proposition~\ref{FRSS-prop:equiv}, and weak*-convergence on balls is equivalent to vague convergence on balls by Proposition~\ref{FRSS-prop:top}. Adopting the traditional viewpoint, we have an alternative argument for \cite[Thm.~2]{FRSS-BaakeLenz2004}. We also observe that any vague autocorrelation limit is even weak*-convergent. Moreover, it suffices to check weak*-convergence or vague convergence on test functions from $\KLsp(G)$ or from $\Scsp(G)$ or, in euclidean space, from $\Dcsp(G)$.

\smallskip

The proof of Proposition~\ref{FRSS-prop:top} relies on Alaoglu's theorem, in conjunction with a well-known fact from topology, compare \cite[Lem.~2.23]{FRSS-LS2019}.

\begin{lemma}\label{FRSS-lem:top}\cite[Ch.~I, 3A]{FRSS-Loo53} Assume that $(X, \tau_1)$ is Hausdorff and that $(X, \tau_2)$ is compact. Then $\tau_1 \subseteq \tau_2$ implies $\tau_1=\tau_2$. \qed
\end{lemma}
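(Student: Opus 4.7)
The plan is to show that the identity map $\iota := \mathrm{id}_X \colon (X,\tau_2) \to (X,\tau_1)$ is a homeomorphism; since $\iota$ is a bijection, this forces $\tau_1 = \tau_2$. The hypothesis $\tau_1 \subseteq \tau_2$ says precisely that every $\tau_1$-open set is $\tau_2$-open, which is continuity of $\iota$. The only remaining point is to verify that $\iota$ is a closed map, which combined with bijectivity says it is also open, yielding the converse inclusion $\tau_2 \subseteq \tau_1$.

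To verify closedness, I would take an arbitrary $\tau_2$-closed set $C \subseteq X$. By compactness of $(X,\tau_2)$ together with the standard fact that closed subsets of compact spaces are compact, $C$ is $\tau_2$-compact. Continuity of $\iota$ then gives that $\iota(C) = C$ is $\tau_1$-compact. Using that $(X,\tau_1)$ is Hausdorff and that compact subsets of Hausdorff spaces are closed, $C$ is $\tau_1$-closed. Thus $\iota$ sends $\tau_2$-closed sets to $\tau_1$-closed sets; being a bijection, it equivalently sends $\tau_2$-open sets to $\tau_1$-open sets, so $\tau_2 \subseteq \tau_1$.

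The argument rests on two elementary point-set topology facts: closed subsets of compact spaces are compact, and compact subsets of Hausdorff spaces are closed. There is no serious obstacle; the only mild pitfall is to keep careful track of which topology is being used at each step (continuity of $\iota$ uses $\tau_1 \subseteq \tau_2$, while closedness of $\iota(C)$ uses the Hausdorff property of $\tau_1$), and to remember that for bijections closed-ness is equivalent to open-ness of the map. Since the paper cites this lemma from Loomis, one could alternatively just refer to \cite[Ch.~I, 3A]{FRSS-Loo53} without giving the three-line argument.
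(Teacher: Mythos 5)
Your argument is correct and complete: it is the standard proof that a continuous bijection from a compact space to a Hausdorff space is a homeomorphism, applied to the identity map, and the two point-set facts you invoke are used with the topologies correctly tracked. The paper itself gives no proof, simply citing \cite[Ch.~I, 3A]{FRSS-Loo53}, and your three-line argument is exactly the one behind that reference.
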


\begin{beweis}[Proof of Proposition~\ref{FRSS-prop:top}]
Let $B_r$ be any closed $r$-ball in $(\Wsp', \op{\cdot}_{\Wsp})$.  Note that we have $\Tsp(G)\subseteq \Wsp(G)$ for $\Tsp\in \{\KLsp, \Cc, \So, \Dcsp, \Scsp\}$, which implies $\sigma(\Wsp',\Tsp)|_{B_r}\subseteq \sigma(\Wsp',\Wsp)|_{B_r}$ for all such $\Tsp$.
Also, $(B_r, \sigma(\Wsp',\Tsp)|_{B_r})$ is Hausdorff for all such $\Tsp$.  Using standard arguments for measures, this follows from the property that for every compact $K\subseteq G$ there exists some $f\in \Tsp(G)$ such that $1_K\le f$, for all such $\Tsp$. 
As the topological space $(B_r,\sigma(\Wsp',\Wsp)|_{B_r})$ is  compact by Alaoglu's theorem, we obtain $\sigma(\Wsp',\Tsp)|_{B_r}=\sigma(\Wsp',\Wsp)|_{B_r}$ for all such $\Tsp$ by Lemma~\ref{FRSS-lem:top}.
\end{beweis}

\subsection{Wiener diagram for Besicovitch almost periodic measures}\label{FRSS-sec:mWd}

We ask whether there exists a modified Wiener diagram for the infinite idealisation $\omega\in \cM^\infty(G)$.
As in Section~\ref{FRSS-sec:dmsd} above, consider normalised Fourier transforms of $\omega_n$ along a van Hove sequence $\cA=(A_n)$. For any $\chi\in\widehat G$, we might then ask for existence of the limit
\begin{displaymath}
\lim_{n\to\infty} \frac{1}{m_G(A_n)} \widehat{\omega_n}(\chi) = a^\cA_\chi(\omega) \ ,
\end{displaymath}
i.e., we might ask for existence of the Fourier--Bohr coefficients of $\omega$ along $\cA$. For the following discussion, we restrict to measures that are pure point diffractive along $\cA$. We thus restrict to measures that are mean almost periodic along $\cA$, see  \cite{FRSS-ARMA, FRSS-MoSt, FRSS-LSS2020a, FRSS-LRS2024} for background of various notions of almost periodicity with respect to pure point diffractivity.

For concreteness, consider the class $\cM_{dT}(G)$ of twice AG transformable measures.  Such measures are indeed pure point diffractive, as AG transformability implies weak almost periodicity \cite[Thm.~4.10.4]{FRSS-MoSt}, which implies pure point diffractivity \cite[Thm.~4.4]{FRSS-LS2019}. For $\omega\in \cM_{dT}(G)$, both its autocorrelation measure and its Fourier--Bohr coefficients are independent of the choice of the van Hove sequence $\cA$. The relation $a^\cA_\chi(\omega)=\widehat{\omega}(\{\chi\})$ then suggests a Wiener diagram where left down arrow corresponds to taking the discrete part of the AG transform $\widehat\omega$, and where $|\cdot|^2$ means taking the absolute square of the Fourier--Bohr coefficients. For  $\omega\in\cM_{dT}(G)$ we can thus write
\begin{displaymath}
\begin{CD}
\omega @>\circledast>>\gamma_\omega\\
@V{(\widehat{\hspace{1ex}})_{d} \,}VV @VV{\, \widehat{}}V\\
(\widehat{\omega})_{d} @>{|\cdot|^2}>> \reallywidehat{\gamma_{\omega}}
\end{CD}
\end{displaymath}
\smallskip

\noindent If $\omega\in\cM_{dT}(G)$ is even strongly almost periodic, then $(\widehat{\omega})_d=\widehat{\omega}$, and both down arrows in  the above Wiener diagram correspond to the AG transform.
A simple explicit example is $\omega=\delta_L\in \cM_{dT}(G)$ the  Dirac comb of a lattice $L$. Its autocorrelation is $\gamma_\omega=\dens(L)\cdot \delta_L\in \cM^\infty(G)$, and we have $\reallywidehat{\gamma_\omega}=\dens(L)^2\cdot \delta_{L^\circ}\in \cM^\infty(\widehat G)$. In fact $\widehat{\omega}=\dens(L)\cdot \delta_{L^\circ}\in \cM^\infty(\widehat G)$, which shows that $\omega$ is strongly almost periodic. A simple example of a weakly almost periodic measure that is not strongly almost periodic is $\omega=\delta_L+\delta_0$. More generally, one might take $\omega=\mu_s+\mu_0$ with $\mu_s$ a strongly almost periodic measure and $\mu_0$ a finite measure. We will give further examples of strongly almost periodic twice transformable measures below, see the discussion of weighted model sets in Section~\ref{FRSS-sec:dwms}.

More generally, consider the class $\Bap_\cA(G)\subseteq \cM^\infty(G)$ of translation bounded measures that are Besicovitch almost periodic along $\cA$ in the sense of \cite[Def.~3.30]{FRSS-LSS2020a}. Examples are Dirac combs of model sets of maximal density along $\cA$, compare \cite[Sec.~3.5]{FRSS-LSS2020a}. In that case the autocorrelation and the Fourier--Bohr coefficients of $\omega$ along $\cA$ indeed exist. In fact, for $\omega\in \Bap_\cA(G)$ the Wiener diagram
\begin{displaymath}
\begin{CD}
\omega @>\circledast_{\cA}>>\gamma_\omega\\
@V{FB_\cA \,}VV @VV{\, \widehat{}}V\\
(a_\chi^\cA(\omega))_{\chi\in \widehat G} @>{|\cdot|^2}>> \reallywidehat{\gamma_{\omega}}
\end{CD}
\end{displaymath}
commutes, where $FB_\cA$ means taking Fourier--Bohr coefficients along $\cA$, and where $|\cdot|^2$ means taking the absolute square of the Fourier--Bohr coefficients \cite[Thm.~3.36]{FRSS-LSS2020a}. For the subclass of so-called Weyl almost periodic measures, the above limits are independent of $\cA$ by \cite[Thm.~4.16]{FRSS-LSS2020a}. Examples are given by weakly almost periodic measures \cite[Cor.~4.26]{FRSS-LSS2020a} and hence in particular by the doubly transformable measures considered above. Note that for Besicovitch almost periodic measures, $\widehat\omega\in \SOp(\widehat G)$ exists as a mild distribution, but might not be a measure.
Also note that the Fourier--Bohr series $(a_\chi^\cA(\omega))_{\chi\in \widehat G}$ might not define a measure or a mild distribution. Moreover if it did, the corresponding mild distribution need not coincide with the mild distribution $\widehat \omega$.

Also note that, for a translation bounded measure $\omega\in\cM^\infty(G)$ that is pure point diffractive along $\cA$, commutativity of the above Wiener diagram characterises Besicovitch almost periodicity along $\cA$ by \cite[Thm.~3.36]{FRSS-LSS2020a}. The above Wiener diagram fails to be commutative for the pure point diffractive Dirac comb $\omega=\delta_\NN\in \cM^\infty(\RR)$ on the van Hove sequence where $A_n=[-n,n]$ for $n\in \NN$. For another simple class of pure point diffractive counterexamples see \cite[Prop.~A.2]{FRSS-LSS2020a}.

\section{Pure point diffraction and Poisson summation revisited}\label{FRSS-sec:ppdpsr}

In the following, we specialise to weighted model sets. We will obtain extended versions of results from \cite{FRSS-CRS}, using straightforward proofs based on mild distributions. Throughout this section, we consider LCA groups $G,H$ such that $\cL\subseteq G\times H$ for some lattice $\cL$, i.e., a discrete co-compact subgroup of $G\times H$.

\subsection{Weighted model sets that are AG transformable}

Weighted model sets in $G$ are supported on projected lattice points.  For a weight function $h:H\to \CC$ consider the formal sum
\begin{displaymath}
\omega_h=\sum_{(x,y)\in \cL} h(y)\cdot \delta_x \ .
\end{displaymath}
We are interested in weight functions that result in translation bounded measures $\omega_h\in\cM^\infty(G)$. This is the case for the continuous weight functions from the Wiener algebra.

\begin{lemma}\label{FRSS-lem:ohW} Let $G,H$ be two LCA groups and let $\cL$ be a lattice in $G \times H$. Then, there exists a finite constant $C=C(\cL)$ such that
\[
\op{\omega_{h}}_{\Wsp} \leq  C\cdot \| h \|_{\Wsp} 
\]
for all $h\in \Wsp(H)$. In particular, for all $h \in \Wsp(H)$ we have $\omega_h\in \cM^\infty(G)$.
\end{lemma}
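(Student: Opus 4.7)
The approach I would take is to realise $\omega_h$ as the evaluation of the lattice Dirac comb $\delta_\cL$ on the tensor product $f\otimes h$, and then exploit the product structure of the Wiener algebra. Since $\cL$ is a lattice in $G\times H$, it is a Delone set, hence weakly uniformly discrete, so by the remark after Eqn.~\eqref{FRSS-eq:dXK} the Dirac comb $\delta_\cL$ is translation bounded on $G\times H$. By Theorem~\ref{FRSS-thm:tbW} this means $\delta_\cL\in \Wsp'(G\times H)$.

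Concretely, first I would fix a BUPU $\Phi=(\phi_i)_{i\in I}$ in $\COsp(G)$ and a BUPU $\Psi=(\psi_j)_{j\in J}$ in $\COsp(H)$, and form the product BUPU $\Phi\otimes\Psi$ on $G\times H$ via Lemma~\ref{FRSS-lem:prodbupu}. Set $C=C(\cL)=\op{\delta_\cL}_{\Wsp(G\times H),\,\Phi\otimes\Psi}$, which is finite by the preceding paragraph. For arbitrary $f\in \Wsp(G)$ and $h\in \Wsp(H)$, the identity established in the proof of Lemma~\ref{FRSS-lem:WW} gives
\[
\|f\otimes h\|_{\Wsp(G\times H),\,\Phi\otimes\Psi} \;=\; \|f\|_{\Wsp(G),\Phi}\cdot \|h\|_{\Wsp(H),\Psi},
\]
so in particular $f\otimes h\in \Wsp(G\times H)\subseteq \Lisp(|\delta_\cL|)$ by Theorem~\ref{FRSS-thm:tbW}, and
\[
\delta_\cL(f\otimes h)\;=\;\sum_{(x,y)\in\cL} f(x)\,h(y)\;=\;\omega_h(f),
\]
where the sum converges absolutely.

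Combining these facts yields the key estimate
\[
|\omega_h(f)|\;=\;|\delta_\cL(f\otimes h)|\;\le\; C\cdot \|f\|_{\Wsp(G),\Phi}\cdot \|h\|_{\Wsp(H),\Psi}
\]
for every $f\in \Wsp(G)$. Passing to the equivalent homogeneous norms $\|\cdot\|_{\Wsp}$ of Lemma~\ref{FRSS-lem:hnW} absorbs the comparison constants into a new $C'=C'(\cL)$, and one obtains $\op{\omega_h}_{\Wsp}\le C'\cdot \|h\|_{\Wsp}$. A final application of Theorem~\ref{FRSS-thm:tbW} then identifies this bounded functional on $\Wsp(G)$ with an element of $\cM^\infty(G)$, namely $\omega_h$ itself (the two agree on the dense subspace $\Cc(G)\subseteq \Wsp(G)$ where the sum is manifestly absolutely convergent and gives the formal expression).

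There is no real obstacle here: essentially all the work is done by Lemma~\ref{FRSS-lem:WW}, the product BUPU Lemma~\ref{FRSS-lem:prodbupu}, and Theorem~\ref{FRSS-thm:tbW}. The only point worth checking carefully is the identification of $\omega_h$ as measure and as tensor-evaluated distribution, which follows by restriction to $\Cc(G)$ and density.
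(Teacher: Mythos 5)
Your proposal is correct and follows essentially the same route as the paper: translation boundedness of $\delta_{\cL}$ (as $\cL$ is Delone), the product BUPU of Lemma~\ref{FRSS-lem:prodbupu}, the multiplicativity $\|g\otimes h\|_{\Wsp(G\times H),\Phi\otimes\Psi}=\|g\|_{\Wsp(G),\Phi}\cdot\|h\|_{\Wsp(H),\Psi}$ from Lemma~\ref{FRSS-lem:WW}, and the duality $\Wsp'(G)=\cM^\infty(G)$ of Theorem~\ref{FRSS-thm:tbW}. Your extra care in identifying $\omega_h$ with $g\mapsto\delta_\cL(g\otimes h)$ via restriction to $\Cc(G)$, and in converting between the $\Phi$-norms and the homogeneous norms, only makes explicit what the paper leaves implicit.
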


\begin{beweis}
Note that $\delta_\cL=\sum_{\ell\in \cL}\delta_\ell$ satisfies $\delta_\cL\in \cM^\infty(G\times H)$ as $\cL$ is a Delone set in $G\times H$.
Fix BUPUs $\Phi$ on $G$ and $\Psi$ on $H$. Fix $h\in \Wsp(H)$ and consider arbitrary $g\in \Wsp(G)$. We then have $g\otimes h \in \Wsp(G\times H)$ by Lemma~\ref{FRSS-lem:WW}.
Using the product BUPU $\Phi\otimes \Psi$ on $G\times H$, compare Lemma~\ref{FRSS-lem:prodbupu}, we get that
\begin{displaymath}
\begin{split}
|\omega_h(g)|&=|\delta_\cL(g\otimes h)|\le \op{\delta_\cL}_{\Wsp(G\times H), \Phi\otimes\Psi} \cdot \|g\otimes h\|_{\Wsp(G\times H), \Phi\otimes\Psi}\\
&\le \op{\delta_\cL}_{\Wsp(G\times H), \Phi\otimes\Psi} \cdot \|g\|_{\Wsp(G), \Phi} \cdot \|h\|_{\Wsp(H), \Psi} \ ,
\end{split}
\end{displaymath}
compare Lemma~\ref{FRSS-lem:WW}.
As $g\in \Wsp(G)$ was arbitrary, this shows $\omega_h\in \Wsp'(G)=\cM^\infty(G)$.
\end{beweis}

We are especially interested in translation bounded measures $\omega_h$ that allow for Fourier analysis based on the Poisson summation formula for the underlying lattice $\cL$. This idea has been followed for AG transformable measures in \cite{FRSS-CRS}, which also contains a review of the diffraction theory literature, and for tempered distributions in euclidean space in \cite{FRSS-TAO1}. Here we note that the Poisson summation formula for a discrete co-compact subgroup of $G$ does hold not only on $\So(G)\subseteq \Wsp(G)$, but also on the larger Segal algebra $\WOsp(G)\subseteq \Wsp(G)$ given by 
\[
\WOsp(G)=\{h \in \Wsp(G): \widehat{h} \in \Wsp(\widehat G)\}
\]
with norm $\|f\|_{\WOsp(G)}=\|f\|_{\Wsp(G)}+ \|\widehat f\|_{\Wsp(\widehat G)}$, see \cite{FRSS-Bu81}, \cite[Rem.~15]{FRSS-Fei81}, \cite[p.~1617]{FRSS-Jak}.  This yields the following result, which refines \cite[Eqn.~(7.15)]{FRSS-Mey12}, \cite[Lem.~9.3]{FRSS-TAO1}, \cite[Thm.~4.10]{FRSS-CRS},  \cite[Thm.~4.2]{FRSS-M19} and \cite[Thm.~2.10]{FRSS-M22}.

\begin{theorem}[PSF for weighted model sets]\label{FRSS-equiv-PSF-diff-dens}
Consider LCA groups $G,H$, and let $\cL$ be a discrete co-compact subgroup in $G\times H$. Then, for all $g \in \WOsp(G)$ and all $h \in \WOsp(H)$ we have
\begin{equation}\label{FRSS-eq1}
\omega_h(g)=  \mathrm{dens}(\cL)\cdot \omega_{\widecheck{h}}(\widecheck{g}) \,.
\end{equation}
In particular, for all $h\in \WOsp(H)$ we have $\omega_h\in \mathcal M_{dT}(G)$ and $\widehat{\omega_h}= \mathrm{dens}(\cL)\cdot \omega_{\widecheck{h}}$. Hence both $\omega_h\in \cM^\infty(G)$ and $\widehat{\omega_h}\in \cM^\infty(\widehat G)$ are strongly almost periodic measures for $h\in \WOsp(H)$.
\end{theorem}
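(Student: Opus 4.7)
The plan is to reduce Equation~\eqref{FRSS-eq1} to the classical Poisson summation formula (PSF) on the Segal algebra $\WOsp$ applied to the product lattice $\cL\subseteq G\times H$, and then to deduce double AG transformability from the density criterion of Lemma~\ref{FRSS-lem:dtcrit}.

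First, I would check tensor stability: $g\otimes h\in \WOsp(G\times H)$ whenever $g\in \WOsp(G)$ and $h\in \WOsp(H)$. By Lemma~\ref{FRSS-lem:WW} we have $g\otimes h\in \Wsp(G)\,\widehat{\otimes}\, \Wsp(H)\subseteq \Wsp(G\times H)$. Under the identification $\widehat{G\times H}=\widehat G\times \widehat H$, we have $\widehat{g\otimes h}=\widehat g\otimes \widehat h$, and the same lemma applied to $\widehat g\in \Wsp(\widehat G)$ and $\widehat h\in \Wsp(\widehat H)$ gives $\widehat{g\otimes h}\in \Wsp(\widehat G\times \widehat H)$, as required.

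Second, I would invoke the PSF on $\WOsp$ for the discrete co-compact subgroup $\cL\subseteq G\times H$ available from \cite{FRSS-Bu81} and \cite[Rem.~15]{FRSS-Fei81}, which reads
\begin{displaymath}
\sum_{\ell\in \cL}(g\otimes h)(\ell)= \mathrm{dens}(\cL)\sum_{\ell^\circ\in \cL^\circ}\widecheck{g\otimes h}(\ell^\circ).
\end{displaymath}
Absolute convergence of both sides is guaranteed by Lemma~\ref{FRSS-lem:ohW} and Theorem~\ref{FRSS-thm:tbW}, since $\delta_\cL\in \cM^\infty(G\times H)$ pairs with $g\otimes h\in \Wsp(G\times H)$, and analogously on the dual side. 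The left-hand side is $\sum_{(x,y)\in\cL} g(x)h(y)=\omega_h(g)$. Using $\widecheck{g\otimes h}=\widecheck g\otimes \widecheck h$ and $\cL^\circ\subseteq \widehat G\times \widehat H$, the right-hand side becomes $\mathrm{dens}(\cL)\cdot\omega_{\widecheck h}(\widecheck g)$, which is Equation~\eqref{FRSS-eq1}.

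Third, for double transformability: Lemma~\ref{FRSS-lem:ohW} gives $\omega_h\in \cM^\infty(G)$ as $h\in \Wsp(H)$, and the same lemma applied to $\cL^\circ\subseteq \widehat G\times \widehat H$ with weight $\widecheck h\in \Wsp(\widehat H)$ gives $\omega_{\widecheck h}\in \cM^\infty(\widehat G)$. Since Theorem~\ref{FRSS-thm:FFT} yields $\So(G)\subseteq \WOsp(G)$, Equation~\eqref{FRSS-eq1} holds for every $g\in \So(G)$. Hence the set $A_{\omega_h,\,\mathrm{dens}(\cL)\omega_{\widecheck h}}$ from Lemma~\ref{FRSS-lem:dtcrit} coincides with all of $\So(G)$ and is trivially dense, so Lemma~\ref{FRSS-lem:dtcrit} delivers $\omega_h\in \cM_{dT}(G)$ together with $\widehat{\omega_h}=\mathrm{dens}(\cL)\cdot \omega_{\widecheck h}$.

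Finally, strong almost periodicity of $\omega_h$ follows because its AG transform $\widehat{\omega_h}=\mathrm{dens}(\cL)\cdot \omega_{\widecheck h}$ is a pure point translation bounded measure (supported on the countable projection of $\cL^\circ$ to $\widehat G$), and any twice AG transformable measure whose AG transform is pure point is strongly almost periodic by the standard characterisation. Applying the same reasoning to $\widehat{\omega_h}$, whose AG transform is $\mathrm{dens}(\cL)\cdot \omega_h^\dagger$ by Theorem~\ref{FRSS-thm:Fdist}(a), gives strong almost periodicity of $\widehat{\omega_h}$ as well. The main technical obstacle is the careful invocation of the PSF on $\WOsp$ together with the bookkeeping of the tensor-product Fourier transform identity $\widecheck{g\otimes h}=\widecheck g\otimes \widecheck h$ under $\widehat{G\times H}=\widehat G\times \widehat H$; once that is cleared, the rest of the argument consists of straightforward applications of Lemmata~\ref{FRSS-lem:ohW} and~\ref{FRSS-lem:dtcrit}.
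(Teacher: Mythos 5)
Your proposal is correct and follows essentially the same route as the paper: tensor the weights, apply the Poisson summation formula on $\WOsp(G\times H)$ for $\cL$, and then upgrade \eqref{FRSS-eq1} on the dense subspace $\So(G)\subseteq \WOsp(G)$ to double transformability via the criterion of Theorem~\ref{FRSS-t1b}(c) (your use of Lemma~\ref{FRSS-lem:dtcrit} is equivalent). Your closing argument for strong almost periodicity is just an unpacking of the citation the paper uses, so there is nothing substantive to add.
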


\begin{beweis} We infer from Lemma~\ref{FRSS-lem:ohW} that  the formal sums $\omega_h$ and $\omega_{\widecheck{h}}$ are translation bounded measures for all $h \in \WOsp(H)$. Let now $g \in \WOsp(G)$ be arbitrary. Then $g \otimes h \in \WOsp(G)\otimes \WOsp(H)\subseteq \WOsp(G\times H)$ by Lemma~\ref{FRSS-lem:WW}. Since the Poisson summation formula for $\cL$ holds for test functions in $\WOsp(G\times H)$, see \cite[p.~1617]{FRSS-Jak}, and since $\widecheck{g \otimes h}=\widecheck{g} \otimes \widecheck{h}$, we have
\[
\omega_h(g)= \delta_{\cL}(g \otimes h) = \dens(\cL) \cdot \delta_{\cL^0}(\widecheck{g} \otimes \widecheck{h}) =  \mathrm{dens}(\cL)\cdot \omega_{\widecheck{h}}(\widecheck{g}) \,,
\]
with all sums being absolutely convergent by Theorem~\ref{FRSS-thm:tbW}. This proves Eqn.~\eqref{FRSS-eq1}. Fix now $h\in \WOsp(H)$. Noting $\omega_h\in \cM^\infty(G)$, $\omega_{\widecheck{h}}\in \cM^\infty(\widehat G)$ and validity of Eqn.~\eqref{FRSS-eq1} for all $g \in \So(G) \subseteq \WOsp(G)$, the  transformability claim follows from Theorem~\ref{FRSS-t1b}~(c).   Strong almost periodicity follows from \cite[Cor.~4.10.13]{FRSS-MoSt}.
\end{beweis}

Using the Poisson summation formula, we obtain the following extended version of the density formula, by a proof exactly as in \cite[Prop.~4.13]{FRSS-CRS}. The density formula for Riemann integrable weight functions then follows from a standard approximation argument, see e.g.~\cite[Thm.~4.14]{FRSS-CRS}.

\begin{proposition}[Density formula for weighted model sets]\label{FRSS-prop:df}  Assume that $G$ is $\sigma$-compact and that $\pi^H(\cL)$ is dense in $H$. Let $(A_n)_{n\in\mathbb N}$ be any van Hove sequence in $G$.  We then have for  $h \in \WOsp(H)$ that
\begin{displaymath}
\lim_{n\to\infty} \frac{ \omega_h(t+A_n)}{m_G(A_n)}=\mathrm{dens}(\mathcal \cL)\cdot \int_H h \, {\rm d}m_H \ ,
\end{displaymath}
where the convergence is uniform in $t\in G$. \qed
\end{proposition}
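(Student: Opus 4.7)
The plan is to derive this density formula as the zero-character case of the Fourier--Bohr coefficient identity from Proposition~\ref{FRSS-prop:FBC}, applied to the transformable measure $\omega_h$.

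First, I would invoke Theorem~\ref{FRSS-equiv-PSF-diff-dens}: for $h \in \WOsp(H)$ we have $\omega_h \in \cM_{dT}(G)$ and $\widehat{\omega_h} = \dens(\cL)\cdot \omega_{\widecheck h}$. Since $G$ is $\sigma$-compact it admits a van Hove sequence, so Proposition~\ref{FRSS-prop:FBC} applies and yields, for every $\chi \in \widehat G$,
\[
a_\chi^\cA(\omega_h) = \widehat{\omega_h}(\{\chi\})\,,
\]
with convergence uniform in translates of $\cA$. Specialising to $\chi = \mathbf 1$ (the trivial character) and noting $\overline{\mathbf 1(t)} = 1$, this uniformity in translates is exactly the desired uniformity in $t \in G$:
\[
\lim_{n\to\infty} \frac{\omega_h(t + A_n)}{m_G(A_n)} = \widehat{\omega_h}(\{\mathbf 1\}) \qquad \text{uniformly in } t \in G\,.
\]

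Next I would evaluate the right-hand side. By $\widehat{\omega_h} = \dens(\cL)\cdot \omega_{\widecheck h}$ and the definition of $\omega_{\widecheck h}$ on the dual lattice $\cL^0 \subseteq \widehat G \times \widehat H$,
\[
\widehat{\omega_h}(\{\mathbf 1\}) = \dens(\cL) \sum_{(\mathbf 1,\eta)\in \cL^0} \widecheck h(\eta)\,.
\]
The key step is to show that $(\mathbf 1,\eta) \in \cL^0$ forces $\eta = \mathbf 1$. By definition of the annihilator, $(\mathbf 1,\eta)\in \cL^0$ means $\eta(y) = 1$ for all $y \in \pi^H(\cL)$; since $\pi^H(\cL)$ is dense in $H$ by assumption, continuity of $\eta \in \widehat H$ yields $\eta \equiv 1$. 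Therefore the sum reduces to its single term $\widecheck h(\mathbf 1) = \int_H h\,\dd m_H$, giving
\[
\widehat{\omega_h}(\{\mathbf 1\}) = \dens(\cL) \cdot \int_H h\,\dd m_H\,,
\]
which completes the proof.

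The only mildly subtle point is the uniformity in $t$, but that is already packaged into Proposition~\ref{FRSS-prop:FBC} (uniformity in translates of $\cA$); no additional estimate is required. The density hypothesis on $\pi^H(\cL)$ enters precisely to collapse the potentially infinite Fourier--Bohr mass at the trivial character to a single integral, which is the substantive ingredient of the argument.
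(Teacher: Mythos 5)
Your proposal is correct and follows essentially the same route as the paper: the paper's proof is exactly the chain $\lim_n \omega_h(t+A_n)/m_G(A_n) = \widehat{\omega_h}(\{e\}) = \dens(\cL)\cdot\omega_{\widecheck h}(\{e\}) = \dens(\cL)\cdot\widecheck h(e) = \dens(\cL)\int_H h\,\dd m_H$, citing Proposition~\ref{FRSS-prop:FBC} for the first equality and Theorem~\ref{FRSS-equiv-PSF-diff-dens} for the second. You additionally make explicit the step the paper leaves implicit, namely that the density of $\pi^H(\cL)$ in $H$ forces $(\mathbf 1,\eta)\in\cL^0$ to imply $\eta=\mathbf 1$, which is exactly the right justification.
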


\begin{beweis}
With $c=\mathrm{dens}(\mathcal \cL)$ we can argue 
\begin{displaymath}
\lim_{n\to\infty} \frac{ \omega_h(t+A_n)}{m_G(A_n)}
=\widehat{\omega_h}(\{e\})=c\cdot \omega_{\widecheck{h}}(\{e\})=c\cdot \widecheck h(e)=c\cdot \int_H h \, {\rm d}m_H \ 
\end{displaymath}
uniformly in $t\in G$. Here the first equation holds by Proposition~\ref{FRSS-prop:FBC}, and the second one is the Poisson summation formula Theorem~ \ref{FRSS-equiv-PSF-diff-dens}.  
\end{beweis}

\subsection{Diffraction of AG transformable weighted model sets}\label{FRSS-sec:dwms}

Versions of the following result have been proved for Riemann integrable weight functions \cite[Prop.~5.1]{FRSS-CRS} and for continuous weight functions that are called admissible in \cite[Thm.~3.2]{FRSS-LR}. Here we treat continuous weight functions from $\WOsp(H)$. Instead of adapting the direct proof of \cite[Prop.~5.1]{FRSS-CRS} that uses the density formula, we prefer a more general argument  from \cite{FRSS-LS2019} that is based on almost periodicity.

\begin{proposition}\label{FRSS-thm:ac}
Assume that $G$ is $\sigma$-compact and that $\pi^H(\cL)$ is dense in $H$. Assume that $h\in \WOsp(H)$. Then the weighted model set $\omega_h\in\mathcal M^\infty(G)$ has a unique autocorrelation measure $\gamma=\omega_h\circledast \widetilde{\omega_h}\in\mathcal{M}_{dT}(G)$ which is given by $\gamma=\mathrm{dens}(\mathcal L)\cdot\omega_{h*\widetilde h}$.
\end{proposition}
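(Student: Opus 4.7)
The plan is to match Fourier transforms: both $\gamma$ and the claimed $\dens(\cL)\omega_{h*\widetilde h}$ will lie in $\cM_T(G)$, so it suffices to show they have the same AG transform and invoke injectivity of the Fourier transform on $\cM_T(G)$~\cite[Thm.~2.1]{FRSS-ARMA1}.

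First I would verify that $h*\widetilde h \in \WOsp(H)$. Since $\Wsp(H)$ is a Banach algebra under convolution (Theorem~\ref{FRSS-thm:W}(c)) and $h,\widetilde h \in \Wsp(H)$, we get $h*\widetilde h \in \Wsp(H)$; moreover $\widehat{h*\widetilde h} = \widehat h\cdot\overline{\widehat h} = |\widehat h|^2$ lies in $\Wsp(\widehat H)$ by the pointwise Banach algebra and conjugation stability (Theorem~\ref{FRSS-thm:W}(b,c)). Thus Theorem~\ref{FRSS-equiv-PSF-diff-dens} applies to both $\omega_h$ and $\omega_{h*\widetilde h}$, yielding that both are strongly almost periodic doubly transformable measures with
\[
\widehat{\omega_h} = \dens(\cL)\cdot\omega_{\widecheck h}, \qquad \widehat{\omega_{h*\widetilde h}} = \dens(\cL)\cdot\omega_{|\widecheck h|^2},
\]
using $\widecheck{h*\widetilde h} = \widecheck h\cdot\overline{\widecheck h} = |\widecheck h|^2$.

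Next, since $\omega_h \in \cM_{dT}(G)$ is weakly almost periodic, its autocorrelation $\gamma$ along any van Hove sequence $\cA$ exists uniquely in $\cM_{dT}(G)$ and is itself strongly almost periodic, as noted in Section~\ref{FRSS-sec:mWd}. The Wiener diagram for strongly almost periodic doubly transformable measures then identifies the diffraction $\widehat\gamma$ as a pure point measure with $\widehat\gamma(\{\chi\}) = |\widehat{\omega_h}(\{\chi\})|^2$ for all $\chi\in\widehat G$. Density of $\pi^H(\cL)$ in $H$ dualises to injectivity of $\pi^{\widehat G}|_{\cL^0}:\cL^0\to\widehat G$, so for $\chi\in\pi^{\widehat G}(\cL^0)$ there is a unique lift $\eta_\chi \in \widehat H$ with $(\chi,\eta_\chi)\in\cL^0$, giving $\widehat{\omega_h}(\{\chi\}) = \dens(\cL)\widecheck h(\eta_\chi)$ and $0$ elsewhere. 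Summing yields
\[
\widehat\gamma = \sum_{(\chi,\eta)\in\cL^0}\dens(\cL)^2|\widecheck h(\eta)|^2\delta_\chi = \dens(\cL)^2\omega_{|\widecheck h|^2} = \reallywidehat{\dens(\cL)\cdot\omega_{h*\widetilde h}},
\]
and Fourier injectivity on $\cM_T(G)$ finishes the argument.

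The main obstacle is the Wiener diagram identification $\widehat\gamma(\{\chi\}) = |\widehat{\omega_h}(\{\chi\})|^2$, which rests on the theory of Besicovitch almost periodic measures \cite[Thm.~3.36]{FRSS-LSS2020a} applied to the strongly almost periodic $\omega_h$ whose AG transform is discrete. A more pedestrian alternative bypassing the Wiener diagram is a direct density-formula computation: expand $m_G(A_n)^{-1}(\omega_h|_{A_n})*\widetilde{\omega_h|_{A_n}}$ as a double lattice sum, reduce to evaluating $m_G(A_n)^{-1}\omega_{(T_{-v}h)\cdot\overline h}(A_n)$ for each $v\in\pi^H(\cL)$, apply the density formula (Proposition~\ref{FRSS-prop:df}) to the weight $(T_{-v}h)\cdot\overline h \in \WOsp(H)$, and control the boundary contributions via the van Hove property. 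Either route yields the conclusion, and the van Hove-independence of $\gamma$ follows from weak almost periodicity of $\omega_h$.
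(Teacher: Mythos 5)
Your proposal is correct and follows essentially the same route as the paper: both use strong almost periodicity of $\omega_h$ (from the Poisson summation Theorem~\ref{FRSS-equiv-PSF-diff-dens}) to obtain a unique autocorrelation $\gamma\in\cM_{dT}(G)$, identify $\widehat\gamma(\{\chi\})=|\widehat{\omega_h}(\{\chi\})|^2$, evaluate this via the injective projection of $\cL^\circ$ to $\widehat G$, and conclude by matching AG transforms with $\dens(\cL)\cdot\omega_{h*\widetilde h}$. Your explicit verification that $h*\widetilde h\in\WOsp(H)$ via the Banach algebra properties of $\Wsp$ is a detail the paper leaves implicit, but the argument is otherwise the same.
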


\begin{beweis}
Let $h\in \WOsp(H)$. As $\omega_h\in \cM^\infty(G)$ is a strongly almost periodic measure by Theorem \ref{FRSS-equiv-PSF-diff-dens}, it has a unique autocorrelation $\gamma=\omega_h\circledast \widetilde{\omega_h}\in \cM_{dT}(G)$, see  \cite[Thm.~7.6]{FRSS-LS2019}. Its Fourier transform $\widehat{\gamma}\in \cM_{dT}(\widehat G)$ is a pure point measure, which can be evaluated at $\chi\in \widehat G$ via $\widehat{\gamma}(\{\chi \})=|a_\chi(\omega_h)|^2$, where $a_\chi(\omega_h)$ is the Fourier--Bohr coefficient of $\omega_h\in\cM_{dT}(G)$ at $\chi$, compare  \cite[Thm.~7.6]{FRSS-LS2019}, \cite[Thm.~4.10.14]{FRSS-MoSt} and Eqn.~\eqref{FRSS-eq:FBC}. We thus have
\begin{displaymath}
\widehat{\gamma}(\{\chi \})=\left| \widehat{\omega_h}(\{\chi\})\right|^2 = \mathrm{dens}(\cL)^2\cdot \left| \omega_{\widecheck{h}}(\{\chi\})\right|^2
\end{displaymath}
 for all $\chi\in \widehat G$.
In particular, we get $\widehat{\gamma}(\{\chi \})=0$ if $\chi\notin \pi^{\widehat G}(\cL^\circ)$. If $(\chi,\eta)\in \cL^\circ$, then we get $\widehat{\gamma}(\{\chi \})=\dens(\cL)^2\cdot\left|\widecheck{h}(\eta)\right|^2$. Here we used that $\cL^\circ$ projects injectively to $\widehat G$, which follows from the assumption that $\cL$ projects densely to $H$ by Pontryagin duality.
Noting $h*\widetilde{h} \in \WOsp(H)$, we get by Theorem~6.2 that $\dens(\cL) \cdot \omega_{h*\widetilde{h}}$ is AG transformable and has the same Fourier transform as $\gamma$. Hence  $\gamma=\mathrm{dens}(\mathcal L)\cdot\omega_{h*\widetilde h}$.
\end{beweis}

\subsection{Discussion of results}\label{FRSS-sec:disc}

For any weight function $h\in \WOsp(H)$, the weighted model set comb $\omega_h\in\cM_{dT}(G)$ is a strongly almost periodic measure. In that case, the Wiener diagram from Section~\ref{FRSS-sec:mWd} has the following explicit form:
\begin{displaymath}
\begin{CD}
\omega_h @>\circledast>>\mathrm{dens}(\mathcal L) \cdot \omega_{h*\widetilde h}\\
@V{\widehat{} \,}VV @VV{\, \widehat{}}V\\
\mathrm{dens}(\mathcal L)\cdot \omega_{\widecheck h} @>{|\cdot|^2}>> \mathrm{dens}(\mathcal L)^2\cdot \omega_{|\widecheck h|^2}
\end{CD}
\end{displaymath}
All objects in the above diagram can be interpreted as translation bounded measures, compare Theorem~\ref{FRSS-equiv-PSF-diff-dens}. Also note that, on the level of the weight functions, the above Wiener diagram reflects the Wiener diagram for finite measures from Section~\ref{FRSS-sec:Wienerfs}.

\begin{remark}
The above result extends \cite[Rem.~5.6]{FRSS-CRS}. In addition, it applies to weight functions from the Schwartz--Bruhat class and thus includes the Gelfand-Shilov functions, which were used in \cite{FRSS-B86} for Fourier analysis of model sets in euclidean cut-and-project schemes.
\end{remark}

For appropriate weight functions, weighted model sets $\omega_h$ provide additional examples of Besicovitch almost periodic measures. Let us first consider examples that are strongly almost periodic translation bounded measures (and hence Besicovitch almost periodic, compare Section~\ref{FRSS-sec:mWd}). Such examples arise from continuous weight functions and already appear implicitly in de Bruijn's work, see \cite{FRSS-B86, FRSS-B87} and the recent discussion in \cite{FRSS-LLRSS}. If $h\in \Cc(H)$, then $\omega_h\in\cM^\infty(G)$ is strongly almost periodic \cite{FRSS-LR}. In that case we have  $\omega_h\in \cM_{dT}(G)$ if and only if $\widecheck h\in \Lisp(\widehat H)$, see \cite[Thm.~5.3]{FRSS-CRS}. In particular, we get examples that are not AG transformable.
More generally, so-called admissble $h\in \COsp(H)\cap \Lisp(H)$ result in $\omega_h\in\cM^\infty(G)$  being strongly almost periodic, see \cite[Thm.~3.1]{FRSS-LR} and \cite[Prop.~6.7]{FRSS-LR}.

If the weight function fails to be continuous, then weighted model sets will usually fail to be strongly almost periodic. Consider for example weak model sets of so-called maximal density, see \cite[Sec.~3.5]{FRSS-LSS2020a}. In that case $h=1_W$ is the characteristic function of the model set window $W$, which is assumed to be compact. To be specific, assume that $W$ has nonempty interior and is Riemann measurable. Then $\omega_h\in \cM^\infty(G)$ is almost automorphic \cite[Thm.~1]{FRSS-KR2015}. Moreover typically $\widecheck{1_W}\notin \Lisp(\widehat H)$, such that AG transformability might be violated.

\end{document}